\documentclass{article}
\usepackage[utf8]{inputenc}
\usepackage[T1]{fontenc}
\usepackage[margin=1in]{geometry}
\usepackage{amsmath, amssymb, amsthm, mathtools}
\usepackage{hyperref}
\usepackage{color}
\usepackage[capitalise,nameinlink]{cleveref}

\newtheorem{definition}{Definition}
\newtheorem{lemma}{Lemma}
\newtheorem{theorem}{Theorem}
\newtheorem{corollary}{Corollary}
\newtheorem{remark}{Remark}
\newtheorem{proposition}{Proposition}
\newtheorem{claim}{Claim}

\providecommand{\E}{\mathbb{E}}
\providecommand{\Samp}{\operatorname{Samp}}

\providecommand{\INW}{\operatorname{INW}}

\providecommand{\poly}{\operatorname{poly}}
\providecommand{\polylog}{\operatorname{polylog}}

\providecommand{\norm}[1]{\left\lVert #1\right\rVert}

\providecommand{\asvapprox}{\mathrel{\approx}^{\mathrm{sv}}}

\providecommand{\rk}{\operatorname{rk}}
\providecommand{\Bcl}{\mathcal{B}}
\providecommand{\Rcl}{\mathcal{R}}
\providecommand{\Pcl}{\mathcal{P}}

\providecommand{\one}{\mathbf{1}}
\providecommand{\Edge}{\mathcal{E}}
\providecommand{\LevFB}{\Lambda_{\mathrm{FB}}}
\newcommand{\R}{\mathbb{R}}

\newcommand{\Rot}{\mathsf{Rot}}

\providecommand{\E}{\mathbb{E}}
\providecommand{\SD}{\operatorname{SD}}

\providecommand{\SC}{\mathsf{SC}}
\providecommand{\DTISP}{\mathsf{DTISP}}

\providecommand{\INW}{\mathrm{INW}}
\providecommand{\Samp}{\operatorname{Samp}}

\providecommand{\cut}{\operatorname{cut}}

\providecommand{\rk}{\operatorname{rk}}

\newcommand{\knote}[1]{{\color{cyan} (Kuan: #1)}}

\definecolor{revision}{RGB}{45,90,150}

\newif\ifhighlightnewresults
\highlightnewresultstrue
\newenvironment{newresulttext}
  {\par\begingroup\ifhighlightnewresults\color{revision}\fi}
  {\par\endgroup}

\title{SC Derandomization for Regular ROBPs and Models Beyond BPL}
\author{}
\date{}

\begin{document}

\author{ 
Kuan Cheng \footnote{CFCS, School of Computer Science, Peking University. ckkcdh@pku.edu.cn.}
\and 
Ruiyang Wu \footnote{CFCS, School of Computer Science, Peking University. wuruiyang@stu.pku.edu.cn.}
}

\date{}
\maketitle

\begin{abstract}
We study SC derandomizations for regular read-once branching programs (ROBPs) and computation models beyond BPL.

For regular ROBPs with length $n$,  width $w$, and multiple accept nodes, we attain the following results. 
\begin{itemize}
    \item When $n \le w$,  we show an SC derandomization with space $O(\log^2 n+\log w)$ and error $1/\poly(nw)$.
    \item When $n \ge w$, we show an SC derandomization with space $O(\log n \log w)$ and error $1/\poly(w)$. In addition, when $w=O(\log n)$, we attain an optimal $O(\log n)$ space derandomization with error $1/\poly(w)$. 
    \item When $w \le 2^{O(\sqrt{\log n})}$, we show that reachability of regular ROBPs (i.e. derandmization of one-sided but unbounded small error ROBPs) can be computed in SC.
\end{itemize} 
The 1st result can be viewed as extending the optimal derandomization of medium length ($n = 2^{O(\sqrt{\log w})}$) regular ROBPs of Cheng and Wu (SODA'26) to the case $2^{O(\sqrt{\log w})}  << n \le w $.
For $n \ge 2^{O(\sqrt{\log w})} $, to our knowledge, the only previous SC derandomization is still the classic SC derandomization of Nisan (STOC'92) which has space $O(\log n \log (nw))$.
The space complexities of our 1st and 2nd results are significantly below this bound when $n<< w$ and $w << n$ respectively.
For the context of the 3rd result, we are unaware of any previous results showing reachability belongs to SC unless with strong promises, such as the work by Buntrock, Jenner, Lange, Rossmanith (FCT'91), and the work by Lange (STACS'97). We indicate that the when $w =O(n)$ the reachability for regular ROBPs is NL-complete. So pushing forward this result to larger $w$ leads directly to NL $\subseteq$ SC. 

We further show that two super sets of BPL can be computed in SC.
\begin{itemize}
    \item For  probabilistic logspace TMs with a two-way access random tape, we show that it can be approximated in SC if each entry of the random tape is accessed for at most a constant number of times.
    \item For probabilistic logspace TMs with a polynomial size stack, i.e. probabilistic logspace Auxiliary Push-down Machines (AuxPDMs), we show that it can be approximated in SC if the timings of push/pop/idle stack operations do not depend on the randomness.
\end{itemize}
The first model is the read-multiplicity model considered by Impagliazzo, Nisan, Wigderson (STOC'94), in which they show that their INW generator can fool such computations. For the second model, we indicate that it contains candidate languages separating BQL from BPL considered by Apers and Edenhofer (CCC'25). 

\end{abstract}

\newpage
\tableofcontents
\newpage

\section{Introduction}

The complexity class SC$^{i}$ is the set of languages that can be decided by Turing Machines running in time $\poly(n)$ and simultaneously $O(\log^i n)$ space. The union of such classes is denoted as SC (Steve's Class), i.e.  $\mathrm{SC}= \bigcup_{i}\mathrm{SC}^i$, i.e. languages in this class can be decided by TMs in polynomial time and simutaneously poly-logarithmic space. 
SC is first defined and formally studied in the seminal work of Cook\cite{cook1979deterministic}, which shows that logspace TMs with an auxiliary stack (also denoted as stack machines or AuxPDMs), with polynomial running time, can be simulated in SC. Pippenger \cite{pippenger1979simultaneous} first formally used the name Steve's Class, where ``Steve'' is short for the first name of Stephan Cook. 
One immediate question after \cite{cook1979deterministic} is whether SC can compute some more powerful languages.

BPL (resp. RL), short for Bounded-error Probabilistic Logspace (resp. Randomized Logspace), is the set of languages which can be decided by probabilistic TMs running in polynomial time and $O(\log n)$ space, with two-sided errors (resp. one-sided errors), where the randomness is read-once.
The definitions of BPL and its related classes are originally given in \cite{aleliunas1979random}.
For derandomizing space bounded computation, the seminal work \cite{nisan1992rl}  of Nisan shows that RL and BPL are in SC.
\cite{nisan1992rl}  crucially uses the classic Nisan's PRG \cite{nisanPseudorandomGeneratorsSpacebounded1992}. 
\begin{definition}[PRG]
    Let $\mathcal{F}$ be a class of space bounded Computations $B:(\{0,1\}^s)^n\to \{0,1\}$.
    An $\varepsilon$-PRG for $\mathcal{F}$ is a function  $G:\{0,1\}^d\to (\{0,1\}^s)^n$
    such that for every $B\in \mathcal{F}$, we have
    \begin{equation*}
        \left|\E_{x\in (\{0,1\}^s)^n}B(x)-\E_{r\in \{0,1\}^d}B(G(r))\right|\leq \varepsilon.
    \end{equation*}
    The input length $d$ is called the seed length of the PRG. We say that $G$ is explicit if it can be computed in space $O(d)$ and time $\poly(d, n)$.
\end{definition}
When fixing the input, the computation of a space bounded PTM can be viewed as a Read-once Branching Program (ROBP).
\begin{definition}[Read-once branching programs (ROBPs)]
    An ROBP $B$ of length $n$, width $w$ and alphabet size $|\Sigma|=2^s$ is a directed acyclic graph with $n+1$ layers $V_0,\ldots,V_n$. For any layer $V_i$ except $V_n$, each node $v\in V_i$ has $2^s$ outgoing edges to nodes in $V_{ i+1}$. These edges are labeled by distinct symbols in $\Sigma$. There exists a unique start node $v_{start} \in V_0$ and a set of accept nodes $V_{accept}\subset V_n$. Given an input $x\in \Sigma^n$, the computation of $B(x)$ is defined as: 
    $B(x) = 1$, if there exists a unique path $v_{start},v_1,\ldots,v_n$ such that the edge between $v_i$ and $v_{i+1}$ is labeled by $x_i$, and $v_n\in V_{accept}$;  $B(x) = 0$ otherwise.
\end{definition}
A long history line of work, including but not limited to \cite{ajtai1987deterministic, nisanPseudorandomGeneratorsSpacebounded1992, impagliazzoPseudorandomnessNetworkAlgorithms1994, nisanRandomnessLinearSpace1996, armoniDerandomizationSpaceBoundedComputations1998,  razRecyclingRandomnessStates1999, reingold2006pseudorandom, bravermanPseudorandomGeneratorsRegular2014, ganor2014space, hoza2021pseudorandom, ChenTaShma2025Armoni, cohenDoronGoldgraberForwardBackward2026, HozaShalunov2026BRRY}, studies PRGs against ROBPs. 
Recently two important subclasses of ROBPs i.e. Regular and Permutation ROBPs have attracted widespread attention.
\begin{definition}[Regular ROBP]
    A regular ROBP is a standard-order ROBP $B$, where for every $i\in [n]$, the bipartite graph induced by the nodes in layers $V_{i-1}$ and $V_i$ is a regular graph i.e. every node has the same degree.
\end{definition}
\begin{definition}[Permutation ROBP]
    A (standard-order) permutation ROBP is a standard-order ROBP $B$, where for every $i\in [n]$ and $x\in \{0,1\}^s$, the transition matrix from $V_i$ to $V_{i+1}$ through edges labeled by $x$, is a permutation matrix in $\mathbb{R}^{w\times w}$.
\end{definition}
We mention that Regular ROBPs correspond to the case that transition matrices are doubly-stochastic.
Permutation ROBPs are special regular ROBPs in the sense that there are no collisions in labels of edges from previous layers.
A recent line of work \cite{koucky2011pseudorandom, steinke2012pseudorandomness, de2011pseudorandomness, bravermanPseudorandomGeneratorsRegular2014, braverman2019pseudorandom, chengHittingSetsGive2020, chattopadhyayOptimalErrorPseudodistributions2020, hoza2020simple, cohenErrorReductionWeighted2021, hozaBetterPseudodistributionsDerandomization2021, pyne2021pseudodistributions, pyneHittingSetsRegular2021, hoza2021pseudorandom, chengWuWeightedPseudorandomGenerators2026, chen25better, chenImprovedErrorReduction2026} specifically studies PRGs for regular or permutation ROBPs, Weighted Pseudorandom Generators (WPRGs), and Hitting Sets Generators (HSGs) for ROBPs.
\begin{definition}[WPRGs, first defined and constructed by \cite{braverman2019pseudorandom}]
    Let $\mathcal{F}$ be a class of ROBPs $B:(\{0,1\}^s)^n\to \{0,1\}$. A $W$-bounded $\varepsilon$-WPRG for $\mathcal{F}$ is a function $(G,\sigma):\{0,1\}^d\to(\{0,1\}^s)^n\times \mathbb{R}$ such that for every $B\in \mathcal{F}$, we have
    \begin{align*}
        &\left|\E_{x\in(\{0,1\}^s)^n}B(x)-\sum_{r\in \{0,1\}^d}\left[\frac{1}{2^d}\sigma(r)\cdot  B(G(r))\right]\right|\leq \varepsilon,\\
        &\forall r, |\sigma(r)|\leq W.
    \end{align*}
    The input length $d$ is called the seed length of the WPRG. We say that $(G,w)$ is  explicit  if it can be computed in space $O(d)$ and time $\poly(d, n)$.
\end{definition}
\begin{definition}[HSG]
     Let $\mathcal{F}$ be a class of ROBPs $B:(\{0,1\}^s)^n\to \{0,1\}$. A HSG for $\mathcal{F}$ is a function $(G):\{0,1\}^d\to(\{0,1\}^s)^n$ such that for every $B\in \mathcal{F}$, we have if $\Pr[B(x) = 1]\ge \varepsilon,$ then there is $r\in \{0, 1\}^d$ such that $B(G(r)) = 1$.  
    The input length $d$ is called the seed length of the WPRG. We say that $(G,w)$ is explicit if it can be computed in space $O(d)$ and time $\poly(d, n)$.

\end{definition}
Based on these recent progress, improvements for several types of derandomizaions of space bounded computation have already been achieved. For example the classic Saks and Zhou derandomizations \cite{saksBPHSPACEDSPACES31999} have been improved in several aspects. Ahmadinejad, Kelner, Murtagh, Peebles, Sidford, and Vadhan \cite{ahmadinejadHighprecisionEstimationRandom2022} showed that Regular ROBPs can be derandomized in near optimal small space $O\left(\log (nw) \log \log (nw)\right)$.  Hoza \cite{hozaBetterPseudodistributionsDerandomization2021} showed that standard ROBPs can be derandomized in space $\mathrm{BPL}\subseteq \mathrm{DSPACE}\left(\frac{\log^{3/2}n}{\sqrt{\log\log n}}\right)$. 
Cohen, Doron, Sberlo, and Ta-Shma \cite{cohenApproximatingIteratedMultiplication2023}, also Pyne and Putterman \cite{puttermanOptimalDerandomizationMediumWidth2022},  showed that standard ROBPs can be derandomized in $O\left( \left(\log n + \sqrt{\log n} \log w \right) \log \log n\right)$ space. 
For the classic Nisan-Zuckerman derandomization \cite{nisanRandomnessLinearSpace1996} which requires optimal space, Cheng and Wu \cite{chengWuWeightedPseudorandomGenerators2026} shows that  regular ROBPs with medium length $n = 2^{O(\sqrt{\log w})}$ can be derandomized optimally in logspace.
Also for derandomizing BPL to the circuit hierarchy,
Cheng and Wang \cite{cheng2024bpl} showed that $\mathrm{BPL}$  is contained in logspace uniform $ \mathrm{AC}^1$.
However, it is still an interesting question, whether one can give improved SC derandomizaions for certain kinds of ROBPs.

On the other hand, whether SC can simulate models of computations beyond BPL is widely open.
Perhaps one such candidate is the model considered by the seminal work by Inpagliazzo, Nisan, Wigderson \cite{impagliazzoPseudorandomnessNetworkAlgorithms1994}.  \cite{impagliazzoPseudorandomnessNetworkAlgorithms1994} gives a generator (the INW generator) which can fool a kind of multiparty computations. A very interesting sub-case is the read-$k$ model, where the adversary can read every entry of the random string for at most $k$ times. 
Probabilistic logspace AuxPDM is another interesting model beyond BPL.
\begin{definition}[Probabilistic AuxPDMs]
   A probabilistic logspace AuxPDM is a logspace TM with an auxiliary unbounded stack, and an extra read-once random tape. Each step of the machine can push/pop one symbol to/from the top of the stack, or do nothing to the stack (we call this an idle operation on stack, or just idle for simplicity). 
\end{definition}
This model was first considered by Macarie and Ogihara \cite{MacarieOgihara1998}.
Vencateswaran \cite{venkateswaran2006derandomization} claims that Probabilistic logspace Auxiliary Push-down Machines with bounded two-sided error can be computed in SC.
However, as indicated in a later work by Allender and Lange\cite{allender2010symmetry}, the proof of \cite{venkateswaran2006derandomization} is incomplete.
So whether this model can be computed in SC is still an open question.

Another major open question is whether NL is constained in SC. This problem can also be viewed as a derandomization problem since one can view this as derandomizing one-sided error probabilistic logspace Turing Machines with unbounded small error. 
Previous results on this direction can only handle instances with ``few path'' promises. 
Specifically, the work Buntrock, Jenner, Lange, Rossmanith \cite{buntrock1991unambiguity} and the work by Lange \cite{lange1997unambiguous} show that if all pair of nodes have polynomial number of paths inbetween them, or the number of paths starting from the start and the number of paths reaching the acceptance are both polynomial, then the connectivity problem is in SC.
We note that for regular or standard ROBPs, especially when $n$ is large, the number of paths can be exponentially large.
We are unaware of any previous results which can get rid of these promises.

\subsection{Our Results}
We give improved SC derandomizations for regular ROBPs and further show that several super classes of BPL is contained in SC.
\paragraph{SC derandomization for regular ROBPs with bounded error} 

Our first result is an SC derandomization for short-wide regular ROBPs.
\begin{theorem}
[SC Derandomization for short-wide regular ROBPs]
\label{thm:regular-derand-low-error}
Let $n,w\in \mathbb{N}$. Let $\Sigma$ be a finite alphabet. For any $\varepsilon = 1/\poly(nw)$. There exists a deterministic algorithm that, given a length-$n$ width-$w$ regular ROBP $B$ over alphabet $\Sigma$, computes an approximate transition matrix $\widehat{W}$ such that $\|W  - \widehat{W} \|_1\leq \varepsilon$, where $W$ is the transition matrix of $B$. The algorithm runs in time $\poly(n,w,|\Sigma| )$ and uses space
\[
    O(\log|\Sigma|+\log^2 n+ \log w).
\]
\end{theorem}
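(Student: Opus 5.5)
We follow the Nisan-style SC approach: recursively square the ROBP, but replace Nisan's generator, whose seed length is $O(\log n\log(nw))$, with a WPRG/INW-type construction for regular ROBPs. For regular ROBPs the error of each INW level scales with the \emph{length} rather than the width. Braverman--Rao--Raz--Yehudayoff-style bounds give total error roughly $O(n\cdot \delta)$ per level when each level uses expander degree $2^{O(\log(1/\delta))}$, and this is independent of $w$. The plan is to build a PRG for regular ROBPs of length $n$ whose seed length is $O(\log|\Sigma| + \log n\cdot\log(n/\varepsilon_0))$, with only an additive dependence on $\log w$. We then use it, with constant or $1/\poly(n)$ error, only as a subroutine inside an iterated-multiplication scheme. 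Since $n\le w$ is not assumed in the statement, the target space bound is just $\log^2 n$ plus $\log w$ plus $\log|\Sigma|$.

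\textbf{Step 1 (coarse approximation of two-step products).} Divide $[n]$ dyadically. At each of the $\log n$ levels, for consecutive blocks $i,i+1$ we want an approximation $\widehat W_{[a,c]}$ of $W_{[a,b]}W_{[b,c]}$. Following Nisan's SC argument, we fix one hash function per level, chosen deterministically. The goal is that the product of the two blocks' hash-driven transitions approximates the true product in $\ell_1$ (column-averaged, using regularity). For regular programs the spectral/weight argument shows a pairwise-independent hash $h:\{0,1\}^{\ell}\to\{0,1\}^{\ell}$ works with error $\delta$ per level for most $h$. The key point is that the required seed at each level is only $O(\log(n/\delta))$ bits plus the initial $O(\log|\Sigma|+\log w)$, instead of $O(\log(nw/\delta))$. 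We reuse the Braverman et al.\ regular-weight argument, where the error is charged to the total weight $\sum$ over layers. That weight is $O(n)$ for regular programs irrespective of $w$.

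\textbf{Step 2 (deterministic selection, Nisan-style).} For each level, enumerate candidate hash functions in $O(\log(n/\delta))$ space. Test each candidate by computing the induced matrices entrywise through recursion, and keep the first one that is good. This is the SC step. Time stays polynomial because only $\log n$ levels are stored, each with seed $O(\log n)$, giving $O(\log^2 n)$ space in total. Computing a matrix entry requires simulating $B$ along one pseudorandom walk, which costs $O(\log w+\log|\Sigma|)$ space that is reused rather than multiplied per level. Choosing $\delta = 1/\poly(n)$ gives a constant-quality or $1/\poly(n)$-quality approximation $\widehat W$.

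\textbf{Step 3 (error reduction to $1/\poly(nw)$).} To reach $\varepsilon=1/\poly(nw)$, which may be far smaller than $1/\poly(n)$ when $w\gg n$, we apply Richardson iteration / Cohen et al.\ error reduction to the coarse approximations from Step 2. The approximations $\widehat W_{[a,b]}$ of all dyadic intervals serve as a preconditioner, and the matrix power series formula $\sum_k \pm \prod \widehat W\cdot(W-\widehat W)$ then amplifies accuracy. Each of the $O(\log(nw)/\log n)$ correction terms is computed in space $O(\log n)$ per level beyond Step 2's storage, and its weights grow only polynomially. The resulting space is $O(\log^2 n+\log w+\log|\Sigma|)$ and the time is polynomial.

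\textbf{Main obstacle.} I expect the main obstacle is Step 1: proving that the per-level seed and error are independent of $w$ in the $\ell_1$ norm we need. Regularity controls the error only on average over start states (a Frobenius- or weight-type measure), not for each row. So I would first try to state the guarantee in the BRRY weight measure, with error proportional to $n\delta$. I would then convert it to $\|W-\widehat W\|_1$ bounds only at the end, via the regularity and doubly-stochastic structure, and check that Step 3's error reduction tolerates this average-case guarantee.
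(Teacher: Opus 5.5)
Your overall plan has the same shape as the paper's: store $O(\log^2 n+\log w+\log|\Sigma|)$ bits of advice that give $1/\poly(n)$-accurate approximations on all dyadic intervals, then boost them by error reduction. However, both halves rest on steps that do not go through.

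\textbf{Step 1.} You rightly flag this step, but the route you sketch does not give it.
\begin{itemize}
\item The dependence in BRRY is the reverse of what you state. Their weight is bounded in terms of the width and is independent of the length, and their seed $O(\log n(\log w+\log\log n+\log(1/\varepsilon)))$ pays $\log w$ at every level.
\item The genuinely width-independent result is the SV analysis of INW (\cref{lem:inw-sv-generator}). It uses explicit expanders and truly random seeds and involves no selection, but its seed $O(\log n(\log\log n+\log(1/\varepsilon)))$ cannot be enumerated in polynomial time.
\item Once you select an object deterministically, as in Step 2, you must certify it on the given program. With entrywise tests, the pairwise-independence mixing lemma only bounds one test's failure probability by about $2^{-\ell}/\delta^2$. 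A union bound over $\poly(w)$ tests then needs $\ell=\Omega(\log(w/\delta))$ per level, which puts you back at Nisan's $O(\log n\log(nw))$.
\item A weight- or SV-type criterion is not an entrywise test, so ``keep the first one that is good'' is undefined for the guarantee Step 1 needs.
\item A Nisan hash acts on whole symbols, so unless the alphabet is re-reduced you pay $\log|\Sigma|$ per level, not once.
\end{itemize}

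The paper never makes a selection width-independent. Each sampler offline seed costs $O(\log(nw/\varepsilon\delta))$ bits (plus $\log|\Sigma|$ for the first) and is certified by $\poly(nw)$ entrywise tests (\cref{lem:reducing-alphabet-size-with-sampler}). The point is that only $K=\lceil\log n/\log m\rceil$ selections are made. Between selections, INW runs on blocks of $m$ steps, with $\log m\approx\log(nw/\varepsilon\delta)/\log(\log n/\varepsilon)$, chosen so that one block's INW seed is again $O(\log(nw/\varepsilon\delta))$. This gives $K\cdot O(\log(nw/\varepsilon\delta))=O(\log n(\log\log n+\log(1/\varepsilon))+\log(nw/\delta))$. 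A large $w$ buys fewer stages, which is what makes $\log w$ additive (\cref{lem:det-compressed-inw-sv}). The sampler's Euclidean error cannot be absorbed into SV error, so it is carried as the separate $\delta$ parameter of $(\varepsilon,\delta)$-SV approximation through products and transitivity.

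\textbf{Step 3.} This step omits the expensive part.
\begin{itemize}
\item The paper's error reduction (Chattopadhyay--Liao, analyzed in \cref{lem:recursive-sv-separated-slack}) needs a per-vector base guarantee on every dyadic interval: SV error about $1/(n\log n)$ and Euclidean error about $1/(n^3\log n)$. An average-case weight guarantee is not known to suffice.
\item After expansion, the corrected matrix is a signed sum of $\poly(n,1/\varepsilon)$ products. Each product has $r=O(\log n+\log(1/\varepsilon))$ base factors, and each factor averages over a $d=O(\log(nw/\varepsilon))$-bit seed.
\item Enumerating all $r$ seeds of one product takes $2^{rd}$ time. Nisan- or Savitch-style recursion takes on the order of $\log r\cdot\log(nw/\varepsilon)$ space, which is $\omega(\log w)$ when $w\gg n$. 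Saying ``$O(\log n)$ space per level'' does not explain how these terms are computed.
\end{itemize}

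The paper treats each product as a length-$r$ permutation ROBP over the alphabet $\{0,1\}^d\times\Sigma$. It approximates each with the Cheng--Wu WPRG for permutation ROBPs, whose seed at this medium length is $O(\log|\Sigma|+\log(nw/\varepsilon))$. All terms and weighted seeds can then be enumerated in polynomial time within the space bound. Some such optimal-seed derandomization of short products is a necessary ingredient missing from your plan.

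Your worry that regularity only controls error on average over start states disappears in the SV framework. \Cref{fact:sv-to-pointwise} gives width-independent entrywise error. \Cref{lem:reg2perm-phase} reduces the regular case to permutation ROBPs of width $w|\Sigma|$ at only an additive $\log|\Sigma|$ cost.
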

Notice that this improves the SC space from $O(\log n \log (nw))$ of \cite{nisan1992rl} to $O(\log^2 n + \log w)$, getting rid of the main term $O(\log n \log w)$ when $w>>n$.

The proof of \cref{thm:regular-derand-low-error} relies on our next result for SC derandomization of Regular ROBPs with single accept nodes.
\begin{theorem}
\label{thm:regular-derand}
Let $n,w\in \mathbb{N}$. Let $\Sigma$ be a finite alphabet. Let $\varepsilon>0$. There exists a deterministic algorithm that, given a length-$n$ width-$w$ regular ROBP $B$ over alphabet $\Sigma$, computes an approximate transition matrix $\widehat{W}$ such that $|[W_{0..n}]_{u,v}-[\widehat{W}]_{u,v}|\leq \varepsilon$ for every $u,v\in [w]$,where $W$ is the transition matrix of $B$. The algorithm runs in time $\poly(n,w,|\Sigma|,1/\varepsilon)$ and uses space
\[
    O(\log|\Sigma|+\log n\cdot (\log\log n+\log(1/\varepsilon))+\log w).
\]
\end{theorem}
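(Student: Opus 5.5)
We approximate each entry $[W_{0..n}]_{u,v}$ to additive error $\varepsilon$ using a pseudorandom generator for regular ROBPs whose seed length has no $\log w$ dependence beyond an additive term, and then average over all seeds. The natural tool is the Braverman--Rao--Raz--Yehudayoff (BRRY) analysis of the INW generator on regular ROBPs. Fix a start node $u$ and a target $v$, and view the program as a regular ROBP with the single accept node $v$. For such programs, INW with expander spectral gap $\lambda$ at each of the $\log n$ levels has error about $O(\lambda\log n)$ in total, more precisely $\sum_{\text{levels}}\lambda$ weighted by the total collision-probability drop. This error does not depend on $w$. Taking $\lambda=\varepsilon/\log n$ gives seed length $d=O(\log|\Sigma|+\log n\cdot(\log\log n+\log(1/\varepsilon)))$. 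Note that the width does not enter $d$ at all when there is a single accept node, which is exactly why the statement gives per-entry and not $\ell_1$ guarantees.

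The algorithm enumerates $u,v\in[w]$, using $O(\log w)$ space for counters. For each pair it enumerates all seeds $r\in\{0,1\}^d$. For each seed it computes $G(r)$ symbol by symbol: computing the $i$-th block of INW needs $O(d)$ space, since we walk down the recursion tree applying the expander neighbor functions. It simulates $B$ from $u$, which needs only the current node, $O(\log w)$ bits. It counts the seeds that end at $v$, and outputs the count divided by $2^d$, written to $O(\log(1/\varepsilon))$ bits of precision. The space used is $O(d+\log w)$, as claimed.

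For the running time, the number of seeds is
\[
2^d=|\Sigma|^{O(1)}\cdot(\log n)^{O(\log n)}\cdot(1/\varepsilon)^{O(\log n)}.
\]
This is superpolynomial in $n$ in general, so a plain enumeration violates the $\poly(n,w,|\Sigma|,1/\varepsilon)$ time bound. This is the main obstacle, and it is the classical obstacle that Nisan overcame for SC.

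To fix it, we follow Nisan's RL$\subseteq$SC strategy and do not enumerate seeds. Instead we fix the expander/hash parameters level by level, from the top of the recursion down. At each level we deterministically search for an edge labeling (or a hash function from a pairwise-independent family) that is ``good'' for the current program. Goodness is tested by computing the relevant products of the level-$(i-1)$ matrices restricted to the current seed structure, which takes polynomial time and small space. For regular programs the quantity to preserve is the BRRY-weighted collision quantity. We need a fixed choice that keeps the level error at most $\lambda$. A random hash achieves this with good probability, so enumerating $\poly(n,w,1/\varepsilon)$ candidates (a family of size $\poly$ for $\lambda=\varepsilon/\log n$ with pairwise-independent hashing, or $1/\lambda^{O(1)}$ expander-like choices) finds one.

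The fixed parameters occupy $O(\log n\cdot(\log\log n+\log(1/\varepsilon))+\log|\Sigma|)$ bits in total, since each level stores $O(\log|\Sigma|+\log(\log n/\varepsilon))$ bits. The remaining seed, a single block, is enumerated in polynomial time. Each test is recomputed recursively, as in Nisan's argument, and yields polynomial time. The step I expect to be delicate is arguing that per-level goodness can be checked against the single-accept-node regular error measure, rather than the width-dependent $\ell_1$ measure, and that it composes across levels with additive total error $\varepsilon$.
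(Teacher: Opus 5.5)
Your proposal has two genuine gaps: the width-independent INW guarantee you attribute to BRRY is not available for regular programs, and Nisan-style fixing cannot get down to $O(\log(\log n/\varepsilon))$ stored bits per level.

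\textbf{The width-independent guarantee.} Braverman--Rao--Raz--Yehudayoff get seed length $O(\log n(\log w+\log\log n+\log(1/\varepsilon)))$, with $\log w$ multiplied by $\log n$. The reason is that their analysis charges INW error against an $\ell_1$-type weight $\sum_t\sum_u|q_t(\delta_t(u,0))-q_t(\delta_t(u,1))|$ of the backward acceptance vectors. This weight depends on $w$ even with a single accept vertex. Take the lazy walk on $\mathbb Z_w$ (symbol $0$ stays, symbol $1$ adds $1$) with accept vertex $0$. Then $q_t$ is a binomial bump of height $\Theta((n-t)^{-1/2})$ until $n-t\approx w^2$, so the weight is $\Theta(w)$.

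Width-independence for a single accept vertex is the Hoza--Pyne--Vadhan phenomenon, in the SV form of \Cref{lem:inw-sv-generator}. Its proof uses that every pseudorandom transition matrix is a permutation. Then $\|A_xz\|_2=\|z\|_2$, and the variance in the expander-mixing step equals $\|z\|_{R(\bar A)}^2$. Regular layers with collisions break this. The paper sidesteps the issue by lifting $B$ to a permutation ROBP of width $w|\Sigma|$ over the same alphabet (\Cref{lem:reg2perm-phase}). It recovers $[W_{0..n}]_{u,v}$ as a sum over the $|\Sigma|$ copies of $u$. This suffices for derandomization, although it gives no PRG for $B$ itself, and the extra width enters only additively.

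\textbf{The Nisan-style fixing.} A value fixed at one level must be good simultaneously for every block at that level and all $w^2$ entries. Showing such a value exists by a union bound over these tests needs a family of size $\poly(nw/\varepsilon)$, i.e.\ $\Theta(\log(nw/\varepsilon))$ stored bits per level. That is exactly how Nisan's SC algorithm arrives at $O(\log n\log(nw/\varepsilon))$, and your own count of $\poly(n,w,1/\varepsilon)$ candidates per level already implies it. For INW there is not even a natural per-level object to fix. The expander guarantee controls only the average over edge labels, and a single fixed label couples the two halves' seeds by a fixed map with no mixing property. Separately, fixing top-down would require enumerating all lower-level randomness to test goodness.

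The missing idea is amortization. The paper runs $\log m=\Theta(\log(nw/(\varepsilon\delta))/\log(\log n/\varepsilon))$ consecutive INW levels at $O(\log(\log n/\varepsilon))$ bits each, so an $m$-block costs $O(\log(nw/(\varepsilon\delta)))$ seed bits. Only then does it compress that alphabet with an averaging sampler. The sampler's offline seed is found by exhaustive search against entrywise tests on all intervals (\Cref{lem:det-compressed-inw-sv}). The $\log(nw)$ union-bound price is thus paid once per stage, over only $\lceil\log n/\log m\rceil$ stages.

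The sampler error is only entrywise (hence spectral). It cannot be absorbed into an SV bound, whose right-hand side can vanish on unit vectors. So the stages are composed with the two-parameter $(\varepsilon,\delta)$-SV approximation and its product and transitivity lemmas (\Cref{lem:product}, \Cref{cor:products}, \Cref{lem:transitivity}). These keep the SV error at $O(K\varepsilon_{\mathrm{INW}})$ while the Euclidean error stays below $\delta$. This is precisely the step you flagged as delicate, and it is where the proof lies.

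A smaller accounting issue: paying $\log|\Sigma|$ per level would cost $\log n\log|\Sigma|$. The paper pays it only once, in the first stage's offline seed.
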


Notice that one can take $\varepsilon = O(1/w)$ in \cref{thm:regular-derand} to attain an SC derandomization for multiple accept nodes. In space this is even better than \cref{thm:regular-derand} when $w < n$. In fact, for long narrow regular ROBPs, we can also attain the following SC derandomization. 
\begin{theorem} 
\label{thm:perm-regular-main}
Let \(B\) be a length-\(n\), width-\(w\) regular ROBP over \(\Sigma\).
For every \(0<\varepsilon<1/2\), there is a deterministic algorithm that
computes, entry by entry, a matrix \(\widehat W_{0..n}\) satisfying
\(
   \left\|[\widehat W_{0..n}-W_{0..n}\right\|_1
   \le\varepsilon
   \text{ for every }u,v\in[w].
\)
The algorithm runs in time \(\poly(n,w,|\Sigma|,1/\varepsilon)\) and space
\[
   O\!\left(\log n\log\left({w|\Sigma|}/{\varepsilon}\right)
   \right).
\]
\end{theorem}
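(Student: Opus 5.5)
We will prove this by following Nisan's SC derandomization strategy, in the form ``recursive squaring with rounding'' (in the spirit of Saks--Zhou), specialized to regular ROBPs. The point of regularity is that each one-step transition matrix $W_i=\frac{1}{|\Sigma|}\sum_{\sigma}M_{i,\sigma}$ is doubly stochastic. Hence every product of consecutive transition matrices is doubly stochastic, and so has $\ell_1\to\ell_1$ operator norm exactly $1$. Errors therefore propagate additively, with no blow-up from the width.

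The plan is to compute $W_{0..n}$ by a balanced binary tree of depth $\log n$. At a node covering the interval $[a,b)$ with midpoint $m$, we set
\[
\widehat W_{a..b}=\mathrm{Round}\bigl(\widehat W_{a..m}\widehat W_{m..b}\bigr),
\]
where $\mathrm{Round}$ truncates each entry to $t=O(\log(nw/\varepsilon))$ bits. A naive depth-first evaluation would store one matrix per level, which costs $w^2 t$ bits per level and is far too much. Instead we use the SC-style evaluation: each entry of $\widehat W_{a..b}$ is a function computed on demand. To output entry $(u,v)$ we loop over $k\in[w]$ and compute $[\widehat W_{a..m}]_{u,k}\cdot[\widehat W_{m..b}]_{k,v}$ by recursive calls, accumulating a $t$-bit sum. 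Each level of recursion stores $O(\log w)$ bits for the indices $u,v,k$, $O(t)$ bits for the partial sum and the current returned value, and $O(\log n)$ bits for the interval. Over $\log n$ levels this gives space $O(\log n(\log w+t+\log n))=O(\log n\log(nw|\Sigma|/\varepsilon))$. The leaves, entries of $W_i$, are computed by enumerating $\Sigma$ in $O(\log|\Sigma|+\log w)$ space.

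To get the space down to the stated $O(\log n\log(w|\Sigma|/\varepsilon))$, we absorb the extra $\log^2 n$ term: when $n\le w|\Sigma|/\varepsilon$ it is dominated. Otherwise we use Saks--Zhou-type structure, or simply the fact that for regular ROBPs the total $\ell_1$ error bound only needs $t=O(\log(w/\varepsilon)+\log\log n)$ bits per level together with a per-level precision of $\varepsilon/(w\log n)$. We expect that a careful count gives $O(\log n\log(w/\varepsilon))$ plus lower-order terms.

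The key step, and the one we expect to be the main obstacle, is the error and time analysis. For error, write $\widehat W_{a..b}=W_{a..b}+E_{a..b}$. Then
\[
\widehat W_{a..m}\widehat W_{m..b}-W_{a..b}=E_{a..m}W_{m..b}+W_{a..m}E_{m..b}+E_{a..m}E_{m..b},
\]
and since the $W$'s are doubly stochastic we get $\|E_{a..b}\|_1\le\|E_{a..m}\|_1+\|E_{m..b}\|_1+\|E\|\|E\|+w2^{-t}$. Unrolling over the tree with $n$ internal nodes gives total error $O(nw2^{-t})$. This is $\le\varepsilon$ for $t=\log(nw/\varepsilon)+O(1)$, which is where the $\log n$ factor that we want to absorb enters. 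To avoid that factor we would instead round so that rows stay exactly stochastic, for instance by rounding down and pushing the deficit onto one entry. The remaining work is bounding the resulting row-stochastic-but-not-doubly-stochastic drift.

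For time, the naive recursion makes $w$ recursive calls per level, giving $w^{\log n}$ time, which is not polynomial. This is the true SC difficulty. We resolve it as Nisan does: at each level we keep a short seed or hash, in the Nisan PRG style, instead of recomputing. Concretely, we run Nisan's generator with pairwise-independent hash functions $h_1,\dots,h_{\log n}$ ($O(\log n\log(w|\Sigma|/\varepsilon))$ bits) and enumerate candidate hash tuples level by level. At each level we fix the hash that the estimated transition matrices certify as good; this check is a polynomial-time test using the width-$w$ matrix. Storing the chosen hashes costs exactly $O(\log n\cdot\log(w|\Sigma|/\varepsilon))$ bits, and each verification costs $\poly(n,w,|\Sigma|,1/\varepsilon)$ time. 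Regularity ensures that the per-level Nisan error depends on $w$ only through the $\ell_1$ norm, not through an $n$-dependent amplification.
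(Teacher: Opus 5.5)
Your proposal has a genuine gap. None of your three mechanisms removes the $\log n\cdot\log n$ term from Nisan's $O(\log n\log(nw/\varepsilon))$ bound, and that removal is the whole content of the theorem in the regime $n\gg w|\Sigma|/\varepsilon$.

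\emph{Rounding.} Your own analysis of the rounding recursion shows the problem: the rounding errors of the $\Theta(n)$ internal nodes add up to $O(nw2^{-t})$, which forces $t\ge\log(nw/\varepsilon)$. Double stochasticity only guarantees that errors add rather than multiply; it does not stop them from adding. So the suggested per-level precision $\varepsilon/(w\log n)$ with $t=O(\log(w/\varepsilon)+\log\log n)$ has no support. Even if it held, it would leave a $\log n\log\log n$ term, which is not dominated by $\log n\log(w|\Sigma|/\varepsilon)$ when $w|\Sigma|/\varepsilon=O(1)$.

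\emph{Nisan's generator.} The same accumulation defeats this fix. Each hash $h_i$ acts on the $s$-bit block and is shared by $n/2^i$ node pairs, and a fixed hash can be certified at each node only to accuracy of roughly $2^{-s}$. In Nisan's analysis these per-node errors again sum over $\Theta(n)$ nodes. The verified hashes therefore need $s=\Omega(\log(nw/\varepsilon))$ bits, and storing $\log n$ of them costs $\Theta(\log n\log(nw/\varepsilon))$, which is exactly Nisan's bound. Your sentence that regularity makes the per-level error free of $n$-dependent amplification is precisely what needs proof, and you give no argument for it.

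What is missing is an error measure that does not accumulate across merges, so that each of the $\log n$ levels pays only $O(\log(w|\Sigma|/\varepsilon))$ bits. The paper's route is as follows.
\begin{enumerate}
\item It lifts $B$ to a permutation ROBP of width $w|\Sigma|$ (\Cref{lem:reg2perm-phase}).
\item It uses the forward--backward seminorm of Cohen--Doron--Goldgraber, extended to large alphabets. An INW merge satisfies $\eta_{\mathrm{new}}\le\max\{\eta_L,\eta_R,\lambda_{\exp}\Lambda_{\mathrm{FB}}+16w^3\eta_L\eta_R\}$ (\Cref{lem:fb-robust-merge}). With $\log(1/\lambda_{\exp})=O(\log(w|\Sigma|/\varepsilon))$, the error stays at a fixed point through arbitrarily many INW levels.
\item Expander steps are part of the enumerated online seed, not stored advice. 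The only stored data are averaging-sampler offline seeds, inserted once every $\log m=\Theta(\log(nw|\Sigma|/\varepsilon)/\log(w|\Sigma|/\varepsilon))$ levels to keep the online seed logarithmic.
\item Sampler entrywise error converts to forward--backward error with a loss of $\Lambda_{\mathrm{FB}}w^4$, independent of interval length (\Cref{lem:fb-path-stability}).
\end{enumerate}
Hence only $K=\lceil\log n/\log m\rceil$ offline seeds of length $O(\log(nw|\Sigma|/\varepsilon))$ are stored. This gives $O(\log n\log(w|\Sigma|/\varepsilon))$ space and polynomial time. Your plan contains neither the non-accumulating norm nor this $m$-ary grouping of samplers and INW levels.
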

Notice that essentially the above theorems can improve the SC space from $O(\log n \log (nw))$ of \cite{nisan1992rl} to $O(\log n \log w)$ i.e. getting rid of the main term $O(\log^2 n)$ when $w << n$.

When the width is logarithmic, i.e. $w=O(\log n)$, we attain optimal
$O(\log n)$ space derandomization in polynomial time.
\begin{theorem}
\label{thm:narrow-evolving-main}
Fix a constant $C>0$. Let $B$ be a length-$n$, width-$w$ binary
regular ROBP. For every $w^{-C}\le\varepsilon<1/2$, there is
a deterministic algorithm that, given the transition matrices of $B$,
outputs a matrix $\widehat W_{0..n}$ satisfying
\[
   \|\widehat W_{0..n}-W_{0..n}\|_1\le\varepsilon
\]
in space $O(\log n+w)$ and time
$2^{O(w)}\poly(n,w)$. 
\end{theorem}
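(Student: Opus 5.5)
The plan is to view the computation as multiplying $n$ doubly-stochastic $w\times w$ matrices $W_1,\dots,W_n$. Each $W_i$ is the average over the binary symbol of the two transition matrices. We keep a rounded running product $\widehat W_{0..i}$ whose entries have only $O(1)$ bits of precision beyond $\log(1/\varepsilon)$. Storing such a matrix naively costs $w^2\cdot O(\log w)$ bits, which exceeds $O(\log n + w)$. We therefore store only a compressed object of $O(w)$ bits.

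The object is a \emph{single row} distribution, handled one start state at a time. Since the output can be produced entry by entry, we fix the start node $u$ and track the distribution $p_i = e_u^\top \widehat W_{0..i}\in\Delta_w$. We store $p_i$ as a vector of multiples of $1/K$ with $K = \Theta(w/\varepsilon')$. Such a vector is a composition of $K$ into $w$ parts. For $K = O(w^{C+1})$ the number of compositions is $\binom{K+w}{w}$, whose logarithm is $w\log(K/w)=O(w\log w)$. That is still too large, so this step must be refined.

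\textbf{Refinement (the main step).} Quantize to multiples of $1/K$ with $K = c\,w$, using an accurate rounding that preserves total mass $1$. The number of such vectors is $\binom{(c+1)w}{w} = 2^{O(w)}$, so the state takes $O(w)$ bits. The transition applies $W_i$ and then re-rounds. Each rounding incurs $\ell_1$ error up to $w/K$, which is a constant. That is useless unless errors do not accumulate.

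Here regularity should be used: doubly-stochastic matrices are $\ell_1$-contractions, so each rounding error is never amplified. The errors still add, however, giving $n\cdot w/K$ in total. To avoid this, we do not round deterministically at each step. Instead, we simulate the rounded chain as a \emph{Markov chain over the $2^{O(w)}$ quantized states}, using a randomized unbiased rounding whose expectation equals the exact product. Concretely, $p\mapsto$ a distribution over $K$-quantized vectors with mean $pW_i$. An example is $K$ independent particles each moving by $W_i$, i.e. the empirical measure of $K$ walkers. By linearity, the expected empirical measure after $n$ steps equals $e_u^\top W_{0..n}$ exactly.

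The algorithm then computes the distribution over the $2^{O(w)}$ aggregate states layer by layer. This costs time $2^{O(w)}\poly(n)$. For space, we evaluate each probability by a depth-first recursion or by storing only the current layer's distribution. Storing a whole layer needs $2^{O(w)}$ space, which is too much. So we instead compute the final mean by backward induction on a single coordinate: the expected mass at $v$ equals $\frac1K\sum_{\text{walkers}}\Pr[\text{walker ends at }v]$. This collapses to a plain walk, which is circular.

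\textbf{Fallback.} The honest route is a per-walker expectation combined with a Nisan–Zuckerman style recycling. We use an extractor-based generator whose seed holds the $O(w)$-bit compressed state. Across $n/\,\mathrm{poly}(w)$ blocks we reuse $O(\log n)$ fresh bits plus $O(w)$ recycled bits. Regular ROBPs of width $w$ have block-wise spectral mixing, which makes the entropy loss per block $O(\log w)$. We then enumerate seeds in time $2^{O(w)}\poly(n)$. I expect the main obstacle to be bounding this error by $w^{-C}$ rather than a constant. Doing so needs a regular-ROBP error bound that is independent of $n$, such as the Braverman–Rao–Raz–Yehudayoff style bound, to keep the block count from blowing up.
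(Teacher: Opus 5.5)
Your proposal has a genuine gap. None of the routes you sketch gives a small-space randomized procedure whose randomness can then be removed within space $O(\log n+w)$.

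The $K$-walker chain with $K=cw$ is a valid unbiased estimator. However, it spends up to $K$ fresh coins per layer, $\Theta(wn)$ in the worst case, and (as you note) computing its expectation exactly needs $2^{O(w)}$ space.

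The fallback does not repair this. Nisan--Zuckerman stretches an $O(S)$-bit seed to only $\poly(S)$ bits for a space-$S$ machine. With $S=O(\log n+w)$ and $w=O(\log n)$ that is $\polylog(n)\ll n$. Recycling across $n/\poly(w)$ blocks costs fresh extractor-seed bits per block, so the seed grows linearly with the number of blocks.

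The asserted block-wise spectral mixing of regular ROBPs is false in general: a layer where both symbols induce the same permutation is regular and mixes nothing. BRRY-type analyses of INW on regular programs give seed length $O(\log n\cdot(\log w+\log\log n+\log(1/\varepsilon)))$, with a multiplicative $\log n$, so they do not yield $O(\log n+w)$ space either. You are right that something must be independent of $n$. What is needed, though, is a bound on the \emph{number of random bits used}, not on a generator's error.

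The paper obtains exactly that with a backward evolving-set estimator.
\begin{itemize}
\item Keep a set $S\subseteq[w]$ ($w$ bits), initialized to the target set. At layer $t$, let $A$ be the states both of whose successors lie in $S$, and $B$ the states with at least one. Set $S\gets A$ or $S\gets B$ by a fair coin, and use no coin when $A=B$.
\item A state $u$ lands in the new set with probability $(W_t^\top\mathbf 1_S)_u$. Hence $\mathbf 1[s\in S]$ at the end is an unbiased estimator of $\mathbf 1_F^\top W_{0..n}e_s$.
\item Regularity gives $|A|+|B|=2|S|$, so $|S|$ is a martingale. With $d=(|B|-|A|)/2$, the potential $H(m)=m(w-m)$ drops in expectation by $d^2\ge1$ at every coin flip. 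So the expected number of coins is at most $w^2/4$ regardless of $n$, with an exponential tail.
\item Truncating after $O(w^2\log(1/\eta))=\poly(w)$ coins gives a space-$O(\log n+w)$ algorithm using $\poly(S)$ random bits. Nisan--Zuckerman then applies with $S=O(\log n+w)$, and enumerating the $O(S)$-bit seeds takes time $2^{O(w)}\poly(n,w)$.
\item Doing this for each entry with $\eta=\varepsilon/w\ge w^{-O(1)}$ gives the $\|\cdot\|_1$ bound.
\end{itemize}

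Your idea of an $O(w)$-bit state with an unbiased update is close in spirit. The difference is that you push a quantized mass vector forward, which needs coins at every layer. The evolving set pulls a $0/1$ indicator backward and needs coins only $O(w^2)$ times in expectation.
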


\paragraph{Reachability in regular ROBPs.}
For one-sided error regular ROBPs, we can attain a much stronger result. Our next result shows that for medium width regular BOBPs, even with unbounded small error, i.e. the reachability problem, can be decided in SC. 

\begin{theorem}[Reachability in regular ROBPs]
\label{thm:regular-positivity-main}
Fix a constant $C>0$. There is a deterministic algorithm that, given a binary regular ROBP of length $n$ and width $w$ with $\log^2 w\le C\log n$, decides whether it accepts any input in time $\poly_C(n,w)$ and space
\[
   O_C\!\left(\log n+
      \frac{\log n\log w}
           {\max\!\left\{1,\log\!\left(\frac{\log n}{\log^2 w}\right)\right\}}\right).
\]
\end{theorem}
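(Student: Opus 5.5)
Reachability in a regular ROBP reduces to an acceptance probability that is either $0$ or not too small, after which Nisan-style recursive squaring finishes the job in the stated space. The lower bound I am counting on is the following. If the start node can reach some accept node through a path of length $n$, then the probability of reaching that accept node is at least $2^{-n}$, which is useless. For regular ROBPs, however, I expect a much stronger bound. Call a node \emph{reachable} if some path leads to it from $v_{start}$.

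\textbf{Step 1 (structural lemma).} The probability mass on each layer is spread over reachable nodes in a controlled way. Regularity means the transition matrices $W_i$ are doubly stochastic. So for every $S\subseteq V_i$, the mass that $S$ receives from $V_{i-1}$ equals $|S|/w$ times a uniform-type quantity. The cleanest form is that every reachable node in layer $i$ receives probability at least $w^{-O(\log n)}$ or so. I would try to prove a bound of the form $p_i(v)\ge \min(\ldots)$ by tracking the \emph{support size} of the distribution $p_i$. Doubly stochastic maps never shrink the support, because $\one^\top W=\one^\top$ gives column sums equal to $1$. Hence the support size can increase at most $w$ times, and between increases the distribution evolves on a fixed-size support. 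I plan to argue that on a fixed support the mass is in fact preserved blockwise, so the minimum mass can drop by a factor of $\mathrm{poly}(w)$ only at an increase. This gives $p_n(v)\ge w^{-O(w)}$, which is too weak. A refined version should give roughly $2^{-O(\log^2 w)}$ or $w^{-O(\log n/\log(\ldots))}$. Getting the right threshold $\delta$ with $\log(1/\delta)\approx \log w\cdot \frac{\log w}{\ldots}$ is the \textbf{main obstacle}. My first attempt will be to decompose layer-to-layer transitions into connected components of the bipartite graph. Doubly stochastic matrices restricted to each component are again doubly stochastic, so mass entering a component is equidistributed across its nodes up to factors of $w$. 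Merges of components are then the only loss events.

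\textbf{Step 2 (exact-vs-approximate decision).} Given a threshold $\delta$ with $\Pr[\text{reach}]\ge\delta$ whenever reachable, I run \cref{thm:perm-regular-main} or \cref{thm:regular-derand} with $\varepsilon=\delta/3$ on the matrix entry $(v_{start},\text{accept})$. I then output ``accepts'' iff the estimate exceeds $\delta/2$. The cost is space $O(\log n\log(w/\delta))$, which is too much if $\log(1/\delta)\sim\log^2 w$. I therefore balance the two sides of the recursion instead: split the length into blocks of length $L=\frac{\log n}{\log^2 w}$-exponential, i.e.\ choose block length $b$ with $\log b\approx \log(\frac{\log n}{\log^2 w})$. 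Within a block I compute reachability exactly by brute force over $w$ states, which takes space $O(\log n)$ and polynomial time. Across the $n/b$ blocks I recurse with depth $\log n/\log b$. Each level stores $O(\log w)$ bits of state plus a counter, and that yields the claimed bound $O(\log n+\log n\log w/\log(\log n/\log^2 w))$. The hypothesis $\log^2 w\le C\log n$ ensures that $b\ge 2$ and that the $\delta$-precision bookkeeping is absorbed into $O_C(\log n)$. The running time is polynomial because each level does $\mathrm{poly}(w,b)$ enumeration and the recursion depth is at most $\log n$.
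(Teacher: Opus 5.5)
There is a genuine gap, in both steps. First, the bound you are aiming for in Step~1 is false: no bound of the form $2^{-O(\log^2 w)}$ holds. Take layers $j=1,\dots,w-1$ in which symbol $0$ acts as the identity and symbol $1$ as the transposition $(j\;j{+}1)$, and pad with identity layers. Every layer is a pair of permutations, so the program is regular. Starting from state $1$, the distribution after layer $j$ is $(1/2,1/4,\dots,2^{-j},2^{-j},0,\dots,0)$, so state $w$ is reachable with probability exactly $2^{-(w-1)}$. Your support-growth idea proves the matching lower bound $2^{-(w-1)}$: in a binary regular layer each new entry is an average of two old entries, and when the support size does not grow, all edges into the new support come from the old one. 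Hence the minimum positive mass can halve only at the at most $w-1$ steps where the support grows. With $\delta=2^{-(w-1)}$, Step~2 via \cref{thm:perm-regular-main} costs $O(w\log n)$ space and $2^{O(w)}\poly(n)$ time. This is far outside the claimed bounds once $w\gg\log n$. It is not even $O(\log n)$ at $w=\polylog n$, where the theorem promises $O(\log n)$.

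Second, your fallback ``balanced'' algorithm no longer uses Step~1, and it never uses regularity. It is a $b$-ary Savitch recursion over time. To certify reachability across a block made of $b$ sub-blocks, you must enumerate the $b-1$ boundary states, so each level holds $\Theta(b\log w)$ bits, not $O(\log w)$. More importantly, costs multiply across levels: the running time is $(b\,w^{b-1})^{\log n/\log b}\ge n^{\log w}$ for every $b\ge2$. The sentence ``polynomial because each level does $\poly(w,b)$ enumeration and the depth is at most $\log n$'' treats a product over levels as a sum.

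The missing idea is to make the recursion depth independent of $n$. The paper first writes each layer as $W_i=(P_i+Q_i)/2$ with permutations $P_i,Q_i$, obtained by two-coloring the even cycles of the 2-regular layer multigraph. It then relabels states by $H_i=P_iH_{i-1}$, so that every layer becomes $(I+R_i)/2$ and every state can wait in place. A cycle-removal argument then shows that every reachable state has a witness path with at most $w-1$ state changes. One recurses on the number of state changes, not on time, using earliest-arrival functions $(E^{r+m})_{v,u}(t)=\min_z (E^r)_{v,z}\bigl((E^m)_{z,u}(t)\bigr)$. Time values are passed through the recursion rather than split, which is valid because waiting is always allowed. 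With arity $a\approx\log n/\log^2 w$ and depth $d=\lceil\log_a(w-1)\rceil$, this gives space $O((d+1)(\log n+a\log w))$ and time $(aw^{a-1})^d\poly(n,w)=2^{O(a\log^2 w)}\poly(n,w)$. This is exactly where the hypothesis $\log^2 w\le C\log n$ is used.
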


Notice that this attains optimal logarithmic space for $w=(\log n)^{O(1)}$. For $w=2^{O(\sqrt{\log n})}$, it attains an SC algorithm with $O((\log n)^{3/2})$ space. 

We indicate the importance of studying reachability of regular ROBPs by showing that the problem is actually NL-complete when $w= O(n)$ (See our (\Cref{rem:regular-reachability-nl})).
So as long as one can go beyond our result to attain an SC computation for $w=O(n)$, this would imply $\mathrm{NL}\subseteq\mathrm{SC}$.

\paragraph{Powering of regular transition matrices.}
For regular ROBPs which are powering, we attain even stronger results.

\begin{theorem}[Approximation of regular matrix powers]
\label{thm:regular-powering-main}
Let $A$ be the layer average of a width-$w$ binary regular ROBP, with start state $s$ and accepting set $F$. Given $A,s,F$, and a $b$-bit integer $n\ge(w-1)w^2$, a deterministic algorithm computes the numerator and denominator of
\[
   q_n=\frac{|F\cap R_n|}{|R_n|},
   \qquad
   R_n=\{v\in[w]:(A^n)_{v,s}>0\},
\]
in space $O(\log(w+b))$ and time $\poly(w,b)$. For every integer $k\ge1$ such that $n\ge k(w-1)w^2$,
\[
   \left|\mathbf1_F^\top A^ne_s-q_n\right|\le(5/48)^k.
\]
\end{theorem}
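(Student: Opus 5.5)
The plan is to compute $q_n$ exactly from the \emph{support graph} of $A$, and then show separately that $A^n e_s$ is close to uniform on $R_n$. Let $G$ be the directed graph on $[w]$ with an edge $u\to v$ iff $A_{v,u}>0$. Because $B$ is binary and regular, $A=\frac12(M_0+M_1)$ is doubly stochastic and every nonzero entry is at least $1/2$.

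\textbf{Step 1: structure of the support.} I would first use the standard fact that in the support graph of a doubly stochastic matrix every weakly connected component is strongly connected. The reason is that on any set $S$ closed under out-edges, stationarity of the uniform measure forces no mass to enter $S$ from outside. Let $C$ be the component of $s$ and $p$ its period. Then $C$ splits into cyclic classes $D_0,\dots,D_{p-1}$ with $s\in D_0$, and every edge goes from $D_j$ to $D_{j+1 \bmod p}$. Moreover, $A^p$ restricted to each $D_j$ is doubly stochastic and primitive. Hence for all $n$ beyond a Wielandt-type threshold (at most $(w-1)w^2$ is ample), $R_n=D_{n\bmod p}$ and $q_n=|F\cap D_{n\bmod p}|/|D_{n\bmod p}|$.

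\textbf{Step 2: computing $q_n$ in logspace.} I would reduce everything to undirected connectivity and invoke Reingold's algorithm. For a candidate $p'\le w$, build the lifted undirected graph on $[w]\times\mathbb Z_{p'}$ with an edge $\{(u,i),(v,i+1)\}$ for each edge $u\to v$ of $G$. The period $p$ is the largest $p'$ for which $(s,0)$ is not connected to $(s,j)$ for any $j\not\equiv 0$. For that $p$, a vertex satisfies $v\in D_j$ iff $(v,j)$ is connected to $(s,0)$. The residue $n\bmod p$ is computed by streaming over the $b$ bits of $n$ in $O(\log(w+b))$ space. Counting $|F\cap D_j|$ and $|D_j|$ needs only counters. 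All of this is $O(\log(w+b))$ space and $\poly(w,b)$ time.

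\textbf{Step 3: the error bound, which I expect to be the main obstacle.} Write $x_t=A^te_s-u_t$, where $u_t$ is the uniform distribution on $D_{t\bmod p}$. Since $A$ is doubly stochastic, $Au_t=u_{t+1}$, so $x_{t+1}=Ax_t$. Also $x_t$ is supported on one class and has zero sum there. The bound $\|\mathbf 1_F^\top x_n\|\le(5/48)^k$ is independent of $w$, so a crude Wielandt/minimum-entry argument will not work. Instead I would use a Mihail-type energy identity. For doubly stochastic $A$ with entries in $\{0,1/2,1\}$, $\|x\|_2^2-\|Ax\|_2^2$ equals a weighted sum of squared differences of $x$ across pairs of vertices sharing an out-neighbour, with weights at least $1/4$.

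The key lemma to prove is that within any window of $(w-1)w^2$ steps, either this energy drops by a constant factor or the distribution is already exactly uniform on its class. The intended route is as follows:
\begin{itemize}
\item By a path/Cauchy–Schwarz argument along the at most $w$ classes and paths of length at most $w$, a non-uniform vector incurs a drop of at least a $1/w^2$ fraction per step.
\item Iterating this over the window gives a constant factor.
\item The extra factor $w-1$ accounts for the $p$ rotations and for nonprimitive transient behaviour.
\end{itemize}
Finally, I would pass from $\ell_2$ to the bound on $\mathbf 1_F^\top x_n$ via the normalization $\|x_0\|_2\le1$ together with Cauchy–Schwarz. Tracking constants should then produce a per-block contraction such as $5/48$, yielding $(5/48)^k$ after $k$ blocks. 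If the energy argument gives only a $w$-dependent rate, I would fall back to arguing with the $\ell_\infty$ range (max minus min) of $x_t$ on each class. Averaging with weight $1/2$ along two out-edges shrinks that range by a constant factor once mixing has propagated across the class, which takes at most $w^2$ steps per class.
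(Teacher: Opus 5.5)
Your Steps 1 and 2 are sound. Wielandt's bound on the primitive blocks of $A^p$ gives $R_n=D_{n\bmod p}$ well before $(w-1)w^2$. Finding the period with the lifted graphs and Reingold's algorithm is a valid variant of the paper's test over all moduli $m\le w$.

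The gap is Step 3, which carries the entire error bound. Your claimed one-step energy drop of a $1/w^2$ fraction is false. In your identity $\|x\|_2^2-\|Ax\|_2^2=\frac14\sum_v(x_{u_1(v)}-x_{u_2(v)})^2$, the graph of pairs sharing an out-neighbour can be disconnected, and then every vector constant on its components has zero drop. For instance, take $w=3$ with edges $1\to2$ (both labels), $2\to1$, $2\to3$, $3\to1$, $3\to3$. Here $A$ is doubly stochastic, irreducible and aperiodic. Yet for $s=1$ one has $x_0=e_1-\frac13\mathbf 1$ and $Ax_0=e_2-\frac13\mathbf 1$, so $\|Ax_0\|_2=\|x_0\|_2$. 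This happens whenever some set $K$ has exactly $|K|$ out-neighbours, and no single-step path or Cauchy--Schwarz argument can exclude it.

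Even a constant-factor $\ell_2$ contraction per window, as you plan it, would not suffice. Cauchy--Schwarz against $\mathbf 1_F$ costs a factor of order $\sqrt h$, where $h$ is the class size, so at $k=1$ you would not reach the $w$-independent bound $5/48$.

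Your $\ell_\infty$ fallback is circular. A constant-factor shrink of the oscillation of $(A^\top)^t y$ on a class is equivalent to the Dobrushin (total-variation) coefficient of $A^t$ being bounded away from $1$, which is the mixing statement itself. The only unconditional handle, a positive common column of $A^t$, gives the $2^{-\Theta(t)}$ rate you already rejected.

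What is missing is a multi-step expansion lemma, together with a potential that turns additive growth into multiplicative decay of total variation. The paper proves that from any nonempty proper subset $S$ of a cyclic class, more than $|S|$ vertices are reachable in exactly $w-1$ steps. Otherwise, refining the cyclic classes by membership in the successive reachable sets must stabilize at some step. The stable groups then form a single directed cycle of $r\ge p+1$ groups, forcing $r\mid p$, a contradiction.

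The paper then runs the forward evolving-set process from $S_0=\{s\}$. At each step it chooses, with probability $1/2$, either the vertices with both in-edges from $S_t$ or those with at least one. The potential $\Phi(S)=|S|(h-|S|)$ satisfies $\mathbb E[\Phi(S_{t+1})\mid S_t]=\Phi(S_t)-a^2$, where the two possible successors have sizes $|S_t|\pm a$. It also bounds the error directly: $\|A^te_s-\operatorname{Unif}(D_{t\bmod p})\|_{\mathrm{TV}}\le\mathbb E\Phi(S_t)/h$. While $a=0$ the evolving set is exactly the reachable set, so the growth lemma forces $a\ge1$ within every $w-1$ steps. This gives a factor $1-4/h^2$ per $w-1$ steps, hence $e^{-4k}<(5/48)^k$ once $n\ge k(w-1)w^2$. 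The bound holds directly in total variation, with no $\sqrt h$ loss.
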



\paragraph{SC derandomization for models beyond BPL.}
For the case $w = n$ (perhaps the most difficult case), we show that in SC$^2$, one can actually simulate two larger classes of computations.


\begin{theorem}[SC derandomization of read-$p$ machines]
\label{thm:read-p-sc-main}
Fix a constant $p\ge1$. Let $A$ be a language decided with
bounded error by a polynomial-time probabilistic TM
using $O(\log n)$ workspace and a two-way read-only random tape.
Suppose that each random-tape cell is visited at most $p$ times
in every computation, where consecutive steps at the same cell
count as a single visit. Then $A\in\mathrm{SC}^2$.

Specifically the acceptance
probability of the probabilistic TM can be approximately computed to within additive error $\varepsilon$ in time
$\poly(n,1/\varepsilon)$ and space
$O\!\left(\log^2 n+\log n\log(1/\varepsilon)\right).$
\end{theorem}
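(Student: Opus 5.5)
We reduce the read-$p$ machine to a structure that a Nisan-style recursive matrix-squaring procedure can handle. Fix an input $x$ of length $n$. The machine $M$ runs for $T=\poly(n)$ steps with $O(\log n)$-bit configurations (work tape, input head, random-tape head position $h\in[m]$, $m\le T$). We view the random tape $r\in\{0,1\}^m$ as partitioned into blocks and argue recursively on the tape position rather than on time. The key observation is a crossing-sequence argument: since each cell is visited at most $p$ times, the computation crosses each boundary between cells $i$ and $i+1$ at most $O(p)$ times. We record the sequence of configurations at these crossings (direction plus configuration minus the random bits) as a \emph{crossing sequence}. Its length is $O(p)=O(1)$, so a crossing sequence takes $O(p\log n)=O(\log n)$ bits. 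The computation is then determined by compatible crossing sequences at every boundary, and the segment between two consecutive boundaries depends only on the single random bit $r_i$. This gives a read-once branching program in the tape direction. Its width is $w=n^{O(p)}$ (the set of crossing sequences), its length is $m$, and its alphabet is $\{0,1\}$. Layer $i$'s transitions encode local consistency: given crossing sequences $c_{i-1},c_i$ and bit $r_i$, we check that the partial runs entering cell $i$ lead deterministically to the exits listed. This is a logspace check, simulating $M$ with the head confined to cell $i$ with bit $r_i$; runs that stall or loop in the cell are handled by a step bound.

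The subtlety is that this is a nondeterministic/unique-witness structure rather than a stochastic one. For each $r$ there is exactly one globally consistent family of crossing sequences, namely the true one. Consistency is not forward-deterministic, because a crossing sequence at boundary $i$ is not determined by the one at $i-1$ and $r_i$ alone. So the object is a ``sum over paths'' of 0/1 weights, in which each $r$ has exactly one accepting path (accepting means the final configuration is accepting). Then $\Pr_r[M\text{ accepts}]=2^{-m}\sum_r\sum_{\text{paths}}\prod_i [\text{layer } i \text{ consistent}]$, which equals $e_{\mathrm{start}}^\top\prod_i A_i\, e_{\mathrm{acc}}$ with $A_i=\frac12(A_i^{(0)}+A_i^{(1)})$ nonnegative $w\times w$ matrices. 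The uniqueness of the path guarantees that the total mass is at most $1$, and every partial product has entries in $[0,1]$ after restricting to reachable/co-reachable states. We must verify this point carefully.

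It therefore suffices to approximate an iterated product of $m$ nonnegative matrices of dimension $w=\poly(n)$ with bounded entries, to additive error $\varepsilon$. We follow Nisan's SC approach \cite{nisan1992rl}. Nisan's generator with seed $O(\log m\log(w/\varepsilon))=O(\log^2 n+\log n\log(1/\varepsilon))$ fools the program, since the generator's analysis only uses that each layer's matrix is an average over the fresh bit and that products are bounded in the relevant norm. We then run the standard SC derandomization: fix good hash functions level by level, testing each candidate by computing approximate products. This gives time $\poly(n,1/\varepsilon)$ and the claimed space. Deciding $A$ uses constant $\varepsilon$, giving $\mathrm{SC}^2$.

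We expect the main obstacle to be the non-stochastic, non-forward-deterministic nature of the tape-direction program. Nisan's analysis relies on contraction in a matrix norm where each row is a probability distribution. Here rows of $A_i$ may have total mass exceeding $1$, since several $c_i$ are consistent locally, and only the global product is bounded. We would first try to repair this by a ``forward normalization'': restrict states to crossing sequences consistent with some prefix of a real run (reachable from the start), and show that under this restriction, for each fixed prefix of random bits, exactly one state per layer is consistent on both sides. If that fails, we fall back on the INW generator's read-$k$ guarantee \cite{impagliazzoPseudorandomnessNetworkAlgorithms1994} as the pseudorandom object. We would then derandomize it in SC by the same fix-hash-functions-level-by-level method, evaluating each test via the crossing-sequence dynamic program over a segment of the tape. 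That dynamic program is a reachability computation on $O(1)$-length crossing sequences within a window, and it can be carried out in $O(\log n)$ space per level.
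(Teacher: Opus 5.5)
\textbf{Gap: the error analysis of your tape-direction program is missing, and the repair you propose fails.}

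You correctly note that the layer matrices are $0/1$ consistency relations rather than stochastic matrices. You also note that Nisan's hybrid argument multiplies each local hashing error by norms of the surrounding partial products, which here can be of order $w$. But you never supply a replacement analysis, and your ``forward normalization'' repair cannot work. The crossing sequence at boundary $i$ records the configurations with which the head re-enters from the right, and these depend on $r_{i+1},\ldots,r_m$. So a fixed prefix $r_1,\ldots,r_i$ is in general consistent with many states at boundary $i$, one for each hypothetical right-hand behaviour. This remains true even after restricting to crossing sequences that occur in real runs, so no such restriction makes the process forward-deterministic or its rows sub-stochastic. As written, the claim that Nisan's generator (or its SC derandomization) approximates $e_{\mathrm{start}}^\top\prod_i A_i e_{\mathrm{acc}}$ is unsupported.

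\textbf{What would close the gap.} Unambiguity holds for every segment, not only globally. Fix bits on cells $i+1,\ldots,j$ and crossing sequences $c,c'$ at boundaries $i$ and $j$; then the computation inside the segment is forced, by induction on the timestamps. Hence every product of your relation matrices along a fixed bit string has $0/1$ entries; you state the $[0,1]$ bound but never use it.
\begin{itemize}
\item Each step of Nisan's telescoping then has the form $u^\top(\widetilde Q-Q)v$ with $u,v\in[0,1]^w$, since independently seeded blocks average $0/1$ vectors. So each step is at most $w^2\|\widetilde Q-Q\|_{\max}$.
\item For the products $f_L(x),f_R(y)$ of two adjacent blocks, each entry $[f_L(x)f_R(y)]_{c,c''}$ is the indicator of a disjoint union of $w$ rectangles $\{x:[f_L(x)]_{c,c'}=1\}\times\{y:[f_R(y)]_{c',c''}=1\}$. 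This is exactly what a hash-function test controls.
\end{itemize}
Since $w=n^{O(p)}$, these $\poly(w)$ losses are absorbed by $O(\log(nw/\varepsilon))$-bit blocks and hash functions. With this, your main route becomes a valid and more direct alternative to the paper's proof, though it is specific to the path structure.

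\textbf{The fallback.} Your fallback is essentially the paper's route, but it is underspecified. INW uses expanders, so there are no hash functions to fix. The paper keeps the online seed logarithmic by passing each level's INW seed through an averaging sampler. The sampler's offline seed is found by exhaustive search against all boundary tests at that depth of a balanced interval tree; each test fixes the crossing sequences at both ends of a segment and, when relevant, the output bit. The paper also proves a substitution lemma showing that fooling these tests preserves the global acceptance probability. Your sketch supplies neither the sampler mechanism nor this lemma.
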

Notice that this generalizes the SC derandomization of \cite{nisan1992rl} for standard BPL to an SC derandomization for the read-multiplicity model of \cite{impagliazzoPseudorandomnessNetworkAlgorithms1994}. 

\begin{theorem}[SC derandomization of machines with oblivious stacks]
\label{thm:oblivious-stack-sc-main}
Let $A$ be a language decided with bounded error by a
polynomial-time probabilistic logspace AuxPDMs
with the restriction that for each input, the timings of its stack operations (push/pop/idle) do not depend on the randomness.
Then $A\in\mathrm{SC}^2$.

Specifically the acceptance
probability of the probabilistic AuxPDMs can be approximately computed to within additive error $\varepsilon$ in time
$\poly(n,1/\varepsilon)$ and space
$O\!\left(\log^2 n+\log n\log(1/\varepsilon)\right).$
\end{theorem}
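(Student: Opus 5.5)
The plan is to reduce the oblivious-stack machine to a computation over a tree-structured (nested) sequence of transition matrices, and then run a Nisan-style SC derandomization on that structure. Fix the input $x$. By the obliviousness assumption, the sequence of stack operations (push/pop/idle) at each time step $t\in[T]$, $T=\poly(n)$, is a fixed string independent of the randomness, and it can be computed in logspace by simulating the machine on any fixed random string, say all zeros. This string is a well-parenthesized (Dyck-like) word. Each push at time $t$ is matched with a pop at time $t'$, and the random bits used strictly between them form a contiguous block. The state of the computation is (logspace configuration, stack contents). The stack symbol pushed at time $t$ depends on the randomness, but it is only read at time $t'$, and within the interval $(t,t')$ the stack below is never touched. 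The object to approximate is therefore a \emph{generalized transition matrix} $M_{[t,t']}$, indexed by (configuration before the push) $\times$ (configuration after the matching pop), with $\poly(n)$ rows and columns. It is built recursively: $M_{[t,t']}=\sum_{a}P^{(a)}_t\cdot M_{(t,t')}^{(a)}\cdot Q^{(a)}_{t'}$, where $a$ ranges over stack symbols, $P^{(a)}_t$ is the stochastic transition pushing $a$, and $Q^{(a)}_{t'}$ is the pop transition that reads $a$. To keep the state space $\poly(n)$, we can carry the pushed symbol $a$ inside the state through the inner interval. Concatenations of sibling intervals correspond to matrix products. In this way the whole computation becomes an iterated product and sum expression over a tree of stochastic matrices of width $w'=\poly(n)$, of total length $T$.

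Next, I would flatten this tree into an ordinary ROBP of length $T$ and width $\poly(n)$ over the random bits. The key observation is that at any time the relevant information is the current configuration together with the top stack symbol \emph{of each open level}. That is unbounded, so a naive flattening fails. Instead, I would use the read-once structure differently: the random bits in a block $(t,t')$ are read once, and the stack symbol $a$ acts only as a finite "pending" piece of information. Reading random bits left to right, the machine's behaviour is a ROBP with a stack. Its acceptance probability is computed by a recursive product in which each inner block contributes a $\poly(n)\times\poly(n)$ matrix. Thus the natural algorithm is the Cook-style recursion: compute the matrix of a well-nested interval by splitting it into $O(1)$ pieces of at most half the length. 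This uses the standard balanced-decomposition trick for Dyck words: every nested interval of length $L$ can be split at a point so that both the outer and the inner part have length at most $2L/3$, with the inner part treated as a "hole". Concretely, $M_{\text{outer with hole}}$ is a linear map on matrices, representable as a $\poly(n)^2$-dimensional matrix. This still has $\poly(n)$ size, so everything stays of width $\poly(n)$ and the recursion depth is $O(\log n)$.

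Finally, I would derandomize as in Nisan's SC result. Exactly multiplying these matrices in the recursion would cost $O(\log n)$ space per level times $O(\log n)$ bits of precision, but the time would blow up unless we use a pseudorandom generator. So I would prove that Nisan's generator with $O(\log n)$ hash functions fools the nested computation, since each split point is a "product" of two independent halves in the sense of the INW/Nisan analysis: the communication across a split is the configuration plus the hole's configuration pair, which is $O(\log n)$ bits. Then I would apply Nisan's SC trick. For each level, deterministically search for a good hash function, meaning one for which the pseudorandom matrix of each piece is $\varepsilon$-close to the true matrix computed from the previous level's approximations, checked via the $\poly(n)$-size matrices. Fixing the hashes one level at a time gives total space $O(\log n\cdot\log(n/\varepsilon))$ and $\poly(n,1/\varepsilon)$ time. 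The main obstacle is the second step: making the random-bit blocks align with the balanced split so that each piece reads a contiguous segment of the random tape, and bounding the width of the "matrix with a hole" objects so that the error composition in Nisan's argument is still additive, with $O(\varepsilon)$ per level over $O(\log n)$ levels.
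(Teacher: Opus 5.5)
There is a genuine gap, and it sits exactly at the step you flag as the main obstacle. Nisan's generator and its analysis split the random tape into contiguous halves. A contiguous cut at time $t$ carries the whole stack content at time $t$ across it, which can be $\Theta(T)$ symbols. For example, if the schedule is $T/2$ pushes followed by $T/2$ pops, every cut with both sides of size at least $T/3$ is crossed by at least $T/3$ push--pop pairs. So the alignment you hope for, with every piece reading a contiguous segment of the random tape, is impossible in general: the balanced split of a nested structure necessarily produces an outer piece made of two segments. Your remark that the communication across a split is only a few configurations holds only for that non-contiguous split, and the contiguous Nisan analysis you invoke does not cover it.

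The missing idea is to drop contiguity and index the generator by a partition tree of the time steps, with each node's seed generating the random blocks of exactly the steps in its set. The paper builds such a tree of depth $O(\log T)$ in which every non-leaf set is a closed interval $I$ or a difference $I\setminus J$ of closed intervals, so no matching edge leaves it. Hence every three-way cut has at most $6$ edges, each carrying an $O(\log n)$-bit message, and the INW communication-network argument applies. This also needs a decomposition lemma you only gesture at. Recursing on an ``outer with a hole'' piece must never create pieces with two holes, yet sizes must halve within $O(1)$ levels. The paper gets this from a chain $I=K_0\supsetneq K_1\supsetneq\cdots\supsetneq K_s=J$ of closed intervals whose consecutive differences are closed intervals or single matched pairs.

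A second problem is the claim that $\sigma_x$ is logspace computable by running $M$ on the all-zeros string. That run uses the stack, and since every deterministic AuxPDM is trivially oblivious, computing $\sigma_x$ is $\mathrm{LogDCFL}$-hard in general. The same issue arises when you test a candidate hash function, because evaluating the pseudorandom behaviour of a closed-interval piece means simulating the pushes and pops inside it. The paper handles both by noting that these are polynomial-time deterministic logspace AuxPDM computations, hence in $\mathrm{LogDCFL}\subseteq\mathsf{SC}^2$ by Sudborough and Cook, which fits the $O(\log^2 n)$ budget. Recomputing recursively along the decomposition tree with constant fan-out should also work.

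Once these points are repaired, your ``matrix'' and ``matrix with a hole'' summaries are exactly the paper's boundary tests for sets $I$ and $I\setminus J$. Hole substitution is nonexpansive in the induced $1$-norm, so errors do add. Your route would then be a legitimate alternative to the paper's INW expanders combined with per-depth averaging samplers: Nisan-style hash functions fixed one depth at a time, each tested against pairs of the two children's boundary events. As written, however, the proposal does not reach that point.
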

Notice that if without the restriction, i.e. timings of stack operations can depend on the randomness, then this is exactly the model considered in \cite{venkateswaran2006derandomization}. As the proof of \cite{venkateswaran2006derandomization} is incomplete \cite{allenderLange2014symmetry}, our result can be viewed as positively answering the open question on whether Probabilistic logspace AuxPDMs can be simulated in SC, in a slightly restricted form. 

We further remark that though with such a restriction, the model can compute certain problems that are regarded as candidates to separate some more powerful classes from BPL.
For example the sparse s-t connectivity problem
STCON$_{sf} = \left\{ \langle G, s, t, 1^k \rangle \mid \forall i, j\in V(G), N(i, j)\le k, N(s,t)\ge 1   \right\}$ is a problem considered in \cite{ApersEdenhofer2025}, where $N(i, j)$ is a function counting the number of paths between $i$ and $j$. It is regarded by \cite{ApersEdenhofer2025} as a candidate separating Bounded-error Quantum logspace (BQL) from BPL. We show that STCON$_{sf}$ can be computed by deterministic logspace AuxPDMs (see \cref{app:stcon-sf} for proofs) and hence also computable by our model. (As logspace AuxPDMs is already in SC \cite{cook1979deterministic}. This also positively answers one open question of \cite{ApersEdenhofer2025} on whether STCON$_{sf}$ can be computed by SC.)

\subsection{Technical Overview}
\label{sec:technical-overview}

We show our SC derandomizations for long-narrow regular ROBPs first. We have two theorems \cref{thm:regular-derand} and \cref{thm:perm-regular-main} for this case. They share a similar algorithm structure. We show in \cref{sec:reg-to-perm-phase} that regular ROBPs can be transformed to permutation ROBPs implicitly in logspace. For these two constructions, we therefore focus on permutation ROBPs.

\paragraph{An $m$-ary Structure}

The general algorithm structure for the long-narrow case is as the following.
Consider a permutation ROBP of length $n$, width $w$ and alpbabet $\Sigma$. Our algorithm has $K = \lceil\log n/\log m\rceil$ stages.
For each stage, for every consecutive $m$ transitions, consider applying an INW generator. Then this can be viewed as a shorter program where each transition consumes a seed of an INW generator. 
The program length is reduced by dividing a factor of $m$ while the alphabet size is increased to be the support size of the INW seed. 
To control this growth of alphabet, we use an averaging sampler to reduce the alphabet, i.e. we deterministically search a common offline seed for all the INW generators such that the transition probabilities can be approximated ``good-enough''. We will explain later what is called ``good-enough'' and how to check it, for two different approaches corresponding to \cref{thm:regular-derand} and \cref{thm:perm-regular-main} respectively. 
Here we only focus on the general structure.
As we start with permutation ROBPs, after using the sampler, the program remains a permutation ROBP. 
We store the offline seed for this stage. 
The shortened program is left to the next stage until the length becomes a constant.

The space bound depends on how many stages are needed which essentially depends on $m$.
We choose $m$ as large as possible, while also ensuring that the offline seed must be enumerable within polynomial time. 
Let $d$ be the length of such a seed.
The space used to store the offline seeds sums up to
\[
   O\!\left(\log|\Sigma|
      +d K\right) =   O\!\left(\log|\Sigma|
      +d\left\lceil\frac{\log n}{\log m}\right\rceil\right).
\]
We also need to ensure the inner seed length of these samplers to be logarithmic so that
given the offline seeds of the samplers, enumerating all
values of a inner seed consumes polynomial time. 
We need this property to check whether the approximations are ``good-enough''.

\paragraph{Converting \cite{hoza2021pseudorandom} to a SC derandomization, separately tracking SV and Euclidean errors.}
We use the singular value analysis of
\cite{hoza2021pseudorandom,
chenWeightedPseudorandomGenerators2023},
together with the SV approximation framework of
\cite{ahmadinejad2023singular}.
The corresponding guarantee for large alphabets is stated in
\cite[Lemma~4.6]{chengWuWeightedPseudorandomGenerators2026}.
We extend this analysis to accommodate the error
introduced by sampling.
For doubly stochastic matrices $W$ and $\widehat W$, a
$\varepsilon$-SV approximation $\widehat W$ of $W$ implies
$$
   |v^\top(\widehat W-W)u|
   \le \frac{\varepsilon}{4}
   \left(
      u^\top(I-W^\top W)u
      +v^\top(I-WW^\top)v
   \right)
   \qquad\text{for all }u,v\in\mathbb R^w.
$$
In contrast, the sampler's entrywise guarantee gives $\|\widehat W-W\|_2\le\delta$, and hence
$
   |v^\top(\widehat W-W)u|
   \le \delta\|u\|_2\|v\|_2.
$
Even for unit vectors $u,v$, the sum of the two quadratic forms
in the SV bound can be arbitrarily small.
Thus we don't know how to absorb the Euclidean error into the required
SV guarantee, regardless of how small $\delta$ is compared to $\varepsilon$. Instead, we track the sampling error separately. For a doubly stochastic matrix $W$, we say $\widehat W$ is an $(\varepsilon,\delta)$-SV approximation of $W$ if
\[
\begin{aligned}
 |v^\top(\widehat W-W)u|
 &\le \frac{\varepsilon}{2}
   \bigl(\|u\|_{R(W)}^2+\|v\|_{L(W)}^2\bigr)\\
 &\quad+\frac{\delta}{2}
   \bigl(\|u\|_2^2+\|v\|_2^2\bigr)
 \qquad\text{for all }u,v\in\mathbb R^w,
\end{aligned}
\]
where $R(W)=I-W^\top W$, $L(W)=I-WW^\top$, and
$\|u\|_A^2=u^\top Au$.

We augment the estimates of matrix product and transitivity rule of SV-approximations
\cite{ahmadinejad2023singular} to accommodate this two-parameter SV-approximation. 
The singular-value and Euclidean errors accumulate essentially independently, up to small mixed terms. 
This allows us to choose their target accuracies separately: the INW generator accuracy is set as in \cite{hoza2021pseudorandom}, while the Euclidean error is made much smaller.

For standard $\varepsilon/\polylog(n)$-SV error, by the analysis of \cite{hoza2021pseudorandom}, the INW generator for a block of
$m$-transitions uses
$$
O\!\left(\log m\bigl(\log\log n+\log(1/\varepsilon)\bigr)\right)
$$
seed bits. 
We can show that after using a $( \delta/\poly(nw),  1/\poly(nw))$-averaging sampler, with offline seed length
$d=\Theta(\log(nw/(\varepsilon\delta)))$, we can attain an $ (\varepsilon, \delta) $ two parameter SV approximation. 
In addition, we can take $\log m=\Theta\left(\frac{\log(nw/(\varepsilon\delta))}{\log\log n+\log(1/\varepsilon)}\right)$ under this parameter setting.
Hence one can check the overall space for all stages is
$$
   dK = O\!\left(
      \log|\Sigma|
      +\log n\bigl(\log\log n+\log(1/\varepsilon)\bigr)
      +\log\frac{nw}{\delta}
   \right).
$$

\paragraph{Converting \cite{cohenDoronGoldgraberForwardBackward2026} into an SC derandomization by incorporating sampling into the forward--backward analysis.}
For \cref{thm:perm-regular-main}, we use the forward--backward
analysis of Cohen, Doron, and Goldgraber
\cite{cohenDoronGoldgraberForwardBackward2026}.
To make this construction compatible with our SC simulation, we
insert averaging samplers between applications of INW generators.

We extend the forward--backward analysis to arbitrary finite
alphabets and show that, for the sampling steps in our construction,
entrywise error $\alpha$ implies forward--backward error at most
$|\Sigma|^2w^4\alpha$. Here $\Sigma$ is the original alphabet,
and the loss is independent of the interval length. 
We also extend the INW merge analysis to allow the child matrices
to contain sampling error. If their forward--backward errors are
at most $\eta_L$ and $\eta_R$, and the expander has second singular
value at most $\lambda_{\exp}$, then
\[
   \eta_{\mathrm{new}}
   \le
   \max\!\left\{
      \eta_L,\eta_R,\,
      \lambda_{\exp}|\Sigma|^2+16w^3\eta_L\eta_R
   \right\}.
\]
For a sufficiently small error bound $\eta=O(\varepsilon/w^3)$,
we can choose $\lambda_{\exp}$ so that
$\lambda_{\exp}|\Sigma|^2+16w^3\eta^2\le\eta$.
Consecutive INW merges then preserve this bound.

The required expander accuracy depends only on $w$, $|\Sigma|$,
and $\varepsilon$, so each INW level adds
$O(\log(w|\Sigma|/\varepsilon))$ seed bits.
With seed budget $d=\Theta(\log(nw|\Sigma|/\varepsilon))$, we
choose $m\le n$ as large as possible within this budget, giving
\[
   \log m=\Theta\!\left(
      \min\!\left\{
         \log n,\,
         \frac{\log(nw|\Sigma|/\varepsilon)}
              {\log(w|\Sigma|/\varepsilon)}
      \right\}
   \right).
\]
The sampler searches and online seed enumeration take polynomial
time. Storing the offline seeds over
$K=\lceil\log n/\log m\rceil$ stages therefore takes space
\[
   O\!\left(\log|\Sigma|+dK\right)
   =
   O\!\left(\log n\log\frac{w|\Sigma|}{\varepsilon}\right).
\]
The conversion to regular ROBPs preserves this bound, giving
\Cref{thm:perm-regular-main}.

\paragraph{The case $w=O(\log n)$.}
Our proof for \Cref{thm:narrow-evolving-main} is inspired by the evolving-set method
of Morris and Peres \cite{morrisPeresEvolvingSets2005}.
Starting from the final layer, we process layer by layer until we reach the first layer.
At the beginning we take the set $S_n$ to be the accepting set of the final layer.
At each layer $i$, let $A$ be consisted of layer $i-1$ states whose two
successors belong to the set $S_i$. 
Let $B$ be those with at least one successor in the set. 
If $A=B$, we set $S_{i-1}=A$;
otherwise, we use a fair random bit to choose $S_{i-1}=A$ or $S_{i-1}=B$.
The probability that the start state belongs to the $S_0$ we maintained, is an
unbiased estimation of the acceptance probability of the ROBP.

Regularity makes the $|S_n|, |S_{n-1}|, \ldots, |S_0| $ a martingale. If at layer $i$, set $S_i$ has size $x$ and $d=(|B|-|A|)/2$, for potential function  $H(x)=x(w-x)$, we have $\E[H(|S_{i-1}|)\mid S_i]\leq H(|S_i|)-d^2$. Notice that $\forall x, H(x)\leq w^2/4$. 
Every time we use a random bit to choose between $A$ and $B$, notice that it has to be $d\ge1$.
So the expected number of the random choices is at most $w^2/4$. This bound allows us to truncate the simulation to use at most $O(w^2\log(1/\varepsilon))$ random bits.
The whole process only pays space for storing a subset of $[w]$ and a layer index.
So the total space is $S=O(\log n+w)$. 
For $1/\poly (w)$ accuracy, it is sufficient to use $\poly(S)$ random bits.
We show that Nisan--Zuckerman generator
\cite{nisanRandomnessLinearSpace1996} can be used here.
So we only need an $O(S)$-bit seed to simulate this randomized computation. 
Enumerating those seeds takes space $O(S)$ and time $2^{O(w)}\poly(n,w)$. 

\paragraph{SC derandomization for  short-wide  regular programs} We show \Cref{thm:regular-derand-low-error} using recursive error reductions.
We first construct
base approximations with inverse-polynomial accuracy in $n$, using
\Cref{thm:regular-derand}, and
then apply the recursive error reduction of Chattopadhyay and Liao
\cite{chattopadhyayRecursiveErrorReduction2023}.

For each dyadic interval $J$, let $W_J$ denote the average
transition matrix of the original program on $J$. We obtain a base approximation
$\widehat W_J^{(0)}$ for every nonunit interval, while unit intervals
are represented exactly. We recursively construct order-$h$ approximation matrices given approximation matrices of lower orders. Order $0$ is the
base approximation, and the order-$h$ approximation $W_{J}^{(h)}$ has an error of roughly
$n^{-(h+1)}$. Therefore, taking
$$
   k=\left\lceil\frac{\log(1/\varepsilon)}{\log n}\right\rceil
$$
gives error at most $\varepsilon$.

Our base approximations have both SV and Euclidean error.
To start the error reduction, the initial SV error is required to be smaller than $1/\poly (\log n)$, and the Euclidean error smaller than $1/\poly(n)$. We set the parameters as such.

The approximation $W_{J}^{(h)}$ appears as a weighted sum of products of matrices.
Expanding it gives, by
\cite[Lemma~25]{chattopadhyayRecursiveErrorReduction2023}, a signed
sum of $\poly(n,1/\varepsilon)$ products, each containing
$$
   O(\log n+\log(1/\varepsilon))
$$
order-$0$ matrices.

It remains to evaluate the corrected approximation in polynomial
time. We view each
such product as the average transition matrix of a short permutation
ROBP and approximate it using the permutation WPRG of
\cite{chengWuWeightedPseudorandomGenerators2026}. The expansion
terms and the corresponding weighted seeds can then be enumerated
in polynomial time, giving space
$$
   O\!\left(
      \log|\Sigma|+\log^2 n+\log(1/\varepsilon)+\log w
   \right).
$$

\paragraph{Reachability in regular ROBPs.}
We managed to show that there is an efficient relabeling algorithm in logspace, which, for each layer $i$, can transform the transition matrix $W_i$ to be $$ \frac{I + R_i}{2}.$$
That is, after relabeling, the new program will choose to wait ($I$ is the identity matrix) at its current state, with probability 1/2. It will choose to transit according to a permutation matrix $R_i$ for the other 1/2 probability.
Then notice that a path segment that leaves a state and later returns to it can then be replaced by waiting. Thus every reachable state has a corresponding certificate path which has at most $w-1$ state changes.

We represent the timing of state changes by earliest-arrival functions. Let $(E^m)_{v,u}(t)$ denote the earliest arrival time at $v$ from $u$ at time $t$ using at most $m$ state changes, with value $\infty$ if no such path exists. The case $m=1$ is computed by scanning the layers. For positive integers $r,m$, we have
\[
   (E^{r+m})_{v,u}(t)=\min_{z\in[w]}(E^r)_{v,z}\bigl((E^m)_{z,u}(t)\bigr).
\]
Here the path first uses at most $m$ changes to reach $z$, then at most $r$ changes to reach $v$. Keeping only the earliest arrival at $z$ suffices because the path can wait there for any later continuation. Since every reachable state has a path with at most $w-1$ changes, evaluating these functions for $m\ge w-1$ decides reachability.

We recursively group $a$ compositions at each level, enumerating the $w^{a-1}$ tuples of intermediate states and evaluating the $a$ child compositions sequentially. Each level stores the tuple and a constant number of time values, using $O(\log n+a\log w)$ bits. The depth is $d=\lceil\log_a(w-1)\rceil$, and each call makes at most $aw^{a-1}$ recursive calls. Choosing $a=\Theta(\max\{2,\log n/\log^2 w\})$ bounds the running time by $2^{O(a\log^2 w)}\poly(n,w)$, which is polynomial when $\log^2 w=O(\log n)$. For polylogarithmic width, the depth is constant and the space is $O(\log n)$.

\paragraph{Powers of a fixed regular transition matrix.}
We approximate $A^ne_s$ by the uniform distribution on its support $R_n$. In the component containing $s$, the cyclic classes have a common size $h$, and every edge advances to the next class. Our key lemma shows that the set reachable from a nonempty proper subset of one class grows in size within $w-1$ steps. Hence $R_n$ is the whole class reached at time $n$ once $n\ge(w-1)(h-1)$.

The same lemma gives a mixing bound through the forward evolving-set process. Starting from $S_0=\{s\}$, each update chooses equally between the vertices with both incoming edges from $S_t$ and those with at least one. If $|S_t|=j$, the two successor sizes are $j-a$ and $j+a$. The potential $\Phi(S)=|S|(h-|S|)$ satisfies
\[
   \mathbb E[\Phi(S_{t+1})\mid S_t]=\Phi(S_t)-a^2,
   \qquad
   \left\|A^te_s-\operatorname{Unif}(C_t)\right\|_{\mathrm{TV}}
   \le\frac{\mathbb E\Phi(S_t)}h,
\]
where $C_t$ is the cyclic class reached at time $t$. Until the two choices differ, the process follows the reachable set deterministically. The growth lemma forces them to differ within $w-1$ steps whenever the set is nonempty and proper, decreasing the expected potential by at least one. Since $\Phi\le h^2/4$, this gives an error bound that decreases geometrically over blocks of $O(w^3)$ steps.

To compute $q_n$, we determine membership in $R_n$ and count $|R_n|$ and $|F\cap R_n|$. For each $1\le m\le w$, attach a clock modulo $m$ to the support graph, replacing $u\to v$ by $(u,z)\to(v,z+1\bmod m)$. Every vertex in the resulting graph has equal indegree and outdegree, so each weak component is strongly connected. For $n\ge w^2-1$, we show that $v\in R_n$ exactly when $(s,0)$ and $(v,n\bmod m)$ are connected in the underlying undirected graph for every $m\le w$. A length-$n$ walk satisfies all these tests. Conversely, choose a simple directed cycle through $s$ of length $c\le w$. The test for $m=c$ gives a directed simple path in the clock graph of length $\ell\le wc-1\le n$ with $\ell\equiv n\pmod c$. Prefixing its projection with $(n-\ell)/c$ traversals of the cycle gives a walk of length exactly $n$. Each clock graph has at most $w^2$ vertices, so undirected connectivity and the two counters use space $O(\log(w+b))$ and time $\poly(w,b)$, where $b$ is the bit length of $n$.

\paragraph{SC derandomization for models beyond BPL.}
Next we show our SC derandomizations for models beyond BPL. We start with the communication-network model of INW
\cite{impagliazzoPseudorandomnessNetworkAlgorithms1994}. Then we apply the result to read-$p$ machines and stack machines.

In the communication-network model, a computation
is done by local processors that use private randomness and
exchange messages according to a fixed communication graph $H = (V, E)$. 
Each node in the graph corresponds to a processor. 
The computation proceeds in alternating rounds of local computation and communication, beginning with a local computation round. 
In a communication round, each processor can choose some of its neighbors in $H$, and then send messages to them. 
In a local computation round, each processor do a computation based on the messages received, and their locally stored information.
On the other hand, these processors can be organized into a balanced binary partition tree in the following way. 
Each leaf represents one processor, and each
internal node represents the processors in its subtree. 
We specifically consider the following notion (previously defined in \cite{impagliazzoPseudorandomnessNetworkAlgorithms1994}).
\begin{definition}[Partition tree and width \cite{impagliazzoPseudorandomnessNetworkAlgorithms1994}]
A \emph{partition tree} \(\mathcal T\) for \(H\) is a rooted binary tree
whose leaves are in bijection with \(V\), and in  \(\mathcal T\) every internal
node has two children. For a node \(\nu\), let \(V_\nu\) be the set of
processors corresponding to its descendant leaves. For an internal
node with children \(\nu_{\mathrm L},\nu_{\mathrm R}\), write
\[
   A_\nu:=V_{\nu_{\mathrm L}},
   \qquad
   B_\nu:=V_{\nu_{\mathrm R}},
   \qquad
   C_\nu:=V\setminus V_\nu.
\]
Let \(\cut(\nu)\) consist of the edges with endpoints in different sets
among \(A_\nu,B_\nu,C_\nu\). The width \(\omega(\nu)\) is the minimum
size of a vertex cover that can cover all edges of \(\cut(\nu)\), and the width of the tree is
\[
   \omega(\mathcal T):=\max_{\nu\text{ internal}}\omega(\nu),
\]
with value zero for a one-leaf tree. The tree is \emph{balanced} if its
depth is \(O(\log(N+1))\). The \emph{width} $\omega$ of \(H\) is the minimum
width of a balanced partition tree for \(H\).
\end{definition}

For our algorithm, we need to compute a balanced parition a tree with minimum width, implicitly in SC. This means that there is an SC algorithm which can specify each node of the tree from leaves to the root, from left to right. The algorithm can provide the processors that belong to the substree of a given node. Later when we consider our two specific models we will explain how to realize this algorithm. Now we just assume it. 

For every layer from leaves to the root,
we use averaging samplers to construct, at every node of the layer, a generator whose seed
determines the private randomness of all the processors in this subtree.
At a leaf, we use an averaging sampler to generate the processor's
private random string from a short inner seed, after fixing the
sampler's offline seed. Then consider an internal node whose two
child generators have already been constructed. We apply the
two-party INW generator to produce a pair of seeds, one for each child generator, using one fresh seed together with the randomness for one step of random walk. The seed required by this INW step is longer than either child
seed. We therefore apply another averaging sampler to sample this longer seed. 
We will explain how to find a good (offline) seed of the sampler, but with the sampler's offline seed fixed, a short inner seed
determines the INW seed, which determines both child seeds and
hence all randomness in the subtree. This gives the parent
a generator with a short seed, allowing us to repeat the same
construction at the next level.
We continue until reaching the root, whose generator supplies
randomness for the entire network. Once all offline seeds have
been fixed, we enumerate the root generator's short seed and
average the resulting outputs of the computation.

For each node, to find an appropriate offline seed of the samplers, we use \emph{boundary tests} as follows (We think this concept is also implied in \cite{impagliazzoPseudorandomnessNetworkAlgorithms1994}).
Each test specifies a sequence of messages between the processors which belong to the subtree of this node and all the other processors. It also specifies all local outputs for this subtree. It checks their consistency with the computation inside the subtree.
We require the offline seed to provide a good enough approximation to the probabilities of passing every such test. 
We also managed to show that any distribution fooling these boundary tests is enough to attain a good approximation to the acceptance probability of the true randomness case.
For each offline seed candidate, we run every boundary-test, and compute the acceptance probabilities corresponding to the pseudo-distribution and the true randomness case. We check if the two probabilities are close enough. 

Let $\omega$ be the partition-tree width and $N$ the number of processors.
Suppose for each processor, its total communication in bit length, is upper bounded by a universal constant $c$.
Also assume that each boundary test records $k$ output bits in the corresponding subtree.
We can show that for each node, the (offline) seed length of the sampler is
\[
   O\!\left(\omega c+k+\log(N/\varepsilon)\right).
\]
When this seed-length is logarithmic, searching for offline seeds and the final seed
enumeration remain in polynomial-time.

For polynomial-time logspace read-$p$ machines with constant $p$,
we consider the communication graph $H$ to be a simple path
\cite[Section~4.1]{impagliazzoPseudorandomnessNetworkAlgorithms1994},
where each processor possesses one random-tape cell.
We managed to show that the partition tree with minimum width can be computed in SC since the perfect binary tree is already the desired tree. The width of it can be verified to be  $\omega=O(1)$.
Each processor communicates at most $O(p)$ configurations to its neighbors (Each configuration is a configuration of the processor at a certain time point, but excluding the original input $x$ since it is fixed). 
So its
communication (we slightly modify the protocol to include a timestamp for each message) has length
$c = O(p\log n)=O(\log n)$.
Since $N=\poly(n)$,
$k=1$, the sampler seed length is $O(\log(n/\varepsilon))$.
Each boundary test only specifies $O(\log n)$ bits of information. So enumerating all tests and hence also the total time to find a good seed of a sampler is a polynomial.

For polynomial-time logspace machines with an oblivious stack,
we let each processor correspond to a computation step. 
Each processor will send its configuration information to the processor corresponding to the next computation step. It also sends a symbol ($O(\log n)$ bits) if there is a push operation pushed this symbol at this step, to the corresponding processor which will pop out this pushed symbol. 
Since the work space is logarithmic, 
each processor exchanges at most $c=O(\log n)$ bits of information.
We managed to show that, the restriction that all timings of stack operations are independent of the randomness, yields a balanced partition
tree of width $\omega=O(1)$, which can be computed in SC. 
Then with $N=\poly(n)$ and $k=1$,
we obtain the same seed-length bound for samplers.
Hence the overall space is as stated.
We managed to show that each boundary test can be computed by a deterministic auxiliary pushdown
machine in polynomial-time. So this test can be computed in SC$^2$ by
\cite{Sudborough1978,cook1979deterministic}.
The total number of tests is bounded by $2^c$. So the overall running time is a polynomial.

\section{Preliminaries}

All logarithms are to base two, and $[n]:=\{1,\ldots,n\}$.
Finite alphabets and sampler universes are nonempty indexed sets; their
elements are represented by binary indices. Unless otherwise stated,
expectations over finite sets are uniform.
For a matrix $A\in\mathbb R^{w\times w}$, we use the induced norms
\[
   \|A\|_1:=\max_v\sum_u|A_{u,v}|,
   \qquad
   \|A\|_\infty:=\max_u\sum_v|A_{u,v}|,
\]
and write $\|A\|_2$ for its spectral norm and
$\|A\|_{\max}:=\max_{u,v}|A_{u,v}|$ for its entrywise maximum norm.

\begin{definition}[Read-once branching programs (ROBP)]
A read-once branching program $B$ of length $n$, width $w$, and alphabet
$\Sigma$ is a directed acyclic multigraph with $n+1$ layers
$V_0,\ldots,V_n$, each identified with $[w]$.
For every $i\in[n]$, every node in $V_{i-1}$ has exactly one outgoing
edge labeled by each symbol $\sigma\in\Sigma$, leading to a node in $V_i$.
The program has a designated start node $v_{\mathrm{start}}\in V_0$ and
an accepting set $V_{\mathrm{accept}}\subseteq V_n$.
On input $x=(x_1,\ldots,x_n)\in\Sigma^n$, the computation follows the
unique path from $v_{\mathrm{start}}$ whose $i$-th edge is labeled by $x_i$.
The output $B(x)$ is $1$ if the path ends in $V_{\mathrm{accept}}$, and $0$
otherwise. All ROBPs in this paper read their input symbols in this order.
\end{definition}

We use $B_i(\sigma)$ to denote the transition matrix from $V_{i-1}$ to
$V_i$, with $[B_i(\sigma)]_{u,v}=1$ if the $\sigma$-labeled edge from
$v\in V_{i-1}$ leads to $u\in V_i$, and $0$ otherwise.
Thus probability distributions are column vectors.
For $0\le i\le j\le n$, define
\[
   B_{i..j}(z_{i+1},\ldots,z_j)
   :=B_j(z_j)\cdots B_{i+1}(z_{i+1}),
   \qquad B_{i..i}:=I.
\]
We also write
\[
   W_i:=\E_{\sigma\in\Sigma}[B_i(\sigma)],
   \qquad
   W_{i..j}:=\E_{z\in\Sigma^{j-i}}[B_{i..j}(z)]
   =W_j\cdots W_{i+1},
   \qquad W_{i..i}:=I.
\]
For a generator $G:\{0,1\}^d\to\Sigma^n$, let
\[
   M_{i..j}[G]
   :=\E_{r\in\{0,1\}^d}
      [B_{i..j}(G(r)|_{(i,j]})],
\]
where $G(r)|_{(i,j]}$ consists of coordinates $i+1,\ldots,j$.
For a named interval $J\subseteq[n]$ of input positions, we also use
$B_J$, $W_J$, and $M_J[G]$ for the corresponding transitions.
In particular, if $J=[i,j]$, then $W_J=W_{i-1..j}$.
Superscripts distinguish programs in a family, as in
$B^\iota_{i..j}$, $W^\iota_{i..j}$, and $M^\iota_{i..j}[G]$.

\begin{definition}[Regular ROBP]
A regular ROBP is an ROBP $B$ such that, for every $i\in[n]$, each node
in $V_i$ has in-degree $|\Sigma|$, counting parallel edges with multiplicity.
Equivalently, every layer average $W_i$ is doubly stochastic.
\end{definition}

\begin{definition}[Permutation ROBP]
A permutation ROBP is an ROBP $B$ such that, for every $i\in[n]$ and
$\sigma\in\Sigma$, the matrix $B_i(\sigma)$ is a permutation matrix.
\end{definition}

We write $\Bcl_{(n,w,\Sigma)}$, $\Rcl_{(n,w,\Sigma)}$, and
$\Pcl_{(n,w,\Sigma)}$ for the classes of general, regular, and permutation
ROBPs, respectively. Every permutation ROBP is regular.

\begin{definition}[PRG]
Let $\mathcal F$ be a class of ROBPs $B:\Sigma^n\to\{0,1\}$.
An $\varepsilon$-PRG for $\mathcal F$ is a function
$G:\{0,1\}^d\to\Sigma^n$ such that, for every $B\in\mathcal F$,
\[
   \left|\E_{x\in\Sigma^n}B(x)
   -\E_{r\in\{0,1\}^d}B(G(r))\right|\le\varepsilon.
\]
The input length $d$ is the seed length.
We say that $G$ is \emph{explicit} if it can be computed in space $O(d)$
and time $\poly(d,n,\log|\Sigma|)$, using a write-only output tape.
\end{definition}

\begin{definition}[WPRG]
Let $\mathcal F$ be a class of ROBPs $B:\Sigma^n\to\{0,1\}$.
A $W$-bounded $\varepsilon$-WPRG for $\mathcal F$ is a pair of functions
$(G,\sigma):\{0,1\}^d\to\Sigma^n\times\mathbb R$ such that, for every
$B\in\mathcal F$,
\[
   \left|\E_{x\in\Sigma^n}B(x)
   -\E_{r\in\{0,1\}^d}[\sigma(r)B(G(r))]\right|\le\varepsilon,
   \qquad |\sigma(r)|\le W\quad\text{for every }r.
\]
The input length $d$ is the seed length.
We say that $(G,\sigma)$ is \emph{explicit} if both functions can be
computed in space $O(d)$ and time $\poly(d,n,\log|\Sigma|)$, with the
weights output as binary-encoded rational numbers.
\end{definition}

\begin{definition}[Averaging sampler]
\label{def:Sampler}
Let $\Omega$ be a finite set. An $(\alpha,\gamma)$-averaging sampler for
$\Omega$ is a function
\[
   \Samp:\{0,1\}^r\times\{0,1\}^p\to\Omega
\]
such that, for every function $f:\Omega\to[-1,1]$,
\[
   \Pr_{x\in\{0,1\}^r}
   \left[
      \left|
         \E_{y\in\{0,1\}^p}f(\Samp(x,y))
         -\E_{z\in\Omega}f(z)
      \right|\ge\alpha
   \right]\le\gamma.
\]
We call $x$ the \emph{offline seed} and $y$ the \emph{online seed}.
\end{definition}

We use the following finite-set form of explicit averaging samplers, based on
\cite[Lemma~19 and Appendix~B]{chattopadhyayOptimalErrorPseudodistributions2020};
see also \cite{goldreich2011sample}.

\begin{lemma}[Explicit averaging samplers]
\label{lem:good sampler}
For every finite indexed set $\Omega$ and every $0<\alpha,\gamma<1/2$,
there exists an explicit $(\alpha,\gamma)$-averaging sampler
\[
   \Samp:\{0,1\}^r\times\{0,1\}^p\to\Omega
\]
with
\[
   r=\lceil\log|\Omega|\rceil
      +O(\log(1/\alpha)+\log(1/\gamma)),
   \qquad
   p=O(\log(1/\alpha)+\log\log(1/\gamma)).
\]
Given $x$ and $y$, the index of $\Samp(x,y)$ is computable in time
polynomial in $\log|\Omega|+\log(1/\alpha)+\log(1/\gamma)$ and workspace
\[
   O(\log|\Omega|+\log(1/\alpha)+\log(1/\gamma)).
\]
\end{lemma}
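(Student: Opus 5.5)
The plan is to obtain the sampler in \cref{lem:good sampler} from a standard explicit averaging sampler on the Boolean cube, then transfer it to an arbitrary finite indexed set $\Omega$. Set $\ell:=\lceil\log|\Omega|\rceil$. The natural target is a sampler whose offline seed carries roughly $\ell$ bits of output plus error-dependent overhead, and whose online seed depends only on $\alpha$ and $\log(1/\gamma)$. This is the shape of the construction in \cite[Lemma~19 and Appendix~B]{chattopadhyayOptimalErrorPseudodistributions2020}.

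\textbf{Step 1: the sampler on $\{0,1\}^{\ell'}$.} Take $\ell'=\ell+\lceil\log(4/\alpha)\rceil$. I would use a seeded extractor $\mathrm{Ext}:\{0,1\}^{r}\times\{0,1\}^p\to\{0,1\}^{\ell'}$ with min-entropy requirement $\ell'+O(\log(1/\alpha))$ out of $r=\ell'+O(\log(1/\alpha)+\log(1/\gamma))$ bits. The seed length should be $p=O(\log(1/\alpha)+\log\log(1/\gamma))$, i.e.\ only logarithmic in the entropy deficiency $r-k$ and independent of $\ell'$. High-min-entropy extractors of this kind exist: for example, the construction of Goldreich--Wigderson via expander walks, or block-source/condenser compositions as used in the cited appendix.

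Zuckerman's equivalence then turns this extractor into the sampler $\Samp'(x,y)=\mathrm{Ext}(x,y)$. To see this, let $\mathrm{Bad}$ be the set of offline seeds $x$ on which the empirical average of some $f$ deviates by at least $\alpha$. Splitting $\mathrm{Bad}$ by the sign of the deviation, the uniform distribution on either half would be distinguished with advantage $\alpha$. Hence each half has size at most $2^{k}$, and so $|\mathrm{Bad}|/2^{r}\le 2\cdot 2^{k-r}\le\gamma$.

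\textbf{Step 2: transfer to $\Omega$.} Map $z\in\{0,1\}^{\ell'}$ to the index $(z \bmod |\Omega|)\in\Omega$, viewing $z$ as an integer. The pushforward of the uniform distribution on $\{0,1\}^{\ell'}$ is within statistical distance $|\Omega|/2^{\ell'}\le\alpha/4$ of uniform on $\Omega$. So for $f:\Omega\to[-1,1]$, apply Step 1 to $f\circ\mathrm{mod}$ with accuracy $\alpha/2$; this loses only an additive $\alpha/2$ in total. The seed lengths stay as claimed, since $\ell'=\lceil\log|\Omega|\rceil+O(\log(1/\alpha))$.

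\textbf{Step 3: explicitness.} The extractor's output is computed by expander-walk and hash operations on strings of length $O(\ell+\log(1/\alpha)+\log(1/\gamma))$. These run in linear space and polynomial time, and the mod reduction is logspace arithmetic. This yields the stated time and workspace bounds.

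The main obstacle is Step 1: getting an online seed of only $O(\log(1/\alpha)+\log\log(1/\gamma))$ bits with no dependence on $\log|\Omega|$. This requires an extractor for entropy deficiency $O(\log(1/\gamma))$ whose seed is logarithmic in the deficiency rather than in the source length. I would rely directly on the cited construction of \cite{chattopadhyayOptimalErrorPseudodistributions2020} (see also \cite{goldreich2011sample}) for this component rather than rebuild it. The remaining steps are routine.
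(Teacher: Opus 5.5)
Your proposal is correct and follows the paper's proof. Take the binary-cube sampler from the cited construction with accuracy $\alpha/2$, and push it to $\Omega$ via $u\mapsto u \bmod |\Omega|$ on $\lceil\log|\Omega|\rceil+O(\log(1/\alpha))$-bit strings, absorbing the $O(\alpha)$ statistical-distance loss; your extractor/Zuckerman-equivalence discussion in Step 1 just unpacks that citation.

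One caution: the Goldreich--Wigderson expander-walk extractor on its own has seed length linear in the entropy deficiency. It would give only $p=O(\log(1/\alpha)+\log(1/\gamma))$, so the $\log\log(1/\gamma)$ dependence comes from the block-source/condenser composition in the cited appendix, not from Goldreich--Wigderson alone.
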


\begin{proof}
For a binary-string universe, the result follows from the cited sampler
construction, rescaling its $[0,1]$-valued tests to $[-1,1]$.
For a general universe, identify $\Omega$ with $\{0,\ldots,N-1\}$, where
$N=|\Omega|$. Set
$m:=\lceil\log(8N/\alpha)\rceil$ and map an $m$-bit integer $u$ to
the element of $\Omega$ indexed by $u\bmod N$.
The image of the uniform distribution differs from uniform on $\Omega$
by less than $N/2^m$ in total variation, so every $[-1,1]$-valued
expectation changes by at most $2N/2^m\le\alpha/4$.
Compose this map with a binary-string sampler having accuracy $\alpha/2$
and failure probability $\gamma$.
The resulting error is less than $\alpha$ except with probability at most
$\gamma$, and the extra $O(\log(1/\alpha))$ bits are absorbed by the
stated bounds.
\end{proof}

For a nonempty set $S\subseteq[w]$, let $\operatorname{Unif}(S)$ denote the uniform probability vector on $S$.

\begin{definition}[Total variation distance]
\label{def:total-variation}
For probability vectors $\mu,\nu\in\mathbb R^w$, their total variation distance, also called statistical distance, is
\[
   \SD(\mu,\nu)
   =\|\mu-\nu\|_{\mathrm{TV}}
   :=\frac12\|\mu-\nu\|_1
   =\max_{F\subseteq[w]}
      \left|\mathbf1_F^\top(\mu-\nu)\right|.
\]
\end{definition}

\paragraph{Reducing alphabet size with a sampler.}
Sampling each input symbol with a fixed offline seed reduces the alphabet
while approximately preserving all interval transition matrices.

\begin{lemma}[Reducing alphabet size with averaging samplers]
\label{lem:reducing-alphabet-size-with-sampler}
Let $n,w\in\mathbb N$, let $\Sigma$ be a finite alphabet, and let
$0<\varepsilon<1/2$.
Let $\{B^\iota\}_{\iota\in I}$ be a nonempty finite indexed family of
length-$n$ width-$w$ ROBPs over $\Sigma$.
There are an explicit sampler
$\Samp:\{0,1\}^r\times\{0,1\}^p\to\Sigma$ and an advice string
$x\in\{0,1\}^r$ such that the ROBPs defined by
\[
   B_i^{\iota,\Samp,x}(y):=B_i^\iota(\Samp(x,y)),
   \qquad i\in[n],\quad y\in\{0,1\}^p,
\]
satisfy, for every $\iota\in I$ and $0\le i\le j\le n$,
\[
   \|W^{\iota,\Samp,x}_{i..j}-W^\iota_{i..j}\|_1\le\varepsilon.
\]
If the original programs are permutation ROBPs, the sampled programs are
also permutation ROBPs, and the same bound holds in the induced
$\infty$-norm. The seed lengths satisfy
\[
   p=O(\log(nw/\varepsilon)+\log\log(|I|+2)),
   \qquad
   r=\lceil\log|\Sigma|\rceil+O(\log(nw|I|/\varepsilon)).
\]
Given query access to the transition functions, the advice can be found in
time $\poly(n,w,|\Sigma|,|I|,1/\varepsilon)$ and additional workspace
$O(\log(nw|\Sigma||I|/\varepsilon))$.
These bounds count each transition query as one operation; its evaluation
time is charged for every call, and the workspace needed by one call is
added when the input programs are given implicitly.
\end{lemma}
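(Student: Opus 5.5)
The plan is to apply the averaging-sampler guarantee separately to every single-layer transition, and then use a hybrid (telescoping) argument to pass from layers to intervals. The sampler will be applied to the whole family of tests at once with one common offline seed, found by exhaustive search.

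\textbf{Choice of sampler.} Fix $\alpha:=\varepsilon/(nw)$ and $\gamma:=1/(2 n w^2|I|+1)$. Take $\Samp:\{0,1\}^r\times\{0,1\}^p\to\Sigma$ to be the explicit $(\alpha,\gamma)$-averaging sampler of \cref{lem:good sampler}. This immediately gives
\[
p=O(\log(nw/\varepsilon)+\log\log(|I|+2))
\quad\text{and}\quad
r=\lceil\log|\Sigma|\rceil+O(\log(nw|I|/\varepsilon)),
\]
as stated. The tests are the indicator functions $f_{\iota,i,u,v}(\sigma):=[B^\iota_i(\sigma)]_{u,v}\in\{0,1\}$, for $\iota\in I$, $i\in[n]$ and $u,v\in[w]$. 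There are $n w^2|I|$ of them, so a union bound shows that with probability at least $1/2$ over $x$, every test satisfies $|[W_i^{\iota,\Samp,x}]_{u,v}-[W^\iota_i]_{u,v}|<\alpha$. In particular a good $x$ exists.

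\textbf{Finding $x$ deterministically.} Enumerate $x\in\{0,1\}^r$ in order. For each candidate, loop over all $(\iota,i,u,v)$. For each test, compute the exact empirical average over all $2^p$ online seeds and the exact average over $\Sigma$ using counters, and compare the two. Output the first $x$ that passes every test.

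The time is $2^r\cdot n w^2|I|\cdot(2^p+|\Sigma|)\cdot\poly=\poly(n,w,|\Sigma|,|I|,1/\varepsilon)$, since $2^r=O(|\Sigma|)\cdot\poly(nw|I|/\varepsilon)$ and $2^p=\poly(nw/\varepsilon)\cdot O(\log(|I|+2))$. The workspace consists of $x$, the loop indices, the counters, and the sampler's own workspace, for a total of $O(\log(nw|\Sigma||I|/\varepsilon))$.

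\textbf{From layers to intervals.} Each sampled layer $\widehat W_i:=W^{\iota,\Samp,x}_i$ is an average of transition matrices of the form $B^\iota_i(\cdot)$, so it is column-stochastic. For a good $x$, its entrywise error is below $\alpha$, which gives $\|\widehat W_i-W_i\|_1\le w\alpha=\varepsilon/n$. Telescope:
\[
\widehat W_{i..j}-W_{i..j}=\sum_{k=i+1}^{j}\widehat W_{k..j}\,(\widehat W_k-W_k)\,W_{i..k-1}.
\]
Column-stochastic matrices have induced $1$-norm equal to $1$, so each term has norm at most $\varepsilon/n$, and the total is at most $(j-i)\varepsilon/n\le\varepsilon$.

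\textbf{Permutation case.} If each $B_i^\iota(\sigma)$ is a permutation matrix, then each sampled symbol $y$ also yields the permutation matrix $B_i^\iota(\Samp(x,y))$, so the sampled programs are permutation ROBPs. All the averages involved are then doubly stochastic, and the same telescoping argument works in the induced $\infty$-norm (row sums), again using $\|\widehat W_i-W_i\|_\infty\le w\alpha$.

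\textbf{Main obstacle.} The only delicate step is the resource accounting when the programs are given implicitly. Each test evaluation costs one transition query per symbol. The exact average over $\Sigma$ needs $|\Sigma|$ queries, which is acceptable within the time bound. The workspace of the queries is reused across calls, which matches the stated convention of adding one call's workspace when the programs are given implicitly.
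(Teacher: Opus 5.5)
Your proposal is correct and follows the paper's proof essentially step by step. Both use an $(\varepsilon/(nw),\gamma)$-averaging sampler on the $nw^2|I|$ entry-indicator tests with a union bound (the paper takes $\gamma=1/(4nw^2|I|)$), find the offline seed by exhaustive search with exact counting over $\Sigma$ and $\{0,1\}^p$, and finish with the same telescoping identity, using that column-stochastic (resp.\ doubly stochastic) matrices have induced $1$-norm (resp.\ $\infty$-norm) equal to $1$. One harmless nit: $2^p$ is $\poly(nw/\varepsilon)\cdot(\log(|I|+2))^{O(1)}$ rather than $\poly(nw/\varepsilon)\cdot O(\log(|I|+2))$, which does not affect the time bound.
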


\begin{proof}
Take an $(\varepsilon/(nw),1/(4nw^2|I|))$-averaging sampler from
\Cref{lem:good sampler}.
Apply its guarantee to the $nw^2|I|$ functions
$\sigma\mapsto[B_i^\iota(\sigma)]_{u,v}$.
By a union bound, with probability at least $3/4$ over $x$, every such test
has error at most $\varepsilon/(nw)$.
Fix such an $x$ and write $\widetilde W_i^\iota:=W_i^{\iota,\Samp,x}$.
Then $\|\widetilde W_i^\iota-W_i^\iota\|_1\le\varepsilon/n$.
Both matrices are column-stochastic, so their induced $1$-norms equal $1$.
The identity
\[
   \widetilde W^\iota_{i..j}-W^\iota_{i..j}
   =\sum_{t=i+1}^j
      \widetilde W^\iota_{t..j}
      (\widetilde W_t^\iota-W_t^\iota)
      W^\iota_{i..t-1}
\]
therefore gives an induced $1$-norm error at most
$(j-i)\varepsilon/n\le\varepsilon$.
For permutation ROBPs, every sampled transition is a permutation matrix.
Both layer averages are then doubly stochastic, and the same argument
applies in the induced $\infty$-norm.

To find $x$, enumerate offline seeds and check the displayed scalar tests
one at a time. Each test is evaluated by counting transitions over
$\Sigma$ and over $\{0,1\}^p$.
The sampler bounds give $2^r,2^p\le\poly(n,w,|\Sigma|,|I|,1/\varepsilon)$.
The current seed, test indices, and counters require
$O(\log(nw|\Sigma||I|/\varepsilon))$ bits.
Error tolerances may be rounded down to powers of two, so the comparisons
use rational arithmetic within the same bounds.
\end{proof}

\paragraph{SV-approximation with two error parameters.}
For a doubly stochastic matrix $W$, define
\[
   R(W):=I-W^\top W,
   \qquad L(W):=I-WW^\top,
\]
and the associated squared seminorms
\[
   \|x\|_{R(W)}^2:=\|x\|_2^2-\|Wx\|_2^2,
   \qquad
   \|y\|_{L(W)}^2:=\|y\|_2^2-\|W^\top y\|_2^2.
\]

\begin{definition}[$(\varepsilon,\delta)$-SV approximation]
\label{def: two para SV}
Let $W,\widehat W\in\mathbb R^{w\times w}$, where $W$ is doubly
stochastic, and let $\varepsilon,\delta\ge0$.
We write $\widehat W\asvapprox_{\varepsilon,\delta}W$ if, for every
$x,y\in\mathbb R^w$,
\[
   |y^\top(\widehat W-W)x|
   \le
   \frac{\varepsilon}{2}
      (\|x\|_{R(W)}^2+\|y\|_{L(W)}^2)
   +\frac{\delta}{2}(\|x\|_2^2+\|y\|_2^2).
\]
\end{definition}

\begin{lemma}
\label{fact:sv-to-pointwise}
If $\widehat W\asvapprox_{\varepsilon,\delta}W$, then, for every
$u,v\in[w]$,
\[
   |[\widehat W]_{u,v}-[W]_{u,v}|\le\varepsilon+\delta.
\]
\end{lemma}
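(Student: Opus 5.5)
Plug standard basis vectors into \Cref{def: two para SV}: take $x=e_v$ and $y=e_u$. Then the left-hand side is exactly $|[\widehat W]_{u,v}-[W]_{u,v}|$. It remains to bound the four seminorm and norm quantities on the right by $1$.

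For the Euclidean part, $\|e_v\|_2^2=\|e_u\|_2^2=1$. So the $\delta$-term contributes exactly $\frac{\delta}{2}(1+1)=\delta$.

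For the SV part, use the definitions $\|e_v\|_{R(W)}^2=\|e_v\|_2^2-\|We_v\|_2^2=1-\|We_v\|_2^2\le 1$ and $\|e_u\|_{L(W)}^2=1-\|W^\top e_u\|_2^2\le 1$. This needs only that squared norms are nonnegative; nonnegativity of the seminorms themselves (which follows from $\|W\|_2\le 1$ for doubly stochastic $W$) is not needed for an upper bound. So the $\varepsilon$-term is at most $\frac{\varepsilon}{2}(1+1)=\varepsilon$.

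Adding the two contributions gives $|[\widehat W]_{u,v}-[W]_{u,v}|\le\varepsilon+\delta$. There is no real obstacle. The one point to check is the index convention: $y^\top(\widehat W-W)x$ with $y=e_u$ and $x=e_v$ picks out entry $(u,v)$, which matches the column-vector convention of the preliminaries.
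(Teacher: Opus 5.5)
Your proof is correct and is the same as the paper's: substitute $x=e_v$ and $y=e_u$ into the definition, use $\|e_u\|_2=\|e_v\|_2=1$, and bound each seminorm term by $1$. Your observation is also right that this upper bound needs only $\|We_v\|_2^2\ge 0$ and $\|W^\top e_u\|_2^2\ge 0$, not contractivity of $W$.
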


\begin{proof}
Apply the definition with $x=e_v$ and $y=e_u$, where $e_u,e_v$ are
standard basis vectors. Since both have Euclidean norm $1$,
\[
   |e_u^\top(\widehat W-W)e_v|
   \le\frac{\varepsilon}{2}
      (\|e_v\|_{R(W)}^2+\|e_u\|_{L(W)}^2)+\delta
   \le\varepsilon+\delta.\qedhere
\]
\end{proof}

We say that a generator $G:\{0,1\}^d\to\Sigma^n$
$(\varepsilon,\delta)$-SV-approximates a length-$n$ width-$w$ regular
ROBP $B$ if
$M_{i..j}[G]\asvapprox_{\varepsilon,\delta}W_{i..j}$ for every
$0\le i\le j\le n$.
It fools $B$ with SV error $\varepsilon$ if it
$(\varepsilon,0)$-SV-approximates $B$.

\paragraph{The INW generator for permutation ROBPs.}
We use the following SV guarantee for the INW generator
\cite{impagliazzoPseudorandomnessNetworkAlgorithms1994}, established in
\cite{chenWeightedPseudorandomGenerators2023} as a strengthening of the
analysis in \cite{hoza2021pseudorandom}.

\begin{lemma}[INW generator with SV error]
\label{lem:inw-sv-generator}
For every $n\ge2$, $s\ge1$, and $0<\varepsilon<1/2$, there is an
explicit generator $G:\{0,1\}^d\to(\{0,1\}^s)^n$ such that, for
every length-$n$ permutation ROBP $B$ over $\Sigma=\{0,1\}^s$, of
arbitrary width, and every $0\le i\le j\le n$,
\[
   M_{i..j}[G]\asvapprox_{\varepsilon,0}W_{i..j}.
\]
Its seed length is
\[
   d=s+O\bigl(\log n(\log\log n+\log(1/\varepsilon))\bigr).
\]
The generator is the INW recursion recalled below, with suitable expander
parameters. Given a seed and an index $a\in[n]$, its $a$-th output symbol
is computable in time $\poly(d,\log n)$ and workspace $O(d)$.
\end{lemma}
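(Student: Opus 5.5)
The plan is to instantiate the INW recursion over the alphabet $\{0,1\}^s$ and import the singular-value analysis of \cite{chenWeightedPseudorandomGenerators2023}, which strengthens \cite{hoza2021pseudorandom}. Since the statement is a $(\varepsilon,0)$-SV guarantee, the Euclidean parameter $\delta$ plays no role. We need three things: that the one-parameter notion of \cref{def: two para SV} with $\delta=0$ matches the SV-approximation of \cite{ahmadinejad2023singular}, the per-level error accounting, and the seed and space accounting.

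\textbf{Recursion.} Assume $n$ is a power of two; for general $n$, pad to the next power of two and truncate. Define $G_0(x)=x$ for $x\in\{0,1\}^s$. Given $G_{t-1}$ on seeds of length $d_{t-1}$, fix an explicit $D_t$-regular expander $H_t$ on $\{0,1\}^{d_{t-1}}$ with second singular value $\lambda_t$, and set
\[
G_t(x,e)=\bigl(G_{t-1}(x),\,G_{t-1}(H_t[x,e])\bigr).
\]
Hence $d_t=d_{t-1}+\log D_t$.

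\textbf{Error accounting.} This is the core of the argument. By \cite{ahmadinejad2023singular}, replacing the product $W_RW_L$ of the two halves with the expander-derandomized product incurs SV error $O(\lambda_t)$ relative to $W_RW_L$, for arbitrary width. Within SV approximation, products compose and transitivity holds, in each case with only a constant-factor loss in the error parameters. For an arbitrary subinterval $[i,j]$, decompose it at the top level where it splits into a suffix of a left block and a prefix of a right block. The subtle point, handled in \cite{chenWeightedPseudorandomGenerators2023}, is that the prefix and suffix errors must accumulate additively over the $\log n$ levels rather than multiplicatively. Their analysis yields total SV error $O(\sum_t\lambda_t)$ after the appropriate constant-factor rescaling across levels. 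It therefore suffices to take $\lambda_t=\varepsilon/(C\log n)$, or $\lambda_t=\varepsilon/\polylog n$ if the rescaling costs a polylog factor. Either way, $\log D_t=O(\log\log n+\log(1/\varepsilon))$, using an expander of degree $\poly(1/\lambda)$.

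\textbf{Seed length and explicitness.} Summing over the $\log n$ levels gives
\[
d=s+O\bigl(\log n(\log\log n+\log(1/\varepsilon))\bigr).
\]
To compute output symbol $a$, walk the binary expansion of $a$ from the top level down. At each level keep either $x$ or $H_t[x,e]$, one neighbor evaluation per level, using a space-efficient expander (for example, a Gabber--Galil power or a zig-zag construction) computable in time $\poly(d)$ and workspace $O(d)$. This gives total time $\poly(d,\log n)$ and workspace $O(d)$.

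\textbf{Main obstacle.} The hard step is confirming that the SV analysis holds for arbitrary intervals $[i,j]$, not only for dyadic blocks, with additive accumulation of error. Our plan is to cite \cite{chenWeightedPseudorandomGenerators2023} for this directly rather than reprove it.
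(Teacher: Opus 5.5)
Your plan is essentially the paper's: the lemma is not reproved there but imported from the SV analysis of the INW generator in \cite{chenWeightedPseudorandomGenerators2023} (strengthening \cite{hoza2021pseudorandom}), with per-level expander error $\varepsilon/\polylog n$ giving the stated seed length and a top-down evaluation of one neighbor per level giving strong explicitness. The step you single out as the main obstacle is easy, and the paper handles it in one line: replacing every transition outside $(i,j]$ by the identity matrix gives another length-$n$ permutation ROBP of the same width whose full-program matrices are exactly $W_{i..j}$ and $M_{i..j}[G]$, so the whole-program guarantee already covers every interval and no prefix/suffix accumulation argument is needed.
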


The guarantee for every interval follows by replacing the transitions
outside that interval by identity matrices in the length-$n$ program.
For $n=1$, the identity generator is exact.
By \Cref{fact:sv-to-pointwise}, the SV guarantee gives entrywise error
at most $\varepsilon$; summing over an accepting set $S$ gives acceptance
error at most $|S|\varepsilon$.

\section{Two-Parameter SV Approximation}\label{sec:approx-sv}

Throughout, all matrices are real $w\times w$ matrices. We use
$(\varepsilon,\delta)$-SV approximation as defined in
\Cref{def: two para SV}, with its associated right and left squared
seminorms. When $\delta=0$, this is the usual SV approximation, up to
constant normalization.

For a contraction $W$, we also write
\[
   D(W,x):=\|x\|_2^2-\|Wx\|_2^2=x^\top(I-W^\top W)x.
\]
Thus $D(W,x)=\|x\|_{R(W)}^2$ when $W$ is doubly stochastic.

\subsection{A One-Sided Consequence}

\begin{lemma} 
\label{lem:one-sided}
Let \(W,\widetilde W\in \R^{w\times w}\), where \(W\) is doubly stochastic.  If
\[
  \widetilde W \asvapprox_{\varepsilon,\delta} W,
\]
then for every \(x,y\in\R^w\),
\[
  \norm{(\widetilde W-W)x}_2
  \le
  \sqrt{(\varepsilon+\delta)
  \left(\varepsilon\norm{x}_{R(W)}^2+\delta\norm{x}_2^2\right)},
\]
and
\[
  \norm{(\widetilde W-W)^\top y}_2
  \le
  \sqrt{(\varepsilon+\delta)
  \left(\varepsilon\norm{y}_{L(W)}^2+\delta\norm{y}_2^2\right)}.
\]
In particular, when \(\delta=0\),
\[
  \norm{(\widetilde W-W)x}_2\le \varepsilon\norm{x}_{R(W)},
  \qquad
  \norm{(\widetilde W-W)^\top y}_2\le \varepsilon\norm{y}_{L(W)}.
\]
\end{lemma}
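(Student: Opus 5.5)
The plan is to use duality: the Euclidean norm of $(\widetilde W-W)x$ is the supremum of $y^\top(\widetilde W-W)x$ over vectors $y$. We then optimize the two-parameter SV bound of \Cref{def: two para SV} by rescaling $x$ and $y$ against each other. Set $E:=\widetilde W-W$ and fix $x$. If $Ex=0$ there is nothing to prove, so take $y:=Ex/\norm{Ex}_2$, a unit vector with $y^\top Ex=\norm{Ex}_2$. Because the bilinear form is homogeneous, we apply the definition to the pair $(tx, y/t)$ for any $t>0$. This gives
\[
   \norm{Ex}_2\le \frac{t^2}{2}\bigl(\varepsilon\norm{x}_{R(W)}^2+\delta\norm{x}_2^2\bigr)
   +\frac{1}{2t^2}\bigl(\varepsilon\norm{y}_{L(W)}^2+\delta\norm{y}_2^2\bigr).
\]

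Next we bound the second bracket crudely. Since $W$ is doubly stochastic, $\norm{W^\top}_2\le 1$, so $\norm{y}_{L(W)}^2=\norm{y}_2^2-\norm{W^\top y}_2^2\le\norm{y}_2^2=1$. The second bracket is therefore at most $\varepsilon+\delta$. Write $a:=\varepsilon\norm{x}_{R(W)}^2+\delta\norm{x}_2^2$ and $b:=\varepsilon+\delta$. Minimizing $\tfrac{t^2}{2}a+\tfrac{1}{2t^2}b$ over $t>0$ (AM–GM, at $t^4=b/a$) gives $\norm{Ex}_2\le\sqrt{ab}$, which is exactly the first claim.

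The degenerate case $a=0$ needs a limiting argument. Let $t\to\infty$: the first term vanishes and the second tends to $0$, so $Ex=0$ and the bound holds trivially. Alternatively, we can simply note that the infimum of the right-hand side is $\sqrt{ab}=0$.

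The transpose statement follows symmetrically. Fix $y$, take $x:=E^\top y/\norm{E^\top y}_2$, scale $(x/t, ty)$, and use $\norm{x}_{R(W)}^2\le\norm{x}_2^2=1$, which holds because $\norm{W}_2\le1$. For the case $\delta=0$, the bound reduces to $\sqrt{\varepsilon\cdot\varepsilon\norm{x}_{R(W)}^2}=\varepsilon\norm{x}_{R(W)}$, and likewise for the transpose.

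There is no real obstacle. The only points needing care are the facts that $R(W)$ and $L(W)$ are PSD with seminorm bounded by the Euclidean norm, which is what requires $\norm{W}_2\le1$ for doubly stochastic $W$, and the degenerate case $a=0$ handled above.
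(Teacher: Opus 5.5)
Your proposal is correct and follows essentially the same route as the paper: test against the direction of $(\widetilde W-W)x$, rescale the two test vectors against each other, bound $\|\cdot\|_{L(W)}$ by $\|\cdot\|_2$, and optimize the scaling by AM--GM. The paper instead keeps the test vector unnormalized and divides by $\|(\widetilde W-W)x\|_2$ at the end, which is the same computation, and your handling of the degenerate case $a=0$ by letting $t\to\infty$ is cleaner than the paper's brief perturbation remark.
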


\begin{proof}
We prove the first inequality; the second follows by transposing.  Let
\(\Delta=\widetilde W-W\).  If \(\Delta x=0\), there is nothing to prove.  If
\[
  \varepsilon\norm{x}_{R(W)}^2+\delta\norm{x}_2^2=0,
\]
then the desired bound follows by applying the argument below to \(x+tz\) and
letting \(t\to 0\).  Thus assume
\[
  \varepsilon\norm{x}_{R(W)}^2+\delta\norm{x}_2^2>0.
\]

For any \(\eta>0\), let
\[
  u=\sqrt{\eta}\,x,
  \qquad
  v=\frac{\Delta x}{\sqrt{\eta}}.
\]
Since \(v^\top\Delta u=\norm{\Delta x}_2^2\), by using using \cref{def: two para SV}, we get  

\begin{align*}
  \norm{\Delta x}_2^2 =  |v^\top\Delta u| 
  &\le
  \frac{\varepsilon}{2}
  \left(\eta\norm{x}_{R(W)}^2
  +\frac{1}{\eta}\norm{\Delta x}_{L(W)}^2\right)
  +\frac{\delta}{2}
  \left(\eta\norm{x}_2^2
  +\frac{1}{\eta}\norm{\Delta x}_2^2\right)                         \\
  &\le
  \frac{1}{2}
  \left(
  \eta\left(\varepsilon\norm{x}_{R(W)}^2+\delta\norm{x}_2^2\right)
  +\frac{\norm{\Delta x}_2^2}{\eta}(\varepsilon+\delta)
  \right),
\end{align*}

where we use \(\norm{z}_{L(W)}\le \norm{z}_2\).  Choose
\[
  \eta=\norm{\Delta x}_2
  \sqrt{\frac{\varepsilon+\delta}
  {\varepsilon\norm{x}_{R(W)}^2+\delta\norm{x}_2^2}}.
\]
Then the two terms in the RHS of the last inequality are equal. 
Hence
\[
  \norm{\Delta x}_2^2
  \le
  \norm{\Delta x}_2
  \sqrt{(\varepsilon+\delta)
  \left(\varepsilon\norm{x}_{R(W)}^2+\delta\norm{x}_2^2\right)}.
\]
Dividing by \(\norm{\Delta x}_2\) proves
\[
  \norm{\Delta x}_2
  \le
  \sqrt{(\varepsilon+\delta)
  \left(\varepsilon\norm{x}_{R(W)}^2+\delta\norm{x}_2^2\right)}.
\]
The other inequality can be proved in the same way, by using the fact that \(R(W^\top)=L(W)\).  
\end{proof}
\subsection{Products}

\begin{lemma}[Product of approximate SV approximations]
\label{lem:product}
Let \(W_1,W_2,\widetilde W_1,\widetilde W_2\in \R^{w\times w}\), where \((W_1,W_2\) are doubly stochastic.
Assume
\[
  \widetilde W_1 \asvapprox_{\varepsilon,\delta} W_1
  \qquad\text{and}\qquad
  \widetilde W_2 \asvapprox_{\varepsilon,\delta} W_2.
\]
Then
\[
  \widetilde W_2\widetilde W_1 \asvapprox_{\varepsilon+2\varepsilon(\varepsilon+\delta),2\delta+2\delta(\varepsilon+\delta)} W_2W_1.
\]
\end{lemma}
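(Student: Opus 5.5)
Write $\Delta_i=\widetilde W_i-W_i$ and expand
\[
  \widetilde W_2\widetilde W_1-W_2W_1
  = W_2\Delta_1+\Delta_2W_1+\Delta_2\Delta_1 ,
\]
so for $x,y\in\R^w$ we bound the three bilinear forms $y^\top W_2\Delta_1x$, $y^\top\Delta_2W_1x$ and $y^\top\Delta_2\Delta_1x$ separately. The target seminorms are $\|x\|_{R(W_2W_1)}^2=\|x\|_2^2-\|W_2W_1x\|_2^2$ and $\|y\|_{L(W_2W_1)}^2=\|y\|_2^2-\|W_1^\top W_2^\top y\|_2^2$. The key telescoping identities are
\[
  \|x\|_{R(W_2W_1)}^2=\|x\|_{R(W_1)}^2+\|W_1x\|_{R(W_2)}^2,
  \qquad
  \|y\|_{L(W_2W_1)}^2=\|y\|_{L(W_2)}^2+\|W_2^\top y\|_{L(W_1)}^2 .
\]
Each term is nonnegative, since doubly stochastic matrices are contractions.

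\textbf{First-order terms.} For $y^\top W_2\Delta_1x=(W_2^\top y)^\top\Delta_1x$, apply \cref{def: two para SV} for $W_1$ with the vectors $x$ and $W_2^\top y$. This gives
\[
  \tfrac{\varepsilon}{2}\bigl(\|x\|_{R(W_1)}^2+\|W_2^\top y\|_{L(W_1)}^2\bigr)+\tfrac{\delta}{2}\bigl(\|x\|_2^2+\|W_2^\top y\|_2^2\bigr),
\]
and we use $\|W_2^\top y\|_2\le\|y\|_2$. Symmetrically, $y^\top\Delta_2W_1x$ is bounded by
\[
  \tfrac{\varepsilon}{2}\bigl(\|W_1x\|_{R(W_2)}^2+\|y\|_{L(W_2)}^2\bigr)+\tfrac{\delta}{2}\bigl(\|x\|_2^2+\|y\|_2^2\bigr).
\]
Adding the two and using the telescoping identities, the $\varepsilon$-parts combine to exactly $\tfrac{\varepsilon}{2}(\|x\|_{R(W_2W_1)}^2+\|y\|_{L(W_2W_1)}^2)$. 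The $\delta$-parts give $\tfrac{2\delta}{2}(\|x\|_2^2+\|y\|_2^2)$, which is the leading $2\delta$ term.

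\textbf{Cross term.} Write $|y^\top\Delta_2\Delta_1x|\le\|\Delta_2^\top y\|_2\,\|\Delta_1x\|_2$ and apply \cref{lem:one-sided} to each factor:
\[
  \|\Delta_1x\|_2\le\sqrt{(\varepsilon+\delta)\bigl(\varepsilon\|x\|_{R(W_1)}^2+\delta\|x\|_2^2\bigr)},\qquad
  \|\Delta_2^\top y\|_2\le\sqrt{(\varepsilon+\delta)\bigl(\varepsilon\|y\|_{L(W_2)}^2+\delta\|y\|_2^2\bigr)}.
\]
Then AM--GM ($ab\le\tfrac12(a^2+b^2)$) bounds the product by
\[
  \tfrac{\varepsilon+\delta}{2}\bigl(\varepsilon\|x\|_{R(W_1)}^2+\varepsilon\|y\|_{L(W_2)}^2+\delta\|x\|_2^2+\delta\|y\|_2^2\bigr).
\]
Since $\|x\|_{R(W_1)}^2\le\|x\|_{R(W_2W_1)}^2$ and $\|y\|_{L(W_2)}^2\le\|y\|_{L(W_2W_1)}^2$ by the telescoping identities, this contributes an extra $\varepsilon(\varepsilon+\delta)$ to the SV parameter and $\delta(\varepsilon+\delta)$ to the Euclidean parameter. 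These are already within the claimed $2\varepsilon(\varepsilon+\delta)$ and $2\delta(\varepsilon+\delta)$, so the slack absorbs any constant lost in normalization.

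\textbf{Main obstacle.} The delicate step is verifying the telescoping identities and checking that the $\varepsilon$-parts of the first-order terms land on the product seminorms with no loss. This requires $\|W_2^\top y\|_{L(W_1)}$ and $\|W_1x\|_{R(W_2)}$ to appear exactly, which is why we feed $W_2^\top y$ and $W_1x$, rather than $y$ and $x$, into the SV definitions. The cross term is routine once \cref{lem:one-sided} is available.
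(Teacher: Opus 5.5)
Your proposal is correct and follows the paper's proof essentially step for step. It uses the same three-term decomposition, the same test vectors $W_2^\top y$ and $W_1x$ for the first-order terms, \cref{lem:one-sided} for the cross term, and the same telescoping identities. The only difference is in the cross term. You apply $ab\le\frac12(a^2+b^2)$ directly to the two square-root bounds, whereas the paper first splits them into four mixed terms. This halves the cross-term contribution and gives the slightly stronger parameters $\bigl(\varepsilon+\varepsilon(\varepsilon+\delta),\,2\delta+\delta(\varepsilon+\delta)\bigr)$. These imply the stated ones because the seminorms are nonnegative.
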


\begin{proof}
Write
\[
  \Delta_i=W_i-\widetilde W_i,
  \qquad
  W_{21}=W_2W_1.
\]
For arbitrary \(x,y\in \R^w\),
\[
\begin{aligned}
  y^\top(W_{21}-\widetilde W_2\widetilde W_1)x
  &=
  y^\top\Delta_2W_1x
  +y^\top W_2\Delta_1x
  -y^\top\Delta_2\Delta_1x .
\end{aligned}
\]
We bound the three terms separately.  From the assumption for
\(\widetilde W_2\),
\[
\begin{aligned}
  |y^\top\Delta_2W_1x|
  &\le
  \frac{\varepsilon}{2}
  \left(\norm{y}_{L(W_2)}^2+\norm{W_1x}_{R(W_2)}^2\right)          \\
  &\quad+
  \frac{\delta}{2}
  \left(\norm{y}_2^2+\norm{W_1x}_2^2\right)                       \\
  &\le
  \frac{\varepsilon}{2}
  \left(\norm{y}_{L(W_2)}^2+\norm{W_1x}_{R(W_2)}^2\right)
  +\frac{\delta}{2}
  \left(\norm{y}_2^2+\norm{x}_2^2\right),
\end{aligned}
\] 
where the last inequality uses the property of doubly stochastic matrix.
Similarly,
\[
\begin{aligned}
  |y^\top W_2\Delta_1x|
  &\le
  \frac{\varepsilon}{2}
  \left(\norm{W_2^\top y}_{L(W_1)}^2+\norm{x}_{R(W_1)}^2\right)
  +\frac{\delta}{2}
  \left(\norm{y}_2^2+\norm{x}_2^2\right).
\end{aligned}
\]
For the cross term, we have that
\[
\begin{aligned}
  |y^\top\Delta_2\Delta_1x|
  &\le
  \norm{\Delta_2^\top y}_2\norm{\Delta_1x}_2                                      \\
  &\le
  (\varepsilon+\delta)
  \sqrt{\left(\varepsilon\norm{y}_{L(W_2)}^2+\delta\norm{y}_2^2\right)
  \left(\varepsilon\norm{x}_{R(W_1)}^2+\delta\norm{x}_2^2\right)}                  \\
  &\le
  (\varepsilon+\delta)
  \Bigl(
    \varepsilon\norm{y}_{L(W_2)}\norm{x}_{R(W_1)}
    +\sqrt{\varepsilon\delta}\norm{y}_{L(W_2)}\norm{x}_2                           \\
  &\hspace{3.9cm}
    +\sqrt{\varepsilon\delta}\norm{y}_2\norm{x}_{R(W_1)}
    +\delta\norm{y}_2\norm{x}_2
  \Bigr),
\end{aligned}
\]
where the second inequality is using  Lemma~\ref{lem:one-sided} for both terms.

We now apply AM-GM with weights chosen to put the mixed terms mostly into the
SV-error term:
\[
  \varepsilon\norm{y}_{L(W_2)}\norm{x}_{R(W_1)}
  \le
  \frac{\varepsilon}{2}
  \left(\norm{y}_{L(W_2)}^2+\norm{x}_{R(W_1)}^2\right),
\]
\[
  \sqrt{\varepsilon\delta}\norm{y}_{L(W_2)}\norm{x}_2
  \le
  \frac{\varepsilon}{2}\norm{y}_{L(W_2)}^2+
  \frac{\delta}{2}\norm{x}_2^2,
\]
\[
  \sqrt{\varepsilon\delta}\norm{y}_2\norm{x}_{R(W_1)}
  \le
  \frac{\varepsilon}{2}\norm{x}_{R(W_1)}^2+
  \frac{\delta}{2}\norm{y}_2^2,
\]
and
\[
  \delta\norm{y}_2\norm{x}_2
  \le
  \frac{\delta}{2}\left(\norm{x}_2^2+\norm{y}_2^2\right).
\]
Therefore
\[
\begin{aligned}
  |y^\top\Delta_2\Delta_1x|
  &\le
  \varepsilon(\varepsilon+\delta)
  \left(\norm{y}_{L(W_2)}^2+\norm{x}_{R(W_1)}^2\right)              \\
  &\quad+
  \delta(\varepsilon+\delta)
  \left(\norm{x}_2^2+\norm{y}_2^2\right).
\end{aligned}
\]
Finally, use the identities
\[
  \norm{x}_{R(W_1)}^2+\norm{W_1x}_{R(W_2)}^2
  =
  \norm{x}_{R(W_2W_1)}^2,
\]
\[
  \norm{y}_{L(W_2)}^2+\norm{W_2^\top y}_{L(W_1)}^2
  =
  \norm{y}_{L(W_2W_1)}^2.
\]
Adding the three estimates yields
\[
\begin{aligned}
  |y^\top(W_{21}-\widetilde W_2\widetilde W_1)x|
  &\le
  \left(\frac{\varepsilon}{2}+\varepsilon(\varepsilon+\delta)\right)
  \left(\norm{x}_{R(W_{21})}^2+\norm{y}_{L(W_{21})}^2\right)        \\
  &\quad+
  \left(\delta+\delta(\varepsilon+\delta)\right)
  \left(\norm{x}_2^2+\norm{y}_2^2\right).
\end{aligned}
\]
This is exactly
\[
  \widetilde W_2\widetilde W_1 \asvapprox_{\varepsilon+2\varepsilon(\varepsilon+\delta),2\delta+2\delta(\varepsilon+\delta)} W_2W_1.
\]
\end{proof}

\begin{corollary}
\label{cor:products}
Let $n\ge2$, and let $W_1,\ldots,W_n$ be doubly stochastic matrices and suppose that for each $i\in[n]$ there is a matrix $\widetilde W_{i}$ such that
\[
   \widetilde W_{i} \asvapprox_{\varepsilon,\delta} W_{i},
\]
with \(\varepsilon\le 1/(100\log n)\) and \(\delta\le 1/(100n\log n)\).  Then,
\[
    \widetilde W_{n}\cdots\widetilde W_{1} \asvapprox_{\left(1+6\lceil\log n\rceil(\varepsilon+n\delta)\right)\varepsilon,3n\delta} W_{n}\cdots W_{1}.
\]

\end{corollary}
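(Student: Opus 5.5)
The plan is to multiply the matrices along a balanced binary tree of depth $\lceil\log n\rceil$ and apply \Cref{lem:product} at every internal node. Iterating the lemma sequentially would let the SV error grow multiplicatively over $n$ steps, while the tree keeps it to $\lceil\log n\rceil$ levels. One preliminary remark: the definition of $\asvapprox_{\varepsilon,\delta}$ is monotone in both parameters. This lets us apply \Cref{lem:product} to two children whose parameters differ, after replacing each by the larger of the two; the lemma as stated needs equal parameters for both factors.

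First pad the sequence with identity matrices up to length $2^{k}$, where $k=\lceil\log n\rceil$. These are exact, i.e.\ $I\asvapprox_{0,0}I$. Group adjacent matrices along the dyadic tree. Let $(\varepsilon_h,\delta_h)$ denote parameters valid at height $h$, with $\varepsilon_0=\varepsilon$ and $\delta_0=\delta$. By \Cref{lem:product},
\[
\varepsilon_{h+1}=\varepsilon_h\bigl(1+2(\varepsilon_h+\delta_h)\bigr),\qquad \delta_{h+1}=2\delta_h\bigl(1+(\varepsilon_h+\delta_h)\bigr).
\]

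The recursion for $\delta$ gives $\delta_h\le 2^h\delta\prod_{j<h}(1+\varepsilon_j+\delta_j)$. To bound this, first set up an inductive hypothesis: $\varepsilon_j\le 2\varepsilon$ and $\delta_j\le 3\cdot 2^j\delta$ for all $j\le k$. Under this hypothesis, $\varepsilon_j+\delta_j\le 2\varepsilon+6n\delta\le 1/(10\log n)$, using $\varepsilon\le1/(100\log n)$, $\delta\le1/(100n\log n)$, and $2^j\le 2^k< 2n$. Hence the product over at most $k$ factors is at most $e^{k/(10\log n)}\le 3/2$, since $k\le \log n+1\le 2\log n$ for $n\ge 2$. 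This gives $\delta_k\le 2^k\cdot\tfrac32\,\delta$.

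At this point there is an obstacle: $2^k$ can be nearly $2n$, which would give $3n\delta$ only if $2^k\le 2n$ and the product is at most $3/2$. Both hold, so $\delta_k\le 3n\delta$. More carefully, one can avoid padding altogether by splitting $n$ into $\lceil n/2\rceil,\lfloor n/2\rfloor$ and inducting on $n$ with the hypothesis $\delta(n)\le c\,n\,\delta$. The doubling then becomes additive, $\delta(n)\le(\delta(n_1)+\delta(n_2))(1+\cdots)$, giving roughly $n\delta\prod(1+\cdot)$. I would use this unpadded recursion, with depth $\lceil\log n\rceil$, to get the cleanest constant.

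For $\varepsilon$: $\varepsilon_k=\varepsilon\prod_{j<k}(1+2(\varepsilon_j+\delta_j))$. Each factor satisfies $2(\varepsilon_j+\delta_j)\le 2(2\varepsilon+3n\delta)\le 6(\varepsilon+n\delta)$. Using $\prod(1+a_j)\le e^{\sum a_j}\le 1+2\sum a_j$ when $\sum a_j\le 1$, we get $\varepsilon_k\le\varepsilon\bigl(1+c\lceil\log n\rceil(\varepsilon+n\delta)\bigr)$. Since $\lceil\log n\rceil(\varepsilon+n\delta)\le 1/50$, the bound $e^{t}\le 1+1.02t$ suffices. Tuning the constants so the total reads $6\lceil\log n\rceil$ requires the sharper per-level bound $\varepsilon_j+\delta_j\le \varepsilon(1+o(1))+\tfrac32 2^j\delta$ together with $\sum_j 2^j\delta\le 2n\delta$. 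Summing the geometric series in $2^j$, rather than bounding each level by $n\delta$, actually gives a better term $O(\varepsilon\log n+n\delta)$. This also closes the inductive hypothesis $\varepsilon_j\le 2\varepsilon$.

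The main obstacle is purely bookkeeping. We must close the simultaneous induction on $(\varepsilon_h,\delta_h)$ so that the mixed terms $\varepsilon\delta$ stay small, and must handle non-powers of two without losing the factor $3n$. The approach is to do the unpadded balanced split, carry the explicit invariants above, and verify the numeric constants at the end.
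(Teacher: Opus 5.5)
Your plan is correct and is essentially the paper's proof: pad with identities to $N=2^{\lceil\log n\rceil}$, apply \Cref{lem:product} level by level along the dyadic tree, and close a simultaneous induction on the two parameters. The only difference is bookkeeping: the paper tracks the explicit invariant $\varepsilon^{(h)}=(1+6h(\varepsilon+n\delta))\varepsilon$, $\delta^{(h)}=2^h(1+6h(\varepsilon+n\delta))\delta$ and checks it against one application of \Cref{lem:product}, using $1+6h(\varepsilon+n\delta)\le 3/2$ and $2^{h-1}\le n$, which gives the constant $6$ and the root bound $\tfrac32 N\delta\le 3n\delta$ with no exponential estimates.

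Two of your side remarks should be dropped. First, the claim that $e^t\le 1+1.02t$ suffices is false: the relevant $t=6\lceil\log n\rceil(\varepsilon+n\delta)$ can be about $0.12$ even for $n$ a power of two, while that inequality needs $t\lesssim 0.04$, so on your route the sharper per-level bound you mention is genuinely needed. Second, the additive unpadded recursion $\delta(n)\le(\delta(n_1)+\delta(n_2))(1+\cdots)$ does not follow from \Cref{lem:product}, whose two hypotheses share one pair of parameters. Equalizing by monotonicity only gives $2\max\{\delta(n_1),\delta(n_2)\}$, so padding (or indexing by height) is the right bookkeeping.
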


\begin{proof}
Set $N:=2^{\lceil\log n\rceil}$ and pad the sequence to length $N$ with
identity matrices, retaining $n$ in the error parameters below.
Use the interval convention
$\widetilde W_{a..b}:=\widetilde W_b\cdots\widetilde W_{a+1}$ and
$W_{a..b}:=W_b\cdots W_{a+1}$.
Define $\varepsilon^{(h)}=(1+6h(\varepsilon+n\delta))\varepsilon$ and $\delta^{(h)}=2^h(1+6h(\varepsilon+n\delta))\delta$. 
 We show by induction on \(h\) that for any \(0\le h\le\lceil\log n\rceil\) and any dyadic interval \((a,a+2^h]\subseteq(0,N]\),

\begin{equation}
\label{eq:balanced-logm-induction}
   \widetilde W_{a..a+2^h} \asvapprox_{\varepsilon^{(h)},\delta^{(h)}} W_{a..a+2^h}.
   \end{equation}

The case \(h=0\) is the hypothesis.  For \(h>0\), We write \(\widetilde W_{a..a+2^h}=\widetilde W_{a+2^{h-1}..a+2^h}\widetilde W_{a..a+2^{h-1}}\) and \(W_{a..a+2^h}=W_{a+2^{h-1}..a+2^h}W_{a..a+2^{h-1}}\).
By the induction hypothesis, both \(\widetilde W_{a+2^{h-1}..a+2^h}\) and \(\widetilde W_{a..a+2^{h-1}}\) are \((\varepsilon^{(h-1)},\delta^{(h-1)})\)-SV approximations of their corresponding exact products.  By Lemma~\ref{lem:product}, we have
\begin{equation}
\label{eq:balanced-logm-induction-step}
   \widetilde W_{a..a+2^h} \asvapprox_{\varepsilon^{(h-1)}+2\varepsilon^{(h-1)}(\varepsilon^{(h-1)}+\delta^{(h-1)}),2\delta^{(h-1)}+2\delta^{(h-1)}(\varepsilon^{(h-1)}+\delta^{(h-1)})} W_{a..a+2^h}.
\end{equation}

Since $(1+6h(\varepsilon+n\delta))\leq 3/2$ for all \(h\le\lceil\log n\rceil\), and $2^{h-1}\le n$ for $h\ge1$, we have
\(2(\varepsilon^{(h-1)}+\delta^{(h-1)})\le 3(\varepsilon+n\delta)\), and we have
\begin{align}
\label{eq:balanced-logm-induction-step-E}
   \varepsilon^{(h-1)}+2\varepsilon^{(h-1)}(\varepsilon^{(h-1)}+\delta^{(h-1)})
   &\le
   \varepsilon^{(h-1)}+3\varepsilon\left(1+6h(\varepsilon+n\delta)\right)(\varepsilon+n\delta)
   \le\varepsilon^{(h)},\\
\label{eq:balanced-logm-induction-step-D}
   2\delta^{(h-1)}+2\delta^{(h-1)}(\varepsilon^{(h-1)}+\delta^{(h-1)})
    &\le
    2\delta^{(h-1)}+3\cdot2^{h-1}\delta\left(1+6h(\varepsilon+n\delta)\right)(\varepsilon+n\delta)
    \le\delta^{(h)}.
\end{align}

At the root, $\delta^{(\lceil\log n\rceil)}
\le(3/2)N\delta\le3n\delta$, and identity padding preserves the
original product. This proves the stated bound.
\end{proof}

\subsection{Transitivity}

\begin{lemma}[Transitivity with additive $\ell_2$ term]
\label{lem:transitivity}
Let \(A,B,C\) be doubly stochastic matrices.  Suppose
\[
  A \asvapprox_{\varepsilon_1,\delta_1} B
  \qquad\text{and}\qquad
  B \asvapprox_{\varepsilon_2,\delta_2} C.
\]
Then
\[
  A \asvapprox_{\varepsilon_1+\varepsilon_2+2\varepsilon_1\varepsilon_2,\delta_1+\delta_2+2\varepsilon_1\delta_2} C.
\]
\end{lemma}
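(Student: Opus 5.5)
The plan is to split the error through $B$ and then convert the seminorms with respect to $B$ into seminorms with respect to $C$. Fix $x,y\in\R^w$ and write
\[
  y^\top(A-C)x \;=\; y^\top(A-B)x + y^\top(B-C)x .
\]
The second term is handled directly by the hypothesis $B\asvapprox_{\varepsilon_2,\delta_2}C$. This already gives the contribution $\frac{\varepsilon_2}{2}(\norm{x}_{R(C)}^2+\norm{y}_{L(C)}^2)+\frac{\delta_2}{2}(\norm{x}_2^2+\norm{y}_2^2)$. The first term is bounded by the hypothesis $A\asvapprox_{\varepsilon_1,\delta_1}B$, but in terms of $\norm{x}_{R(B)}^2$ and $\norm{y}_{L(B)}^2$. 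So the whole proof reduces to one comparison inequality, which I expect to be the only real step:
\[
  \norm{x}_{R(B)}^2 \le (1+2\varepsilon_2)\norm{x}_{R(C)}^2 + 2\delta_2\norm{x}_2^2,
\]
together with its analogue for $L$.

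To prove the comparison, set $\Delta=B-C$ and expand $\norm{Bx}_2^2=\norm{Cx}_2^2+2(Cx)^\top\Delta x+\norm{\Delta x}_2^2$. Dropping the nonnegative last term gives
\[
  \norm{x}_{R(B)}^2\le \norm{x}_{R(C)}^2-2(Cx)^\top\Delta x .
\]
Now apply the definition of $B\asvapprox_{\varepsilon_2,\delta_2}C$ with the test vectors $x$ and $Cx$. This bounds $2|(Cx)^\top\Delta x|$ by
\[
  \varepsilon_2\bigl(\norm{x}_{R(C)}^2+\norm{Cx}_{L(C)}^2\bigr)+\delta_2\bigl(\norm{x}_2^2+\norm{Cx}_2^2\bigr).
\]
Since $C$ is doubly stochastic, it is a contraction, so $\norm{Cx}_2\le\norm{x}_2$. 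The remaining point is $\norm{Cx}_{L(C)}^2\le\norm{x}_{R(C)}^2$. Put $M=C^\top C$, which is PSD with $M\preceq I$. Then the inequality reads
\[
  x^\top Mx-x^\top M^2x\le x^\top x-x^\top Mx ,
\]
which is equivalent to $x^\top(I-M)^2x\ge0$. This gives the comparison inequality. The $L$ version is identical with $C^\top$, $y$ and $C^\top y$ in place of $C$, $x$ and $Cx$; it uses $R(C^\top)=L(C)$ and the symmetric form of the definition.

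Finally, plug the comparison into the first-term bound. We get
\[
  \frac{\varepsilon_1}{2}(1+2\varepsilon_2)\bigl(\norm{x}_{R(C)}^2+\norm{y}_{L(C)}^2\bigr)+\Bigl(\varepsilon_1\delta_2+\frac{\delta_1}{2}\Bigr)\bigl(\norm{x}_2^2+\norm{y}_2^2\bigr).
\]
Adding the second-term bound gives SV coefficient $\frac12(\varepsilon_1+\varepsilon_2+2\varepsilon_1\varepsilon_2)$ and Euclidean coefficient $\frac12(\delta_1+\delta_2+2\varepsilon_1\delta_2)$. These are exactly the parameters claimed in the lemma. The main obstacle is the comparison of $R(B)$ with $R(C)$, and the choice of $Cx$ as the test vector is what makes it work. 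Choosing $(B+C)x$ instead would produce an uncontrolled $L(C)$-seminorm term.
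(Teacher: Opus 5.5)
Your proof is correct and follows essentially the same route as the paper: you split $y^\top(A-C)x$ through $B$, and you prove $\|x\|_{R(B)}^2\le(1+2\varepsilon_2)\|x\|_{R(C)}^2+2\delta_2\|x\|_2^2$ and its $L$-analogue by expanding $\|Bx\|_2^2$, dropping $\|(B-C)x\|_2^2$, and testing the hypothesis on the pair $(x,Cx)$, with $\|Cx\|_{L(C)}^2\le\|x\|_{R(C)}^2$ coming from $\|(I-C^\top C)x\|_2^2\ge 0$. Your check of that last inequality (which the paper writes with a strict sign where $\ge$ is meant) and of the transposed case is clean, and your final bookkeeping of the coefficients reproduces the claimed parameters exactly.
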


\begin{proof}
We claim that
for every \(x,y\in\R^w\),
\[
  \norm{x}_{R(B)}^2
  \le
  (1+2\varepsilon_2)\norm{x}_{R(C)}^2+2\delta_2\norm{x}_2^2,
\]
and
\[
  \norm{y}_{L(B)}^2
  \le
  (1+2\varepsilon_2)\norm{y}_{L(C)}^2+2\delta_2\norm{y}_2^2.
\]
We prove the first inequality; the second follows by transposing.  Since
\[
\begin{aligned}
  \norm{Cx}_2^2-\norm{Bx}_2^2
  &=x^\top(C^\top C-B^\top B)x                                      \\
  &=x^\top\bigl((C^\top-B^\top)C+C^\top(C-B)-(C^\top-B^\top)(C-B)\bigr)x,
\end{aligned}
\]
and the last term is nonpositive, we have
\[
  \norm{Cx}_2^2-\norm{Bx}_2^2
  \le
  2 |(Cx)^\top(C-B)x|.
\]
Using \(B \asvapprox_{\varepsilon_2,\delta_2} C\) with test vectors
\(x\) and \(Cx\),
\[
\begin{aligned}
  |(Cx)^\top(C-B)x|
  &\le
  \frac{\varepsilon_2}{2}
  \left(\norm{x}_{R(C)}^2+\norm{Cx}_{L(C)}^2\right)
  +\frac{\delta_2}{2}
  \left(\norm{x}_2^2+\norm{Cx}_2^2\right)                         \\
  &\le
  \varepsilon_2\norm{x}_{R(C)}^2+
  \delta_2\norm{x}_2^2.
\end{aligned}
\]
Here we used \(\norm{Cx}_{L(C)}\le \norm{x}_{R(C)}\) (Because $ \norm{x}_{R(C)}^2- \norm{Cx}_{L(C)}^2 = \norm {x-C^TCx}_2^2>0$.) and
\(\norm{Cx}_2\le \norm{x}_2\).  Therefore, combining the above two inequalities,
\[
  \norm{Cx}_2^2-\norm{Bx}_2^2
  \le
  2\varepsilon_2\norm{x}_{R(C)}^2+2\delta_2\norm{x}_2^2.
\]
Since
\[
  \norm{x}_{R(B)}^2
  =\norm{x}_{R(C)}^2+\norm{Cx}_2^2-\norm{Bx}_2^2,
\]
this proves the claimed inequality for \(x\).  

Now fix arbitrary \(x,y\in\R^w\).  By the triangle inequality,
\[
  |y^\top(A-C)x|
  \le
  |y^\top(A-B)x|+|y^\top(B-C)x|.
\]
The second term is already measured relative to \(C\):
\[
  |y^\top(B-C)x|
  \le
  \frac{\varepsilon_2}{2}
  \left(\norm{x}_{R(C)}^2+\norm{y}_{L(C)}^2\right)
  +\frac{\delta_2}{2}
  \left(\norm{x}_2^2+\norm{y}_2^2\right).
\]
For the first term,
\[
\begin{aligned}
  |y^\top(A-B)x|
  &\le
  \frac{\varepsilon_1}{2}
  \left(\norm{x}_{R(B)}^2+\norm{y}_{L(B)}^2\right)
  +\frac{\delta_1}{2}
  \left(\norm{x}_2^2+\norm{y}_2^2\right)                          \\
  &\le
  \frac{\varepsilon_1(1+2\varepsilon_2)}{2}
  \left(\norm{x}_{R(C)}^2+\norm{y}_{L(C)}^2\right)                 \\
  &\quad+
  \frac{\delta_1+2\varepsilon_1\delta_2}{2}
  \left(\norm{x}_2^2+\norm{y}_2^2\right).
\end{aligned}
\]
Adding the two estimates gives
\[
\begin{aligned}
  |y^\top(A-C)x|
  &\le
  \frac{\varepsilon_1+\varepsilon_2+2\varepsilon_1\varepsilon_2}{2}
  \left(\norm{x}_{R(C)}^2+\norm{y}_{L(C)}^2\right)                 \\
  &\quad+
  \frac{\delta_1+\delta_2+2\varepsilon_1\delta_2}{2}
  \left(\norm{x}_2^2+\norm{y}_2^2\right),
\end{aligned}
\]
which is the claimed \((\varepsilon,\delta)\)-SV approximation.
\end{proof}

\section{Converting the PRG of \cite{hoza2021pseudorandom} to a derandomization in SC}
\label{sec:INW_in_SC}

We prove \cref{thm:regular-derand} in this section.
Length-one programs can be handled by averaging their transitions directly;
we therefore assume $n\ge2$ below.
The following lemma describes our SC derandomization for permutation ROBPs. We describe the derandomization as constructing a target generator with an offline seed and an online seed. We show that the offline seed can be found in SC and with the offline seed, the target generator can be computed in logspace. Also notice that we consider fooling a collection of ROBPs simultaneously  which is to be used for the next section, while for this section one can just think of fooling only one ROBP.
\begin{lemma}[Sampler-fixed INW generator, SV form]
\label{lem:det-compressed-inw-sv}

Let $n\ge2$ and $w\in \mathbb{N}$. Let $\Sigma$ be a finite alphabet. Let $0<\varepsilon,\delta<1/2$. There exists a deterministic algorithm that, given a nonempty finite indexed collection $\{B^\iota\}_{\iota\in\mathcal J}$ of length-$n$ width-$w$ permutation ROBPs over alphabet $\Sigma$, computes a binary string $x$ with length
\[
   |x|=O\!\left(
      \log|\Sigma|
      +\log n(\log\log n+\log(1/\varepsilon))
      +\log(nw|\mathcal J|/\delta)
   \right),
\]
and also computes a targeted generator $\mathcal G_x:\{0,1\}^{d}\to \Sigma^n$ (the generator is described by the offline seed $x$) such that for every $\iota\in \mathcal J$, we have
\[
    M^\iota_{0..n}[\mathcal G_x]\asvapprox_{\varepsilon,\delta} W^\iota_{0..n},
\]
Here $W^\iota_{0..n}$ is the transition matrix of $B^\iota$ and $M^\iota_{0..n}[\mathcal G_x]$ is the transition matrix of $B^\iota$ induced by $\mathcal G_x$. 
The seed length $d$ of the generator is $O(\log (nw|\mathcal J|/\varepsilon\delta))$. 

The computing of  \(x\) is in time
\(\poly(n,w,|\Sigma|,|\mathcal J|,1/\varepsilon,1/\delta)\) and uses workspace
\(|x|+O(\log(nw|\Sigma||\mathcal J|/\varepsilon\delta))\).
Given read-only access to \(x\), the generator is strongly explicit: on input
a seed \(y\in\{0,1\}^d\) and an index \(q\in[n]\), the symbol
\(\mathcal G_x(y)_q\) can be computed in time
\(\poly(|x|,d,\log n,\log|\Sigma|)\) and workspace
\(O(d+\log n+\log|\Sigma|)\).
\end{lemma}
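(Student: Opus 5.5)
We follow the $m$-ary staged structure of the technical overview. First we reduce the alphabet. Apply \Cref{lem:reducing-alphabet-size-with-sampler} to the whole family $\{B^\iota\}$ with accuracy $\delta/\poly(nw)$. This gives an offline seed $x_0$ of length $\log|\Sigma|+O(\log(nw|\mathcal J|/\delta))$ and binary permutation programs over $\{0,1\}^{p_0}$, where $p_0=O(\log(nw|\mathcal J|/\delta))$. Their interval transition matrices are within $\infty$- and $1$-norm error $\delta'$ of the true ones, so $\|\cdot\|_2\le\delta'$. A Euclidean error $\delta'$ is trivially a $(0,\delta')$-SV approximation, because $|y^\top\Delta x|\le\delta'\|x\|_2\|y\|_2\le\frac{\delta'}2(\|x\|_2^2+\|y\|_2^2)$.

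Next we run $K=\lceil\log n/\log m\rceil$ stages. At each stage the current program has length $n_k$, alphabet $\{0,1\}^{p}$ and layer matrices $\widetilde W$. We group consecutive blocks of $m$ layers and, on each block, apply the INW generator of \Cref{lem:inw-sv-generator} with SV error $\varepsilon_0=\varepsilon/(C\log n)$. Its seed length is $p+O(\log m(\log\log n+\log(1/\varepsilon)))$. The block is then a single layer over the INW-seed alphabet, and it is $(\varepsilon_0,0)$-SV close to the product of that block's current layers. The current layers are themselves permutation matrices, since the sampled symbols index permutations. We then compress the INW-seed alphabet back to $\{0,1\}^{p}$ with a fresh averaging sampler, again via \Cref{lem:reducing-alphabet-size-with-sampler}. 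This costs a new offline seed of length $O(\log(nw|\mathcal J|/\varepsilon\delta))$, adds Euclidean error $\delta'$, and keeps the programs permutation programs. We choose $\log m=\Theta(\log(nw|\mathcal J|/\varepsilon\delta)/(\log\log n+\log(1/\varepsilon)))$ so that the INW seed is $O(p)$. Then each stage's alphabet is polynomial, and the offline search of \Cref{lem:reducing-alphabet-size-with-sampler} runs in polynomial time. The total offline seed length is $\log|\Sigma|+K\cdot O(\log(nw|\mathcal J|/\varepsilon\delta))$. Since $K\log m=O(\log n)$, this matches the stated $|x|$ up to the additive $\log(nw|\mathcal J|/\delta)$ term. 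The final online seed $d=p=O(\log(nw|\mathcal J|/\varepsilon\delta))$ indexes the last, constant-length program directly.

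The core of the proof is the error accounting. This is the step I expect to be the main obstacle. Each new layer at stage $k$ is related to the exact product of its block of stage-$(k-1)$ layers through two approximations. The first is an $(\varepsilon_0,0)$ INW step. The second is a $(0,\delta')$ sampler step, combined via \Cref{lem:transitivity}. The stage-$(k-1)$ layers are in turn approximations of products of the original $W$'s. The issue is that \Cref{lem:inw-sv-generator} is stated relative to the exact product of the current layers, not the original ones. So we propagate as follows. Multiplying approximate layers across a block uses \Cref{cor:products}. Strictly, because the INW step already fools the current product, it suffices to apply \Cref{cor:products} to the inherited approximations of the $m$ children. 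This gives an SV error of $(1+O(\log m)(\varepsilon+m\delta))\varepsilon$ and a Euclidean error of $3m\delta$. We then compose with the fresh INW and sampler errors by \Cref{lem:transitivity}. Inductively, after $k$ stages we want SV error $\le k\varepsilon_0(1+o(1))$ and Euclidean error $\le\poly(n)\delta'$. The Euclidean factor multiplies by $O(m)$ per stage, giving $m^{K}\cdot\poly=\poly(n)$ overall. The mixed terms $2\varepsilon_1\delta_2$ are absorbed, and the multiplicative factors $(1+O(\log m)\varepsilon)$ compound over $K$ stages to $1+O(\log n\cdot\varepsilon)\le 2$. Choosing $\delta'=\delta/\poly(nw)$ and $C$ large gives $(\varepsilon,\delta)$ at the root.

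Finally we check explicitness. To compute $\mathcal G_x(y)_q$, we walk down the $K$ stages from the root. At each stage we apply the sampler with the stored offline seed to the current online seed, obtaining an INW seed. From the INW seed we compute the symbol for the block containing $q$, using \Cref{lem:inw-sv-generator} in workspace $O(d)$. At the bottom we apply $\Samp(x_0,\cdot)$ to get a symbol of $\Sigma$. Only the current seed and position are kept, so the workspace is $O(d+\log n+\log|\Sigma|)$. For the offline search we process stages bottom-up, enumerating candidate seeds for each stage and testing entrywise closeness for all $\iota$ and all blocks. Each test is evaluated by recursive calls to the already-fixed lower stages. This gives polynomial time and workspace $|x|+O(\log(nw|\Sigma||\mathcal J|/\varepsilon\delta))$.
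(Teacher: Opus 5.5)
Your architecture matches the paper's: an initial alphabet reduction, then $K$ stages of INW on blocks of $m$ layers with a fresh sampler, one stored offline seed per stage, and a top-down single-coordinate walk for strong explicitness. Your seed-length bounds, your Euclidean bookkeeping ($O(m)^K=\poly(n)$), and your explicitness argument are fine. The gap is in the SV error accounting, at exactly the step you flagged.

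\textbf{Why the accounting fails.} You compare every stage-$k$ layer with the \emph{original} block product. This forces you to feed the \emph{accumulated} SV error $e_k$ into \Cref{cor:products} at every stage. That corollary multiplies its input error by $1+6\log m\,(e_k+m\delta_k)$, so your recursion satisfies
\[
e_{k+1}\ge\varepsilon_0+e_k+6\log m\, e_k^2 .
\]
Set $u_k:=6\log m\, e_k$ and $\beta:=6\log m\,\varepsilon_0$. Then $u_{k+1}\ge\beta+u_k+u_k^2$. This stays $O(K\beta)$ only if $K^2\beta=O(1)$. Since $K\log m\in[\log n,2\log n)$, that condition is $K\log n\cdot\varepsilon_0=O(1)$. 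With your $\varepsilon_0=\varepsilon/(C\log n)$ it becomes $K\varepsilon=O(C)$.

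Your choice of $m$ violates this. Take, say, $\varepsilon,\delta$ constant, $|\mathcal J|=1$ and $w=n$. Then $\log m=\Theta(\log n/\log\log n)$ and $K=\Theta(\log\log n)$. So $K^2\beta\to\infty$ and the bound on $e_K$ diverges. The hypothesis $e_k\le 1/(100\log m)$ of \Cref{cor:products} eventually fails as well. In particular, your claim that the factors compound to $1+O(\log n\cdot\varepsilon)\le 2$ is false once $\varepsilon\gg 1/\log n$.

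\textbf{Fix (i): lower the INW accuracy.} Take $\varepsilon_0=\varepsilon/(C\log^2 n)$, as the paper does with $\varepsilon_{\mathrm{INW}}=\varepsilon/(100\lceil\log^2 n\rceil)$. Since $K\le\log n$, this gives $K^2\beta\le 12\varepsilon/C$. It changes $\log(1/\varepsilon_0)$ only by $O(\log\log n)$, so no seed length changes.

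\textbf{Fix (ii): reorganize the accounting as the paper does.} This removes the compounding entirely. The statement only concerns the full product, so compare the full product of the stage-$(k+1)$ program with the full product of the stage-$k$ program, not with the original program. Each new layer is an $(\varepsilon_0,O(\delta'))$-SV approximation of the block product of the \emph{current} stage-$k$ layers. Applying \Cref{cor:products} only to these fresh errors gives a stage-to-stage approximation with parameters $(2\varepsilon_0,O(n\delta'))$. Then chain the $K$ stage comparisons with \Cref{lem:transitivity}. Its overhead is only a factor $1+2\varepsilon_{\mathrm{step}}$ per stage, which compounds to less than $2$; with this organization even your $\varepsilon_0=\varepsilon/(C\log n)$ suffices. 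The per-block comparison with the original program, the obstacle you identified, is then never needed.
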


\begin{proof}

   Pad all input programs with identity transitions to the next power of two.
   This increases their common length by less than a factor of two;
   henceforth $n$ denotes the padded length, and the final generator is
   restricted to the original coordinates.
   We recursively define the generator $\mathcal G_x$. Let $n_0=n,\Sigma_0=\Sigma$ and 
   \[
   B^{\iota,0}:=B^{\iota}, \quad \mathcal G^0:\Sigma_0^n\to \Sigma_0^n
   \]
   be the input ROBPs and the identity generator. We will define a sequence of ROBPs $B^{\iota,t}$ and generators $\mathcal G^t$ such that $B^{\iota,t}\equiv B^\iota\circ \mathcal G^t$ always holds.
   Here \(\circ\) denotes composition and we regard $B^{\iota,t}, B^\iota$ as matrix-valued functions.
   We regard $x$ as a concatenation of $x_0, \ldots, x_{K-1}$, where $K$ and the lengths of $x_0, \ldots, x_{K-1}$ will be defined later.
   For every $t =  1,\ldots, K-1$, our $B^{\iota,t}$ and $\mathcal G^t$ both rely on $x_0, \ldots, x_{t-1}$. 
   
   Assume that we have defined the ROBPs $B^{\iota,t }$ of length $n_t$ and alphabet $\Sigma_t$, and the generator $\mathcal G^{t }:\Sigma_t^{n_t}\to \Sigma^n$ for $0\leq t\leq K-1$.
   We now define the ROBPs $B^{\iota,t+1}$ and the generator $\mathcal G^{t+1}$. 

The procedure is divided into two parts: Constructing $\widetilde{\mathcal G}^{t}$ from $\mathcal G^{t}$ and constructing $\mathcal G^{t+1}$ from $\widetilde{\mathcal G}^{t}$.

   \paragraph{From $\mathcal G^{t}$ to $\widetilde{\mathcal G}^{t}$:}
   We first apply \cref{lem:reducing-alphabet-size-with-sampler} on  ${B^{\iota,t}}_{\iota\in \mathcal J}$ with error $\delta_{\mathrm{samp}}/\sqrt{w}$. 
   Let $\Samp:\{0,1\}^{r_t}\times \{0,1\}^{p_t}\to \Sigma_t$ be the sampler used in \cref{lem:reducing-alphabet-size-with-sampler}. 
   By \cref{lem:reducing-alphabet-size-with-sampler} we can find an advice $x_t\in \{0,1\}^{r_t}$, such that for every $\iota\in \mathcal J$ and $0\leq i\leq j\leq n_t$, we have
   \[
         \left\|W^{\iota,t}_{i..j}-\E_{y_{i+1},\ldots,y_j\in \{0,1\}^{p_t}}\left[B^{\iota,t}_{i..j}\left(\Samp(x_t,y_{i+1}),\ldots,\Samp(x_t,y_j)\right)\right]\right\|_\infty\leq \delta_{\mathrm{samp}}/\sqrt{w},
   \]
   where
   \[
   \begin{aligned}
      \delta_{\mathrm{samp}}&=\frac{\varepsilon\delta}{100n^4\lceil\log^2 n\rceil},\\
      |x_t|&=\lceil\log|\Sigma_t|\rceil+O(\log(nw|\mathcal J|/(\varepsilon\delta))),\\
      p_t&=O(\log(nw/(\varepsilon\delta))+\log\log(|\mathcal J|+2)).
   \end{aligned}
   \]
   We then define the new ROBPs $\widetilde{B}^{\iota,t}$ by replacing each symbol of $B^{\iota,t}$ with $\Samp(x_t,\cdot)$, i.e.,
   \[
     \widetilde{B}^{\iota,t}_i(y):= B^{\iota,t}_i(\Samp(x_t,y)), \quad \forall i\in [n_t], y\in \{0,1\}^{p_t}.
   \]
   And we define the new generator $\widetilde{\mathcal G}^{t}:\{0,1\}^{n_t\cdot p_t}\to \Sigma^{n}$ by
   \[
   \widetilde{\mathcal G}^{t}(y_1,\ldots,y_{n_t}):=
   \mathcal G^{t}(\Samp(x_t,y_1),\ldots,\Samp(x_t,y_{n_t})), \quad \forall y_1,\ldots,y_{n_t}\in \{0,1\}^{p_t}.
   \]
   
   \paragraph{From $\widetilde{\mathcal G}^{t}$ to $\mathcal G^{t+1}$:}
   Set $\varepsilon_{\mathrm{INW}}=\varepsilon/(100\lceil\log^2 n\rceil)$.
   Choose a power of two $m$ by
   \[
      \log m=\min\left\{\log n,
         \left\lceil\frac{\log(nw|\mathcal J|/(\delta\varepsilon))}
         {\log(\log n/\varepsilon_{\mathrm{INW}})}\right\rceil\right\},
      \qquad m_t:=\min\{m,n_t\}.
   \]
   Since $n_t$ and $m_t$ are powers of two, $m_t$ divides $n_t$.
   Use \cref{lem:inw-sv-generator} to obtain
   $\INW:\{0,1\}^{d_t}\to(\{0,1\}^{p_t})^{m_t}$ with
   $\varepsilon_{\mathrm{INW}}$-SV error. Since $m_t\le m\le n$,
   \[
      d_t=p_t+O\!\left(\log m_t
         (\log\log m_t+\log(1/\varepsilon_{\mathrm{INW}}))\right)
         =O(\log(nw|\mathcal J|/(\delta\varepsilon))).
   \]
   Replace each block of $m_t$ transitions by the transition induced by
   $\INW$, defining
   \[
      B^{\iota,t+1}_i(y)
      :=\widetilde B^{\iota,t}_{(i-1)m_t..im_t}(\INW(y)),
      \quad i\in[n_{t+1}],\quad y\in\{0,1\}^{d_t},
   \]
   where $n_{t+1}=n_t/m_t$ and $\Sigma_{t+1}=\{0,1\}^{d_t}$.
   And we define the new generator $\mathcal G^{t+1}:\{0,1\}^{n_{t+1}\cdot d_t}\to \Sigma^{n}$ by
   \[
   \mathcal G^{t+1}(y_1,\ldots,y_{n_{t+1}}):=
   \widetilde{\mathcal G}^{t}(\INW(y_1),\ldots,\INW(y_{n_{t+1}})), \quad \forall y_1,\ldots,y_{n_{t+1}}\in \{0,1\}^{d_t}.
   \]

   \paragraph{Final generator:}
   We repeat the above procedure for $K$ times, where $K=\left\lceil\frac{\log n}{\log m}\right\rceil$. The final generator is $\mathcal G_x:=\mathcal G^{K}: \{0,1\}^{d_{K-1}}\to \Sigma^n$, where $x=(x_0,\ldots,x_{K-1})$ is the advice. 

   \paragraph{Calculating the error:} We denote the $i$-th average transition matrix of $B^{\iota,t}$ (resp. $\widetilde{B}^{\iota,t}$) by $W^{\iota,t}_i$ (resp. $\widetilde{W}^{\iota,t}_i$). And we use $W^{\iota,t}_{i..j}$ (resp. $\widetilde{W}^{\iota,t}_{i..j}$) to denote the transition matrix of $B^{\iota,t}$ (resp. $\widetilde{B}^{\iota,t}$) from step $i$ to step $j$.

   By the permutation case of \cref{lem:reducing-alphabet-size-with-sampler}, we have $\|W^{\iota,t}_{i..j}-\widetilde{W}^{\iota,t}_{i..j}\|_\infty\leq \delta_{\mathrm{samp}}/\sqrt{w}$ for every $0\leq i\leq j\leq n_t$. Since $\|A\|_2\leq\sqrt{w}\|A\|_\infty$ for every $w\times w$ matrix $A$, the spectral-norm error is at most $\delta_{\mathrm{samp}}$. Applying Cauchy--Schwarz and $2\|x\|_2\|y\|_2\leq\|x\|_2^2+\|y\|_2^2$ gives $\widetilde{W}^{\iota,t}_{i..j}\asvapprox_{0,\delta_{\mathrm{samp}}} W^{\iota,t}_{i..j}$ for every such interval.
   By \cref{lem:inw-sv-generator}, we have $W^{\iota,t+1}_{i}\asvapprox_{\varepsilon_{\mathrm{INW}},0} \widetilde{W}^{\iota,t}_{(i-1)m_t..im_t}$ for every $i\in [n_{t+1}]$.
   By \cref{lem:transitivity}, these two approximations give
   \[
      W^{\iota,t+1}_{i}
      \asvapprox_{\varepsilon_{\mathrm{INW}},
         (1+2\varepsilon_{\mathrm{INW}})\delta_{\mathrm{samp}}}
      W^{\iota,t}_{(i-1)m_t..im_t}
      \quad (i\in[n_{t+1}]).
   \]
   The additive parameter is at most $2\delta_{\mathrm{samp}}$.
   If $n_{t+1}\ge2$, the chosen parameters satisfy the hypotheses of
   \cref{cor:products}, which yields
   \[
      W^{\iota,t+1}_{0..n_{t+1}}
      \asvapprox_{
         (1+6\log n_{t+1}(\varepsilon_{\mathrm{INW}}
            +2n_{t+1}\delta_{\mathrm{samp}}))\varepsilon_{\mathrm{INW}},
         6n_{t+1}\delta_{\mathrm{samp}}}
      W^{\iota,t}_{0..n_t}.
   \]
   Here $6\log n_{t+1}(\varepsilon_{\mathrm{INW}}
   +2n_{t+1}\delta_{\mathrm{samp}})\le1$.
   If $n_{t+1}=1$, the preceding one-block guarantee applies directly.
   Thus every stage has error at most
   \[
      \varepsilon_{\mathrm{step}}:=2\varepsilon_{\mathrm{INW}},
      \qquad
      \delta_{\mathrm{step}}:=6n\delta_{\mathrm{samp}}.
   \]
   Initialize the cumulative errors by
   $\varepsilon^{(0)}=\delta^{(0)}=0$ and, for $0\le t<K$, set
   \[
   \begin{aligned}
      \varepsilon^{(t+1)}
         &=\varepsilon_{\mathrm{step}}
           +(1+2\varepsilon_{\mathrm{step}})\varepsilon^{(t)},\\
      \delta^{(t+1)}
         &=\delta_{\mathrm{step}}
           +(1+2\varepsilon_{\mathrm{step}})\delta^{(t)}.
   \end{aligned}
   \]
   Transitivity gives
   $W^{\iota,t}_{0..n_t}\asvapprox_{\varepsilon^{(t)},\delta^{(t)}}
   W^{\iota,0}_{0..n_0}$.
   Since $K\le\log n$, the parameters give
   $2K\varepsilon_{\mathrm{step}}\le\varepsilon/25$ and
   $2K\delta_{\mathrm{step}}\le\delta$.
   In particular,
   $(1+2\varepsilon_{\mathrm{step}})^K
      \le\exp(2K\varepsilon_{\mathrm{step}})<2$.
   Summing the two geometric recurrences therefore gives
   \[
      \varepsilon^{(K)}\le2K\varepsilon_{\mathrm{step}}\le\varepsilon,
      \qquad
      \delta^{(K)}\le2K\delta_{\mathrm{step}}\le\delta.
   \]
   Hence $W^{\iota,K}_{0..n_K}\asvapprox_{\varepsilon,\delta}
   W^\iota_{0..n}$.

   \paragraph{Complexity of computing the advice and the generator:} The total advice used by the generator is $|x|=|x_0|+\cdots+|x_{K-1}|.$ The first advice $x_0$ has length $|x_0|=O(\log|\Sigma|+\log(nw|\mathcal J|/(\varepsilon\delta)))$. For $t\geq 1$, we have $|x_t|=\log|\Sigma_t|+O(\log(nw|\mathcal J|/(\varepsilon\delta)))$, where $\log|\Sigma_t|=d_{t-1}=p_{t-1}+O(\log m_{t-1}\cdot(\log\log m_{t-1}+\log(1/\varepsilon_{\mathrm{INW}})))=O(\log(nw|\mathcal J|/(\varepsilon\delta)))$. Since $K=\left\lceil\frac{\log n}{\log m}\right\rceil=O(1+\log n\cdot \frac{\log(\log n/\varepsilon_{\mathrm{INW}})}{\log(nw|\mathcal J|/(\delta\varepsilon))})$, we have $|x|=O(\log|\Sigma|+\left\lceil\log n\cdot \frac{\log(\log n/\varepsilon_{\mathrm{INW}})}{\log(nw|\mathcal J|/(\delta\varepsilon))}\right\rceil\cdot \log(nw|\mathcal J|/(\delta\varepsilon)))=O(\log|\Sigma|+\log n(\log\log n+\log(1/\varepsilon))+\log(nw|\mathcal J|/\delta))$.

   The final online seed length of the generator is $d=d_{K-1}=p_{K-1}+O(\log m_{K-1}\cdot(\log\log m_{K-1}+\log(1/\varepsilon_{\mathrm{INW}})))=O(\log(nw|\mathcal J|/\varepsilon\delta))$.
   
   At every level, by \cref{lem:reducing-alphabet-size-with-sampler}, the computing of $x_t$ takes time
   \(\poly(n,w,|\Sigma|,|\mathcal J|,1/\varepsilon,1/\delta)\).  Giving access to $x_0, \ldots , x_{t-1}$,  the algorithm uses
   \(O(\log(nw|\Sigma||\mathcal J|/\varepsilon\delta))\) workspace to compute $x_t$.
\paragraph{Strong explicitness.}
Finally, we show how to compute a single output symbol without expanding any
intermediate string.  Write \(\INW_t\) and \(\Samp_t\) for the INW generator
and the fixed sampler used at level \(t\).  For \(q\in[n]\), define
\[
   q_t=\left\lceil\frac{q}{m^t}\right\rceil
   \quad\text{and}\quad
   j_t=q_t-(q_{t+1}-1)m_t\in[m_t].
\]
Each coordinate of \(\INW_t\) can be evaluated
in polynomial time in its seed length and \(\log m\), using workspace
linear in its seed length.
Starting with the final seed \(s_K=y\), we trace the requested coordinate
\(q\) through the recursion from level \(K\) down to level \(0\).  Suppose
that \(s_{t+1}\) is the symbol at position \(q_{t+1}\) in the
level-\((t+1)\) input.  The position \(q_t\) lies in its \(j_t\)-th child
block.  Hence the corresponding level-\(t\) symbol is obtained by taking
the \(j_t\)-th output of the INW generator and then applying the sampler
fixed by \(x_t\):
\[
   z_t=\bigl(\INW_t(s_{t+1})\bigr)_{j_t},
   \qquad
   s_t=\Samp_t(x_t,z_t).
\]
Repeating this computation for \(t=K-1,\ldots,0\) yields
\[
   s_0=\mathcal G_x(y)_q.
\]
Thus, each level requires only one coordinate evaluation of the INW
generator and one evaluation of the sampler.  Since there are
\(K=O(\log n)\) levels, the computation uses
\(O(d+\log n+\log|\Sigma|)\) workspace and runs in time
\(\poly(|x|,d,\log n,\log|\Sigma|)\).  Hence \(\mathcal G_x\) is strongly
explicit with the claimed bounds.

\end{proof}

\subsection{Proof of \cref{thm:regular-derand}}
\label{sec:perm-derand-proof}

\begin{proof}
   We first apply \Cref{lem:reg2perm-phase} to lift the regular ROBP \(B\) to a permutation ROBP \(B^p\) of length \(n\), width \(w|\Sigma|\), and alphabet \(\Sigma\). The states of \(B^p\) are pairs \((u,\sigma)\in[w]\times\Sigma\), where the auxiliary phase \(\sigma\) indexes the \(|\Sigma|\) lifted copies of each state \(u\). Every symbol induces a permutation on the lifted state space.

    Let \(W_{0..n}\) and \(W^p_{0..n}\) denote the full transition matrices of \(B\) and \(B^p\), respectively. The transition matrix of \(B\) is recovered from \(W^p_{0..n}\) through the following sum identity: for every \(u,v\in[w]\) and any fixed initial phase \(\sigma\in\Sigma\),
    \[
    \sum_{\sigma'\in\Sigma}\bigl[W^p_{0..n}\bigr]_{(u,\sigma'),(v,\sigma)}=\bigl[W_{0..n}\bigr]_{u,v}.
    \]

   We use \cref{lem:det-compressed-inw-sv} to compute the offline advice $x$ and compute the targeted generator $\mathcal G_x:\{0,1\}^{d}\to \Sigma^n$ that $(\varepsilon/2,\varepsilon/2)$-SV fools the permutation ROBP $B^p$, where $d=O(\log(nw|\Sigma|/\varepsilon))$ is the seed length of the generator. 
   Enumerating the online seed gives the approximate transition matrix
   \(\widehat{W}^p_{0..n}\asvapprox_{\varepsilon/2,\varepsilon/2}W^p_{0..n}\).
   We then output the approximate transition matrix $\widehat{W}_{0..n}$ defined by $[\widehat{W}_{0..n}]_{u,v}=\frac{1}{|\Sigma|}\sum_{\sigma,\sigma'\in\Sigma} [\widehat{W}^p_{0..n}]_{(u,\sigma),(v,\sigma')}$ for every $u,v\in [w]$.

   \paragraph{Error analysis:} For every $u,v\in [w]$, let $y=\frac{1}{\sqrt{|\Sigma|}}\sum_{\sigma\in\Sigma} e_{(u,\sigma)}$ and $x=\frac{1}{\sqrt{|\Sigma|}}\sum_{\sigma'\in\Sigma} e_{(v,\sigma')}$, where $e_{(u,\sigma)}$ is the unit vector which has $1$ at the $(u,\sigma)$-th entry and $0$ at other entries. By the SV approximation guarantee, we have
   \begin{align*}
      &\left|y^\top (\widehat{W}^p_{0..n}-W^p_{0..n})x\right|\leq\frac{\varepsilon}{4}(\|x\|_2^2-\|W^p_{0..n}x\|_2^2+\|y\|_2^2-\|(W^p_{0..n})^\top y\|_2^2)+\frac{\varepsilon}{4}(\|x\|_2^2+\|y\|_2^2)\\
      \implies&\left|\frac{1}{|\Sigma|}\sum_{\sigma,\sigma'\in\Sigma} [\widehat{W}^p_{0..n}]_{(u,\sigma),(v,\sigma')}-\frac{1}{|\Sigma|}\sum_{\sigma,\sigma'\in\Sigma} [W^p_{0..n}]_{(u,\sigma),(v,\sigma')}\right|\leq \frac{\varepsilon+\varepsilon}{4}(\|x\|_2^2+\|y\|_2^2)\\
      \implies&\left|[\widehat{W}_{0..n}]_{u,v}-[W_{0..n}]_{u,v}\right|\leq \varepsilon.
   \end{align*}

   \paragraph{Space and time analysis:} By \cref{lem:reg2perm-phase}, transforming the regular ROBP to a permutation ROBP can be done in space $O(\log (nw|\Sigma|))$ and time $\poly(n,w,|\Sigma|)$. 
   By  \cref{lem:det-compressed-inw-sv}, constructing $x$ can be done in space $O(\log n(\log\log n+\log(1/\varepsilon))+\log(nw|\Sigma|/\varepsilon))$ and time $\poly(n,w,|\Sigma|,1/\varepsilon)$. The generator $\mathcal G_x$ has seed length $d=O(\log(nw|\Sigma|/\varepsilon))$ and
   can be computed in space $O(d + \log n + \log |\Sigma|)$ and time \(\poly(|x|,d,\log n,\log|\Sigma|)\), given access to $x$.
   Enumerating the online seed of the generator and computing the approximate transition matrix $\widehat{W}_{0..n}$ can be done in space $O(d+|x|+\log(nw|\Sigma|))=O(\log n(\log\log n+\log(1/\varepsilon))+\log(nw|\Sigma|/\varepsilon))$ and time $\poly(2^d,n,w,|\Sigma|)=\poly(n,w,|\Sigma|,1/\varepsilon)$.
   The total space used by the algorithm is $O(\log n(\log\log n+\log(1/\varepsilon))+\log(nw|\Sigma|/\varepsilon))$ and the total time used by the algorithm is $\poly(n,w,|\Sigma|,1/\varepsilon)$.
\end{proof}

\section{Derandomization of short-wide regular \texorpdfstring{ROBP}{ROBP}s in SC}
\label{sec:perm-regular-derand}

We prove \cref{thm:regular-derand-low-error} in this section.  Again we first prove a derandomization for permutation ROBPs and then extend it to regular ROBPs.

The main technical lemma of our SC derandomization for permutation ROBPs in the short-wide regime is stated as the following.
\begin{lemma}[Low-error derandomization for permutation ROBPs]
\label{lem:perm-derand-low-error}
Let \(n,w\in\mathbb N\), let \(\Sigma\) be a finite alphabet, and let
\(\varepsilon>0\).  There exists a deterministic algorithm that, given a
length-\(n\), width-\(w\) permutation ROBP \(B\) over alphabet \(\Sigma\),
computes a matrix \(\widehat W\in\mathbb R^{w\times w}\) such that
\[
   \left|
      [W_{0..n}]_{u,v}-[\widehat W]_{u,v}
   \right|
   \le \varepsilon
   \qquad\text{for every }u,v\in[w],
\]
where \(W_{0..n}\) is the average transition matrix of \(B\) from its first
layer to its last layer.  The algorithm runs in time
\(\poly(n,w,|\Sigma|,1/\varepsilon)\) and uses space
\[
   O\bigl(\log|\Sigma|+\log^2 n+\log(1/\varepsilon)+\log w\bigr).
\]
\end{lemma}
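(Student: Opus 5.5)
The plan follows the outline in the technical overview: we build inverse-polynomial-accuracy base approximations and then boost them by recursive error reduction.

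\textbf{Base approximations.} For every dyadic interval $J\subseteq[n]$ of length at least two, we apply \Cref{lem:det-compressed-inw-sv} to the family $\{B_J\}$, where $B_J$ is $B$ with the transitions outside $J$ replaced by identities. This family has size $|\mathcal J|\le 2n$, and we use parameters $\varepsilon_0=1/\polylog(n)$ (say $1/(100\log^2 n)$) and $\delta_0=1/\poly(n)$. The space cost is $O(\log|\Sigma|+\log n\log\log n+\log(nw))$, which fits inside the claimed budget. This gives one offline seed $x$ and a generator $\mathcal G_x$ with logarithmic seed length $d_0=O(\log(nw/\delta_0))$. The base matrix is
\[
\widehat W^{(0)}_J:=M_J[\mathcal G_x]\asvapprox_{\varepsilon_0,\delta_0}W_J .
\]
Unit intervals are kept exact.

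\textbf{Error reduction.} Next we apply the recursive error reduction of Chattopadhyay and Liao \cite{chattopadhyayRecursiveErrorReduction2023}, in the form of a Richardson-type correction $\widehat W^{(h)}_J$ built from lower orders on dyadic subintervals. We set $k=\lceil\log(1/\varepsilon)/\log n\rceil$, so that $\widehat W^{(k)}_{[1,n]}$ has entrywise error at most $\varepsilon$.

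The step I expect to be the main obstacle is checking that their analysis accepts inputs with two-parameter error. Their error-reduction inequality is stated for one-parameter (spectral or SV) error, so it has to be redone for our inputs. The plan is to view the deviation of $\widehat W^{(h)}_J$ as a signed sum of products of error terms $\Delta_{J'}=\widehat W^{(0)}_{J'}-W_{J'}$ interleaved with exact $W$'s. We then bound such products as follows:
\begin{itemize}
\item the SV parts are handled by \Cref{lem:one-sided} and the telescoping identities for $\|\cdot\|_{R(W)},\|\cdot\|_{L(W)}$ from \Cref{lem:product};
\item the Euclidean parts are handled trivially, since each $\delta_0$ factor contributes $1/\poly(n)$.
\end{itemize}
Because $\delta_0\le n^{-c}$ for a large constant $c$, each order gains a factor $n^{-1}$ in both parameters. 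This should give error roughly $(\varepsilon_0\polylog n+\delta_0)^{h+1}$, i.e.\ about $n^{-(h+1)}$. If the mixed terms prove troublesome, a fallback is to set $\delta_0$ smaller than $\varepsilon^{2}$ outright. This is still affordable, since it adds only $O(\log(1/\varepsilon))$ to the seed and space and keeps the final bound of the form $\log^2 n+\log(1/\varepsilon)$.

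\textbf{Evaluation in small space.} We do not store any intermediate matrix. By \cite[Lemma~25]{chattopadhyayRecursiveErrorReduction2023}, the entry $[\widehat W^{(k)}_{[1,n]}]_{u,v}$ expands as a signed sum of $\poly(n,1/\varepsilon)$ terms. Each term is a coefficient times a product of $\ell=O(\log n+\log(1/\varepsilon))$ matrices, each either an exact unit-interval $W_i$ or a base matrix $\widehat W^{(0)}_{J'}$; in the latter case the product is averaged over independent seeds of $\mathcal G_x$.

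Enumerating all $\ell$ seeds jointly would cost $\ell d_0$ bits, which is too much. Instead, we observe that each such product is exactly the average transition matrix of a length-$\ell$ permutation ROBP over alphabet $\{0,1\}^{d_0}\cup\Sigma$, whose transitions are evaluated on the fly from $x$ and the strongly explicit $\mathcal G_x$ in $O(d_0+\log n+\log|\Sigma|)$ space. We approximate this product with the permutation WPRG of \cite{chengWuWeightedPseudorandomGenerators2026} for short programs, with error $\varepsilon/\poly(n,1/\varepsilon)$. Its seed length is logarithmic in $n,w,|\Sigma|,1/\varepsilon$ because $\ell$ is short, and its weights have size $\poly$. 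A union over all terms then keeps the total error at most $2\varepsilon$.

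The outer enumeration over expansion terms, WPRG seeds, and the stored advice $x$ uses space $O(\log|\Sigma|+\log^2 n+\log(1/\varepsilon)+\log w)$ and time $\poly(n,w,|\Sigma|,1/\varepsilon)$. Rescaling $\varepsilon$ completes the argument.
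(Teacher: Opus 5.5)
Your architecture matches the paper's: one shared offline seed from \Cref{lem:det-compressed-inw-sv} for all dyadic intervals, Chattopadhyay--Liao correction, expansion into short permutation programs, and the Cheng--Wu WPRG on each term. However, your choice of base accuracy breaks the error-reduction step.

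The gain per correction order is governed by the SV parameter, not the Euclidean one. The order-$h$ error is a polynomial of minimum degree $h+1$ in the base errors. Accordingly, in the paper's two-parameter version (\Cref{lem:recursive-sv-separated-slack}), a base SV error $\varepsilon^{(0)}=\gamma/(100L)$ yields order-$k$ error about $\gamma^{k+1}/(100L(k+1)^2)$. Each order therefore gains only a factor $\gamma\approx 100L\varepsilon_0$.

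With your $\varepsilon_0=1/(100\log^2 n)$ this gives $\gamma=1/\log n$. Then $k=\lceil\log(1/\varepsilon)/\log n\rceil$ leaves error about $(\log n)^{-(k+1)}$; for instance $\varepsilon=n^{-2}$ gives $k=2$ and error $\approx\log^{-3}n$. Making $\delta_0$ tiny changes nothing here. So the sentence ``because $\delta_0\le n^{-c}$, each order gains a factor $n^{-1}$ in both parameters'' is exactly where the argument fails.

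Raising $k$ to $\approx\log(1/\varepsilon)/\log\log n$ does not rescue it. The expansion has $n^{\Theta(k)}$ terms: at most $n^{2k}$ by \cite[Lemma~25]{chattopadhyayRecursiveErrorReduction2023}, and at least $\binom{n}{k}$, since order units can be routed down any $k$ root-to-leaf paths. At $\varepsilon=1/\poly(n)$ this is $n^{\Theta(\log n/\log\log n)}$ terms, so polynomial running time is lost. Your bound $\ell=O(\log n+\log(1/\varepsilon))$ on factors per term also fails.

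The repair is to make the base SV error inverse-polynomial, as the paper does: $\gamma=1/n$, $\varepsilon^{(0)}=1/(100n\log n)$, and $\delta^{(0)}=\varepsilon^{(0)}/n^2$. Then \Cref{lem:det-compressed-inw-sv} needs
$O(\log|\Sigma|+\log n(\log\log n+\log n)+\log(nw))=O(\log|\Sigma|+\log^2 n+\log w)$ advice bits. This is precisely the source of the $\log^2 n$ term in the statement, rather than slack as your accounting suggests. With this choice, the number of expansion terms is at most $n^{2k}\le n^2/\varepsilon^2$ and $\ell\le k\log n+1$, as you use them.

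The fixed ratio $\delta^{(h)}/\varepsilon^{(h)}=1/n^2$ is also what lets the paper absorb the mixed SV/Euclidean cross terms by AM--GM while the Euclidean parameter doubles per level. Your fallback of shrinking $\delta_0$ does not address the per-order gain.

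A smaller fix: the short program should read the product alphabet $\{0,1\}^{d_0}\times\Sigma$, with each layer using only its relevant coordinate. Over the union $\{0,1\}^{d_0}\cup\Sigma$, averaging a layer over the whole alphabet does not reproduce $\widehat W^{(0)}_{J'}$ or $W_i$ exactly.
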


The proof of the lemma relies on the following two-parameter variant of the recursive error reduction of \cite{chattopadhyayRecursiveErrorReduction2023}.
It tracks the two error parameters of our SV approximation 
separately.

Starting from \((0,n]\), iteratively split every interval into its left and
right halves until all intervals have length one.  Let dyadic tree \(\mathcal D\) denote
the resulting collection of intervals for all layers. Notice that the root is \((0,n]\), the leaves are the unit intervals \((j,j+1], j\in \{0, \ldots,  n-1\}\).
\begin{lemma}[Recursive error reduction for two-parameter SV approximation]
\label{lem:recursive-sv-separated-slack}
Let \(n=2^L\) with \(L\ge1\).  Suppose that each \(J\in\mathcal D\) is associated
with a doubly stochastic matrix \(M_J\), and that every nonleaf interval
\(J\), with left and right children \(J_{\mathrm L}\) and
\(J_{\mathrm R}\), satisfies
\[
   M_J=M_{J_{\mathrm L}}M_{J_{\mathrm R}}.
\]
Let \(0<\gamma<1/2\), and define
\begin{equation}
\label{eq:eps-delta-sequences}
\begin{aligned}
   \varepsilon^{(i)}
      &:=\frac{\gamma^{i+1}}{100L(i+1)^2},
   &\qquad
   \delta^{(i)}
      &:=\frac{\varepsilon^{(i)}}{n^2},\\
   C_t
      &:=\left(1+\frac1{2L}\right)^t,
   &
   S_t
      &:=2^tC_t.
\end{aligned}
\end{equation}

For every leaf \(J\), set \(\widehat M_J^{(h)}:=M_J\) for all \(h\ge0\).
For every nonleaf \(J\), suppose that a base approximation
\(\widehat M_J^{(0)}\) is given, and define, for \(h\ge1\),
\begin{equation}
\label{eq:wprr-matrix-recursion}
   \widehat M_J^{(h)}
   :=
   \sum_{p+q=h}
      \widehat M_{J_{\mathrm L}}^{(p)}
      \widehat M_{J_{\mathrm R}}^{(q)}
   -
   \sum_{p+q=h-1}
      \widehat M_{J_{\mathrm L}}^{(p)}
      \widehat M_{J_{\mathrm R}}^{(q)}.
\end{equation}
Assume that every nonleaf \(J\in\mathcal D\) satisfies
\begin{equation}
\label{eq:base-eps-delta-assumption}
   \widehat M_J^{(0)}
   \asvapprox_{\varepsilon^{(0)},\delta^{(0)}}
   M_J.
\end{equation}
Then, for every \(k\ge0\) and every interval \(J\in\mathcal D\) of length
\(2^t\),
\begin{equation}
\label{eq:scaled-slack-induction-conclusion}
   \widehat M_J^{(k)}
   \asvapprox_{C_t\varepsilon^{(k)},S_t\delta^{(k)}}
   M_J.
\end{equation}
\end{lemma}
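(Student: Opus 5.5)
Following the recursive error reduction of Chattopadhyay and Liao, I would argue by induction on \(t\) (the level of \(J\) in \(\mathcal D\)), and for fixed \(t\) by strong induction on \(k\). Leaves (\(t=0\)) are exact, so \(C_0\varepsilon^{(k)},S_0\delta^{(k)}\) hold trivially. For a nonleaf \(J=J_{\mathrm L}\cup J_{\mathrm R}\) of length \(2^t\), I write \(\Delta_X^{(p)}:=\widehat M_X^{(p)}-M_X\). The standard telescoping identity gives
\[
   \widehat M_J^{(k)}-M_J
   =\Delta_{J_{\mathrm L}}^{(k)}M_{J_{\mathrm R}}
    +M_{J_{\mathrm L}}\Delta_{J_{\mathrm R}}^{(k)}
    +\sum_{p+q=k}\Delta_{J_{\mathrm L}}^{(p)}\Delta_{J_{\mathrm R}}^{(q)}
    -\sum_{p+q=k-1}\Delta_{J_{\mathrm L}}^{(p)}\Delta_{J_{\mathrm R}}^{(q)}.
\]
This holds for \(k\ge1\), using \(\sum_{p+q=h}1-\sum_{p+q=h-1}1=1\) for the \(M_{J_{\mathrm L}}M_{J_{\mathrm R}}\) term. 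It is cleaner to verify the identity by expanding \(\widehat M^{(p)}=M+\Delta^{(p)}\) and observing that the terms \(\Delta^{(p)}M\) with \(p<k\) cancel between the two sums. The case \(k=0\) is the base assumption \eqref{eq:base-eps-delta-assumption}, since \(C_t,S_t\ge1\).

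\textbf{Linear terms.} Here I follow the proof of \Cref{lem:product} exactly. By induction, each child satisfies \((C_{t-1}\varepsilon^{(k)},S_{t-1}\delta^{(k)})\). Then \(|y^\top\Delta_{\mathrm L}^{(k)}M_{\mathrm R}x|+|y^\top M_{\mathrm L}\Delta_{\mathrm R}^{(k)}x|\) is bounded by \(\tfrac{C_{t-1}\varepsilon^{(k)}}{2}(\|x\|_{R(M_J)}^2+\|y\|_{L(M_J)}^2)+S_{t-1}\delta^{(k)}(\|x\|_2^2+\|y\|_2^2)\). This uses the additivity identities \(\|x\|_{R(M_{\mathrm R})}^2+\|M_{\mathrm R}x\|_{R(M_{\mathrm L})}^2=\|x\|_{R(M_J)}^2\) and contraction of doubly stochastic matrices. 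The \(\delta\)-coefficient doubles, which is what \(S_t=2^tC_t\) absorbs.

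\textbf{Cross terms.} For each product \(\Delta_{\mathrm L}^{(p)}\Delta_{\mathrm R}^{(q)}\), I apply \Cref{lem:one-sided} to both factors and then the AM--GM splitting from \Cref{lem:product}. With \(a_p:=C_{t-1}\varepsilon^{(p)}\) and \(b_p:=S_{t-1}\delta^{(p)}\), this gives an \(\bigl((a_p+b_p)^{1/2}(a_q+b_q)^{1/2}\cdot\max(a_p,a_q)\text{-ish},\ \ldots\bigr)\) bound. More carefully, I bound \(\|\Delta^\top y\|\,\|\Delta x\|\) with weights chosen so that the SV part becomes \(\sqrt{(a_p+b_p)(a_q+b_q)}\,\sqrt{a_pa_q}\) times the seminorm sum, and similarly for the \(\delta\) part. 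Since \(\delta^{(i)}=\varepsilon^{(i)}/n^2\), we have \(a_p+b_p\le 2C_{t-1}\varepsilon^{(p)}\), so products behave like \(\varepsilon^{(p)}\varepsilon^{(q)}\). The key numeric fact is that
\[
\varepsilon^{(p)}\varepsilon^{(q)}=\frac{\gamma^{p+q+2}}{(100L)^2(p+1)^2(q+1)^2}\le \frac{\gamma\,\varepsilon^{(k)}}{100L}\cdot\frac{(k+1)^2}{(p+1)^2(q+1)^2}
\]
when \(p+q=k\), and \(\sum_{p+q=k}\frac{(k+1)^2}{(p+1)^2(q+1)^2}=O(1)\). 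For \(p+q=k-1\) the product is \(\gamma^{k+1}\)-sized as well, with no spare \(\gamma\). This is handled because \(\varepsilon^{(p)}\varepsilon^{(q)}\le\varepsilon^{(k)}\cdot\frac{\gamma^{k+1}}{\gamma^{k+1}}\cdot\frac{(k+1)^2}{100L(p+1)^2(q+1)^2}\), which still carries the \(1/(100L)\) factor. In total, the cross terms contribute at most \(\tfrac{1}{2L}\cdot\tfrac{C_{t-1}\varepsilon^{(k)}}{2}\) to the seminorm coefficient and \(\tfrac1{2L}S_{t-1}\delta^{(k)}\)-type amounts to the \(\ell_2\) coefficient.

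\textbf{Closing the induction and main obstacle.} Adding up gives SV coefficient \(C_{t-1}(1+\tfrac1{2L})\varepsilon^{(k)}=C_t\varepsilon^{(k)}\) and \(\delta\)-coefficient at most \(2S_{t-1}(1+\tfrac1{2L})\delta^{(k)}=S_t\delta^{(k)}\). Note that \(C_L\le e^{1/2}<2\) keeps all constants bounded, so the one-sided lemma's factors stay controlled. I expect the main obstacle to be the bookkeeping of the mixed \(\sqrt{\varepsilon\delta}\) terms in the cross products. Those terms must be routed into the \(\varepsilon\)-part with a coefficient \(\le \varepsilon^{(k)}/(4L)\) and into the \(\delta\)-part with coefficient \(\le\delta^{(k)}\). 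The plan is to use \(\delta^{(p)}\le\varepsilon^{(p)}/n^2\) together with \(S_{t-1}\le 2n\). This gives \(b_p\le 2\varepsilon^{(p)}/n\), and each mixed term then fits with AM--GM weights \(\sqrt{\delta/\varepsilon}\), exactly as in \Cref{lem:product}.
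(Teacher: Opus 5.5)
Your architecture is the same as the paper's: induction on $t$ (simultaneously in $k$), the same telescoping identity for $\Delta_J^{(k)}$, the linear terms via additivity of the $R/L$ seminorms plus contraction, and the cross terms via \Cref{lem:one-sided}, AM--GM and the convolution bound $A_k\le\varepsilon^{(k)}/(8L)$. You also state the right target for the cross terms: seminorm part proportional to $\sqrt{(a_p+b_p)(a_q+b_q)}\sqrt{a_pa_q}$. The gap is in the step you call the main obstacle, because neither tool you propose for it reaches that target.

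First, routing a mixed term ``exactly as in \Cref{lem:product}'' means $\sqrt{a_pb_q}\,\|y\|_{L}\|x\|_2\le\tfrac{a_p}{2}\|y\|_{L}^2+\tfrac{b_q}{2}\|x\|_2^2$. After the prefactor $\sqrt{(a_p+b_p)(a_q+b_q)}\ge\sqrt{a_pa_q}$, the seminorm coefficient is of order $a_p\sqrt{a_pa_q}$; this is your ``$\max(a_p,a_q)$-ish'' bound. For $p=0$, $q=k$ it is of order $\gamma^{k/2+2}$ rather than $\gamma^{k+1}$, so it cannot be charged to $\varepsilon^{(k)}$ once $k$ is large.

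Second, controlling the Euclidean parts through $b_p\le 2\varepsilon^{(p)}/n$ throws away the scale $2^{t-1}/n^2$. Take the pure term $\sqrt{b_pb_q}\,\|y\|_2\|x\|_2$ times the prefactor, summed over $p+q\in\{k-1,k\}$. The best this inequality certifies is of order $C_{t-1}\varepsilon^{(k)}/(nL)$. But the Euclidean budget left after the linear terms is only $(S_t/2-S_{t-1})\delta^{(k)}=S_{t-1}\delta^{(k)}/(2L)=2^{t-1}C_{t-1}\varepsilon^{(k)}/(2Ln^2)$. This is smaller by a factor of order $n/2^{t}$, which is unbounded already at $t=2$. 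For the same reason, ``coefficient $\le\delta^{(k)}$'' is not the right target.

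The missing observation, which is how the paper closes this step, is that $b_h/a_h=S_{t-1}\delta^{(h)}/(C_{t-1}\varepsilon^{(h)})=2^{t-1}/n^2=:\rho_t\le1$ exactly, independently of $h$. Moreover, $\rho_tC_{t-1}\varepsilon^{(k)}=S_{t-1}\delta^{(k)}$.

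With this, \Cref{lem:one-sided} gives $\|(\Delta_{J_{\mathrm L}}^{(p)})^\top y\|_2\le a_p\sqrt{(1+\rho_t)(\|y\|_{L(M_{J_{\mathrm L}})}^2+\rho_t\|y\|_2^2)}$, and analogously for $\|\Delta_{J_{\mathrm R}}^{(q)}x\|_2$. A single AM--GM on the unexpanded product of the two square roots then gives $a_pa_q(\|y\|_{L(M_J)}^2+\|x\|_{R(M_J)}^2)+\rho_ta_pa_q(\|x\|_2^2+\|y\|_2^2)$, with no mixed terms at all. The convolution bound then caps the two parts at $C_{t-1}\varepsilon^{(k)}/(4L)$ and $S_{t-1}\delta^{(k)}/(4L)$, exactly within budget.

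If you prefer to expand into four terms, write $Y_1=\|y\|_{L(M_{J_{\mathrm L}})}^2$ and $X_2=\|x\|_2^2$, and use the symmetric weight $\sqrt{\rho_tY_1X_2}\le\tfrac12(Y_1+\rho_tX_2)$. This keeps everything proportional to $a_pa_q$ but loses a factor of $2$. That still fits, though only with sharper constants than the paper's $C_{t-1}<2$ together with your $1+\rho_t\le2$. For example, use $C_{t-1}<e^{1/2}$, $\rho_t\le1/(2n)\le1/4$, and $A_k\le 9\varepsilon^{(k)}/(100L)$ from $\gamma<1/2$.
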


Under our column-vector convention, the transition matrices satisfy
\(W_J=W_{J_{\mathrm R}}W_{J_{\mathrm L}}\), so we apply the lemma to
\(M_J=W_J^\top\).  Two-parameter SV approximation is invariant under
transposition, since \(L(W^\top)=R(W)\) and \(R(W^\top)=L(W)\).
Now \cref{lem:perm-derand-low-error} can be shown using \cref{lem:recursive-sv-separated-slack}.
\begin{proof}[Proof of \cref{lem:perm-derand-low-error}]

We may assume that \(0<\varepsilon\le 1/4\), since for larger
\(\varepsilon\) it suffices to prove the statement with error \(1/4\).
By padding \(B\) with identity layers, we may further assume that \(n\) is a power of two.  This increases the length by less than a factor
of two and preserves the permutation property.  Henceforth, \(n\) denotes the
padded length, and we write \(L=\log_2 n\).

\paragraph{The recursive approximation.}
Starting from \((0,n]\), recursively split every interval into its left and
right halves until all intervals have length one.  Let \(\mathcal D\) denote
the resulting collection of intervals.  For each \(J=(a,b]\in\mathcal D\), let
\[
   W_J:=W_{a..b}
\]
denote the average transition matrix of the subprogram of \(B\) on \(J\).
  Set
\[
   \gamma:=\frac1n,
   \qquad
   k:=\left\lceil\frac{\log(1/\varepsilon)}{\log n}\right\rceil.
\]
With the choice \(\gamma=1/n\),  we further setup the following parameters
\[
   \varepsilon^{(0)}
      =\frac{\gamma}{100L}
      =\frac1{100Ln},
   \qquad
   \delta^{(0)}
      =\frac{\gamma}{100Ln^2}
      =\frac1{100Ln^3}.
\]
Applying \cref{lem:det-compressed-inw-sv} to the subprograms on all intervals
\(J\in\mathcal D\) with \(|J|>1\), we obtain a single offline seed \(x\) and
a targeted generator
\[
   \mathcal G_x:\{0,1\}^d\longrightarrow\Sigma^n
\]
such that, for every such interval \(J=(a,b]\),
\[
   \E_{y\sim U_d}
   \left[B_{a..b}\bigl(\mathcal G_x(y)|_J\bigr)\right]
   \asvapprox_{\varepsilon^{(0)},\delta^{(0)}} W_J.
\]
Its online seed length is
\[
   d=O(\log(nw/\varepsilon)),
\]
and its advice length is
\[
   |x|
   =O\bigl(\log|\Sigma|+\log^2n+\log(nw/\varepsilon)\bigr).
\]

For every interval \(J=(a,b]\in\mathcal D\) with \(|J|>1\), define
\[
   \widehat W_J^{(0)}
   :=
   \E_{y\sim U_d}
   \left[
      B_{a..b}\bigl(\mathcal G_x(y)|_J\bigr)
   \right].
\]
For every \(J\in\mathcal D\), set \(M_J:=W_J^\top\).
On every multi-step interval, set
\(\widehat M_J^{(0)}:=(\widehat W_J^{(0)})^\top\).
For every \(J=(a,a+1]\), set
\(\widehat M_J^{(h)}:=M_J\) for all \(h\ge0\).
Define the higher-order matrices \(\widehat M_J^{(h)}\) by
\eqref{eq:wprr-matrix-recursion}, and set
\(\widehat W_J^{(h)}:=(\widehat M_J^{(h)})^\top\).
Thus the one-step transition matrices are
exact, whereas every longer dyadic subprogram is approximated using the generator $G_x$.

We expand \(\widehat M_{0..n}^{(k)}\) via
\eqref{eq:wprr-matrix-recursion} until every factor has order zero,
and then transpose to obtain an expansion of \(\widehat W_{0..n}^{(k)}\).
Let \(\mathcal S\) denote a multiset: each element in \(\mathcal S\)
is specified by a sign \(\lambda\in\{-1,+1\}\) and a sequence
\[
   \mathbf i=(i_0,\ldots,i_r),
   \qquad
   0=i_0<i_1<\cdots<i_r=n,
\]
where the consecutive intervals
\[
   (i_0,i_1],(i_1,i_2],\ldots,(i_{r-1},i_r]
\]
partition \((0,n]\).  With this notation, the expansion can be denoted as
\begin{equation}
\label{eq:signed-short-expansion}
   \widehat W_{0..n}^{(k)}
   =
   \sum_{(\mathbf i,\lambda)\in\mathcal S}
      \lambda
      \widehat W_{i_{r-1}..i_r}^{(0)}
      \cdots
      \widehat W_{i_1..i_2}^{(0)}
      \widehat W_{i_0..i_1}^{(0)},
\end{equation}
where $\mathcal S$ is the multiset of signed terms produced by
the recursive expansion, with multiplicities retained.
Applying \cite[Lemma~25]{chattopadhyayRecursiveErrorReduction2023}
to \eqref{eq:wprr-matrix-recursion} gives
\[
   |\mathcal S|\le n^{2k}
      \le n^2/\varepsilon^2
      =\poly(n,1/\varepsilon),
\]
and every term has $r$ factors with
\[
   r\le k\log_2 n+1
      =O(\log n+\log(1/\varepsilon)),
\]
where we used
$k=\lceil\log(1/\varepsilon)/\log n\rceil$.

\paragraph{Approximating the expansion terms as short programs.}

We next realize each product in \eqref{eq:signed-short-expansion} as the
average transition matrix of a short permutation ROBP.  Fix
\((\mathbf i,\lambda)\in\mathcal S\), and define a length-\(r\), width-\(w\)
permutation ROBP \(B^{\mathbf i}\) over the alphabet
\[
   \Sigma':=\{0,1\}^d\times\Sigma.
\]
For \(q\in[r]\), the \(q\)-th transition on
\((y,\sigma)\in\Sigma'\) is
\[
   B_q^{\mathbf i}(y,\sigma)
   :=
   \begin{cases}
      B_{i_{q-1}..i_q}
      \bigl(\mathcal G_x(y)|_{(i_{q-1},i_q]}\bigr),
         & i_q-i_{q-1}>1,\\[2mm]
      B_{i_{q-1}..i_q}(\sigma),
         & i_q-i_{q-1}=1.
   \end{cases}
\]
In the first case, this transition is the composition of the permutation
transitions of \(B\) along the interval \((i_{q-1},i_q]\); in the second
case, it is a single transition of \(B\).  Thus, \(B^{\mathbf i}\) is a
permutation ROBP.
When
\(i_q-i_{q-1}>1\), averaging over \((y,\sigma)\in\Sigma'\) 
gives
\(\widehat W_{i_{q-1}..i_q}^{(0)}\).  When \(i_q-i_{q-1}=1\), and averaging over \(\sigma\) gives the exact one-step
matrix
\(\widehat W_{i_{q-1}..i_q}^{(0)}=W_{i_{q-1}..i_q}\).
Therefore the average transition matrix of \(B^{\mathbf i}\) is
\begin{equation}
\label{eq:short-program-realization}
   W[B^{\mathbf i}]
   =
   \widehat W_{i_{r-1}..i_r}^{(0)}
   \cdots
   \widehat W_{i_1..i_2}^{(0)}
   \widehat W_{i_0..i_1}^{(0)}.
\end{equation}

We now approximate each short program \(B^{\mathbf i}\) using the weighted
generator of \cite{chengWuWeightedPseudorandomGenerators2026} for permutation ROBPs, with error 
\[
   \eta:=\frac{\varepsilon}{2|\mathcal S|}.
\]
Note that applying the weighted generator to
\(B^{\mathbf i}\) with entrywise error \(\eta\) requires seed length
\begin{align*}
   d_{\mathrm{CW}}
   &=O\left(
      \log|\Sigma'|
      +\log r\log\log r
      +\log r\sqrt{\log(|\mathcal S|/\varepsilon)}
      +\log(|\mathcal S|/\varepsilon)
   \right)\\
   &=O\bigl(\log|\Sigma|+\log(nw/\varepsilon)\bigr),
\end{align*}
where the second bound uses the bounds on \(r\) and \(|\mathcal S|\) above,
together with \(\log|\Sigma'|=d+\log|\Sigma|\).

Enumerating the weighted seeds produces a matrix
\(\widetilde W^{\mathbf i}\) such that, for every \(u,v\in[w]\),
\[
   \left|
      [\widetilde W^{\mathbf i}]_{u,v}
      -[W[B^{\mathbf i}]]_{u,v}
   \right|
   \le \eta.
\]
We then combine these approximations with the signs from
\eqref{eq:signed-short-expansion} and define
\[
   \widetilde W
   :=
   \sum_{(\mathbf i,\lambda)\in\mathcal S}
      \lambda\widetilde W^{\mathbf i}.
\]

\paragraph{Calculating the error}
We now bound the error of the final matrix \(\widetilde W\).  There are two
approximation steps: recursive error reduction replaces \(W_{0..n}\) by
\(\widehat W_{0..n}^{(k)}\), and the weighted generator then approximates the
signed expansion of \(\widehat W_{0..n}^{(k)}\).

We first bound the error from recursive error reduction. 
The choice of \(x\) and transposition invariance give the base assumption
of \cref{lem:recursive-sv-separated-slack} for the matrices
\(M_J=W_J^\top\).  Applying the lemma and transposing back gives
\[
   \widehat W_{0..n}^{(k)}
   \asvapprox_{C_L\varepsilon^{(k)},S_L\delta^{(k)}}
   W_{0..n}.
\]
We now bound the two error parameters.  By the choice of \(k\),
\[
   n^{-k}\le \varepsilon.
\]
Moreover, since \(n\ge4\) after padding, we have \(L\ge2\), and hence
\[
   C_L=\left(1+\frac1{2L}\right)^L<2
   \le L(k+1)^2.
\]
Therefore,
\[
\begin{aligned}
   C_L\varepsilon^{(k)}
   &=
   \frac{C_L}{100L(k+1)^2}\cdot\frac1{n^{k+1}}\\
   &\le
   \frac1{100n}\cdot\frac1{n^k}
   \le
   \frac{\varepsilon}{100n}.
\end{aligned}
\]
The additive error is smaller by another factor of \(n\): since
\(S_L=nC_L\) and \(\delta^{(k)}=\varepsilon^{(k)}/n^2\),
\[
   S_L\delta^{(k)}
   =
   \frac{C_L\varepsilon^{(k)}}{n}
   \le
   \frac{\varepsilon}{100n^2}.
\]

 Fix
\(u,v\in[w]\), and let \(e_u,e_v\in\mathbb R^w\) denote the vectors with a
single \(1\) in coordinates \(u\) and \(v\), respectively, and \(0\) elsewhere.
for every \(w\times w\) matrix \(A\), the two-parameter SV approximation at
the root gives
\[
\begin{aligned}
   \left|
      [\widehat W_{0..n}^{(k)}]_{u,v}
      -[W_{0..n}]_{u,v}
   \right|
   &\le
   \frac{C_L\varepsilon^{(k)}}2
   \left(
      D(W_{0..n}^\top,e_u)
      +D(W_{0..n},e_v)
   \right)\\
   &\quad+
   \frac{S_L\delta^{(k)}}2
   \left(
      \|e_u\|_2^2+\|e_v\|_2^2
   \right).
\end{aligned}
\]
The vectors \(e_u\) and \(e_v\) have Euclidean norm one.  Moreover,
\(W_{0..n}\) and \(W_{0..n}^\top\) are doubly stochastic contractions, so
\[
   D(W_{0..n}^\top,e_u)\le1,
   \qquad
   D(W_{0..n},e_v)\le1.
\]
Using the bounds obtained above,
\[
   C_L\varepsilon^{(k)}
      \le\frac{\varepsilon}{100n},
   \qquad
   S_L\delta^{(k)}
      \le\frac{\varepsilon}{100n^2},
\]
we obtain
\[
   \left|
      [\widehat W_{0..n}^{(k)}]_{u,v}
      -[W_{0..n}]_{u,v}
   \right|
   \le
   \frac{\varepsilon}{100n}
   +\frac{\varepsilon}{100n^2}
   \le\frac{\varepsilon}{2}.
\]

We next bound the error introduced by the weighted generator.  By
\eqref{eq:signed-short-expansion} and
\eqref{eq:short-program-realization},
\[
   \widehat W_{0..n}^{(k)}
   =
   \sum_{(\mathbf i,\lambda)\in\mathcal S}
      \lambda W[B^{\mathbf i}],
\]
whereas
\[
   \widetilde W
   =
   \sum_{(\mathbf i,\lambda)\in\mathcal S}
      \lambda\widetilde W^{\mathbf i}.
\]
Therefore, for the same \(u,v\),
\begin{align*}
   \left|
      [\widetilde W]_{u,v}
      -[\widehat W_{0..n}^{(k)}]_{u,v}
   \right|
   &\le
   \sum_{(\mathbf i,\lambda)\in\mathcal S}
      \left|
         [\widetilde W^{\mathbf i}]_{u,v}
         -[W[B^{\mathbf i}]]_{u,v}
      \right|\\
   &\le
   |\mathcal S|\eta
   =\frac{\varepsilon}{2}.
\end{align*}
Here the signs \(\lambda\) disappear after taking absolute values, and the
last inequality follows from the choice
\(\eta=\varepsilon/(2|\mathcal S|)\).

Combining the two approximation bounds by the triangle inequality gives
\[
\begin{aligned}
   \left|
      [\widetilde W]_{u,v}
      -[W_{0..n}]_{u,v}
   \right|
   &\le
   \left|
      [\widetilde W]_{u,v}
      -[\widehat W_{0..n}^{(k)}]_{u,v}
   \right|\\
   &\quad+
   \left|
      [\widehat W_{0..n}^{(k)}]_{u,v}
      -[W_{0..n}]_{u,v}
   \right|\\
   &\le\varepsilon.
\end{aligned}
\]
Thus \(\widetilde W\) approximates \(W_{0..n}\) entrywise within
\(\varepsilon\).

\paragraph{Space and time analysis}
We finish by verifying the time and space bounds.  The offline seed \(x\)
provided by \cref{lem:det-compressed-inw-sv} can be computed and stored in
time \(\poly(n,w,|\Sigma|,1/\varepsilon)\) and space
\[
   O\bigl(\log|\Sigma|+\log^2 n+\log(1/\varepsilon)+\log w\bigr).
\]
The signed expansion is then enumerated one term at a time.  Given the index
of a term and \(q\in[r]\), its sign and the interval
\((i_{q-1},i_q]\) can be recomputed from the recursion, so the entire sequence
\(\mathbf i\) need not be stored.

For each expansion term, we enumerate the weighted seeds and simulate the
corresponding short program \(B^{\mathbf i}\) sequentially.  To evaluate its
\(q\)-th transition on \((y,\sigma)\in\Sigma'\), we first recompute the
interval \((i_{q-1},i_q]\).  If this interval has length one, we evaluate the
corresponding transition of \(B\) directly.  Otherwise, we scan the interval
and compute each required symbol of
\(\mathcal G_x(y)|_{(i_{q-1},i_q]}\) on demand.  By the strong explicitness
of \(\mathcal G_x\), each such symbol can be computed from \(x\), \(y\), and
its position using
\[
   O(d+\log n+\log|\Sigma|)
\]
workspace, without storing the full output \(\mathcal G_x(y)\).
At any point in the simulation, apart from the stored offline seed \(x\), we
only need to keep the current state of \(B^{\mathbf i}\), the current weighted
seed, and the workspace for evaluating \(\mathcal G_x\).  This requires
\[
   O\bigl(d_{\mathrm{CW}}+d+\log n+\log|\Sigma|+\log w\bigr)
   =
   O\bigl(\log|\Sigma|+\log(nw/\varepsilon)\bigr)
\]
additional space.
Thus the storage of \(x\) dominates, and the total space is
\[
   O\bigl(\log|\Sigma|+\log^2 n+\log(1/\varepsilon)+\log w\bigr).
\]

Finally, the signed expansion has polynomially many terms, the weighted
generator has polynomially many weighted seeds, and each transition of
\(B^{\mathbf i}\) is evaluated using at most \(n\)
computations of \(\mathcal G_x\).  Each of these computations takes
\(\poly(|x|,d,\log n,\log|\Sigma|)\) time.  Hence the total running time is
\(\poly(n,w,|\Sigma|,1/\varepsilon)\).
\end{proof}

Now we prove \cref{thm:regular-derand-low-error}, by using our main lemma \cref{lem:perm-derand-low-error} and the transformation from regular ROBPs to permutation ROBPs.

\begin{proof}[Proof of \cref{thm:regular-derand-low-error}]
Apply \cref{lem:reg2perm-phase} to convert \(B\) into a permutation ROBP
\(B^p\) of length \(n\) and width \(w|\Sigma|\) over the same alphabet.
In \(B^p\), each state \(v\) of \(B\) is replaced by \(|\Sigma|\) copies
\((v,\tau)\), indexed by \(\tau\in\Sigma\).
Let \(W_{0..n}\) and \(W^p_{0..n}\) denote the average transition matrices of
\(B\) and \(B^p\), respectively.  By \cref{lem:reg2perm-phase}, for every
\(u,v\in[w]\) and every \(\tau\in\Sigma\),
\begin{equation}
\label{eq:reg2perm-copy-sum}
   \sum_{\tau'\in\Sigma}
      [W^p_{0..n}]_{(u,\tau'),(v,\tau)}
   =
   [W_{0..n}]_{u,v}.
\end{equation}
Thus, for any fixed copy \((v,\tau)\) of \(v\), summing the transition
probabilities to all copies \((u,\tau')\) of \(u\) gives the transition
probability from \(v\) to \(u\) in \(B\).

Apply \cref{lem:perm-derand-low-error} to \(B^p\) with entrywise error
\(\varepsilon/|\Sigma|\), obtaining \(\widehat W^p_{0..n}\).  Fix any
\(\tau\in\Sigma\), and define
\[
   [\widehat W_{0..n}]_{u,v}
   :=
   \sum_{\tau'\in\Sigma}
      [\widehat W^p_{0..n}]_{(u,\tau'),(v,\tau)}.
\]
By the definition of \(\widehat W_{0..n}\) and
\eqref{eq:reg2perm-copy-sum},
\begin{align*}
   \left|
      [\widehat W_{0..n}]_{u,v}
      -[W_{0..n}]_{u,v}
   \right|
   &\le
   \sum_{\tau'\in\Sigma}
   \left|
      [\widehat W^p_{0..n}]_{(u,\tau'),(v,\tau)}
      -[W^p_{0..n}]_{(u,\tau'),(v,\tau)}
   \right|\\
   &\le
   |\Sigma|\cdot\frac{\varepsilon}{|\Sigma|}
   =\varepsilon.
\end{align*}

Since \(B^p\) has width \(w|\Sigma|\) and is approximated with error
\(\varepsilon/|\Sigma|\), the space bound from
\cref{lem:perm-derand-low-error} becomes
\[
   O\!\left(
      \log|\Sigma|
      +\log^2 n
      +\log\frac{|\Sigma|}{\varepsilon}
      +\log(w|\Sigma|)
   \right)
   =
   O\!\left(
      \log|\Sigma|
      +\log^2 n
      +\log(1/\varepsilon)
      +\log w
   \right).
\]
The running time remains
\(\poly(n,w,|\Sigma|,1/\varepsilon)\), completing the proof.
\end{proof}

\subsection{Proof of \cref{lem:recursive-sv-separated-slack}}
\label{sec:proof-recursive-sv-separated-slack}

\begin{proof}
We prove \eqref{eq:scaled-slack-induction-conclusion} by induction on the
interval level \(t\), simultaneously for all correction orders \(k\ge0\).
The claim trivially holds for \(t=0\).  Suppose \(t\ge1\) and that the claim holds at
level \(t-1\), and let \(J=(\ell,r]\) be an interval of length \(2^t\).
Write
\[
   J_{\mathrm L}=(\ell,m],
   \qquad
   J_{\mathrm R}=(m,r],
   \qquad
   m=\frac{\ell+r}{2}.
\]

For \(k=0\), since \(C_t,S_t\ge1\), the claim follows from
\eqref{eq:base-eps-delta-assumption}.
Fix \(k\ge1\), and write
\[
   \Delta_J^{(h)}:=\widehat M_J^{(h)}-M_J.
\]
Expanding \eqref{eq:wprr-matrix-recursion} and using
\(M_J=M_{J_{\mathrm L}}M_{J_{\mathrm R}}\) gives
\begin{equation}
\label{eq:slack-delta-identity-separated}
\begin{aligned}
   \Delta_J^{(k)}
   ={}&
   \sum_{p+q=k}
      \Delta_{J_{\mathrm L}}^{(p)}
      \Delta_{J_{\mathrm R}}^{(q)}
   -
   \sum_{p+q=k-1}
      \Delta_{J_{\mathrm L}}^{(p)}
      \Delta_{J_{\mathrm R}}^{(q)}\\
   &\quad+
   \Delta_{J_{\mathrm L}}^{(k)}M_{J_{\mathrm R}}
   +M_{J_{\mathrm L}}\Delta_{J_{\mathrm R}}^{(k)}.
\end{aligned}
\end{equation}
Fix \(x,y\in\mathbb R^w\).  We bound the quadratic and linear terms in
\(x^\top\Delta_J^{(k)}y\) separately.

\paragraph{Quadratic terms.}
Fix \(p+q\in\{k-1,k\}\), and set
\[
   a_h:=C_{t-1}\varepsilon^{(h)},
   \qquad
   b_h:=S_{t-1}\delta^{(h)}.
\]
The induction hypothesis for the two children gives
\[
   \widehat M_{J_{\mathrm L}}^{(p)}
   \asvapprox_{a_p,b_p}
   M_{J_{\mathrm L}},
   \qquad
   \widehat M_{J_{\mathrm R}}^{(q)}
   \asvapprox_{a_q,b_q}
   M_{J_{\mathrm R}}.
\]
Applying \Cref{lem:one-sided} to these two approximations gives
\begin{align}
\label{eq:quadratic-one-sided}
   \norm{(\Delta_{J_{\mathrm L}}^{(p)})^\top x}_2
   &\le
   \sqrt{(a_p+b_p)
      \left(
         a_p\norm{x}_{L(M_{J_{\mathrm L}})}^2
         +b_p\norm{x}_2^2
      \right)},\notag\\
   \norm{\Delta_{J_{\mathrm R}}^{(q)}y}_2
   &\le
   \sqrt{(a_q+b_q)
      \left(
         a_q\norm{y}_{R(M_{J_{\mathrm R}})}^2
         +b_q\norm{y}_2^2
      \right)}.
\end{align}

For every correction order \(h\), the ratio of the two error parameters is
\[
\begin{aligned}
   \frac{b_h}{a_h}
   &=
   \frac{S_{t-1}\delta^{(h)}}
        {C_{t-1}\varepsilon^{(h)}}
   &=
   \frac{2^{t-1}}{n^2}
   =:\rho_t
   \qquad
   \text{by \eqref{eq:eps-delta-sequences}.}
\end{aligned}
\]
Since \(2^t\le n\),
\[
   0\le\rho_t
   \le1.
\]
In particular, \(b_h=\rho_ta_h\).  Substituting this identity into
\eqref{eq:quadratic-one-sided}gives
\begin{align}
\label{eq:quadratic-one-sided-rho}
   \norm{(\Delta_{J_{\mathrm L}}^{(p)})^\top x}_2
   &\le
   a_p\sqrt{(1+\rho_t)
      \left(
         \norm{x}_{L(M_{J_{\mathrm L}})}^2
         +\rho_t\norm{x}_2^2
      \right)},\notag\\
   \norm{\Delta_{J_{\mathrm R}}^{(q)}y}_2
   &\le
   a_q\sqrt{(1+\rho_t)
      \left(
         \norm{y}_{R(M_{J_{\mathrm R}})}^2
         +\rho_t\norm{y}_2^2
      \right)}.
\end{align}

We can now bound the quadratic term as follows:
\begin{align}
\label{eq:verified-quadratic-term}
   \left|
      x^\top
      \Delta_{J_{\mathrm L}}^{(p)}
      \Delta_{J_{\mathrm R}}^{(q)}y
   \right|
   &\le
   \norm{(\Delta_{J_{\mathrm L}}^{(p)})^\top x}_2
   \norm{\Delta_{J_{\mathrm R}}^{(q)}y}_2
   && \text{by Cauchy--Schwarz}
   \notag\\
   &\le
   (1+\rho_t)a_pa_q
   \sqrt{
      \left(
         \norm{x}_{L(M_{J_{\mathrm L}})}^2
         +\rho_t\norm{x}_2^2
      \right)
      \left(
         \norm{y}_{R(M_{J_{\mathrm R}})}^2
         +\rho_t\norm{y}_2^2
      \right)}
   && \text{by \eqref{eq:quadratic-one-sided-rho}}
   \notag\\
   &\le
   \frac{1+\rho_t}{2}a_pa_q
   \Bigl(
      \norm{x}_{L(M_{J_{\mathrm L}})}^2
      +\norm{y}_{R(M_{J_{\mathrm R}})}^2
      +\rho_t\norm{x}_2^2
      +\rho_t\norm{y}_2^2
   \Bigr)
   && \text{by AM--GM}
   \notag\\
   &\le
   a_pa_q
   \left(
      \norm{x}_{L(M_{J_{\mathrm L}})}^2
      +\norm{y}_{R(M_{J_{\mathrm R}})}^2
   \right)
   +
   \rho_ta_pa_q
   \left(
      \norm{x}_2^2+\norm{y}_2^2
   \right)
   && \text{since \(0\le\rho_t\le1\).}
\end{align}\
From the definition of  $\norm{\cdot }_{R(W)}^2,\norm{\cdot }_{L(W)}^2$, we have
\begin{align}
\label{eq:product-energy-identities}
   \norm{y}_{R(M_J)}^2
   &=
   \norm{y}^2-\norm{M_{J_L}M_{J_R}y}^2
   =
   \norm{y}_{R(M_{J_{\mathrm R}})}^2
   +
   \norm{M_{J_{\mathrm R}}y}_{R(M_{J_{\mathrm L}})}^2,
   \notag\\
   \norm{x}_{L(M_J)}^2
   &=
   \norm{x}^2-\norm{M_{J_R}^T M_{J_L}^Tx}^2
   =
   \norm{x}_{L(M_{J_{\mathrm L}})}^2
   +
   \norm{M_{J_{\mathrm L}}^\top x}_{L(M_{J_{\mathrm R}})}^2.
\end{align}
In particular,
\[
   \norm{x}_{L(M_{J_{\mathrm L}})}^2
   +
   \norm{y}_{R(M_{J_{\mathrm R}})}^2
   \le
   \norm{x}_{L(M_J)}^2
   +
   \norm{y}_{R(M_J)}^2.
\]
Hence \eqref{eq:verified-quadratic-term} implies, for every
\(p+q\in\{k-1,k\}\),
\begin{align}
\label{eq:quadratic-term-parent-energy}
   \left|
      x^\top
      \Delta_{J_{\mathrm L}}^{(p)}
      \Delta_{J_{\mathrm R}}^{(q)}y
   \right|
   &\le
   a_pa_q
   \left(
      \norm{x}_{L(M_J)}^2
      +\norm{y}_{R(M_J)}^2
   \right)\notag\\
   &\quad+
   \rho_ta_pa_q
   \left(
      \norm{x}_2^2+\norm{y}_2^2
   \right).
\end{align}

It remains to sum the coefficients in
\eqref{eq:quadratic-term-parent-energy}.  Set
\[
   A_k:=
   \sum_{p+q=k}\varepsilon^{(p)}\varepsilon^{(q)}
   +
   \sum_{p+q=k-1}\varepsilon^{(p)}\varepsilon^{(q)}.
\]
By the definition of \(\varepsilon^{(h)}\) and the calculation in
\Cref{sec:scaled-convolution-proof},
\begin{equation}
\label{eq:scaled-convolution}
   A_k
   \le
   \frac{12}{100L}\varepsilon^{(k)}
   \le
   \frac{\varepsilon^{(k)}}{8L}.
\end{equation}
Since \(a_h=C_{t-1}\varepsilon^{(h)}\), we obtain
\begin{equation}
\label{eq:quadratic-coefficient-bounds}
\begin{aligned}
   \sum_{p+q\in\{k-1,k\}}a_pa_q
   &=
   C_{t-1}^2A_k
   \le
   \frac{C_{t-1}\varepsilon^{(k)}}{4L},\\
   \rho_t
   \sum_{p+q\in\{k-1,k\}}a_pa_q
   &\le
   \frac{S_{t-1}\delta^{(k)}}{4L}.
\end{aligned}
\end{equation}
Here we used \(C_{t-1}<2\), together with
\[
   \rho_t C_{t-1}\varepsilon^{(k)}
   =
   S_{t-1}\delta^{(k)}
\]
for the second inequality.

Summing \eqref{eq:quadratic-term-parent-energy} over
\(p+q=k\) and \(p+q=k-1\), and applying
\eqref{eq:quadratic-coefficient-bounds}, shows that all quadratic terms
contribute at most
\begin{equation}
\label{eq:quadratic-clean-bound}
   \frac{C_{t-1}\varepsilon^{(k)}}{2L}
   \frac{
      \norm{x}_{L(M_J)}^2+\norm{y}_{R(M_J)}^2
   }{2}
   +
   \frac{S_{t-1}\delta^{(k)}}{2L}
   \frac{
      \norm{x}_2^2+\norm{y}_2^2
   }{2}.
\end{equation}

\paragraph{Linear terms.}
For the right child, the induction hypothesis gives
\[
   \widehat M_{J_{\mathrm R}}^{(k)}
   \asvapprox_{C_{t-1}\varepsilon^{(k)},
              S_{t-1}\delta^{(k)}}
   M_{J_{\mathrm R}}.
\]
Applying this approximation with test vectors
\(M_{J_{\mathrm L}}^\top x\) and \(y\) gives
\begin{align}
\label{eq:linear-right}
   \left|
      x^\top M_{J_{\mathrm L}}
      \Delta_{J_{\mathrm R}}^{(k)}y
   \right|
   &\le
   \frac{C_{t-1}\varepsilon^{(k)}}2
   \left(
      \norm{M_{J_{\mathrm L}}^\top x}_{L(M_{J_{\mathrm R}})}^2
      +\norm{y}_{R(M_{J_{\mathrm R}})}^2
   \right)\notag\\
   &\quad+
   \frac{S_{t-1}\delta^{(k)}}2
   \left(
      \norm{M_{J_{\mathrm L}}^\top x}_2^2
      +\norm{y}_2^2
   \right).
\end{align}
Similarly, applying the induction hypothesis for the left child with test
vectors \(x\) and \(M_{J_{\mathrm R}}y\) gives
\begin{align}
\label{eq:linear-left}
   \left|
      x^\top\Delta_{J_{\mathrm L}}^{(k)}
      M_{J_{\mathrm R}}y
   \right|
   &\le
   \frac{C_{t-1}\varepsilon^{(k)}}2
   \left(
      \norm{x}_{L(M_{J_{\mathrm L}})}^2
      +\norm{M_{J_{\mathrm R}}y}_{R(M_{J_{\mathrm L}})}^2
   \right)\notag\\
   &\quad+
   \frac{S_{t-1}\delta^{(k)}}2
   \left(
      \norm{x}_2^2
      +\norm{M_{J_{\mathrm R}}y}_2^2
   \right).
\end{align}
Adding \eqref{eq:linear-right} and \eqref{eq:linear-left}, the
identities \eqref{eq:product-energy-identities} give
\begin{align*}
&\norm{M_{J_{\mathrm L}}^\top x}_{L(M_{J_{\mathrm R}})}^2
+\norm{x}_{L(M_{J_{\mathrm L}})}^2
+\norm{y}_{R(M_{J_{\mathrm R}})}^2
+\norm{M_{J_{\mathrm R}}y}_{R(M_{J_{\mathrm L}})}^2\\
&\qquad=
\norm{x}_{L(M_J)}^2+\norm{y}_{R(M_J)}^2.
\end{align*}
For the Euclidean terms, since \(M_{J_{\mathrm L}}\) and
\(M_{J_{\mathrm R}}\) are contractions,
\[
   \norm{M_{J_{\mathrm L}}^\top x}_2^2\le\norm{x}_2^2,
   \qquad
   \norm{M_{J_{\mathrm R}}y}_2^2\le\norm{y}_2^2.
\]
Therefore the two linear terms contribute at most
\begin{equation}
\label{eq:verified-linear-terms}
   C_{t-1}\varepsilon^{(k)}
   \frac{
      \norm{x}_{L(M_J)}^2+\norm{y}_{R(M_J)}^2
   }{2}
   +
   2S_{t-1}\delta^{(k)}
   \frac{
      \norm{x}_2^2+\norm{y}_2^2
   }{2}.
\end{equation}

\paragraph{Combining the bounds.}
Combining \eqref{eq:quadratic-clean-bound} and
\eqref{eq:verified-linear-terms} gives
\begin{align*}
   \left|x^\top\Delta_J^{(k)}y\right|
   &\le
   C_{t-1}\varepsilon^{(k)}
   \left(1+\frac1{2L}\right)
   \frac{
      \norm{x}_{L(M_J)}^2+\norm{y}_{R(M_J)}^2
   }{2}\\
   &\quad+
   S_{t-1}\delta^{(k)}
   \left(2+\frac1{2L}\right)
   \frac{
      \norm{x}_2^2+\norm{y}_2^2
   }{2}.
\end{align*}
By the definitions of \(C_t\) and \(S_t\),
\[
   C_{t-1}\left(1+\frac1{2L}\right)=C_t,
\]
while
\[
   S_{t-1}\left(2+\frac1{2L}\right)
   \le
   2\left(1+\frac1{2L}\right)S_{t-1}
   =S_t.
\]
Therefore
\[
   \widehat M_J^{(k)}
   \asvapprox_{C_t\varepsilon^{(k)},\,S_t\delta^{(k)}}
   M_J,
\]
which completes the induction and proves the lemma.

\end{proof}

\providecommand{\LevFB}{\Lambda_{\mathrm{FB}}}
\providecommand{\FBF}{\mathsf{F}}
\providecommand{\FBB}{\mathsf{B}}

\section{Converting the PRG of \cite{cohenDoronGoldgraberForwardBackward2026} to a derandomization in SC}
\label{sec:fb-inw-regular-permutation}

We start from the permutation case first.

\begin{theorem}[SC derandomization for permutation ROBPs based on FB--INW]
\label{thm:fb-perm}
Let $n,w\in \mathbb{N}$. Let $\Sigma$ be a finite alphabet. Let $\varepsilon>0$. There exists a deterministic algorithm that, given a length-$n$ width-$w$ permutation ROBP $B$ over alphabet $\Sigma$, computes an approximate transition matrix $\widehat{W}$ such that $\|W_{0..n}-\widehat{W}\|_1\leq \varepsilon$,where $W$ is the transition matrix of $B$. The algorithm runs in time $\poly(n,w,|\Sigma|,1/\varepsilon)$ and uses space


\[
   O\!\left(
      \log n\log\frac{w|\Sigma|}{\varepsilon}
   \right),
\]

\end{theorem}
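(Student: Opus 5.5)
We follow the $m$-ary staged structure of \cref{lem:det-compressed-inw-sv}, but we replace the SV analysis by the forward--backward (FB) analysis of \cite{cohenDoronGoldgraberForwardBackward2026}, as in the technical overview. First pad $B$ with identity layers so that $n$ is a power of two. Fix a target FB error $\eta=c\,\varepsilon/w^3$ for a small constant $c$. We will use that FB error at most $\eta$ on the full product implies $\|\widehat W-W_{0..n}\|_1\le O(w^3\eta)\le\varepsilon$. We then run $K=\lceil\log n/\log m\rceil$ stages; $m$ is chosen below.

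\textbf{Stage $t$.} Let $B^{t}$ be the current program, of length $n_t$ over alphabet $\Sigma_t$, together with a generator $\mathcal G^t$ satisfying $B^t\equiv B\circ\mathcal G^t$.
\begin{enumerate}
\item Apply \cref{lem:reducing-alphabet-size-with-sampler} to $B^t$ with entrywise accuracy $\alpha=\eta/(|\Sigma|^2w^4)$. Store the resulting offline seed $x_t$. The sampled program is still a permutation ROBP.
\item Use the entrywise-to-FB conversion, that entrywise error $\alpha$ gives FB error at most $|\Sigma|^2w^4\alpha\le\eta$ on every interval. The factor is independent of the interval length and depends on the original alphabet $\Sigma$, not on $\Sigma_t$. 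So the sampled program approximates the unsampled one within FB error $\eta$ on every block.
\item Apply an INW generator over $\{0,1\}^{p_t}$ to every block of $m_t=\min\{m,n_t\}$ consecutive layers. Its expanders have second singular value $\lambda_{\exp}$ chosen so that $\lambda_{\exp}|\Sigma|^2+16w^3\eta^2\le\eta$, which is possible once $\eta\le 1/(32w^3)$.
\end{enumerate}

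\textbf{Error propagation.} By the merge inequality
\[
\eta_{\mathrm{new}}\le\max\{\eta_L,\eta_R,\lambda_{\exp}|\Sigma|^2+16w^3\eta_L\eta_R\},
\]
every INW level inside a block preserves the bound $\eta$. This holds even though the children carry sampling error. Each INW level costs $O(\log(1/\lambda_{\exp}))=O(\log(w|\Sigma|/\varepsilon))$ seed bits. Choose
\[
\log m=\Theta\Bigl(\min\Bigl\{\log n,\frac{\log(nw|\Sigma|/\varepsilon)}{\log(w|\Sigma|/\varepsilon)}\Bigr\}\Bigr).
\]
Then every seed $d_t$ and every $|x_t|$ is $O(\log(nw|\Sigma|/\varepsilon))$, so the enumerations are polynomial time.

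\textbf{Main obstacle.} The key difficulty is that the FB error must not accumulate additively over the $K$ stages. The intended argument treats a stage's sampled-and-merged transition as just another child in the global FB merge recursion. In the FB framework, the whole $n$-length product is a binary tree of merges whose leaves carry sampling error, and the merge rule gives the max-type bound at every node. The sampling errors introduced at later stages enter only as leaf-level FB errors $\le\eta$ of the coarser program; the plan is to argue they never compound. To get this, we will need a variant of the merge inequality where one child is an exact-average block and the other carries error. If the max-type propagation fails across stages, the fallback is to allocate $\eta/K$ per stage. This costs only a $\log\log n$ factor inside the logarithm, which we absorb into $O(\log(w|\Sigma|/\varepsilon))$ when $\varepsilon\le 1/\log n$. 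Otherwise we accept it into the constant via $\log\log n\le\log n$ scaling in $d$.

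\textbf{Output, space and time.} Enumerate the final online seed, and compute each output symbol by tracing down the levels as in the strong-explicitness paragraph of \cref{lem:det-compressed-inw-sv}; this gives $\widehat W$. Space is dominated by storing $x_0,\dots,x_{K-1}$. This is $O(\log|\Sigma|+K\log(nw|\Sigma|/\varepsilon))=O(\log n\log(w|\Sigma|/\varepsilon))$, since $K\log(nw|\Sigma|/\varepsilon)=O(\log n\log(w|\Sigma|/\varepsilon)+\log(nw|\Sigma|/\varepsilon))$. Time is polynomial, because each sampler search and each seed enumeration ranges over $\poly(n,w,|\Sigma|,1/\varepsilon)$ possibilities.
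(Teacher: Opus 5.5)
Your architecture matches the paper's proof. Per-stage sampler fixing is converted to forward--backward error via \Cref{lem:fb-path-stability}, with the original-alphabet factor $\LevFB w^4$. Multilevel INW merges are controlled by the fixed-point inequality of \Cref{cor:fb-multilevel-inw}, with $\log m\approx H/\beta$, advice length $\log|\Sigma|+O(KH)$, and output computed by tracing one coordinate per level.

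The step that fails is your intended resolution of the ``main obstacle''. At stage $t+1$ the sampler is applied to $B^{t+1}$. Its block averages $W^{t+1}_k$ already differ from the true block matrices $W_{J_{t+1,k}}$ by forward--backward error up to the current bound, and the sampler only certifies closeness to $W^{t+1}_k$. This is a sequential replacement at the same node, not a merge of two children, so the max-type inequality, or any variant with one exact child, says nothing about it. The only available tool is subadditivity of $\|\cdot\|_{\mathrm{fb},J}$, which adds the two errors. The merge inequality also has no contraction: children at level $z$ give a parent at level $z$, not less. So nothing ever removes the added sampling error. With sampling error $\eta$ per stage, as in your step 1, you only get a bound of order $K\eta$ at the root, and the claim that later sampling errors ``never compound'' cannot be established.

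Your fallback is exactly what the paper does, and it costs nothing provided the $1/K$ enters only the sampler. The paper fixes $\eta_\star=\varepsilon/(64w^3)$ and $\lambda_{\exp}=\eta_\star/(16\LevFB)$, both independent of $K$, and uses sampling accuracy $\alpha=\eta_\star/(8K\LevFB w^4)$. It proves the invariant $\|W^t_k-W_{J_{t,k}}\|_{\mathrm{fb},J_{t,k}}\le z_t:=\eta_\star/8+t\eta_\star/(8K)$. Starting from the baseline $\eta_\star/8$ instead of $0$ is legitimate because the bound is monotone in $z$. It keeps every $z_t$ in $[\eta_\star/8,\eta_\star/4]$, where $\lambda_{\exp}\LevFB+16w^3z^2\le z$ holds, so each multilevel INW merge preserves $z_{t+1}$. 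Then $\log K\le\log\log n$ appears only in $\log(1/\alpha)$, hence only in the per-stage seeds $p_t,r_t=O(\log(nw|\Sigma|/\varepsilon))$, and no case distinction on $\varepsilon$ is needed.

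Do not impose the fixed-point inequality at the per-stage scale $\eta/K$. That would force $\lambda_{\exp}\le\eta/(K|\Sigma|^2)$ and add $\log\log n$ bits to every INW level. The total space would become $\log n(\log\log n+\log(w|\Sigma|/\varepsilon))$, which exceeds the claimed bound when $w|\Sigma|/\varepsilon=O(1)$. Your remark about absorbing $\log\log n$ ``when $\varepsilon\le 1/\log n$'' suggests this point was not settled.
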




\subsection{Forward--backward error for permutation ROBPs}
\label{sec:fb-permutation}

We analyze permutation ROBPs using the forward--backward error measure of
\cite{cohenDoronGoldgraberForwardBackward2026}, but here we have to extend it to large alphabets.  

Recall the transition-matrix notation
\(B_t(\sigma),W_t,B_{i-1..j},W_{i-1..j}\) from the preliminaries, where
\(t\)-th transitions map \(V_{t-1}\) to \(V_t\).
Fix an interval \(J=[i,j]\subseteq[n]\).

For a vector \(p\in\mathbb{R}^{V_{i-1}}\), let
\[
   p_{i-1}:=p,
   \qquad
   p_t:=W_t p_{t-1}
   \quad (i\le t\le j)
\]
be its forward propagation through the average transitions on \(J\).  Define
the forward weight of \(J\) with respect to \(p\) by
\[
 \FBF_J(p):=
 2\sum_{t=i}^{j}
 \E_{\sigma\in\Sigma}
 \bigl\|(B_t(\sigma)-W_t)p_{t-1}\bigr\|_1.
\]

Similarly, for a vector \(q\in\mathbb{R}^{V_j}\), let
\[
   q_j:=q,
   \qquad
   q_{t-1}:=W_t^\top q_t
   \quad (i\le t\le j)
\]
be its backward propagation through \(J\).  Define the backward weight by
\[
 \FBB_J(q):=
 2\sum_{t=i}^{j}
 \E_{\sigma\in\Sigma}
 \bigl\|(B_t(\sigma)-W_t)^\top q_t\bigr\|_1.
\]

For every \(J_L=[i,k]\) and \(J_R=[k+1,j]\), 
definitions immediately
give
\begin{align}
 \FBF_J(p)
 &=\FBF_{J_L}(p)+\FBF_{J_R}(W_{i-1..k}p),
 \label{eq:fb-F-decomp}\\
 \FBB_J(q)
 &=\FBB_{J_L}(W_{k..j}^\top q)+\FBB_{J_R}(q).
 \label{eq:fb-B-decomp}
\end{align}

We measure the error of an approximation on \(J\) relative to these two
weights.  For a matrix \(\Delta\), define
\begin{equation}
 \|\Delta\|_{\mathrm{fb},J}
 :=
 \inf\left\{
   \eta\in [0,\infty]:
   |q^\top\Delta p|
   \le
   \eta\,\FBF_J(p)\FBB_J(q)
   \text{ for all }p,q\in\mathbb R^w
 \right\}.
 \label{eq:fb-fb-seminorm}
\end{equation}

For \(p\in\mathbb R^w\), let
\[
   \Psi(p):=\sum_{1\le a<b\le w}|p_a-p_b|.
\]

The first property we need is that the two weights have a total budget that
does not depend on the length of \(J\).

\begin{lemma}[Forward and backward weight bounds]
\label{lem:fb-perm-basic}
\label{lem:fb-weight-bounds}
For every interval \(J=[i,j]\) and every \(p,q\in\mathbb R^w\),
\begin{align}
 \FBF_J(p)
 &\le
 4\bigl(\Psi(p)-\Psi(W_{i-1..j}p)\bigr)
 \le4\Psi(p),
 \label{eq:fb-weight-F}\\
 \FBB_J(q)
 &\le
 4\bigl(\Psi(q)-\Psi(W_{i-1..j}^\top q)\bigr)
 \le4\Psi(q).
 \label{eq:fb-weight-B}
\end{align}
In particular,
\begin{equation}
 \FBF_J(v),\FBB_J(v)
 \le4\Psi(v)
 \le4w\|v\|_1.
 \label{eq:fb-weight-budget}
\end{equation}
\end{lemma}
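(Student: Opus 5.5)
The plan is to prove a single-step inequality and then telescope. For one transition $t$ with average $W_t=\E_\sigma B_t(\sigma)$, where each $B_t(\sigma)$ is a permutation matrix, we show that for every $p\in\mathbb R^w$,
\[
   2\,\E_{\sigma\in\Sigma}\bigl\|(B_t(\sigma)-W_t)p\bigr\|_1\le 4\bigl(\Psi(p)-\Psi(W_tp)\bigr).
\]
Summing this over $t=i,\dots,j$ with $p=p_{t-1}$ telescopes, because $p_t=W_tp_{t-1}$ and $p_j=W_{i-1..j}p$. This gives the first inequality in \eqref{eq:fb-weight-F}. The second inequality there holds because $\Psi\ge0$. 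The backward bound \eqref{eq:fb-weight-B} is the same argument applied to the transposed program: the matrices $B_t(\sigma)^\top$ are again permutations, their average is $W_t^\top$, and propagating $q_t\mapsto W_t^\top q_t$ telescopes in the same way.

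For the one-step inequality, we first note that $\Psi(Pp)=\Psi(p)$ for any permutation matrix $P$, since $\Psi$ is symmetric in the coordinates. Write $P_\sigma:=B_t(\sigma)$ and $\bar p:=W_tp=\E_\sigma P_\sigma p$. Then
\[
   \Psi(p)-\Psi(\bar p)=\E_\sigma\Psi(P_\sigma p)-\Psi\bigl(\E_\sigma P_\sigma p\bigr),
\]
which is a Jensen gap for the convex function $\Psi$. It therefore suffices to show a quantitative Jensen inequality: for vectors $z_\sigma$ with mean $\bar z$,
\[
   \E_\sigma\Psi(z_\sigma)-\Psi(\bar z)\ge \tfrac12\,\E_\sigma\|z_\sigma-\bar z\|_1 .
\]
We would prove this pair by pair. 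For each pair $a<b$, set $g_\sigma:=z_{\sigma,a}-z_{\sigma,b}$. Then $\E|g_\sigma|-|\E g_\sigma|$ equals $2\min\{\E g^+,\E g^-\}$, and this quantity controls how much of the mass of $g$ lies on the "wrong" side of zero.

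The main obstacle is that this pairwise Jensen gap does not directly dominate $\E\|z_\sigma-\bar z\|_1$ for general real vectors. The permutation structure has to be used. We would exploit that the $z_\sigma=P_\sigma p$ all have the same multiset of entries, which is a rearrangement of $p$. We first try the threshold (layer-cake) decomposition: write $p=\sum_\theta \mathbf 1[p>\theta]\,d\theta$ plus a constant, which is harmless since both sides vanish on constants. Under this decomposition, $\Psi(p)=\int |S_\theta|\,(w-|S_\theta|)\,d\theta$ with $S_\theta=\{p>\theta\}$. Both $\Psi(P_\sigma p)$ and $\|P_\sigma p-\bar p\|_1\le\int\|\mathbf 1_{P_\sigma S_\theta}-W_t\mathbf 1_{S_\theta}\|_1\,d\theta$ are then reduced to indicator vectors. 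The concavity defect of $\Psi(\bar p)$ is handled with the coarea-type inequality $\Psi(\bar p)\le\int\Psi(W_t\mathbf 1_{S_\theta})\,d\theta$, which holds since $W_t\mathbf 1_{S_\theta}$ are comonotone-compatible up to convexity. Only the indicator case remains: for a set $S$ with $|S|=k$ and $f:=W_t\mathbf 1_S\in[0,1]^w$, we need
\[
   \E_\sigma\|\mathbf 1_{P_\sigma S}-f\|_1=2\sum_u f_u(1-f_u)\le 2\bigl(k(w-k)-\Psi(f)\bigr).
\]
This inequality can be checked directly using $\sum_u f_u=k$ and the formula for $\Psi(f)$ via sorted coordinates; we expect it to hold with room to spare, which accounts for the constant $4$ in the statement.

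Finally, \eqref{eq:fb-weight-budget} follows from the bound $\Psi(v)\le\sum_{a<b}(|v_a|+|v_b|)=(w-1)\|v\|_1\le w\|v\|_1$.
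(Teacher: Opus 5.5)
Your plan is correct, and it takes a different route from the paper. Both arguments use the same layer-cake decomposition and the subadditivity of $\Psi$. The paper, however, first symmetrizes. It proves $\|Pq-Qq\|_1\le 2\bigl(\Psi(q)-\Psi(\tfrac{Pq+Qq}{2})\bigr)$ for two permutations. In the indicator case the midpoint then has entries in $\{0,\tfrac12,1\}$, and everything reduces to $r\le r^2$ with $r=|PS\setminus QS|$. It passes to the average through an independent copy $P'$, using $\E\|(P-A)q\|_1\le\E\|Pq-P'q\|_1$ and $\E\Psi(\tfrac{Pq+P'q}{2})\ge\Psi(Aq)$.

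You instead compare $P_\sigma\mathbf 1_S$ directly with $f=W_t\mathbf 1_S$. This computes the $\ell_1$ term exactly and avoids symmetrization. The cost is that the indicator case now involves an arbitrary point $f$ of the hypersimplex $\{f\in[0,1]^w:\sum_u f_u=k\}$, so it is no longer a one-line computation. One side remark: $\Psi(\bar p)\le\int\Psi(W_t\mathbf 1_{S_\theta})\,d\theta$ needs no comonotonicity. It is just the triangle inequality for the seminorm $\Psi$, which ignores constant shifts.

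That indicator inequality, $\sum_u f_u(1-f_u)\le k(w-k)-\Psi(f)$, is the one step you have not proved. Your remark that it holds ``with room to spare'' is also mistaken. For $w=2$, $k=1$, $f=(1-\epsilon,\epsilon)$, the two sides of your displayed inequality are $4\epsilon(1-\epsilon)$ and $4\epsilon$. So the factor is sharp to first order, and the constant $4$ in the lemma is just the $2$ in the definition of the forward weight times this sharp $2$.

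The inequality is nevertheless true, by the following short argument. If $k\in\{0,w\}$, then $f$ is constant and both sides vanish, so assume $1\le k\le w-1$. Sort $f_1\ge\cdots\ge f_w$ and put $F_u=\sum_{v\le u}f_v$. Abel summation gives
\[
   \Psi(f)=\sum_{u=1}^{w}(w+1-2u)f_u=2\sum_{u=1}^{w-1}F_u-(w-1)k .
\]
The same formula for $\mathbf 1_{\{1,\ldots,k\}}$, where $F_u=\min\{u,k\}$, gives $k(w-k)$. Subtracting,
\[
   k(w-k)-\Psi(f)=2\sum_{u=1}^{w-1}\bigl(\min\{u,k\}-F_u\bigr)\ge 2(k-F_k),
\]
because $0\le f_v\le 1$ makes every summand nonnegative. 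On the other hand, $f_u(1-f_u)\le 1-f_u$ for $u\le k$ and $f_u(1-f_u)\le f_u$ for $u>k$. Hence $\sum_u f_u(1-f_u)\le (k-F_k)+(k-F_k)$, which closes the argument.
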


We give the details in \Cref{app:fb-weight-bounds-proof}.

The second property concerns individual paths. To control a fixed transition \(B_t(\sigma)\) using the forward and backward
weights, we compare its deviation from \(W_t\) with the averaged 
deviation over the alphabet.  Define the following quantities
\begin{align*}
 \Lambda_F(t)
 &:=
 \inf\Bigl\{\lambda\ge 0:
 \|(B_t(\sigma)-W_t)p\|_1
 \le
 \lambda\,
 \E_{\tau\in\Sigma}\|(B_t(\tau)-W_t)p\|_1
 \text{ for all }\sigma\in\Sigma,\ p\in\mathbb R^w
 \Bigr\},\\
 \Lambda_B(t)
 &:=
 \inf\Bigl\{\lambda\ge 0:
 \|(B_t(\sigma)-W_t)^\top q\|_1
 \le
 \lambda\,
 \E_{\tau\in\Sigma}\|(B_t(\tau)-W_t)^\top q\|_1
 \text{ for all }\sigma\in\Sigma,\ q\in\mathbb R^w
 \Bigr\}.
\end{align*}
Set
\[
 \Lambda_F:=\max\{1,\max_t\Lambda_F(t)\},
 \qquad
 \Lambda_B:=\max\{1,\max_t\Lambda_B(t)\},
 \qquad \Lambda_{FB}=\Lambda_F\Lambda_B
\]
Since the average over \(\Sigma\) is uniform, every individual term is at most
\(|\Sigma|\) times the average, and hence
\[
 \Lambda_F,\Lambda_B\le |\Sigma|.
\]

\begin{lemma}[Bounds for one-step transitions with fixed symbols]
\label{lem:fb-alphabet-leverage}
For every \(t\in[n]\),
\[
   \Lambda_F(t),\Lambda_B(t)\le|\Sigma|.
\]
Consequently,
\(\Lambda_F,\Lambda_B\le|\Sigma|\); for a binary alphabet,
\(\Lambda_F=\Lambda_B=1\).
\end{lemma}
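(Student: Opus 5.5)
Fix $t$, a symbol $\sigma\in\Sigma$, and a vector $p\in\mathbb R^w$. The idea is to compare one summand of an average of nonnegative numbers with the whole average. Define the nonnegative quantities
\[
   a_\tau:=\|(B_t(\tau)-W_t)p\|_1,\qquad \tau\in\Sigma .
\]
The average over the alphabet is uniform, so
\[
   \E_{\tau\in\Sigma}a_\tau=\frac1{|\Sigma|}\sum_{\tau\in\Sigma}a_\tau\ \ge\ \frac{a_\sigma}{|\Sigma|},
\]
where we drop every term except $\tau=\sigma$. This gives $a_\sigma\le|\Sigma|\,\E_{\tau}a_\tau$ for every $\sigma$ and every $p$. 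Hence $\lambda=|\Sigma|$ is admissible in the infimum defining $\Lambda_F(t)$, and so $\Lambda_F(t)\le|\Sigma|$. The same argument applied to $b_\tau:=\|(B_t(\tau)-W_t)^\top q\|_1$ gives $\Lambda_B(t)\le|\Sigma|$.

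Taking the maximum over $t$, and using $1\le|\Sigma|$, yields $\Lambda_F,\Lambda_B\le|\Sigma|$.

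For the binary case $\Sigma=\{0,1\}$ we use $W_t=\tfrac12(B_t(0)+B_t(1))$. This gives
\[
   B_t(0)-W_t=\tfrac12\bigl(B_t(0)-B_t(1)\bigr)=-(B_t(1)-W_t).
\]
So $a_0=a_1$, and each one equals their average; the same holds after transposing. Therefore $\Lambda_F(t),\Lambda_B(t)\le1$, and since $\Lambda_F,\Lambda_B$ are defined as maxima with $1$, we get $\Lambda_F=\Lambda_B=1$.

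No step is a real obstacle. The only point to check is the degenerate case where all $a_\tau=0$. Then the defining inequality holds for every $\lambda\ge 0$, so the bound is unaffected.
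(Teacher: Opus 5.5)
Your proof is correct and follows essentially the same route as the paper. Both bound a single nonnegative term by the sum $|\Sigma|\,\E_{\tau\in\Sigma}(\cdot)$ (your handling of the all-zero case is the same trivial observation), and both settle the binary case by noting that $B_t(0)-W_t=-(B_t(1)-W_t)$.
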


\begin{proof}
For fixed \(t,p\), put
\(g_\sigma=\|(B_t(\sigma)-W_t)p\|_1\).  Whenever
\(\E_{\tau\in\Sigma}g_\tau>0\),
\[
   g_\sigma
   \le\sum_{\tau\in\Sigma}g_\tau
   =|\Sigma|\,\E_{\tau\in\Sigma}g_\tau.
\]
If the average is zero, then every \(g_\sigma\) is zero.  The backward bound
is identical.  When \(|\Sigma|=2\), the two matrices
\(B_t(\sigma)-W_t\) are negatives of one another, so the two deviations are
equal to their average.
\end{proof}

For an interval \(J=[i,j]\) and a string
\(\sigma_J=(\sigma_i,\ldots,\sigma_j)\in\Sigma^{j-i+1}\), define the
corresponding path matrix by
\[
   B_J(\sigma_J)
   :=B_{i-1..j}(\sigma_i,\ldots,\sigma_j).
\]
Thus \(B_J(\sigma_J)\) is the transition matrix obtained by fixing the symbol
\(\sigma_t\) at every layer \(t\in J\).

\begin{lemma}[Bounds for transitions with fixed symbols]
\label{lem:fb-pathwise-bounds}
Let \(J=[i,j]\) and
\(\sigma_J\in\Sigma^{j-i+1}\).  Then for every \(p,q\in\mathbb R^w\),
\begin{align}
 \|B_J(\sigma_J)p-W_{i-1..j}p\|_1
 &\le\frac{\Lambda_F}{2}\FBF_J(p),
 \label{eq:fb-pathwise-F}\\
 \|B_J(\sigma_J)^\top q-W_{i-1..j}^\top q\|_1
 &\le\frac{\Lambda_B}{2}\FBB_J(q).
 \label{eq:fb-pathwise-B}
\end{align}
\end{lemma}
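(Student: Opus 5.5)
The plan is a telescoping (hybrid) argument over the layers of $J$, followed by the triangle inequality. We also use that every matrix involved is a contraction in the induced $1$-norm.

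For the forward bound \eqref{eq:fb-pathwise-F}, write $p_{t}$ for the forward propagation of $p$ through the average transitions, as in the definition of $\FBF_J$. Replace the averages by fixed-symbol transitions one layer at a time:
\[
   B_J(\sigma_J)p-W_{i-1..j}p
   =\sum_{t=i}^{j}
      B_{t..j}(\sigma_{t+1},\ldots,\sigma_j)\,
      \bigl(B_t(\sigma_t)-W_t\bigr)\,
      W_{i-1..t-1}p .
\]
Here $B_{t..j}(\cdot)$ is the product of the fixed-symbol matrices on layers $t+1,\ldots,j$, and $W_{i-1..t-1}p=p_{t-1}$. Each $B_s(\sigma)$ is a $0/1$ column-stochastic matrix, so $\|B_{t..j}(\cdot)z\|_1\le\|z\|_1$ for every real vector $z$. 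Taking $1$-norms and applying the triangle inequality then gives
\[
   \|B_J(\sigma_J)p-W_{i-1..j}p\|_1
   \le\sum_{t=i}^{j}\|(B_t(\sigma_t)-W_t)p_{t-1}\|_1 .
\]
By the definition of $\Lambda_F(t)$ and $\Lambda_F\ge\Lambda_F(t)$, each summand is at most $\Lambda_F\,\E_{\tau\in\Sigma}\|(B_t(\tau)-W_t)p_{t-1}\|_1$. Summing over $t$ gives $\Lambda_F\cdot\FBF_J(p)/2$, because $\FBF_J$ carries the factor $2$.

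For the backward bound \eqref{eq:fb-pathwise-B}, transpose the telescoping identity. The hybrid now runs from the right end: average transitions are applied on layers $t+1,\ldots,j$, giving $q_t=W_{t..j}^\top q$, and fixed symbols are applied on layers $i,\ldots,t-1$. Explicitly,
\[
   B_J(\sigma_J)^\top q-W_{i-1..j}^\top q
   =\sum_{t=i}^{j}
      B_{i-1..t-1}(\sigma_i,\ldots,\sigma_{t-1})^\top
      \bigl(B_t(\sigma_t)-W_t\bigr)^\top q_t .
\]
The prefactor $B_{i-1..t-1}(\cdot)^\top$ is the transpose of a product of permutation matrices, so it is again a permutation matrix and preserves the $1$-norm. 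This is the only point where the permutation hypothesis is used. The definition of $\Lambda_B(t)$ then finishes the argument exactly as in the forward case.

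The main thing to check is that the contraction step is valid in each direction. Column-stochasticity suffices for the forward products, but for the backward direction a general transpose of a stochastic matrix is not a $1$-norm contraction, so permutation structure is needed there. The rest is the routine check that the telescoping identity holds with the column-vector convention $B_{i..j}=B_j\cdots B_{i+1}$, that the $p_{t-1}$ and $q_t$ appearing in it match the propagations used in $\FBF_J$ and $\FBB_J$, and that the indices align.
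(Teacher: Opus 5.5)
Your proposal is correct and follows essentially the same argument as the paper. The paper uses the one-step recursion $\|r_t-p_t\|_1\le\|r_{t-1}-p_{t-1}\|_1+\Lambda_F\E_{\sigma}\|(B_t(\sigma)-W_t)p_{t-1}\|_1$, with a transposed analogue for the backward bound, which is your telescoping hybrid sum unrolled one layer at a time using $\ell_1$-norm preservation by the fixed-symbol products; your remark that column-stochasticity already suffices in the forward direction, with permutation structure needed only for the backward bound, is a correct minor sharpening of the paper's proof, which invokes permutations in both directions.
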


See \Cref{app:fb-pathwise-bounds-proof} for the proof.

\subsection{Forward--backward analysis for one-step INW and samplers}
\label{subsec:fb-stability}

We next show that sampler fixing and INW merging both preserve the
forward--backward guarantees needed in the recursive construction.

\subsubsection{Sampler: from entrywise error to forward--backward error}

For a matrix \(\Delta\), write
\[
   \|\Delta\|_{\max}:=\max_{u,v\in[w]}|\Delta_{u,v}|.
\]

\begin{lemma}[Forward--backward bound from entrywise error]
\label{lem:fb-path-stability}
Fix an interval \(J=[i,j]\), and let \(\mathcal Q_J\) be a nonempty family of
path matrices \(\{B_J(\sigma_J)\}_{\sigma_J\in\Sigma^{j-i+1}}\).
Let \(M\) and \(\widehat M\) be convex combinations of matrices in
\(\mathcal Q_J\), and put \(\Delta:=\widehat M-M\).  If
\[
   \|\Delta\|_{\max}\le\alpha,
\]
then
\begin{equation}
   \|\Delta\|_{\mathrm{fb},J}
   \le \LevFB w^4\alpha.                                       \label{eq:fb-path-stability}
\end{equation}
\end{lemma}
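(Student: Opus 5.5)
The plan is to exploit two facts at once. First, \(\Delta\) kills everything that the weights \(\FBF_J\) and \(\FBB_J\) cannot see. Second, on the complementary directions, the weights dominate the \(\ell_1\) distance to their kernels, with a loss of about \(w^2\Lambda\) on each side. The key steps are below; the lower bound on the weights (the third paragraph) is the one I expect to be hardest.

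\emph{Step 1: \(\Delta\) lies in the kernels of the weights.} Since \(M\) and \(\widehat M\) are both convex combinations of the path matrices \(B_J(\sigma_J)\), we can write \(\Delta=\sum_{\sigma_J}c_{\sigma_J}\bigl(B_J(\sigma_J)-W_{i-1..j}\bigr)\) with \(\sum|c_{\sigma_J}|\le2\); the \(W_{i-1..j}\) terms cancel because the two sets of weights have the same total. \Cref{lem:fb-pathwise-bounds} then gives \(\|\Delta p\|_1\le\Lambda_F\FBF_J(p)\) and \(\|\Delta^\top q\|_1\le\Lambda_B\FBB_J(q)\). Now \(\FBF_J(p)=2\sum_t\E_\sigma\|(B_t(\sigma)-W_t)W_{i-1..t-1}p\|_1\) is a sum of \(\ell_1\) norms of linear images of \(p\), so it is a seminorm. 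Let \(K_F\) be its kernel, and define \(K_B\) for \(\FBB_J\) in the same way. The bounds just derived give \(\Delta K_F=0\) and \(\Delta^\top K_B=0\). Consequently, for all \(k\in K_F\) and \(k'\in K_B\),
\[
   q^\top\Delta p=(q-k')^\top\Delta(p-k),
   \qquad\text{so}\qquad
   |q^\top\Delta p|\le\alpha\,\|p-k\|_1\,\|q-k'\|_1 .
\]
This uses only the entrywise bound \(\|\Delta\|_{\max}\le\alpha\).

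\emph{Step 2: reduce to a distance bound for each weight.} It now suffices to prove
\[
   \min_{k\in K_F}\|p-k\|_1\le c\,w^2\Lambda_F\,\FBF_J(p)
\]
for an absolute constant \(c\), together with the symmetric statement for \(\FBB_J\) and \(K_B\). The constants need to be tracked so that the final product is \(\Lambda_F\Lambda_B w^4\alpha\,\FBF_J(p)\FBB_J(q)\). I suspect the factor \(\Lambda\) will in fact not be needed here, which would only help.

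\emph{Step 3 (main obstacle): the distance lower bound.} I would use the permutation structure. For a permutation ROBP, \((B_t(\sigma)-W_t)p_{t-1}\) is built from differences \(p_{t-1}(a)-p_{t-1}(b)\) over pairs of states \(a,b\) that different symbols send to a common state. I would define a graph on states, or better on (layer, state) pairs, that links such pairs whenever their difference appears. The kernel \(K_F\) should then consist of the vectors that are constant on the classes this linking induces, after transporting through the permutations. I would first try to show that \(\FBF_J(p)\) is at least of order \(1/|\Sigma|\) times the sum of \(|p_{t-1}(a)-p_{t-1}(b)|\) over the linked pairs; the factor \(1/|\Sigma|\) comes from averaging over \(\sigma\) and is exactly what \(\Lambda_F\) should absorb.

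Next, I would take a spanning forest of each class, which has at most \(w-1\) edges per class, and telescope along paths in it. This bounds \(|p(a)-p(b)|\) for \(a,b\) in the same class by a sum of at most \(w\) linked differences. Summing over the at most \(w\) coordinates then gives \(\|p-k\|_1\le w^2\cdot(\text{linked sum})\), where \(k\) is the vector obtained by replacing \(p\) with its class-wise constant value.

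The delicate point is that the linked differences are evaluated at the propagated vectors \(p_{t-1}\), not at \(p\) itself. Under averaging, \(W_t\) can shrink differences, so telescoping across layers is not automatic. My plan for this is to combine the \(\Psi\)-monotonicity from \Cref{lem:fb-weight-bounds} with the decomposition \eqref{eq:fb-F-decomp}. The idea is to charge each change of \(p_{t-1}\) to the weight spent at step \(t\), so that \(\sum_t\|p_t-p_{t-1}\|_1\le\FBF_J(p)\). This would let each linked difference at a later layer be converted back into one at layer \(i-1\) at additive cost \(\FBF_J(p)\). That step is where I would put most of the effort.
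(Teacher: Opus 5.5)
Your Steps 1 and 2 are sound. Writing $\Delta=\sum_{\sigma_J}c_{\sigma_J}(B_J(\sigma_J)-W_{i-1..j})$ with $\sum|c_{\sigma_J}|\le2$ and applying \Cref{lem:fb-pathwise-bounds} does show that $\Delta$ annihilates $K_F$ and $\Delta^\top$ annihilates $K_B$. The reduction is also the right one, provided the distance bound is proved with constant exactly $1$, i.e.\ $\min_{k\in K_F}\|p-k\|_1\le w^2\Lambda_F\,\mathsf{F}_J(p)$, because the statement leaves no room for a $c^2$ factor.

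The gap is Step 3. It carries all of the content, it is left as a plan, and its central mechanism is false. The inequality $\sum_t\|p_t-p_{t-1}\|_1\le\mathsf{F}_J(p)$ already fails for a one-layer interval on which every symbol acts by the same transposition $P$ of states $1,2$. There $B_t(\sigma)=W_t$, so $\mathsf{F}_J\equiv0$, while $\|Pe_1-e_1\|_1=2$. The states of $V_{t-1}$ and $V_t$ are labeled independently, so $p_t-p_{t-1}$ has no intrinsic meaning.

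Your aside that $\Lambda_F$ may be unnecessary is also wrong. Suppose one symbol swaps states $1,2$ and the other $|\Sigma|-1$ symbols act as the identity. Then $K_F=\{p:p_1=p_2\}$ lies at $\ell_1$-distance $1$ from $e_1$, whereas $\mathsf{F}_J(e_1)=8(|\Sigma|-1)/|\Sigma|^2$. So a factor of order $|\Sigma|$ is unavoidable.

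The missing idea is that nothing inside $J$ needs to be analysed layer by layer. Apply the pathwise bound to \emph{pairs} of path matrices, just as in your Step 1.

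\begin{itemize}
\item For path matrices $U,V$ on $J$ you get $\|Up-Vp\|_1\le\Lambda_F\,\mathsf{F}_J(p)$.
\item Since $U$ is a permutation matrix, $\|Up-Vp\|_1=\|(I-U^\top V)p\|_1$. The $z$-th coordinate of $(I-U^\top V)p$ is $p_z-p_{\pi^{-1}(z)}$, where $\pi$ is the permutation represented by $U^\top V$.
\item Put an edge $\{z,\pi^{-1}(z)\}$ on $[w]$ (the states of layer $i-1$ only) for every such pair and every $z$. Let $\bar p$ be the average of $p$ over each connected component.
\item Each $U^\top V$ maps components to themselves, so $U\bar p=V\bar p$ for all $U,V$, and hence $\Delta\bar p=0$. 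Thus $\bar p$ can serve as $k$ in your Step 1; it even lies in $K_F$.
\item Every edge difference is at most $\Lambda_F\,\mathsf{F}_J(p)$. Telescoping along a simple path inside a component $C$ gives $|p_u-\bar p_u|\le(|C|-1)\Lambda_F\,\mathsf{F}_J(p)$.
\item Summing over components, $\|p-\bar p\|_1\le\sum_C|C|(|C|-1)\Lambda_F\,\mathsf{F}_J(p)\le w^2\Lambda_F\,\mathsf{F}_J(p)$.
\end{itemize}

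The same argument with $UV^\top$ gives $\|q-\bar q\|_1\le w^2\Lambda_B\,\mathsf{B}_J(q)$. Your Step 1 then yields exactly $\Lambda_F\Lambda_B w^4\alpha\,\mathsf{F}_J(p)\,\mathsf{B}_J(q)$. This is the paper's proof.
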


\begin{proof}

Consider  subspaces of $ \mathbb R^w$, for which, one can always attain the same output when using an arbitrary matrix in \(\mathcal Q_J\) to operate on them. Denote
\[
   L_J
   :=
   \{r\in\mathbb R^w:
      Ur=Vr \text{ for all } U,V\in\mathcal Q_J\},
\]
and
\[
   R_J
   :=
   \{s\in\mathbb R^w:
      U^\top s=V^\top s \text{ for all } U,V\in\mathcal Q_J\}.
\]
We first construct projections onto these subspaces.

For \(U,V\in\mathcal Q_J\), let \(\pi_{U,V}\) be the permutation of
\([w]\) represented by \(U^\top V\).  Let \(G_L\) be the undirected graph on
\([w]\) containing the edge
\[
   \{z,\pi_{U,V}^{-1}(z)\}
\]
for every \(U,V\in\mathcal Q_J\) and \(z\in[w]\).  Define \(G_R\)
analogously using the permutations represented by \(UV^\top\).
For \(u\in[w]\), let \(C_L(u)\) and \(C_R(u)\) denote the connected
components containing \(u\) in \(G_L\) and \(G_R\), respectively. 

Fix arbitrary \(p,q\in\mathbb R^w\), define
\[
   (\Pi_L p)_u
   :=
   \frac{1}{|C_L(u)|}\sum_{v\in C_L(u)}p_v,
   \qquad
   (\Pi_R q)_u
   :=
   \frac{1}{|C_R(u)|}\sum_{v\in C_R(u)}q_v.
\]
Write
\[
   \bar p:=\Pi_Lp,
   \qquad
   \bar q:=\Pi_Rq.
\]

Notice that on every connected component of \(G_L\), \(\bar p\) is constant across all the entries in this component. So it is
fixed by every permutation matrix \(U^\top V\) with
\(U,V\in\mathcal Q_J\).  Hence
\[
   U\bar p=V\bar p
   \qquad
   \text{for all }U,V\in\mathcal Q_J,
\]
so \(\bar p\in L_J\).  Similarly, \(\bar q\in R_J\).

Fix any \(U_\star\in\mathcal Q_J\).  Since \(M\) and \(\widehat M\) are
convex combinations of matrices in \(\mathcal Q_J\), we have
\[
   M\bar p=U_\star\bar p=\widehat M\bar p,
   \qquad
   M^\top\bar q=U_\star^\top\bar q=\widehat M^\top\bar q.
\]
Thus
\begin{equation}
   \Delta\bar p=0,
   \qquad
   \bar q^\top\Delta=0, 
   \qquad
   p\Delta q=(p-\bar p)\Delta (q-\bar q)
   \label{eq:fb-path-kernel}
\end{equation}

It remains to bound $\|p-\bar p\|_1,\|q-\bar q\|_1$. Consider first \(p\).  For any \(U,V\in\mathcal Q_J\), since \(U\)
is a permutation matrix,
\[
   \|(I-U^\top V)p\|_1
   =
   \|Up-Vp\|_1.
\]
By the pathwise forward bound,
\begin{align*}
   \|Up-Vp\|_1
   &\le
   \|Up-W_{i-1..j}p\|_1
   +
   \|Vp-W_{i-1..j}p\|_1 \\
   &\le
   \Lambda_F\FBF_J(p).
\end{align*}
Set
\[
   \delta_F:=\Lambda_F\FBF_J(p).
\]
If \(\{z,\pi_{U,V}^{-1}(z)\}\) is an edge of \(G_L\) arising from
\(U,V\), then
\[
   |p_z-p_{\pi_{U,V}^{-1}(z)}|
   \le
   \|(I-U^\top V)p\|_1
   \le
   \delta_F.
\]

Now fix a connected component \(C\) of \(G_L\). 
Any two vertices \(u,v\in C\) can be joined by a simple path of at most
\(|C|-1\) edges.  Hence
\[
   |p_u-p_v|
   \le (|C|-1)\delta_F.
\]
Since \(\bar p\) is the average of \(p\) on \(C\),
\[
   |p_u-\bar p_u|
   \le
   \frac{1}{|C|}\sum_{v\in C}|p_u-p_v|
   \le
   (|C|-1)\delta_F.
\]
Summing over all vertices and all connected components gives
\[
   \|p-\bar p\|_1
   \le
   \sum_C |C|(|C|-1)\delta_F
   \le
   w^2\delta_F.
\]
Therefore
\begin{equation}
   \|p-\bar p\|_1
   \le
   w^2\Lambda_F\FBF_J(p).
   \label{eq:fb-orbit-F}
\end{equation}

Repeating the same argument for \(G_R\) yields
\begin{equation}
   \|q-\bar q\|_1
   \le
   w^2\Lambda_B\FBB_J(q).
   \label{eq:fb-orbit-B}
\end{equation}

Finally, \eqref{eq:fb-path-kernel} tells
\[
   q^\top\Delta p
   =
   (q-\bar q)^\top\Delta(p-\bar p).
\]
Hence
\begin{align*}
   |q^\top\Delta p|
   &\le
   \|\Delta\|_{\max}
   \|p-\bar p\|_1
   \|q-\bar q\|_1 \\
   &\le
   \Lambda_F\Lambda_B\,w^4\alpha\,
   \FBF_J(p)\FBB_J(q) \\
   &=
   \LevFB w^4\alpha\,
   \FBF_J(p)\FBB_J(q),
\end{align*}
which is the desired bound.
\end{proof}
An offline seed for the sampler in \Cref{lem:good sampler} can therefore be
found by testing the $w^2$ \(0\)-\(1\) entry functions for every current
interval.  To limit the forward--backward error from one fixing step to
\(\eta_{\mathrm{fix}}\), set the entrywise sampler accuracy to
\begin{equation}
   \alpha_{\mathrm{fix}}:=
   \frac{\eta_{\mathrm{fix}}}{\LevFB w^4}.
   \label{eq:fb-perm-sampler-alpha}
\end{equation}

\subsubsection{INW merging with forward-backward norm}

The following lemma is an extended version of the expander-product estimate of
\cite[Proposition~3.4]{cohenDoronGoldgraberForwardBackward2026}.  

\begin{lemma}[The FB error for one-step INW]
\label{lem:fb-robust-merge}
Let \(J=[i,j]\), split at \(k\) into
\[
   J_L=[i,k],
   \qquad
   J_R=[k+1,j],
\]
and write
\[
   W_L:=W_{i-1..k},
   \qquad
   W_R:=W_{k..j}.
\]
Let \(\mathcal S\) be a finite seed set, and for each
\(\xi\in\mathcal S\) let \(U_\xi^L\) and \(U_\xi^R\) be path matrices on
\(J_L\) and \(J_R\), respectively.  Define
\[
   \widetilde W_L:=\E_{\xi\in\mathcal S}U_\xi^L,
   \qquad
   \widetilde W_R:=\E_{\xi\in\mathcal S}U_\xi^R.
\]

Suppose
\[
   \|\widetilde W_L-W_L\|_{\mathrm{fb},J_L}\le\eta_L,
   \qquad
   \|\widetilde W_R-W_R\|_{\mathrm{fb},J_R}\le\eta_R.
\]
Let \(\mathcal H\) be a regular reversible expander on \(\mathcal S\) with
normalized second singular value at most \(\lambda_{\exp}\), and define
\[
   \widetilde W_J
   :=
   \E_{(\xi,\zeta)\sim\mathcal H}
   U_\zeta^R U_\xi^L,
\]
where the expectation is over a uniformly random directed edge of
\(\mathcal H\).  Then \(\widetilde W_J\) is an average of path matrices on
\(J\), and
\begin{equation}
   \|\widetilde W_J-W_{i-1..j}\|_{\mathrm{fb},J}
   \le
   \max\left\{
      \eta_L,\eta_R,
      \lambda_{\exp}\LevFB+16w^3\eta_L\eta_R
   \right\}.
   \label{eq:fb-robust-merge}
\end{equation}
\end{lemma}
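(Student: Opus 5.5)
We decompose the error of \(\widetilde W_J\) by inserting the products of the child approximations. Write \(\Delta_L:=\widetilde W_L-W_L\) and \(\Delta_R:=\widetilde W_R-W_R\), and let \(\overline W_J:=\widetilde W_R\widetilde W_L=\E_{\xi,\zeta\in\mathcal S}U^R_\zeta U^L_\xi\) be the ``independent seeds'' product. Then
\[
   \widetilde W_J-W_{i-1..j}
   =(\widetilde W_J-\overline W_J)+\Delta_RW_L+W_R\Delta_L+\Delta_R\Delta_L .
\]
That \(\widetilde W_J\) is an average of path matrices on \(J\) is immediate, since each \(U^R_\zeta U^L_\xi\) is a path matrix on \(J\). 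For each term we fix \(p,q\) and bound \(|q^\top(\cdot)p|\) by a multiple of \(\FBF_J(p)\FBB_J(q)\), using the decompositions \eqref{eq:fb-F-decomp} and \eqref{eq:fb-B-decomp}. Write \(p':=W_Lp\) and \(q':=W_R^\top q\), so that \(\FBF_J(p)=\FBF_{J_L}(p)+\FBF_{J_R}(p')\) and \(\FBB_J(q)=\FBB_{J_L}(q')+\FBB_{J_R}(q)\).

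\textbf{Linear terms.} We have \(|q^\top\Delta_R W_Lp|=|q^\top\Delta_Rp'|\le\eta_R\FBF_{J_R}(p')\FBB_{J_R}(q)\). Similarly, \(|q^\top W_R\Delta_Lp|=|q'^\top\Delta_Lp|\le\eta_L\FBF_{J_L}(p)\FBB_{J_L}(q')\). These two terms use the disjoint products \(\FBF_{J_R}\FBB_{J_R}\) and \(\FBF_{J_L}\FBB_{J_L}\) of the expansion of \(\FBF_J(p)\FBB_J(q)\), which has four nonnegative cross terms. This is how the \(\max\{\eta_L,\eta_R,\dots\}\) form should arise: the remaining two cross terms, \(\FBF_{J_L}(p)\FBB_{J_R}(q)\) and \(\FBF_{J_R}(p')\FBB_{J_L}(q')\), must absorb the expander term and the quadratic term, each with coefficient \(\lambda_{\exp}\LevFB+16w^3\eta_L\eta_R\). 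Summing the four pieces then gives at most \(\max\{\cdot\}\) times the full product.

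\textbf{Expander term.} For the first term we use the standard expander mixing bound. With \(f(\xi):=q^\top U^R_\xi-q^\top\widetilde W_R\) (row vectors) and \(g(\xi):=U^L_\xi p-\widetilde W_Lp\), mean-centering gives
\[
   |q^\top(\widetilde W_J-\overline W_J)p|
   \le\lambda_{\exp}\sqrt{\E_\zeta\|f(\zeta)\|^2\,\E_\xi\|g(\xi)\|^2}
\]
in a suitable norm. We want this in the \(\ell_1\)-type form \(\lambda_{\exp}\E_\xi\|g(\xi)\|_1\cdot\max_\zeta\|f(\zeta)\|_\infty\). The plan is to bound \(\|U^L_\xi p-\widetilde W_Lp\|_1\le\Lambda_F\FBF_{J_L}(p)\) and \(\|U^{R\top}_\zeta q-\widetilde W_R^\top q\|_1\le\Lambda_B\FBB_{J_R}(q)\), via \Cref{lem:fb-pathwise-bounds} applied to both the path matrix and the average and then the triangle inequality. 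This loses a factor \(2\), which the normalization of the weights (the factor \(2\) built into \(\FBF\), \(\FBB\)) should absorb. The product then lands on the cross term \(\FBF_{J_L}(p)\FBB_{J_R}(q)\), giving \(\lambda_{\exp}\LevFB\FBF_{J_L}(p)\FBB_{J_R}(q)\). \textbf{This is the main obstacle.} The expander mixing lemma is naturally \(\ell_2\), while we need a dimension-free \(\ell_1\) bound with only the factor \(\LevFB\). I would first apply the mixing lemma coordinatewise: for each pair \((a,b)\), apply it to the scalar functions \(\xi\mapsto g(\xi)_a\) and \(\zeta\mapsto f(\zeta)_b\), weighted by the \(\{0,1\}\) entries of \(U^R\). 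Alternatively, I would use the \(\ell_\infty\)/\(\ell_1\) duality as in \cite[Proposition~3.4]{cohenDoronGoldgraberForwardBackward2026}, whose proof I would follow with the \(\Lambda\) factors inserted.

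\textbf{Quadratic term.} We need \(|q^\top\Delta_R\Delta_Lp|\le16w^3\eta_L\eta_R\,\FBF_{J_L}(p)\FBB_{J_R}(q)\). Since \(\Delta_L\) and \(\Delta_R\) are fb-bounded only against test vectors, we expand \(\Delta_Lp=\sum_u e_u(e_u^\top\Delta_Lp)\) and set \(r:=\Delta_Lp\). Then \(q^\top\Delta_R r\le\eta_R\FBB_{J_R}(q)\FBF_{J_R}(r)\), and by \eqref{eq:fb-weight-budget}, \(\FBF_{J_R}(r)\le4w\|r\|_1\). Bounding \(\|r\|_1=\sum_u|e_u^\top\Delta_Lp|\le\eta_L\FBF_{J_L}(p)\sum_u\FBB_{J_L}(e_u)\le\eta_L\FBF_{J_L}(p)\cdot w\cdot4w\) gives the constant \(16w^3\). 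The sign-splitting \(r=r^+-r^-\) is routine.
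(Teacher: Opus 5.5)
Your argument is the paper's proof in all essentials. You use the same four-term decomposition
\[
   \widetilde W_J-W_{i-1..j}=(\widetilde W_J-\widetilde W_R\widetilde W_L)+W_R\Delta_L+\Delta_RW_L+\Delta_R\Delta_L .
\]
The two linear terms get the same bounds $\eta_LF_LB_L$ and $\eta_RF_RB_R$. The pathwise estimates $\|U^L_\xi p-\widetilde W_Lp\|_1\le\Lambda_FF_L$ and $\|(U^R_\zeta)^\top q-\widetilde W_R^\top q\|_1\le\Lambda_BB_R$ are the same, and so is the placement of both remaining terms on $F_LB_R$. Here $F_L=\mathsf F_{J_L}(p)$, $F_R=\mathsf F_{J_R}(W_Lp)$, $B_L=\mathsf B_{J_L}(W_R^\top q)$, $B_R=\mathsf B_{J_R}(q)$.

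The step you single out as the main obstacle is not an obstacle, and you leave it open for a wrong reason. Apply the scalar mixing lemma to each coordinate $a$ of the centered vectors $f,g$, then Cauchy--Schwarz over $a$. This is the vector-valued mixing lemma:
\[
   |q^\top(\widetilde W_J-\widetilde W_R\widetilde W_L)p|
   \le\lambda_{\exp}\bigl(\E_\zeta\|f(\zeta)\|_2^2\bigr)^{1/2}\bigl(\E_\xi\|g(\xi)\|_2^2\bigr)^{1/2}.
\]
You only need upper bounds on $\|f(\zeta)\|_2$ and $\|g(\xi)\|_2$. These follow for free from your pointwise $\ell_1$ bounds, since $\|x\|_2\le\|x\|_1$ with no dimension factor, so the term is at most $\lambda_{\exp}\LevFB F_LB_R$. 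A factor of $w$ would be lost only if an $\ell_2$ bound had to be turned into an $\ell_1$ bound, which never happens in this step.

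So drop the $\ell_1$--$\ell_\infty$ target $\lambda_{\exp}\E_\xi\|g(\xi)\|_1\max_\zeta\|f(\zeta)\|_\infty$; a spectral bound does not give estimates of that shape. Also drop the expansion over pairs $(a,b)$ of entries of $U^R$. Taking absolute values entrywise destroys the cancellation that makes $\|f(\zeta)\|_1$ small; for instance $f\equiv0$ and $B_R=0$ when $q$ is constant, but the entrywise sum is positive.

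For the quadratic term your route is a mild variant of the paper's. The paper bounds both factors in $\ell_2$ by duality, $\|\Delta_Lp\|_2\le4w^{3/2}\eta_LF_L$ and $\|\Delta_R^\top q\|_2\le4w^{3/2}\eta_RB_R$, then applies Cauchy--Schwarz. You instead bound $\|\Delta_Lp\|_1\le4w^2\eta_LF_L$ through the standard basis. You then apply the fb guarantee of $\Delta_R$ directly to the test pair $(q,\Delta_Lp)$, using $\mathsf F_{J_R}(r)\le4w\|r\|_1$. Both routes give $16w^3\eta_L\eta_RF_LB_R$. The sign-splitting you mention is unnecessary: the fb seminorm is defined over all real test vectors, and the forward and backward weights are defined for signed vectors.
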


\begin{proof}
Fix \(p,q\in\mathbb R^w\), and abbreviate the forward and backward weights on
the two children by
\[
   F_L:=\FBF_{J_L}(p),
   \qquad
   F_R:=\FBF_{J_R}(W_Lp),
\]
and
\[
   B_L:=\FBB_{J_L}(W_R^\top q),
   \qquad
   B_R:=\FBB_{J_R}(q).
\]
By \eqref{eq:fb-F-decomp}--\eqref{eq:fb-B-decomp},
\begin{equation}
   \FBF_J(p)=F_L+F_R,
   \qquad
   \FBB_J(q)=B_L+B_R.
   \label{eq:fb-merge-weight-decomp}
\end{equation}

We separate the error of the merge into the expander-mixing error and the
errors already present in the two children.  Put
\[
   C_J
   :=
   \widetilde W_J-\widetilde W_R\widetilde W_L,
   \qquad
   E_L:=\widetilde W_L-W_L,
   \qquad
   E_R:=\widetilde W_R-W_R.
\]
Since \(W_{i-1..j}=W_RW_L\), we have the exact decomposition
\begin{equation}
   \widetilde W_J-W_{i-1..j}
   =
   C_J+W_RE_L+E_RW_L+E_RE_L.
   \label{eq:fb-merge-error-decomp}
\end{equation}

\paragraph{Bound \(C_J\).}
We first bound the term \(C_J\).  Define
\[
   f(\xi):=U_\xi^Lp-\widetilde W_Lp,
   \qquad
   g(\zeta):=(U_\zeta^R)^\top q-\widetilde W_R^\top q.
\]
Since
\[
   \widetilde W_L=\E_{\xi}U_\xi^L,
   \qquad
   \widetilde W_R=\E_{\zeta}U_\zeta^R,
\]
we have 
\begin{align*}
   \E_{(\xi,\zeta)\sim\mathcal H}
      \langle g(\zeta),f(\xi)\rangle
   &=
   q^\top
   \left(
      \E_{(\xi,\zeta)\sim\mathcal H}U_\zeta^R U_\xi^L
      -\widetilde W_R\widetilde W_L
   \right)p \\
   &=q^\top C_Jp.
\end{align*}
The vector-valued expander-mixing inequality therefore gives
\begin{equation}
   |q^\top C_Jp|
   \le
   \lambda_{\exp}
   \left(\E_{\xi}\|f(\xi)\|_2^2\right)^{1/2}
   \left(\E_{\zeta}\|g(\zeta)\|_2^2\right)^{1/2}.
   \label{eq:fb-merge-mixing-intermediate}
\end{equation}

To bound the first factor in
\eqref{eq:fb-merge-mixing-intermediate}, for every
\(\xi\in\mathcal S\),
\begin{align*}
   \|f(\xi)\|_2
   &\le \|f(\xi)\|_1
   && \text{since \(\|x\|_2\le\|x\|_1\)}\\
   &\le
   \|(U_\xi^L-W_L)p\|_1
   +
   \|(\widetilde W_L-W_L)p\|_1
   && \text{by the triangle inequality}\\
   &\le
   \frac{\Lambda_F}{2}F_L
   +
   \E_{\xi'}
   \|(U_{\xi'}^L-W_L)p\|_1
   && \text{by \eqref{eq:fb-pathwise-F} and
      \(\widetilde W_L=\E_{\xi'}U_{\xi'}^L\)}\\
   &\le
   \Lambda_F F_L
   && \text{by \eqref{eq:fb-pathwise-F}.}
\end{align*}
where we used the pathwise bound
\eqref{eq:fb-pathwise-F} and

Hence
\[
   \left(\E_\xi\|f(\xi)\|_2^2\right)^{1/2}
   \le \Lambda_FF_L.\qquad \text{Similarly,} \qquad 
   \left(\E_\zeta\|g(\zeta)\|_2^2\right)^{1/2}
   \le \Lambda_BB_R.
\]
Substituting into \eqref{eq:fb-merge-mixing-intermediate} gives
\begin{equation}
   |q^\top C_Jp|
   \le
   \lambda_{\exp}\LevFB\,F_LB_R.
   \label{eq:fb-merge-expander}
\end{equation}

\paragraph{Bound \(W_RE_L,E_RW_L\).}
The two linear terms are controlled directly by the forward--backward
guarantees for the corresponding children.  For the left-child, 
\[
   q^\top W_RE_Lp
   =
   (W_R^\top q)^\top E_Lp,
\]
and the assumption
\(\|E_L\|_{\mathrm{fb},J_L}\le\eta_L\) gives
\begin{equation}
   |q^\top W_RE_Lp|
   \le
   \eta_L\,
   \FBF_{J_L}(p)\,
   \FBB_{J_L}(W_R^\top q)
   =
   \eta_L F_LB_L.
   \label{eq:fb-merge-linear-L}
\end{equation}
Similarly, for the right-child,
\[
   q^\top E_RW_Lp
   =
   q^\top E_R(W_Lp).
\]
Hence
\begin{equation}
   |q^\top E_RW_Lp|
   \le
   \eta_R\,
   \FBF_{J_R}(W_Lp)\,
   \FBB_{J_R}(q)
   =
   \eta_R F_RB_R.
   \label{eq:fb-merge-linear-R}
\end{equation}
\paragraph{Bound \(E_RE_L\).}
For the quadratic term, we first derive the Euclidean bounds.
\begin{align*}
   \|E_Lp\|_2
   &=
   \sup_{\|r\|_2=1}|r^\top E_Lp|
   && \text{by Euclidean duality} \\
   &\le
   \eta_L F_L
   \sup_{\|r\|_2=1}\FBB_{J_L}(r)
   && \text{since
      \(\|E_L\|_{\mathrm{fb},J_L}\le\eta_L\)} \\
   &\le
   4w
   \eta_L F_L
   \sup_{\|r\|_2=1}\|r\|_1
   && \text{by \eqref{eq:fb-weight-budget}} \\
   &\le
   4w^{3/2}\eta_LF_L
   && \text{since \(\|r\|_1\le\sqrt w\,\|r\|_2\).}
\end{align*}
Similarly,
\begin{align*}
   \|E_R^\top q\|_2
   &\le
   4w^{3/2}\eta_RB_R.
\end{align*}
Therefore, by Cauchy--Schwarz,
\begin{align}
   |q^\top E_RE_Lp|
   &=
   |\langle E_R^\top q,E_Lp\rangle| \notag\\
   &\le
   \|E_R^\top q\|_2\|E_Lp\|_2 \notag\\
   &\le
   16w^3\eta_L\eta_R\,F_LB_R.
   \label{eq:fb-merge-quadratic}
\end{align}

\paragraph{Combining the bounds}
Combining
\eqref{eq:fb-merge-expander}--\eqref{eq:fb-merge-quadratic}, we obtain
\[
\begin{aligned}
   |q^\top(\widetilde W_J-W_{i-1..j})p|
   \le{}&
   \bigl(\lambda_{\exp}\LevFB+16w^3\eta_L\eta_R\bigr)F_LB_R \\
   &\quad+\eta_LF_LB_L+\eta_RF_RB_R.
\end{aligned}
\]
Let
\[
   K
   :=
   \max\left\{
      \eta_L,\eta_R,
      \lambda_{\exp}\LevFB+16w^3\eta_L\eta_R
   \right\}.
\]
Since all four weights are nonnegative,
\[
   F_LB_R+F_LB_L+F_RB_R
   \le
   (F_L+F_R)(B_L+B_R).
\]
Together with \eqref{eq:fb-merge-weight-decomp}, this gives
\[
   |q^\top(\widetilde W_J-W_{i-1..j})p|
   \le
   K\,\FBF_J(p)\FBB_J(q).
\]
Taking the supremum over \(p,q\) in the definition of the
forward--backward seminorm proves \eqref{eq:fb-robust-merge}.
\end{proof}

A useful consequence is that the error does not accumulate across consecutive
INW merges.  Suppose every matrix entering such a sequence has
forward--backward error at most \(\eta\).  If
\begin{equation}
   \lambda_{\exp}\LevFB+16w^3\eta^2\le\eta,
   \label{eq:fb-fixed-point}
\end{equation}
then an inductive deploy of \Cref{lem:fb-robust-merge} shows that every matrix produced
by the merges also has error at most \(\eta\).

\begin{corollary}[Multilevel INW merge]
\label{cor:fb-multilevel-inw}
Let \(J_1,\ldots,J_m\) be the consecutive intervals of a partition of
an interval \(J\), where \(m=2^h\).
Let \(\mathcal S=\{0,1\}^d\).  For every \(a\in[m]\) and
\(y\in\mathcal S\), let \(U_a(y)\) be a path matrix on \(J_a\), and write
\[
   \widetilde W_a:=\E_{y\in\mathcal S}U_a(y).
\]
Suppose that
\[
   \|\widetilde W_a-W_{J_a}\|_{\mathrm{fb},J_a}\le\eta
   \qquad\text{for every }a\in[m],
\]
and that
\begin{equation}
   \lambda_{\exp}\LevFB+16w^3\eta^2\le\eta.
   \label{eq:fb-multilevel-fixed-point}
\end{equation}
Let
\[
   \INW^{(h)}:\widehat{\mathcal S}\longrightarrow\mathcal S^m,
   \qquad
   \widehat{\mathcal S}=\{0,1\}^{\widehat d},
\]
be the standard \(h\)-level INW generator obtained from regular reversible
expanders whose normalized second singular value is at most
\(\lambda_{\exp}\).  For \(z\in\widehat{\mathcal S}\), write
\[
   \INW^{(h)}(z)=(y_1,\ldots,y_m)
\]
and define
\[
   \widetilde W_J
   :=
   \E_{z\in\widehat{\mathcal S}}
   U_m(y_m)\cdots U_1(y_1).
\]
Then \(\widetilde W_J\) is an average of path matrices on \(J\) and
\begin{equation}
   \|\widetilde W_J-W_J\|_{\mathrm{fb},J}\le\eta.
   \label{eq:fb-multilevel-inw-error}
\end{equation}
Moreover, using a standard strongly explicit expander family,
\begin{equation}
   \widehat d
   =d+O\!\left(h\log(1/\lambda_{\exp})\right).
   \label{eq:fb-multilevel-inw-seed}
\end{equation}
\end{corollary}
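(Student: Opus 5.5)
We prove the corollary by induction on the level \(h\) of the INW recursion, with \Cref{lem:fb-robust-merge} as the only analytic ingredient. Recall that the standard INW generator is defined by \(\INW^{(0)}=\mathrm{id}\) on \(\mathcal S\), and for \(h\ge1\), on seed \((z,e)\) with \(e\) indexing an edge label of an expander \(\mathcal H_{h}\) on \(\{0,1\}^{d_{h-1}}\), by \(\INW^{(h)}(z,e)=(\INW^{(h-1)}(z),\INW^{(h-1)}(\mathcal H_h(z,e)))\). Here \(d_h=d_{h-1}+O(\log(1/\lambda_{\exp}))\), using a strongly explicit regular reversible expander family of degree \(\poly(1/\lambda_{\exp})\) (e.g.\ powers of a fixed constant-degree expander, with \(d\) rounded as needed). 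Summing over \(h\) levels gives \eqref{eq:fb-multilevel-inw-seed}.

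We will prove the following inductive claim. For every dyadic block \(I\) of \(2^{g}\) consecutive intervals \(J_a\), \(a\in I\), with union \(J_I\), the matrix
\[
   \widetilde W_I:=\E_{z\in\{0,1\}^{d_g}}\prod_{a\in I}U_a(y_a)
\]
(product in order, with \((y_a)_{a\in I}=\INW^{(g)}(z)\)) is an average of path matrices on \(J_I\), and \(\|\widetilde W_I-W_{J_I}\|_{\mathrm{fb},J_I}\le\eta\).

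The base case \(g=0\) is the hypothesis. For the inductive step, split \(I\) into halves \(I_L,I_R\). Take the seed set in \Cref{lem:fb-robust-merge} to be \(\{0,1\}^{d_{g-1}}\), with
\[
   U^L_\xi:=\prod_{a\in I_L}U_a(\INW^{(g-1)}(\xi)_a),
\]
and \(U^R_\zeta\) defined similarly. Each \(U^L_\xi\) is a product of path matrices on consecutive intervals, so it is a path matrix on \(J_{I_L}\). Its average over \(\xi\) is exactly \(\widetilde W_{I_L}\). The expander-edge average \(\E_{(\xi,\zeta)\sim\mathcal H_g}U^R_\zeta U^L_\xi\) equals \(\widetilde W_I\), because a uniform seed \((z,e)\) corresponds to a uniform directed edge \((z,\mathcal H_g(z,e))\).

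The inductive hypothesis gives \(\eta_L,\eta_R\le\eta\). \Cref{lem:fb-robust-merge} then bounds the error by
\[
   \max\{\eta,\eta,\lambda_{\exp}\LevFB+16w^3\eta^2\}\le\eta,
\]
using \eqref{eq:fb-multilevel-fixed-point}. Note that \(\LevFB\) is defined from the original layers and is therefore the same quantity at every level. The decomposition identities \eqref{eq:fb-F-decomp}--\eqref{eq:fb-B-decomp} are already absorbed into the lemma, since \(J_{I_L}\) and \(J_{I_R}\) are consecutive and partition \(J_I\). Taking \(g=h\) yields \eqref{eq:fb-multilevel-inw-error}.

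The only delicate point is bookkeeping. We must check that the INW recursion's output ordering matches the matrix product order under the column-vector convention, with the right factor applied later. We must also check that the children's generators use the same seed-length space, so that one expander on \(\{0,1\}^{d_{g-1}}\) serves both halves. Both checks are immediate from the recursive definition. No error accumulation occurs, thanks to the fixed-point condition.
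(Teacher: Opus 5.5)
Your proof is correct and follows the paper's own argument. Both induct over the $h$ levels of the INW recursion, apply \Cref{lem:fb-robust-merge} at each internal node with $\eta_L,\eta_R\le\eta$, close the step with the fixed-point condition \eqref{eq:fb-multilevel-fixed-point}, and obtain the seed bound from the standard recurrence that adds $O(\log(1/\lambda_{\exp}))$ bits per level. Your version additionally spells out bookkeeping the paper leaves implicit: that the child averages are exactly $\widetilde W_{I_L},\widetilde W_{I_R}$, that a uniform seed corresponds to a uniform directed expander edge, and that $\LevFB$ is the same at every level.
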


\begin{proof}
Apply \Cref{lem:fb-robust-merge} successively along the \(h\) levels of the
INW recursion.  At every internal node, the two child means have
forward--backward error at most \(\eta\); by
\eqref{eq:fb-multilevel-fixed-point}, their merge again has error at most
\(\eta\).  Induction over the levels gives
\eqref{eq:fb-multilevel-inw-error}.  The seed-length bound is the standard INW
recurrence: each level adds
\(O(\log(1/\lambda_{\exp}))\) bits.
\end{proof}

\subsection{The derandomization: combining samplers and INW}
\label{subsec:fb-sampler-fixed-macroblocks}


\begin{proof}[Proof of \Cref{thm:fb-perm}]
For \(n=1\), compute the average transition directly; hence assume \(n\ge2\).
Pad the program with identity transitions to length
\[
   N:=2^L,
   \qquad
   L:=\lceil\log n\rceil,
\]
where \(n\le N<2n\), and continue to write \(B\) for the padded program.

\paragraph{The recursion.}
We use the same recursive structure as in \Cref{sec:INW_in_SC}.
At each stage, we first use a sampler to reduce the current alphabet and then
apply the multilevel INW generator of \Cref{cor:fb-multilevel-inw} to reduce the
current length.

Write
\begin{equation}
   \beta:=1+\log\frac{w\LevFB}{\varepsilon},
   \qquad
   H:=1+\log\frac{nw\LevFB}{\varepsilon},
   \label{eq:fb-beta-H}
\end{equation}
and let \(m\) be the power of two satisfying
\begin{equation}
   \log m
   :=
   \min\left\{
      L,
      \left\lfloor\frac{H}{\beta}\right\rfloor
   \right\}.
   \label{eq:fb-m}
\end{equation}
Set the number of stages to be
\[
   K:=\left\lceil\frac{L}{\log m}\right\rceil.
\]
For \(0\le t\le K\), set
\[
   \ell_t:=\min\{t\log m,L\},
   \qquad
   n_t:=\frac{N}{2^{\ell_t}},
\]
and for \(0\le t<K\), set
\[
   m_t:=2^{\ell_{t+1}-\ell_t}.
\]
At stage \(t\), we replace every \(m_t\) consecutive current transitions
by one transition.  Thus \(m_t=m\) except possibly in the last stage,
and \(n_K=1\).

Let \(\Sigma_0=\Sigma\), \(B^0=B\), and let
\[
   \mathcal G^0:\Sigma^N\to\Sigma^N
\]
be the identity generator.  For the $t$-th stage, we recursively define length-\(n_t\)
permutation ROBPs \(B^t\) over alphabets \(\Sigma_t\) and generators
\[
   \mathcal G^t:\Sigma_t^{n_t}\longrightarrow\Sigma^N
\]
such that
\begin{equation}
   B^t=B\circ\mathcal G^t,
   \label{eq:fb-composition-invariant}
\end{equation}
as follows:

\paragraph{Compressing alphabet with a sampler.}
Suppose \(B^t,\Sigma_t,\mathcal G^t\) have been defined.  Let
\[
   \Samp_t:
   \{0,1\}^{r_t}\times\{0,1\}^{p_t}
   \longrightarrow\Sigma_t
\]
be an
\((\alpha_{\mathrm{samp}},\gamma_{\mathrm{samp}})\)-averaging sampler.
Writing
\[
   W_k^t:=\E_{\sigma\in\Sigma_t}B_k^t(\sigma),
\]
choose an offline seed \(x_t\in\{0,1\}^{r_t}\), shared by all transitions,
such that
\begin{equation}
   \left|
      [W_k^t]_{u,v}
      -
      \E_y[B_k^t(\Samp_t(x_t,y))]_{u,v}
   \right|
   \le\alpha_{\mathrm{samp}}
   \label{eq:fb-stage-fixing}
\end{equation}
for every \(k\in[n_t]\) and \(u,v\in[w]\).  Define
\[
   \widetilde B_k^t(y)
   :=
   B_k^t(\Samp_t(x_t,y))
\]
and
\[
   \widetilde{\mathcal G}^t(y_1,\ldots,y_{n_t})
   :=
   \mathcal G^t\bigl(
      \Samp_t(x_t,y_1),\ldots,\Samp_t(x_t,y_{n_t})
   \bigr).
\]
Then
\[
   \widetilde B^t=B\circ\widetilde{\mathcal G}^t.
\]

\paragraph{Reducing length with INW generator.}
Let
\[
   \INW_t:
   \{0,1\}^{d_t}
   \longrightarrow
   \bigl(\{0,1\}^{p_t}\bigr)^{m_t}
\]
be the \((\log m_t)\)-level INW generator from
\Cref{cor:fb-multilevel-inw}, using expanders with normalized second
singular value at most \(\lambda_{\exp}\).  Thus
\begin{equation}
   d_t
   =
   p_t+
   O\!\left(
      \log m_t\cdot\log(1/\lambda_{\exp})
   \right).
   \label{eq:fb-stage-inw-seed}
\end{equation}
Since \(n_{t+1}=n_t/m_t\), set
\[
   \Sigma_{t+1}:=\{0,1\}^{d_t},
\]
and define
\begin{equation}
   B_k^{t+1}(y)
   :=
   \widetilde B^t_{(k-1)m_t\,..\,km_t}
      \bigl(\INW_t(y)\bigr).
   \label{eq:fb-next-transition}
\end{equation}
Correspondingly,
\[
   \mathcal G^{t+1}(y_1,\ldots,y_{n_{t+1}})
   :=
   \widetilde{\mathcal G}^t
   \bigl(\INW_t(y_1),\ldots,\INW_t(y_{n_{t+1}})\bigr),
\]
where the \(m_t\)-tuples are concatenated in order.  Hence
\(B^{t+1}=B\circ\mathcal G^{t+1}\).

\paragraph{Final generator.}
Repeating this procedure for \(K\) stages leaves a single transition.  Let
\[
   x:=(x_0,\ldots,x_{K-1})
\]
be the fixed offline sampler seeds and let
\[
   \mathcal G^K:\Sigma_K\to\Sigma^N.
\]
Restricting its output to the first \(n\) symbols gives the desired
generator
\begin{equation}
   \mathcal G_x(y):=\mathcal G_x^K(y)|_{[n]}.
   \label{eq:fb-final-generator}
\end{equation}

\paragraph{Error analysis.}
Set
\begin{equation}
   \eta_\star:=\frac{\varepsilon}{64w^3},
   \qquad
   \lambda_{\exp}:=\frac{\eta_\star}{16\LevFB},
   \label{eq:fb-eta-star}
\end{equation}
and
\begin{equation}
   \eta_{\mathrm{samp}}
      :=\frac{\eta_\star}{8K},
   \qquad
   \alpha_{\mathrm{samp}}
      :=\frac{\eta_{\mathrm{samp}}}{\LevFB w^4},
   \qquad
   \gamma_{\mathrm{samp}}
      :=\frac1{2Nw^2}.
   \label{eq:fb-fixing-parameters}
\end{equation}
For \(0\le t\le K\), put
\[
   z_t:=\frac{\eta_\star}{8}+t\eta_{\mathrm{samp}},
\]
so that
\begin{equation}
   \frac{\eta_\star}{8}
   \le z_t
   \le\frac{\eta_\star}{4}.
   \label{eq:fb-error-envelope-range}
\end{equation}

The \(k\)-th transition at stage \(t\) represents the interval
\[
   J_{t,k}
   :=
   \bigl[(k-1)2^{\ell_t}+1,\,k2^{\ell_t}\bigr]
\]
of the padded program.  We claim inductively that
\begin{equation}
   \|W_k^t-W_{J_{t,k}}\|_{\mathrm{fb},J_{t,k}}
   \le z_t
   \qquad(k\in[n_t]).
   \label{eq:fb-stage-error-invariant}
\end{equation}

\begin{proof}[Proof of the claim]
For \(t=0\) the two matrices are identical.
At stage \(t\), there are at most \(Nw^2\) entrywise tests in
\eqref{eq:fb-stage-fixing}.  The choice of
\(\gamma_{\mathrm{samp}}\) therefore guarantees, by a union bound, that a
suitable \(x_t\) exists.  If
\[
   \widetilde W_k^t
   :=
   \E_y\widetilde B_k^t(y),
\]
then \(W_k^t\) and \(\widetilde W_k^t\) are convex combinations of the same
path matrices.  Hence \Cref{lem:fb-path-stability} gives
\begin{equation}
   \|\widetilde W_k^t-W_k^t\|_{\mathrm{fb},J_{t,k}}
   \le\eta_{\mathrm{samp}}.
   \label{eq:fb-sampler-fb-error}
\end{equation}
Thus, assuming \eqref{eq:fb-stage-error-invariant} at stage \(t\) and using triangular inequality.
\begin{equation}
   \|\widetilde W_k^t-W_{J_{t,k}}\|_{\mathrm{fb},J_{t,k}}
   \le z_t+\eta_{samp}= z_{t+1}.
   \label{eq:fb-after-sampler-error}
\end{equation}

For every
\(z\in[\eta_\star/8,\eta_\star/4]\), the choices above satisfy
\begin{equation}
   \lambda_{\exp}\LevFB+16w^3z^2\le z.
   \label{eq:fb-parameter-check}
\end{equation}
Indeed,
\[
   \lambda_{\exp}\LevFB=\frac{\eta_\star}{16},
   \qquad
   16w^3z^2
   \le w^3\eta_\star^2
   \le\frac{\eta_\star}{128}.
\]
Each transition of \(B^{t+1}\) is obtained by applying the multilevel INW
merge to \(m_t\) consecutive transitions of \(\widetilde B^t\).  Hence
\eqref{eq:fb-after-sampler-error},
\eqref{eq:fb-parameter-check}, and
\Cref{cor:fb-multilevel-inw} imply
\[
   \|W_k^{t+1}-W_{J_{t+1,k}}\|_{\mathrm{fb},J_{t+1,k}}
   \le z_{t+1}.
\]
This proves \eqref{eq:fb-stage-error-invariant}.
\end{proof}

At the final stage,
\begin{equation}
   \left\|
      M_{0..N}[\mathcal G_x^+]-W_{0..N}
   \right\|_{\mathrm{fb},[1,N]}
   \le\frac{\eta_\star}{4}.
   \label{eq:fb-final-error-cap}
\end{equation}
Identity padding gives
\[
   M_{0..N}[\mathcal G_x^+]
   =
   M_{0..n}[\mathcal G_x],
   \qquad
   W_{0..N}=W_{0..n}.
\]
Set
\[
   \widehat W_{0..n}:=M_{0..n}[\mathcal G_x].
\]
For every start state \(v\) and accepting set \(S\),
\Cref{lem:fb-weight-bounds} gives
\[
   \FBF_{[1,N]}(e_v)\le4w,
   \qquad
   \FBB_{[1,N]}(\one_S)\le w^2.
\]
Consequently,
\[
   \left|
      \one_S^\top
      (\widehat W_{0..n}-W_{0..n})
      e_v
   \right|
   \le
   w^3\eta_\star
   <\varepsilon.
\]
This proves the accepting-set guarantee, and the entrywise guarantee follows
by taking \(S\) to be a singleton.

\paragraph{Seed and advice lengths.}
By \eqref{eq:fb-eta-star},
\[
   \log(1/\lambda_{\exp})=O(\beta),
\]
while \eqref{eq:fb-fixing-parameters} gives
\[
   \log(1/\alpha_{\mathrm{samp}}),
   \log(1/\gamma_{\mathrm{samp}})
   =O(H).
\]
Using averaging samplers from \Cref{lem:good sampler} gives
\[
   r_t=\log|\Sigma_t|+O(H),
   \qquad
   p_t=O(H).
\]
Since \(\beta\log m_t\le\beta\log m\le H\), \eqref{eq:fb-stage-inw-seed} yields
\[
   d_t
   =
   p_t+O(\beta\log m_t)
   =
   O(H).
\]
Thus
\[
   \log|\Sigma_t|=d_{t-1}=O(H)
   \qquad(t\ge1),
\]
and therefore
\[
   r_0=\log|\Sigma|+O(H),
   \qquad
   r_t=O(H)\quad(t\ge1).
\]
Moreover,
\[
   K
   =
   O\!\left(
      1+\frac{\beta\log n}{H}
   \right).
\]
It follows that
\begin{equation}
   |x|
   =
   \log|\Sigma|+O(KH)
   =
   \log|\Sigma|+O(\beta\log n+H),
   \label{eq:fb-advice}
\end{equation}
while the final online seed has length \(O(H)\).

Finally, \Cref{lem:fb-alphabet-leverage} allows
\(\LevFB=|\Sigma|^2\), so
\[
   \beta
   =
   1+\log\frac{w|\Sigma|^2}{\varepsilon},
   \qquad
   H
   =
   1+\log\frac{nw|\Sigma|^2}{\varepsilon}.
\]
So the total space for storing $|x|$ and the online seed is bounded by 
\[
O\left(\log n\log\frac{w|\Sigma|}{\varepsilon}\right)
\]
as claimed.

\paragraph{Time complexity.}
The advice strings \(x_0,\ldots,x_{K-1}\) are found sequentially by exhaustive
search, exactly as in \Cref{sec:INW_in_SC}.  At every stage the current
alphabet has polynomial size, so the search runs in polynomial time.

To compute the \(q\)-th output symbol of \(\mathcal G_x(y)\), define
\[
   q_t:=\left\lceil\frac{q}{2^{\ell_t}}\right\rceil,
   \qquad
   j_t:=q_t-(q_{t+1}-1)m_t\in[m_t].
\]
Starting from the final online seed \(s_K:=y\), trace the unique
path:
\begin{equation}
   \widehat s_t
   :=
   \bigl(\INW_t(s_{t+1})\bigr)_{j_t},
   \qquad
   s_t:=\Samp_t(x_t,\widehat s_t),
   \qquad
   t=K-1,\ldots,0.
   \label{eq:fb-strong-explicit-recursion}
\end{equation}
Then \(s_0=\mathcal G_x(y)_q\).  Only one coordinate of each INW generator is
evaluated, and the total number of expander levels over all stages is
\(\sum_{t=0}^{K-1}\log m_t=L=O(\log n)\).  Hence one output symbol can be computed in
\[
   O(H+\log n+\log|\Sigma|)
\]
workspace, plus the workspace needed to evaluate one transition of \(B\),
and in polynomial time.  Enumerating the final online seed therefore outputs
the entries of \(\widehat W_{0..n}\) one at a time within the claimed
resource bounds.
\end{proof}
\subsection{From regular ROBPs to permutation ROBPs}
\label{sec:fb-regular}

We now deduce the regular case from the permutation case.  Let \(B\) be a
length-\(n\), width-\(w\) regular ROBP over \(\Sigma\).  By
\Cref{lem:reg2perm-phase}, we can replace each state \(u\in[w]\) by
\(|\Sigma|\) copies \((u,\theta)\), \(\theta\in\Sigma\), so that every symbol
acts as a permutation on the resulting state space.  This gives a
length-\(n\), width-\(w|\Sigma|\) permutation ROBP \(B^\pi\) over the same
alphabet.

The construction preserves the original transition matrix after summing over
the copies.  Namely, for every \(u,v\in[w]\),
\begin{equation}
   [W_{0..n}]_{u,v}
   =
   \sum_{\theta\in\Sigma}
      [W^\pi_{0..n}]_{(u,\theta),(v,0)}.
   \label{eq:fb-regular-copies}
\end{equation}

\begin{proof}[Proof of \Cref{thm:perm-regular-main}]
Apply \Cref{thm:fb-perm} to \(B^\pi\), which has width
\(w|\Sigma|\), with error \(\varepsilon\) and
\(\LevFB=|\Sigma|^2\).  Let \(\widehat W^\pi_{0..n}\) be the resulting
matrix, and define
\[
   [\widehat W_{0..n}]_{u,v}
   :=
   \sum_{\theta\in\Sigma}
      [\widehat W^\pi_{0..n}]_{(u,\theta),(v,0)}
   \qquad (u,v\in[w]).
\]

By \eqref{eq:fb-regular-copies},
\[
   [W_{0..n}]_{u,v}
   =
   \sum_{\theta\in\Sigma}
      [W^\pi_{0..n}]_{(u,\theta),(v,0)}.
\]
Therefore, applying the accepting-set guarantee of
\Cref{thm:fb-perm} to the start state \((v,0)\) and the set
\(\{u\}\times\Sigma\), we obtain
\[
\begin{aligned}
   \left|
      [\widehat W_{0..n}]_{u,v}
      -
      [W_{0..n}]_{u,v}
   \right|
   &=
   \left|
      \sum_{\theta\in\Sigma}
      \left(
         [\widehat W^\pi_{0..n}]_{(u,\theta),(v,0)}
         -
         [W^\pi_{0..n}]_{(u,\theta),(v,0)}
      \right)
   \right| \\
   &\le \varepsilon.
\end{aligned}
\]
It remains to bound the resources.  Substituting the width
\(w|\Sigma|\) and \(\LevFB=|\Sigma|^2\) into the bounds of
\Cref{thm:fb-perm} gives
\[
   O\!\left(
      \log|\Sigma|
      +
      \log n\log\frac{w|\Sigma|}{\varepsilon}
      +
      \log\frac{nw|\Sigma|}{\varepsilon}
   \right)
\]
space.  By \Cref{cl:reg2perm-explicit}, each transition of \(B^\pi\) can be
evaluated using \(O(\log(w|\Sigma|))\) additional workspace and polynomially
many calls to the transition oracle for \(B\).  This is absorbed by the
displayed space bound, and the total running time is polynomial in
\(n,w,|\Sigma|,1/\varepsilon\).
\end{proof}


\section{Optimal-space derandomization at logarithmic width for bounded two-sided errors}
\label{sec:balanced-counts}

We give an evolving-set simulation of regular ROBPs that uses only
$O(w^2)$ random bits in expectation, independently of the length.
Combining it with Nisan--Zuckerman randomness reduction gives
deterministic space $O(\log(n+2)+w)$ at inverse-polynomial-in-$w$
accuracy, with polynomial running time when $w=O(\log n)$.
Extensions to smaller error and polylogarithmic alphabets appear in
\Cref{subsec:evolving-extensions}.

We first consider a binary alphabet.  We use the transition-matrix
notation from the preliminaries: $B_t(\sigma)$ maps $V_{t-1}$ to $V_t$,
\[
   W_t=\frac{B_t(0)+B_t(1)}2,
   \qquad
   W_{0..n}=W_n\cdots W_1.
\]
Distributions are column vectors, and the matrix $1$-norm is the induced
norm, namely the maximum absolute column sum.  Regularity means that
every $W_t$ is doubly stochastic; the individual maps $B_t(0)$ and
$B_t(1)$ need not be permutations.  The program is given by its
transition table on a read-only input tape.  For implicitly represented
programs, the resources needed to evaluate a transition are added to the
bounds below.

\subsection{An evolving-set simulation at logarithmic width}
\label{subsec:evolving-sets}

We evolve a random subset backwards through the program, using the
evolving-set idea of Morris and Peres~\cite{morrisPeresEvolvingSets2005}.  A length-independent bound on
the expected number of random bits, followed by the randomness reduction
of Nisan and Zuckerman~\cite{nisanRandomnessLinearSpace1996}, removes the
factor $\log w$ from the space bound at inverse-polynomial-in-$w$
accuracy.  The resulting running time is polynomial when
$w=O(\log n)$.

\begin{theorem}[Evolving-set simulation of regular ROBPs]
\label{thm:evolving-regular}
Fix a constant $C>0$.  Given the transition table of a binary regular
ROBP of length $n$ and width $w$, a start state $s$, an arbitrary
accepting set $F\subseteq[w]$, and an integer $b\ge1$ specifying
$\varepsilon=2^{-b}\ge w^{-C}$, a deterministic algorithm
approximates its acceptance probability to additive error $\varepsilon$
in space
\begin{equation}
   O_C\!\left(\log n+w\right)
   \label{eq:evolving-space}
\end{equation}
and time $2^{O_C(w)}\operatorname{poly}(n,w)$.
The same resources suffice to output a matrix
$\widehat W_{0..n}$ satisfying
\begin{equation}
   \|\widehat W_{0..n}-W_{0..n}\|_1\le\varepsilon.
   \label{eq:evolving-matrix}
\end{equation}
In particular, binary regular ROBPs of width $O(\log n)$, with arbitrary
accepting sets, admit deterministic polynomial-time, logarithmic-space
bounded-error evaluation.  At this width the matrix approximation also
has polynomial running time.
\end{theorem}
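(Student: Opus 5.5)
We plan to follow the outline in the technical overview and proceed in three steps: define a randomized backward evolving-set process, bound its randomness via a martingale/potential argument, and derandomize with Nisan--Zuckerman.

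\emph{Step 1: the unbiased process.} Set $S_n:=F$. For $t=n,\dots,1$, given $S_t\subseteq V_t$, let
\[
A_t=\{v\in V_{t-1}: B_t(0)v\in S_t,\ B_t(1)v\in S_t\},\qquad
B'_t=\{v\in V_{t-1}: B_t(0)v\in S_t\text{ or }B_t(1)v\in S_t\}.
\]
If $A_t=B'_t$, set $S_{t-1}:=A_t$. Otherwise, use a fresh fair bit to set $S_{t-1}\in\{A_t,B'_t\}$. Since $\one_{A_t}+\one_{B'_t}=2W_t^\top\one_{S_t}$, we get $\E[\one_{S_{t-1}}\mid S_t]=W_t^\top\one_{S_t}$. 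Iterating gives $\E[\one_{S_0}]=W_{0..n}^\top\one_F$, so $\Pr[s\in S_0]=\one_F^\top W_{0..n}e_s$ exactly.

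\emph{Step 2: few random bits.} Regularity (column sums of $W_t^\top$ equal one) makes $|S_t|$ a martingale. Moreover $|A_t|+|B'_t|=2|S_t|$, so the two choices are $|S_t|\pm d$ with $d=(|B'_t|-|A_t|)/2$. Since $|A_t|+|B'_t|=2|S_t|$ is even, $|B'_t|-|A_t|$ is even, so $d$ is a positive integer whenever a bit is used, hence $d\ge 1$. With $H(x)=x(w-x)$ we get $\E[H(|S_{t-1}|)\mid S_t]=H(|S_t|)-d^2\le H(|S_t|)-1$ at every random step. Since $0\le H\le w^2/4$, optional stopping gives expected number of random bits at most $w^2/4$.

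Truncate after $T=\lceil (w^2/4)\cdot 2/\varepsilon\rceil$ bits; by Markov the truncation changes the estimate by at most $\varepsilon/2$. On overflow, output arbitrarily; we may instead use $O(w^2\log(1/\varepsilon))$ bits via repeated blocks, but the Markov bound already suffices. The result is a randomized algorithm using space $S=O(\log n+w)$ (a layer index plus a subset of $[w]$), polynomial time $\poly(n,w)$, and $R=\poly(w)/\varepsilon=\poly(w)$ random bits, since $\varepsilon\ge w^{-C}$. The random bits are read once.

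\emph{Step 3: derandomization.} View this as a space-$S$ machine reading $R=\poly(S)$ random bits once, i.e.\ an ROBP of width $2^{O(S)}$ and length $R$. Apply the Nisan--Zuckerman generator \cite{nisanRandomnessLinearSpace1996}, which fools space-$S$ machines using $\poly(S)$ random bits, with seed $O(S)$ and error $2^{-\Omega(S)}\le\varepsilon/4$; it is computable in space $O(S)$. Enumerating all $2^{O(S)}$ seeds and counting acceptances gives space $O(S)$. The time is $2^{O(S)}\poly(n,w)$, which needs care: $2^{O(\log n)}$ is only polynomial, but we want $2^{O(w)}\poly(n,w)$. This is the main obstacle. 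The seed length must be $O(w)+O(\log n)$ where only the $O(w+\log(1/\varepsilon))$ part is exponentiated by more than a constant factor.

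To fix this, the state relevant to randomness is only the subset ($w$ bits) plus the position. The layer index advances deterministically between random bits, so the machine, viewed as a branching program over the random bits, has width $2^w\cdot$(which random bit we are at), and the layer index is a function of the history. We therefore instead apply NZ to the ROBP on the random-bit sequence whose per-step state is $(S_t,t)$, using the version whose seed is $O(\log(\text{width})+\log R)$ for width-$2^{O(w)}$ programs of length $\poly(w)$. Any extra $\log n$ costs only a polynomial factor, giving time $2^{O_C(w)}\poly(n,w)$.

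\emph{Matrix output.} Run this procedure for every pair $(s,\{u\})$ and output the entries of $\widehat W_{0..n}$. The $\|\cdot\|_1$ bound requires column error $\varepsilon$, so use entrywise error $\varepsilon/w$, which is still $\ge w^{-C-1}$. When $w=O(\log n)$, the time $2^{O(w)}$ is polynomial.
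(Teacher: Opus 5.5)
Your proposal is correct and follows essentially the same route as the paper: the backward evolving-set estimator, the potential $H(x)=x(w-x)$ bounding the expected number of coins by $w^2/4$, truncation, Nisan--Zuckerman with seed enumeration, and entrywise error $\varepsilon/w$ for the matrix. Your plain Markov truncation with $O(w^{2+C})$ coins is a harmless simplification of the paper's geometric tail bound (that bound matters only for the smaller-error remark). Two small corrections to Step 3: the running-time concern there is not an obstacle, since $2^{O(\log n+w)}=2^{O(w)}\poly(n)$ already. Also, the Nisan--Zuckerman error is only $\exp(-S^{\Omega(1)})$ (the paper uses $\exp(-\sqrt S)$) rather than $2^{-\Omega(S)}$, but this still suffices because $S\ge w$ and $\varepsilon\ge w^{-C}$.
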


\begin{proof}[Proof of \Cref{thm:evolving-regular}]
We first give a randomized scalar estimator whose expected randomness
usage is independent of the length.  Write $\delta_t(u,0)$ and
$\delta_t(u,1)$ for the two successors of state $u$ in layer $t-1$.
Starting with $S^{(0)}=F$, process the layers backwards.  At reverse
step $j$, put $t=n-j$ and define
\begin{align*}
   A_j&=\{u:\delta_t(u,0)\in S^{(j)}
                    \text{ and }\delta_t(u,1)\in S^{(j)}\},\\
   B_j&=\{u:\delta_t(u,0)\in S^{(j)}
                    \text{ or }\delta_t(u,1)\in S^{(j)}\}.
\end{align*}
If $A_j=B_j$, set $S^{(j+1)}=A_j$ without using randomness.
Otherwise use one fresh fair bit to choose $A_j$ or $B_j$ with equal
probability.  The final output is $Z=\mathbf 1_{\{s\in S^{(n)}\}}$.
For every state, its membership probability after this update is the
average of the two successor membership bits.  Consequently,
\begin{equation}
   \mathbb E[\mathbf 1_{S^{(j+1)}}\mid S^{(j)}]
      =W_{n-j}^{\top}\mathbf 1_{S^{(j)}},
   \qquad
   \mathbb E Z
      =e_s^{\top}W_1^{\top}\cdots W_n^{\top}\mathbf 1_F
      =\mathbf 1_F^{\top}W_{0..n}e_s.
   \label{eq:evolving-unbiased}
\end{equation}
Thus $Z$ is an exact Bernoulli estimator of the acceptance probability.

Let $m_j=|S^{(j)}|$.  Regularity says that the total number of edges
entering a set $S$ is $2|S|$, counting multiplicity, and therefore
\[
   |A_j|+|B_j|=2m_j.
\]
Put $d_j=(|B_j|-|A_j|)/2$.  This is a nonnegative integer; it is at
least one exactly when a random bit is used.  Conditionally on the
current set, $m_{j+1}=m_j-d_j$ or $m_j+d_j$, each with probability
one half.  With $H(m)=m(w-m)$, we obtain
\[
   \mathbb E[H(m_{j+1})\mid S^{(j)}]=H(m_j)-d_j^2.
\]
If $N$ is the total number of random bits used, telescoping over the
finite $n$ layers gives
\begin{equation}
   \mathbb E N
      \le\sum_{j=0}^{n-1}\mathbb E d_j^2
      =|F|(w-|F|)-\mathbb E H(m_n)
      \le |F|(w-|F|)\le w^2/4.
   \label{eq:evolving-expected-bits}
\end{equation}
The same estimate bounds the expected remaining number of random bits
after any history, with the current subset and the remaining layers in
place of $F$ and the original program.

This uniform conditional bound gives an exponential tail in the number
of active updates.  Set $L=\lceil w^2/2\rceil$.  Conditional Markov's
inequality implies that after any history the probability of requiring
at least another $L$ random bits is at most $1/2$.  Use the full
reverse-time filtration, including the layer index and all coins read
so far.  At the stopping time immediately after $rL$ coins, the unused
coin suffix is independent, so the same bound applies to the remaining
computation.  Induction gives
\begin{equation}
   \Pr[N\ge rL]\le 2^{-r}\qquad(r\ge1).
   \label{eq:evolving-tail}
\end{equation}
For a desired scalar error $\eta$, choose
$r=\lceil\log(2/\eta)\rceil$ and $K=rL$.  Modify the estimator to
output zero if it would require its $(K+1)$st random bit.  The resulting
estimator $Z_K$ uses at most $K=O(w^2\log(1/\eta))$ random bits and
satisfies
\begin{equation}
   |\mathbb E Z_K-\mathbb E Z|
      \le\Pr[N>K]\le\eta/2.
   \label{eq:evolving-truncation}
\end{equation}
It stores the current and next subsets, a layer index, and an active-bit
counter, using $O(w+\log n +\log\log(1/\eta))$ space.  Each update
can be implemented by scanning the transition table and subset bit
vectors, in polynomial time. 

Nisan--Zuckerman's theorem
\cite[Theorem~1]{nisanRandomnessLinearSpace1996} states, in particular,
that a space-$S$, time-$T$ algorithm using $\operatorname{poly}(S)$
random bits can be simulated in $O(S)$ space and
$T+\operatorname{poly}(S)$ time using $O(S)$ random bits, with output
statistical error at most $\exp(-\sqrt S)$.  Here $S$ must also be at
least the logarithm of the input length.  The binary transition table
has length $N_{\mathrm{in}}=O(nw\log w)$; including the start state,
accepting set, and the binary encoding of $b$ does not change the
logarithmic bound needed here.  For $\eta\ge w^{-O(1)}$, taking
$S=O(\log n+w)$ with a sufficiently large constant ensures both
$K=\operatorname{poly}(S)$ and simulation error at most $\eta/2$.
Enumerate all seeds of this simulation, count its accepting seeds, and
divide by the number of seeds.  The seed and counter both use $O(S)$
bits.  This computes a nonnegative dyadic number in $[0,1]$ within
$\eta$ of the acceptance probability, using $O(S)$ space and
$2^{O(S)}\operatorname{poly}(n,w)$ time. 
\end{proof}

\subsection{Extensions to smaller error and larger alphabet}
\label{subsec:evolving-extensions}

\begin{remark}[Smaller error]
\label{rem:evolving-precision}
For every fixed $\tau>0$ and integer $b\ge1$, a deterministic
algorithm, given a binary regular ROBP of length $n$ and width $w$,
outputs a column-stochastic dyadic matrix $\widehat W_{0..n}$ with
$\|\widehat W_{0..n}-W_{0..n}\|_1\le2^{-b}$ in space
\begin{equation}
   O_\tau\!\left(\log n+w+b^{1+\tau}\right)
   \label{eq:evolving-precision-space}
\end{equation}
and time
$2^{O_\tau(w)}\operatorname{poly}_\tau(n,w,2^b)$.
For $w=O(\log n)$, the running time is polynomial in $n$ and
$1/\varepsilon$, where $\varepsilon=2^{-b}$.
\end{remark}

\begin{proof}[Proof sketch]
Put $a=\lceil\log(n+2)\rceil+w$ and consider scalar error
$\eta=2^{-q}$ for an integer $q\ge1$.  Truncate the estimator after
$K=O(w^2(q+1))$ random bits, with error at most $\eta/2$.
Skipping the deterministic updates between random-bit reads gives
a binary ROBP of length $K$ and width at most
\[
   V=(n+1)2^w+2.
\]
A state records the layer and current subset, with two absorbing
states for the output after halting.  Each transition is computable in $O(a)$ space
and $\operatorname{poly}(n,w)$ time.

We evaluate this short ROBP by repeatedly halving its length while
keeping its input alphabet small.  After padding the length to a
power of two, each round merges every two consecutive layers into
one.  The merged layer reads a pair of current symbols and performs
the two transitions in sequence, so this step preserves the average
transition matrix exactly.  We then apply
\Cref{lem:reducing-alphabet-size-with-sampler} to find an offline seed of a sampler,
shared by all merged transition layers, that the sampler approximates their average
transitions.  We store its fixed description and use its short seed
as the input symbol of the new ROBP.  Repeating these two steps
leaves a single layer, whose acceptance probability is computed by
enumerating its input seeds.

There are at most $\ell=\lceil\log(K+1)\rceil$ rounds.  Allocate
matrix error $\eta/(2\ell)$ to each application of the sampler lemma
and put $d=O(\log((K+2)V/\eta))=O(a+q)$.  Each new input symbol and
each fixed sampler description has $O(d)$ bits.  Thus all descriptions
occupy $O(d\ell)$ space, while each sampler search and the final
seed enumeration range over $2^{O(d)}$ strings.  Evaluating a reduced
transition makes at most $O(K)$ original transition queries and takes
polynomial time in $K,n,w,d$.  The accumulated sampling error is at
most $\eta/2$; together with truncation, the total error is at most
$\eta$.  The resulting algorithm uses space
\[
   O\!\left((a+q)\log(a+q)\right),
\]
and time $2^{O(w)}\operatorname{poly}(n,w,2^q)$.

The sampler construction works at every precision.  To obtain
\eqref{eq:evolving-precision-space}, we combine it with
Nisan--Zuckerman, which avoids the logarithmic space factor when
$q$ is small.  If $q\le a^{2/(2+\tau)}$, then
$K=\operatorname{poly}(a)$, and Nisan--Zuckerman with
$S=c_\tau a$ has error
$\exp(-S^{2/(2+\tau)})\le\eta/2$ for a sufficiently large constant
$c_\tau$.  Enumerating its $O_\tau(a)$-bit seeds uses
$O_\tau(a)$ space and $2^{O_\tau(a)}\operatorname{poly}(n,w)$ time.
For $q>a^{2/(2+\tau)}$, use the sampler construction.  Now
$a<q^{1+\tau/2}$, so its space bound is
\[
   (a+q)\log(a+q)
      =O_\tau(q^{1+\tau/2}\log(q+2))
      =O_\tau(q^{1+\tau}).
\]
Thus, selecting between the two simulations gives space
$O_\tau(a+q^{1+\tau})$ and time
$2^{O_\tau(w)}\operatorname{poly}_\tau(n,w,2^q)$.

\end{proof}

\begin{corollary}[Polylogarithmic alphabets]
\label{cor:evolving-alphabet}
Fix constants $A,C>0$.  Given a regular ROBP of length $n$, width
$w=O(\log n)$, and alphabet size $m\le(\log(n+2))^A$, a start state,
an arbitrary accepting set, and an integer $b\ge1$ with
$\varepsilon=2^{-b}\ge(\log(n+2))^{-C}$, a deterministic algorithm
approximates its acceptance probability to additive error
$\varepsilon$ in $O(\log n)$ space and $\operatorname{poly}(n)$ time.
The same bounds suffice to output a column-stochastic dyadic matrix
$\widehat W_{0..n}$ satisfying
$\|\widehat W_{0..n}-W_{0..n}\|_1\le\varepsilon$.
\end{corollary}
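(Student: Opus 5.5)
Our plan is to reduce the alphabet-$m$ case to the binary evolving-set estimator of \Cref{thm:evolving-regular} by generalizing the estimator rather than binarizing the program. Binarizing each layer into $\lceil\log m\rceil$ binary layers would not in general keep the intermediate layers regular. We therefore work directly with alphabet $\Sigma$, $|\Sigma|=m$.

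\textbf{The estimator.} Start from $S^{(0)}=F$ and process layers backwards, as before. For a state $u$ in layer $t-1$, let
\[
c(u)=|\{\sigma:\delta_t(u,\sigma)\in S^{(j)}\}|\in\{0,\ldots,m\}.
\]
Draw a uniform threshold $\theta\in\{1,\ldots,m\}$ and set
\[
S^{(j+1)}=\{u:c(u)\ge\theta\}.
\]
Then $\Pr[u\in S^{(j+1)}]=c(u)/m=(W_t^\top\mathbf 1_{S^{(j)}})_u$, so the output $\mathbf 1_{\{s\in S^{(n)}\}}$ is an unbiased estimator of $\mathbf 1_F^\top W_{0..n}e_s$.

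\textbf{Randomness budget.} Regularity gives $\sum_u c(u)=m|S^{(j)}|$, so $\E|S^{(j+1)}|=|S^{(j)}|$ and the sizes form a martingale. With $H(x)=x(w-x)$, the expected decrease of $H$ equals the conditional variance of $|S^{(j+1)}|$. If the threshold outcomes are not all equal, the two extreme sets differ by at least one element, and the variance is at least $1/m^2$ (a crude bound suffices). Randomness is used, $\lceil\log m\rceil=O(\log\log n)$ bits per step, only when the candidate sets are not all equal. The expected number of such steps is therefore at most $m^2w^2/4=\operatorname{polylog}(n)$.

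For the variance bound I would argue the following. The sets $\{u:c(u)\ge\theta\}$ are nested in $\theta$ and their sizes average to $|S^{(j)}|$. If they are not all equal, some consecutive pair differs in size, which gives a variance of at least $1/m^2$ (roughly, two values at distance $\ge 1$ each carrying mass $\ge 1/m$). I expect this step to be the main obstacle: obtaining a clean lower bound on the variance for the nested, nonconstant family, and checking that regularity, not just double stochasticity of the average, is what makes the size martingale exact.

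\textbf{Truncation and derandomization.} Then follow the proof of \Cref{thm:evolving-regular} verbatim. The conditional Markov argument gives a geometric tail, so truncating after $O(m^2w^2\log(1/\varepsilon))\cdot O(\log m)=\operatorname{polylog}(n)$ random bits costs error $\varepsilon/2$. The estimator's space is $O(w+\log n)=O(\log n)$, since it stores two subsets, a layer index, a counter, and the threshold. Because the random-bit count is $\operatorname{poly}(S)$ with $S=O(\log n)$, Nisan--Zuckerman \cite[Theorem~1]{nisanRandomnessLinearSpace1996} simulates it with $O(\log n)$ seed bits and error $\exp(-\sqrt S)\le\varepsilon/2$. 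Here $\varepsilon\ge(\log n)^{-C}$, so taking the constant in $S$ large enough suffices. Enumerating the seeds gives $O(\log n)$ space and $\poly(n)$ time.

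\textbf{Matrix output.} For the matrix output, run the scalar procedure for each pair $(v,\{u\})$ with error $\varepsilon/w$; note $\log(w/\varepsilon)=O(\log\log n)$, so the same regime applies. Each output column is then within $\varepsilon$ in $\ell_1$. Rounding the estimates to dyadic values and repairing the column sums, for example by assigning the deficit to one entry after clipping to $[0,1]$, gives a column-stochastic dyadic matrix at the cost of at most a constant-factor increase in error, absorbed by rescaling $\varepsilon$.
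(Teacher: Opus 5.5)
Your proposal is essentially the paper's proof: it also moves $S$ to a uniformly chosen threshold set $\{u : h(u)\ge J\}$, uses regularity to make $|S|$ a martingale, charges each random choice against the drop of $H(x)=x(w-x)$, truncates, and applies Nisan--Zuckerman; your geometric-tail truncation and per-entry $\varepsilon/w$ matrix argument are harmless variants. Two small fixes are needed: the variance step you flag as the main obstacle is immediate, because the threshold-set sizes are integers with integer mean $|S^{(j)}|$, so if they are not all equal one is at least $|S^{(j)}|+1$ and another at most $|S^{(j)}|-1$, giving variance at least $2/m$ (the paper's bound); and a uniform $\theta\in[m]$ cannot be produced from exactly $\lceil\log m\rceil$ bits unless $m$ is a power of two, so, as the paper does, use rejection sampling (expected at most $2\lceil\log m\rceil$ bits per choice) and count all bits, including rejected attempts, toward the truncation budget.
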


\begin{proof}[Proof sketch]We describe the algorithm. Initialize the set $S$ as the accept node set.
For a current set $S$, let $h(u)$ be the number of outgoing
edges from node $u\in V$ into $S$, and define threshold sets
\[
   S_j=\{u:h(u)\ge j\}\qquad(j\in[m]).
\]
Choose $J$ uniformly from $[m]$ and update $S$ to $S_J$.  If all threshold
sets coincide, perform the update deterministically. Proceed until $S$ is a subset of the first layer of nodes. Each state $u$
belongs to the new set with probability $h(u)/m$. Hence a similar induction as the binary case show that, the membership of start node in $S$
is an unbiased estimator of acceptance probability.  Regularity gives $\mathbb E[|S_J|\mid S]=|S|$.
Whenever we really need randomness to choose a threshold set, the sets $S_j$ have
nonconstant sizes. In such case,
\[
   \operatorname{Var}(|S_J|\mid S)
      =\frac1m\sum_{j=1}^m(|S_j|-|S|)^2\ge\frac2m.
\]
The potential $H(x)=x(w-x)$ drops by this conditional variance.
Since $H\le w^2/4$, the expected number of random choices
is at most $mw^2/8$.

Rejection sampling generates a uniformly sampled set using at most
$2\lceil\log m\rceil$ fair bits in expectation.  Thus the total
expected number of fair bits is $O(mw^2\log m)$.
Counting every bit, including rejected attempts, truncate after
$O(mw^2\log m/\eta)$ bits to incur error at most $\eta/2$ by
Markov's inequality.  Under the stated bounds, taking
$\eta=\varepsilon/(8w)$ gives $\operatorname{polylog}(n)$ random
bits, $O(\log n)$ workspace, and polynomial running time.
Nisan--Zuckerman reduces the randomness to $O(\log n)$ bits with
error at most $\eta/2$.  Enumerating its seeds gives the 
estimate.
\end{proof}

\section{Reachability in regular ROBPs}
\label{sec:regular-positivity}

Let $B$ be a binary regular ROBP of length $n$ and width $w$, with start state $s$ and accepting set $F\subseteq[w]$. We consider the problem of deciding whether an accepting state is reachable from $s$, or equivalently, whether $\mathbf 1_F^\top W_{0..n}e_s>0$.

\begin{theorem}[Reachability in narrow regular ROBPs]
\label{thm:regular-positivity-sc}
Fix a constant $C>0$. Given a binary regular ROBP of length $n$ and width $w$ satisfying $\log^2 w\le C\log n$, a start state $s$, and an accepting set $F$, a deterministic algorithm decides whether $\mathbf 1_F^\top W_{0..n}e_s>0$ in space
\begin{equation}
   O_C\!\left(\log n+
      \frac{\log n\log w}
           {\max\!\left\{1,\log\!\left(\frac{\log n}{\log^2 w}\right)\right\}}\right)
   \label{eq:regular-positivity-sc-space}
\end{equation}
and time $\operatorname{poly}_C(n,w)$.
\end{theorem}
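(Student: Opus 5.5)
The plan follows the technical overview. The first step is a logspace relabeling of each layer so that its support becomes the support of $(I+R_t)/2$ for a permutation $R_t$. Since $W_t$ is doubly stochastic with entries in $\{0,1/2,1\}$, the support graph of layer $t$ is a $2$-regular bipartite multigraph between $V_{t-1}$ and $V_t$. It therefore decomposes into even cycles, possibly with doubled edges. Walking each cycle, which only needs the cycle's minimal vertex and a current position, lets us choose a perfect matching $P_t$ (one edge class) and its complement $Q_t$ in logspace. Reachability only depends on supports, so we may rename the states of $V_t$ via $P_t$, composing the renamings along the prefix. The renamings act as permutations, so this is cumulative but we never store it. Instead we apply the following trick: ``staying'' means following $P_t$, and ``moving'' means following $Q_t$; in the renamed coordinates $Q_t$ becomes a permutation $R_t$. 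To avoid storing cumulative permutations, I would define a state at time $t$ as an element of $V_t$ directly and treat $P_t$-edges as waits. The key property needed is that a walk $u\to\cdots\to u$ in terms of the equivalence ``identified along $P$-edges'' can be shortcut. Composing the $P$-matchings gives a canonical bijection $\phi_{t}:V_0\to V_t$; two states at different times are ``the same'' iff they correspond under $\phi$. I would verify that $\phi_t(v)$ is computable in $O(\log n+\log w)$ space by iterating through layers, which costs only time. With this, a path between labels that revisits a label can be replaced by waiting, since waiting is always available. Hence any reachable label is reachable with at most $w-1$ label changes.

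The second step is to define earliest-arrival functions. For labels $u,v$ and times $t$, let $E^m_{v,u}(t)$ be the earliest time $\ge t$ at which label $v$ can be occupied, starting from label $u$ at time $t$ and using at most $m$ changes ($\infty$ if impossible). $E^1$ is computed by scanning layers from $t$ for the first $R$-move from $u$ to $v$, or by returning $t$ itself when $u=v$. Because waiting is free, arrival sets are upward-closed in time. This gives $E^{r+m}_{v,u}(t)=\min_z E^r_{v,z}(E^m_{z,u}(t))$, and more generally the same recurrence for an $a$-fold composition with intermediate tuple $(z_1,\dots,z_{a-1})$. Acceptance holds iff some $f\in F$ satisfies $E^{M}_{\phi\text{-label}(f),s}(0)\le n$ with $M\ge w-1$. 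Since arrival at time $\le n$ implies the walk can wait until $n$, the acceptance test is exactly this inequality.

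The third step is the recursive evaluation and its accounting. Set $a=\max\{2,\lceil\log n/\log^2 w\rceil\}$ and depth $d=\lceil\log_a(w-1)\rceil$, so $E^{a^d}$ covers $w-1$ changes. To compute $E^{a^{k}}_{v,u}(t)$, we enumerate the $w^{a-1}$ tuples, sequentially evaluate $a$ children of level $k-1$ while keeping the running time value, and maintain the minimum. Each level stores $O(a\log w+\log n)$ bits, so the total space is $O(d(a\log w+\log n))$. Time is $(aw^{a-1})^d\poly(n,w)=2^{O(d\,a\log w)}\poly(n,w)$. With $d\approx\log w/\log a$ this is $2^{O(a\log^2 w/\log a)}$, which is polynomial given $a\log^2w=O(\log n)$. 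The space becomes $O(\log n+\log n\log w/\log a)$ using $a\log w\cdot d\le a\log^2 w/\log a\lesssim\log n$, matching the bound in the theorem.

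The main obstacle I expect is the first step. We must make sure the $(I+R)/2$ normal form is obtained uniformly and implicitly in logspace, with consistent labels across layers, and that the shortcut argument for revisited labels is correct in the presence of time-dependent $R_t$.
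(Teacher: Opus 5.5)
Your proposal is correct and follows essentially the same route as the paper. The common steps are:
\begin{itemize}
\item the cycle-decomposition relabeling to the lazy form $(I+R_t)/2$, with the cumulative bijection $\phi_t=P_t\cdots P_1$ evaluated on demand;
\item the shortcut argument that bounds the number of label changes by $w-1$;
\item the earliest-arrival composition $E^{r+m}_{v,u}(t)=\min_z E^r_{v,z}(E^m_{z,u}(t))$, justified by monotonicity;
\item the $a$-ary recursion with $a=\Theta(\max\{2,\log n/\log^2 w\})$ and depth $\lceil\log_a(w-1)\rceil$.
\end{itemize}
Your resource accounting, which uses $a\log^2 w=O_C(\log n)$ and $\log a\ge\max\{1,\log(\log n/\log^2 w)\}$, matches the paper's parameter setting up to inessential rounding choices.
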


\begin{corollary}[Polylogarithmic width]
\label{cor:regular-positivity-polylog}
For every fixed $K>0$, reachability in binary regular ROBPs of width $w\le(\log n)^K$ is decidable in space $O_K(\log n)$ and time $\operatorname{poly}_K(n)$.
\end{corollary}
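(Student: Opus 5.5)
The statement to prove is \Cref{cor:regular-positivity-polylog}. I would derive it directly from \Cref{thm:regular-positivity-sc}, treating that theorem as a black box. The argument is a parameter check in two regimes.

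First, verify the hypothesis. If $w\le(\log n)^K$, then $\log w\le K\log\log n$ and so $\log^2 w\le K^2(\log\log n)^2$. For all $n$ above some threshold $n_0(K)$, this is at most $\log n$. Taking $C:=1$ therefore meets the condition $\log^2 w\le C\log n$ for large $n$. For the finitely many $n\le n_0(K)$, the width is bounded by a constant depending only on $K$. In that case reachability can be decided by brute-force forward propagation of the reachable set layer by layer, which uses $O_K(\log n)$ space and $\poly_K(n)$ time. Alternatively, one can enlarge $C$ to a constant depending on $K$ so the hypothesis holds for every $n\ge 2$.

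Next, bound the space expression \eqref{eq:regular-positivity-sc-space}. Write $\Lambda:=\log\bigl(\log n/\log^2 w\bigr)$. There are two cases.
\begin{itemize}
\item If $w\le 2$, then $\log w\le 1$ and the second term is at most $\log n$.
\item Otherwise, $\log^2 w\le K^2(\log\log n)^2$, so $\Lambda\ge \log\log n-2\log K-2\log\log\log n\ge\tfrac12\log\log n$ for $n$ large in terms of $K$. The second term is then at most $\log n\cdot K\log\log n/(\tfrac12\log\log n)=2K\log n$.
\end{itemize}
In both cases the total space is $O_K(\log n)$. The running time is $\poly_C(n,w)=\poly_K(n)$, since $w\le(\log n)^K\le n$ and $C$ depends only on $K$.

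The only step needing care is the asymptotic inequality $\Lambda\ge\tfrac12\log\log n$, which is uniform only for $n$ beyond a $K$-dependent threshold. I would handle the small-$n$ range by the constant-width brute force described above, or equivalently by absorbing it into the $O_K$ constant.
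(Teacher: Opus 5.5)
Your proposal is correct and follows the paper's proof. Both apply \Cref{thm:regular-positivity-sc} with a constant $C$ depending on $K$, and both observe that $\log w/\max\{1,\log(\log n/\log^2 w)\}=O_K(1)$ because $\log w=O_K(\log\log n)$. Your separate treatment of $w\le 2$ and of the $n$ below a $K$-dependent threshold only spells out details the paper absorbs into the $O_K$ notation.
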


\begin{proof}
Here $\log w=O_K(\log\log n)$, so the hypothesis of \Cref{thm:regular-positivity-sc} holds for a constant depending on $K$. Moreover,
\[
   \frac{\log w}
        {\max\!\left\{1,\log\!\left(\frac{\log n}{\log^2 w}\right)\right\}}=O_K(1).
\]
Substituting in \eqref{eq:regular-positivity-sc-space} gives the claim.
\end{proof}

For $w=O(\log n)$, storing the set of reachable states already gives a logarithmic-space algorithm. \Cref{cor:regular-positivity-polylog} extends this width range to every fixed power of $\log n$. Braverman, Rao, Raz, and Yehudayoff~\cite{bravermanPseudorandomGeneratorsRegular2014} showed that the strings of Hamming weight less than $w$ form a hitting set for width-$w$ regular ROBPs.

\begin{remark}[NL-completeness]
\label{rem:regular-reachability-nl}
Reachability in binary regular ROBPs is $\mathrm{NL}$-complete even when $w=O(n)$. Thus an SC algorithm for arbitrary polynomial width would imply $\mathrm{NL}\subseteq\mathrm{SC}$.

For hardness, let $G$ be a directed graph with $N$ vertices and $M\ge1$ edges, with designated vertices $s,t$. Scan its edges $N$ times in a fixed order. For each occurrence of an edge $u\to v$, introduce a fresh auxiliary state $a$ and append two layers. Their $0$-transitions are identities, and their $1$-transitions are the transpositions $(u\,a)$ and $(a\,v)$, respectively. Each auxiliary state is fixed outside its designated pair of layers. Both transitions in every layer are permutations, so the ROBP is regular.

The state $a$ is unreachable before its designated pair of layers and cannot be left afterward. Hence this pair preserves all reachable original vertices and makes $v$ reachable if $u$ is reachable; no other original vertex becomes reachable. After $N$ scans, the reachable original vertices are exactly those reachable from $s$ in $G$, since every reachable vertex has a simple path of length at most $N-1$. Taking $s$ as the start state and $t$ as the sole accepting state gives a logspace reduction with length $n=2NM$ and width $w=N+NM=O(n)$.
\end{remark}

\subsection{Relabeling the states}
\label{subsec:positivity-lazy-coordinates}

We identify a permutation with its permutation matrix, so that $Pe_u=e_{P(u)}$. Although the two labeled transitions of a binary regular layer need not be permutations, its edges can be partitioned into two perfect matchings.

\begin{lemma}[State relabeling]
\label{lem:positivity-lazy-coordinates}
For a binary regular ROBP, there are permutations $P_i,Q_i$ satisfying $W_i=(P_i+Q_i)/2$. Define
\begin{equation}
   H_0=I,
   \qquad H_i=P_iH_{i-1},
   \qquad R_i=H_i^{-1}Q_iH_{i-1}.
   \label{eq:positivity-coordinate-permutations}
\end{equation}
Then
\begin{equation}
   C_i:=H_i^{-1}W_iH_{i-1}=\frac{I+R_i}{2},
   \qquad W_{0..n}=H_nC_n\cdots C_1.
   \label{eq:positivity-lazy-coordinates}
\end{equation}
Images and preimages under $P_i,Q_i,H_i,R_i$ can be computed in $O(\log(nw))$ space and $\operatorname{poly}(n,w)$ time. The same bounds hold for membership in $F':=H_n^{-1}(F)$.
\end{lemma}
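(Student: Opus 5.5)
The plan has three parts: decompose each regular binary layer into two perfect matchings, verify the telescoping identity algebraically, and make the matchings canonically computable in small space.

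\emph{Decomposition.} Fix a layer $i$ and form the bipartite multigraph from $V_{i-1}$ to $V_i$ whose edges are the $2w$ labeled transitions. By regularity every node in $V_{i-1}$ has out-degree $2$ and every node in $V_i$ has in-degree $2$, so the graph is $2$-regular bipartite. By K\"onig's theorem (Hall's theorem applied twice) its edge set splits into two perfect matchings $P_i,Q_i$. Viewed as permutation matrices sending $e_v$ to the matched $e_u$, they satisfy $B_i(0)+B_i(1)=P_i+Q_i$, and hence $W_i=(P_i+Q_i)/2$.

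\emph{Identity.} With $H_i=P_iH_{i-1}$ we have $H_i^{-1}P_iH_{i-1}=I$. Substituting the definition of $R_i$ gives $C_i=H_i^{-1}W_iH_{i-1}=(I+R_i)/2$. Telescoping then yields
\[C_n\cdots C_1=H_n^{-1}W_n\cdots W_1H_0=H_n^{-1}W_{0..n},\]
so $W_{0..n}=H_nC_n\cdots C_1$.

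\emph{Making the matchings explicit.} A $2$-regular bipartite multigraph is a disjoint union of even cycles; a doubled edge counts as a $2$-cycle. I choose the matchings canonically by walking each cycle and assigning edges to $P_i$ and $Q_i$ alternately. To decide whether a given edge belongs to $P_i$, I traverse its cycle starting from a canonical anchor, namely the smallest-index node of $V_{i-1}$ on the cycle leaving via its $0$-labeled edge, and use the parity of the edge's position. This traversal uses $O(\log w)$ space and $O(w)$ steps, and it gives images and preimages under $P_i$ and $Q_i$ in logspace.

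To evaluate $H_i(u)$, I compose $P_1,\dots,P_i$ sequentially, storing only the current state and the layer index, which takes $O(\log(nw))$ space. The preimage $H_i^{-1}(u)$ is computed the same way by applying $P_i^{-1},\dots,P_1^{-1}$. Then $R_i(u)=H_i^{-1}(Q_i(H_{i-1}(u)))$ is a composition of a constant number of such calls, reusing space. Membership $u\in F'=H_n^{-1}(F)$ is tested by computing $H_n(u)$ and checking whether it lies in $F$. All of these procedures run in $\poly(n,w)$ time.

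The main obstacle is the canonical, logspace-computable choice of the perfect matchings. A generic Hall's-theorem argument is not constructive in small space, so the cycle-parity construction is essential, and I must handle carefully the degenerate $2$-cycles created when both labeled edges from a node go to the same target.
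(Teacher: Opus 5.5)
Your proposal is correct and follows essentially the same route as the paper. The degree-$2$ bipartite layer graph is split into even cycles, including the doubled-edge $2$-cycles, and each cycle is colored alternately starting from a canonical anchor; your anchor (the least source node, leaving on its $0$-edge) coincides with the paper's ``least edge identifier.'' Then $H_i$, $H_i^{-1}$, $R_i$, and membership in $F'$ are evaluated by scanning the $P_t$ forward or backward while storing only the current state, exactly as in the paper. One step you leave implicit, and the paper states explicitly: finding the anchor itself requires a first pass around the cycle from the queried edge. That pass fits within the same $O(\log(nw))$ space.
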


\begin{proof}
Form the bipartite multigraph of layer $i$, with the source states on the left and destination states on the right. Every vertex has degree two, counting multiplicity, so each connected component is an even cycle. Alternating two colors around each component partitions its edges into two perfect matchings. Their permutation matrices are $P_i$ and $Q_i$, and each original edge contributes once to their sum.

To query these matchings consistently, identify an edge by its source and original binary label. Starting from the queried edge, traverse its cycle to find the least edge identifier. Restart from this edge in the left-to-right direction and assign alternating colors. A traversal step takes the other edge at the current endpoint. At a left vertex this edge is specified directly; at a right vertex it is found by scanning the transition table. A constant number of edge identifiers, a direction bit, and a parity bit suffice. The procedure also covers a component consisting of two parallel edges. It determines the color of any edge in logarithmic space and polynomial time, and hence answers image queries for both matchings. Preimages are found by scanning candidate sources.

The first identity in \eqref{eq:positivity-lazy-coordinates} follows from $H_i=P_iH_{i-1}$, and the product identity telescopes. To apply $H_i$ to one state, scan $P_1,\ldots,P_i$ in order, keeping the current state. To apply $H_i^{-1}$, scan their inverses in reverse order. Combining these routines gives the queries for $R_i$ and its inverse. Finally, $v\in F'$ if and only if $H_n(v)\in F$. These computations use logarithmic space and polynomial time.
\end{proof}

The matrices $C_i$ are obtained by relabeling the states in each layer. The matching colors may differ from the original input labels. After relabeling, a path can keep the same state label at every layer, and the acceptance probability is $\mathbf 1_{F'}^\top C_n\cdots C_1e_s$.

\begin{lemma}[Paths with few state changes]
\label{lem:positivity-short-witness}
Let $C_1,\ldots,C_n$ be nonnegative $w\times w$ matrices with positive diagonal, and let $0\le t\le t'\le n$. A path from $u$ at time $t$ to $v$ at time $t'$ is a sequence $x_t,\ldots,x_{t'}\in[w]$ such that $x_t=u$, $x_{t'}=v$, and $(C_i)_{x_i,x_{i-1}}>0$ for every $t<i\le t'$. A state change occurs at step $i$ if $x_i\ne x_{i-1}$.

If such a path exists, then there is one with at most $\min\{t'-t,w-1\}$ state changes.
\end{lemma}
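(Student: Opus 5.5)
We take a path $x_t,\ldots,x_{t'}$ from $u$ to $v$ with the fewest state changes, and show that it has at most $\min\{t'-t,w-1\}$ of them. The bound $t'-t$ is immediate, since a path has only $t'-t$ steps and each step contributes at most one change. For the bound $w-1$ we use a cycle-removal argument that relies on the positive diagonals.

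Suppose the minimal path has at least $w$ state changes, at steps $i_1<i_2<\cdots<i_k$ with $k\ge w$. Consider the states $x_t=x_{i_1-1}$, and after each change the states $x_{i_1},x_{i_2},\ldots,x_{i_k}$. This list contains $k+1\ge w+1$ states, so by pigeonhole two of them coincide. That is, there are indices $a<b$ among $\{t,i_1,\ldots,i_k\}$, at least one change apart, with $x_a=x_b=z$.

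We then build a new path $y$. It agrees with $x$ for $i\le a$, sets $y_i=z$ for $a\le i\le b$, and agrees with $x$ for $i\ge b$. Each step is legal:
\begin{itemize}
\item for $a<i\le b$ the step is a wait, and $(C_i)_{z,z}>0$ because the diagonal is positive;
\item the steps outside $(a,b]$ are the original steps of $x$, and they are unchanged.
\end{itemize}
The endpoints are still $u$ at time $t$ and $v$ at time $t'$.

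The new path has strictly fewer state changes. The interval $(a,b]$ contained at least one change of $x$, namely the change at step $b$, and it now contains none. No change is created at the boundaries, because $y_a=x_a$ and $y_b=x_b$. This contradicts the minimality of $x$, so the minimal path has at most $w-1$ changes.

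The only delicate point is the pigeonhole indexing. The states visited between consecutive changes are constant, so the list of "states after each change, together with the start state" captures every distinct segment of the path. A repeat in this list therefore yields a segment that can be collapsed to a wait. Once this indexing is set up correctly, the argument is routine.
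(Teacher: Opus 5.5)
Your proof is correct and follows essentially the same argument as the paper: take a path with the fewest state changes, and if the start state and the states entered after each change contain a repeat, collapse the intervening segment into waiting using the positive diagonal, which strictly reduces the number of changes. The paper first shows that a minimal path never re-enters a previously visited state and then counts distinct states, whereas you assume at least $w$ changes and apply pigeonhole directly; this is only a difference in presentation.
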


\begin{proof}
Choose a path $x_t,\ldots,x_{t'}$ with the fewest state changes. Suppose that it leaves some state and later returns to it. Then there are times $a<b$ with $x_a=x_b$ such that the segment from time $a$ to time $b$ contains a state change. Replace this segment by staying at $x_a$ throughout the same interval. Every replacement step is valid because $(C_i)_{x_a,x_a}>0$ for $a<i\le b$. The states at times $a$ and $b$ are unchanged, as are all transitions outside this interval. Thus the replacement gives a path with the same endpoints at the same times and fewer state changes, a contradiction.

Consequently, each state change enters a state that has not previously been visited. Including the initial state, the path therefore visits one distinct state for each state change, so there are at most $w-1$ changes. There are also at most $t'-t$ changes, since each of the $t'-t$ steps contributes at most one.
\end{proof}

\subsection{A recursive reachability algorithm}
\label{subsec:positivity-earliest-arrival}

Let $C_1,\ldots,C_n$ be nonnegative $w\times w$ matrices with positive diagonal, and let $\mathcal T:=\{0,\ldots,n\}\cup\{\infty\}$. For $u,v\in[w]$, define $E_{v,u}:\mathcal T\to\mathcal T$ by $E_{v,u}(\infty)=\infty$ and, for finite $t$,
\begin{equation}
   E_{v,u}(t):=
   \begin{cases}
      t,&v=u,\\
      \min\{i:t<i\le n,\ (C_i)_{v,u}>0\},&v\ne u,
   \end{cases}
   \label{eq:positivity-one-change}
\end{equation}
where the minimum of the empty set is $\infty$. Thus $E_{v,u}(t)$ is the earliest arrival time at $v$ when starting from $u$ at time $t$ and using at most one state change. Each function $E_{v,u}$ is non-decreasing because starting later only removes
available transitions.

For matrices $X,Y$ whose entries are nondecreasing functions $\mathcal T\to\mathcal T$ preserving $\infty$, define
\begin{equation}
   (XY)_{v,u}(t):=\min_{z\in[w]}X_{v,z}\bigl(Y_{z,u}(t)\bigr).
   \label{eq:positivity-function-product}
\end{equation}
This product describes first reaching an intermediate state $z$ according to $Y$ and then continuing to $v$ according to $X$, choosing $z$ to minimize the arrival time. The multiplication is associative.

For every $m\ge1$, $(E^m)_{v,u}(t)$ is the earliest arrival time at $v$ from $u$ at time $t$ using at most $m$ state changes. This follows by induction: multiplication by $E$ appends at most one state change, with $E_{u,u}(t)=t$ allowing no change.

Put $h:=\min\{n,w-1\}$. For the program obtained in \Cref{lem:positivity-lazy-coordinates}, \Cref{lem:positivity-short-witness} gives, for every $m\ge h$,
\begin{equation}
   \mathbf 1_F^\top W_{0..n}e_s>0
   \quad\Longleftrightarrow\quad
   \exists v\in[w]:\ H_n(v)\in F
   \text{ and }(E^m)_{v,s}(0)\le n.
   \label{eq:positivity-arrival-criterion}
\end{equation}
Only one function value is needed at a time. The following lemma evaluates it without storing the function table.

\begin{lemma}[Recursive evaluation]
\label{lem:positivity-arity-recursion}
Let $a\ge2$ be an integer and let $d$ be the least nonnegative integer satisfying $a^d\ge h$. Suppose testing whether $(C_i)_{v,u}>0$ takes space $S_0$ and time $T_0\ge1$. One value $(E^{a^d})_{v,u}(t)$ can be computed in space
\begin{equation}
   O\bigl((d+1)(\log n+a\log w)\bigr)+S_0
   \label{eq:positivity-arity-space}
\end{equation}
and time at most
\begin{equation}
   (aw^{a-1})^d\operatorname{poly}(n,w,a)T_0.
   \label{eq:positivity-arity-time}
\end{equation}
\end{lemma}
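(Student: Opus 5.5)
We prove the lemma by induction on the recursion depth $d$, using associativity of the min-composition product \eqref{eq:positivity-function-product}. Since $E^{a^{k}}=(E^{a^{k-1}})^a$, unfolding the product gives
\[
   (E^{a^{k}})_{v,u}(t)
   =\min_{z_1,\ldots,z_{a-1}\in[w]}
   (E^{a^{k-1}})_{v,z_{a-1}}\Bigl(\cdots
   (E^{a^{k-1}})_{z_2,z_1}\bigl((E^{a^{k-1}})_{z_1,u}(t)\bigr)\cdots\Bigr).
\]
This holds because each entry is nondecreasing and preserves $\infty$, so minimizing stage by stage equals minimizing over all tuples of intermediate states. The procedure $\mathrm{Eval}(k,v,u,t)$ therefore works as follows. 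It enumerates the $w^{a-1}$ tuples $(z_1,\ldots,z_{a-1})$ in lexicographic order. For each tuple it computes a running time value $\tau_0=t$ and $\tau_j=\mathrm{Eval}(k-1,z_j,z_{j-1},\tau_{j-1})$ for $j=1,\ldots,a$, with $z_0=u$ and $z_a=v$, stopping early if some $\tau_j=\infty$. It keeps the minimum of $\tau_a$ over all tuples.

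The base case $k=0$ is a single evaluation of $E_{v,u}(t)$ via \eqref{eq:positivity-one-change}. If $v=u$, it returns $t$. Otherwise it scans $i=t+1,\ldots,n$ and tests whether $(C_i)_{v,u}>0$. This needs a counter of $O(\log n)$ bits plus $S_0$ space, and takes $\poly(n)T_0$ time.

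For the space bound, each recursion level stores its arguments $(k,v,u,t)$, the current tuple ($(a-1)\log w$ bits), the index $j$, the current $\tau_j$, and the running minimum. Time values live in $\mathcal T$, so each uses $O(\log n)$ bits. A level therefore uses $O(\log n+a\log w)$ bits. The recursion stack has $d+1$ frames, and the oracle workspace $S_0$ is reused rather than stacked, which gives \eqref{eq:positivity-arity-space}. When $d=0$ (so $h\le1$), the bound is just the base case.

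For the time bound, let $T(k)$ be the cost of one call at level $k$. Each call makes at most $a\cdot w^{a-1}$ child calls, plus $\poly(w,a,\log n)$ bookkeeping per child call. Hence $T(k)\le aw^{a-1}\bigl(T(k-1)+\poly(n,w,a)\bigr)$ and $T(0)\le\poly(n)T_0$. Unrolling gives $T(d)\le(aw^{a-1})^d\poly(n,w,a)T_0$; the geometric sum of bookkeeping terms is absorbed since $aw^{a-1}\ge2$.

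The main obstacle is justifying the unfolded formula: taking the minimum over whole tuples must agree with the iterated min-composition. We would argue that by monotonicity, $X_{v,z}(\min_y f(y))=\min_y X_{v,z}(f(y))$ for nondecreasing $X$, so min commutes with composition and associativity lets the nested minima be pulled outside. The space accounting also needs care: the oracle space $S_0$ must be additive and not multiplied by $d$, which holds because only the innermost call queries $C_i$ at any moment.
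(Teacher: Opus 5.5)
Your proposal is correct and follows essentially the same argument as the paper: the same recursive procedure that enumerates the $w^{a-1}$ intermediate tuples and threads one running time value through $a$ sequential child calls, justified by associativity and by min commuting with nondecreasing maps. The accounting also matches the paper's: an $O(\log n+a\log w)$-bit frame per level, oracle space $S_0$ reused rather than stacked, and branching factor $aw^{a-1}$ over depth $d$ for the time bound. Your explicit unrolling of the recurrence, with the geometric bookkeeping term, is just a slightly more detailed version of the paper's node-count argument.
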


\begin{proof}
Define $\operatorname{Eval}(j,u,v,t):=(E^{a^j})_{v,u}(t)$. If $t=\infty$, return $\infty$. At level $j=0$, compute \eqref{eq:positivity-one-change} by a sequentially scanning the layers. At level $j>0$, enumerate all tuples $(z_1,\ldots,z_{a-1})\in[w]^{a-1}$, with $z_0=u$ and $z_a=v$. For each tuple initialize $\tau=t$ and sequentially perform
\begin{equation}
   \tau\leftarrow\operatorname{Eval}(j-1,z_{r-1},z_r,\tau),
   \qquad r=1,\ldots,a.
   \label{eq:positivity-recursive-update}
\end{equation}
Return the minimum final value of $\tau$ over all tuples.

For correctness, put $G=E^{a^{j-1}}$. Associativity and distributivity in \eqref{eq:positivity-function-product} give
\[
   (G^a)_{v,u}(t)
   =\min_{z_1,\ldots,z_{a-1}}
      G_{z_a,z_{a-1}}\!\left(\cdots
      G_{z_2,z_1}\bigl(G_{z_1,z_0}(t)\bigr)\cdots\right).
\]
The recursive computation in \eqref{eq:positivity-recursive-update} evaluate exactly this composition for the current tuple. Taking the minimum proves the claim.

At each recursion level, we store the endpoints, the tuple of intermediate states, the loop indices, and three time values: the starting time, the current time $\tau$, and the minimum arrival time found so far. Together, these require $O(\log n+a\log w)$ bits. The recursive calls for each tuple are evaluated sequentially, with each returned value replacing $\tau$, so their return values need not be stored simultaneously. Since the recursion has $d+1$ levels, its total workspace is $O((d+1)(\log n+a\log w))$. The additional $S_0$ bits needed to test $(C_i)_{v,u}>0$ are used only at a leaf and reused across all such tests.

Each non-leaf call enumerates $w^{a-1}$ tuples and makes at most $a$ recursive calls per tuple. The recursion tree therefore has branching factor at most $aw^{a-1}$ and depth $d$, giving $O((aw^{a-1})^d)$ nodes in total. Each leaf takes at most $\operatorname{poly}(n,w,a)T_0$ time, while tuple enumeration and loop updates require only $\operatorname{poly}(n,w,a)$ work per child. Since $T_0\ge1$, the total running time satisfies \eqref{eq:positivity-arity-time}.
\end{proof}

For binary regular ROBPs, \Cref{lem:positivity-lazy-coordinates} supplies $S_0=O(\log(nw))$ and $T_0=\operatorname{poly}(n,w)$. Enumerating the final state and checking $H_n(v)\in F$ in \eqref{eq:positivity-arrival-criterion} changes only the polynomial factor in time. Thus \eqref{eq:positivity-arity-space} and~\eqref{eq:positivity-arity-time} also bound the reachability algorithm.

\subsection{Setting the parameters}
\label{subsec:positivity-parameter-bounds}

We now choose $a$ so that the algorithm runs in polynomial time.

\begin{proof}[Proof of \Cref{thm:regular-positivity-sc}]
Set
\[
   a:=\max\!\left\{2,
      \left\lfloor\frac{\lceil\log n\rceil}
                            {\lceil\log w\rceil^2}\right\rfloor\right\}.
\]
Since $h\le w-1$, the depth in \Cref{lem:positivity-arity-recursion} satisfies
\[
   d=O\!\left(1+
      \frac{\log w}
           {\max\!\left\{1,\log\!\left(\frac{\log n}{\log^2 w}\right)\right\}}\right).
\]
The choice of $a$ also gives
\begin{align*}
   a\log w
      &=O\!\left(\log w+\frac{\log n}{\log w}\right),\\
   a\log^2 w
      &=O\bigl(\log^2 w+\log n\bigr)
       =O_C(\log n).
\end{align*}
Substituting these bounds in \eqref{eq:positivity-arity-space} gives \eqref{eq:regular-positivity-sc-space}. For the time bound, $d=O(\log w)$ and $\log a\le a\log w$, so
\[
   \log\bigl((aw^{a-1})^d\bigr)
   \le d\bigl(\log a+a\log w\bigr)
   =O\bigl(a\log^2 w\bigr)=O_C(\log n).
\]
Also $a=O(\log n)$, and therefore the remaining factor in \eqref{eq:positivity-arity-time} is polynomial in $n,w$. The complete algorithm runs in $\operatorname{poly}_C(n,w)$ time.
\end{proof}

For example, when $\log w=\Theta(\log^\alpha n)$ for a fixed $0<\alpha<1/2$, the space bound is $O_\alpha(\log^{1+\alpha} n/\log\log n)$, with polynomial time. At $\log w=\Theta(\sqrt{\log n})$ it becomes $O(\log^{3/2} n)$, still with polynomial time.

If polynomial time is not required, a larger value of $a$ yields the following bound without a restriction on $n,w$.

\begin{corollary}[A space bound for arbitrary width]
\label{cor:regular-positivity-all-parameters}
Reachability in binary regular ROBPs is decidable in space
\begin{equation}
   O\!\left(\log n+\log w+
      \frac{\log n\log w}
           {\max\!\left\{1,\log\!\left(\frac{\log n}{\log w}\right)\right\}}\right)
   \label{eq:regular-positivity-all-space}
\end{equation}
and time
\begin{equation}
   \operatorname{poly}(n,w)\,
   2^{O\!\left(\frac{\log n\log w}
      {\max\{1,\log(\log n/\log w)\}}\right)}.
   \label{eq:regular-positivity-all-time}
\end{equation}
\end{corollary}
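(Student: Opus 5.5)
We reuse the machinery behind \Cref{thm:regular-positivity-sc} without change, and only choose the arity $a$ differently, now dropping the polynomial-time requirement. Concretely, we apply \Cref{lem:positivity-lazy-coordinates} to relabel the program into the lazy form $C_i=(I+R_i)/2$. This gives $S_0=O(\log(nw))$ and $T_0=\poly(n,w)$. We then decide reachability through criterion \eqref{eq:positivity-arrival-criterion}, evaluating $(E^{a^d})_{v,s}(0)$ with \Cref{lem:positivity-arity-recursion}, where $d$ is the least integer with $a^d\ge h=\min\{n,w-1\}$. The final state $v$ is enumerated and membership $H_n(v)\in F$ is checked; this adds only $O(\log w)$ space and a polynomial factor in time. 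The resulting bounds are space
\[
O\bigl((d+1)(\log n+a\log w)\bigr)+O(\log(nw)),
\]
and time $(aw^{a-1})^d\poly(n,w,a)$.

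The plan is to balance the two summands $\log n$ and $a\log w$ inside each recursion level. We set
\[
a:=\max\!\left\{2,\left\lceil\frac{\log n}{\log w}\right\rceil\right\},
\]
so that $a\log w=O(\log n+\log w)$. Each level then costs $O(\log n+\log w)$. There are two cases for the depth.

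When $\log n\ge 4\log w$, we have $\log a=\Theta(\log(\log n/\log w))$. Since $h\le w$, the depth satisfies $d\le\lceil\log w/\log a\rceil=O\bigl(1+\log w/\log(\log n/\log w)\bigr)$. Multiplying by the per-level cost $O(\log n)$ gives $O\bigl(\log n+\log n\log w/\log(\log n/\log w)\bigr)$, which matches \eqref{eq:regular-positivity-all-space}.

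When $\log n<4\log w$, we have $a=O(1)$ and $d=O(\log h)$. Here we use $h\le n$, so $d=O(\min\{\log n,\log w\})$. The space is then $O\bigl(\log(nw)+\min\{\log n,\log w\}\cdot(\log n+\log w)\bigr)=O(\log n+\log w+\log n\log w)$, which is the claimed bound with denominator $1$. In this case it matters that we bound $d$ by $\log n$ rather than by $\log w$: otherwise a term $\log^2 w$ could appear, and it is not dominated when $w\gg n$. The small cases $h\le 1$ are handled directly at level $0$.

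For time, the recursion tree has $(aw^{a-1})^d$ leaves, i.e.\ $2^{d(\log a+(a-1)\log w)}$. We bound $\log a\le a\log w$, so the exponent is $O(d\,a\log w)$. By the same case analysis this is $O\bigl(d(\log n+\log w)\bigr)$, which is exactly the space expression minus the additive $\log(nw)$ term. That yields \eqref{eq:regular-positivity-all-time}; the extra $\poly(n,w,a)$ factor is absorbed because $a=O(\log n+2)$.

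The main obstacle is purely bookkeeping at the regime boundary $\log n\approx\log w$. There we must use $d\le\log_a h$ with $h\le\min\{n,w\}$ so the bound stays $O(\log n\log w)$ and never becomes $\log^2\max$. We also need the $\max\{1,\cdot\}$ in the denominator to interpolate correctly between the two cases, which we handle by taking $\log a\ge1$ throughout.
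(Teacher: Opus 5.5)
Your proposal is correct and follows essentially the same argument as the paper. Both plug an arity $a\approx\log n/\log w$ (at least $2$) into the recursive-evaluation lemma, and both bound the depth using $h\le\min\{n,w-1\}$ in a two-case analysis. The differences are cosmetic: you take $a=\lceil\log n/\log w\rceil$ rather than $\lfloor\lceil\log n\rceil/\lceil\log w\rceil\rfloor$, and you split the cases at $\log n=4\log w$ instead of at $n=w$.
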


\begin{proof}
When $n\ge w$, take
\[
   a:=\max\!\left\{2,
      \left\lfloor\frac{\lceil\log n\rceil}
                            {\lceil\log w\rceil}\right\rfloor\right\}.
\]
Then $a\log w=O(\log n)$ and
\[
   d=O\!\left(1+
      \frac{\log w}
           {\max\!\left\{1,\log\!\left(\frac{\log n}{\log w}\right)\right\}}\right).
\]
Substituting in \Cref{lem:positivity-arity-recursion} gives the space bound and a time exponent
\[
   O\!\left(\log n+
      \frac{\log n\log w}
           {\max\!\left\{1,\log\!\left(\frac{\log n}{\log w}\right)\right\}}\right).
\]
The $O(\log n)$ term is absorbed by $\operatorname{poly}(n,w)$. When $n<w$, use $a=2$. Now $h\le n$, so $d=O(\log n)$ and $\log n\le\log w$. The resulting space is $O(\log n\log w)$ and time is $\operatorname{poly}(n,w)2^{O(\log n\log w)}$, which agree with \eqref{eq:regular-positivity-all-space} and~\eqref{eq:regular-positivity-all-time} because $\max\{1,\log(\log n/\log w)\}=1$.
\end{proof}

\section{Logarithmic-space approximation of regular matrix powers}
\label{app:regular-powering}

Let $A$ be the common layer average of a binary regular ROBP, so $W_i=A$ in the notation of \Cref{sections/preliminary_2}. For a start state $s\in[w]$, an accepting set $F\subseteq[w]$, and an exponent $n$ in binary, write
\[
   p_n=\mathbf1_F^\top A^ne_s,
   \qquad
   R_n=\{v\in[w]:(A^n)_{v,s}>0\},
   \qquad
   q_n=\frac{|F\cap R_n|}{|R_n|}.
\]
We approximate the acceptance probability $p_n$ by the fraction $q_n$ of reachable states that are accepting.

\begin{theorem}[Approximation of regular matrix powers]
\label{thm:regular-powering-approximation}
Given $A,s,F$, and a $b$-bit integer $n\ge(w-1)w^2$, a deterministic algorithm computes the numerator and denominator of $q_n$ in space $O(\log(w+b))$ and time $\operatorname{poly}(w,b)$. For every integer $k\ge1$ such that $n\ge k(w-1)w^2$,
\begin{equation}
   |p_n-q_n|\le(5/48)^k.
   \label{eq:regular-powering-approximation}
\end{equation}
\end{theorem}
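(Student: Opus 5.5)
The plan follows the outline in the technical overview, in two independent halves: an analytic error bound via a forward evolving-set process, and a combinatorial logspace computation of $R_n$.

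\textbf{Structure of the support graph.} Let $G$ be the directed support graph of $A$: edge $u\to v$ iff $A_{v,u}>0$. Since $A$ is doubly stochastic, every vertex has equal in- and out-degree (counting the two labeled edges). Hence each weak component is strongly connected. Let $C$ be the component containing $s$ and $g$ its period. Then $C$ splits into cyclic classes $C^{(0)},\dots,C^{(g-1)}$, and every edge advances the class index by one. Double stochasticity forces all classes to have a common size $h$: the mass $A$ sends from one class into the next is $|C^{(i)}|$, and it equals $|C^{(i+1)}|$.

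\textbf{Growth lemma.} For a nonempty proper subset $S$ of a class, let $N^k(S)$ be the set reached in exactly $k$ steps. We claim $|N^k(S)|>|S|$ for some $k\le w-1$. Monotonicity $|N(S)|\ge|S|$ follows from regularity: edges leaving $S$ number $2|S|$ and each target has indegree $2$. If equality persisted for $w$ steps, then each step would be a perfect "closed" correspondence, since all in-edges of $N(S)$ come from $S$. By pigeonhole some set would then recur, and we would obtain a proper subset of a class closed under forward and backward moves within the walk. This contradicts strong connectivity together with the aperiodicity within a class. I expect this pigeonhole/closure argument to be the main obstacle, especially pinning down the bound $w-1$ rather than something like $w\cdot g$.

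\textbf{Mixing bound.} Run the forward evolving set $S_0=\{s\}$. At each step choose with probability $1/2$ either the set of vertices with both in-edges from $S_t$, or the set with at least one. As in \Cref{thm:evolving-regular}, $\mathbb E[\mathbf 1_{S_t}]/|S_t|$-type identities give $A^te_s=\mathbb E[\mathbf 1_{S_t}\cdot(\text{weight})]$. More concretely, use the Morris--Peres duality $A^te_s(v)=\mathbb E[\mathbf 1_{v\in S_t}]/|S_0|$ together with the martingale $|S_t|$, and bound $\|A^te_s-\mathrm{Unif}(C_t)\|_{TV}\le\mathbb E\Phi(S_t)/h$ with $\Phi(S)=|S|(h-|S|)$.

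The potential satisfies $\mathbb E[\Phi(S_{t+1})\mid S_t]=\Phi(S_t)-a^2$. By the growth lemma, within any window of $w-1$ steps where $S$ is nonempty and proper, $a\ge1$ at some step, so $\Phi$ drops in expectation by at least $1$. Since $\Phi\le h^2/4\le w^2/4$, over a block of $(w-1)w^2$ steps the probability of not having been absorbed at $\emptyset$ or the full class is at most a constant. Iterating over $k$ blocks by the Markov property gives a geometric bound. Tuning the constants (Markov with threshold $h^2/4$ and the $1/h$ factor) should yield $5/48$.

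Finally, once $n\ge(w-1)(h-1)$ the reachable set $R_n$ equals the whole class $C_n$, since the reachable set is the deterministic "$B$" branch and grows every $w-1$ steps. Then $q_n=\mathbf 1_F^\top\mathrm{Unif}(C_n)$, which gives \eqref{eq:regular-powering-approximation}.

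\textbf{Algorithm.} For each $m\le w$, build the clock graph on $[w]\times\mathbb Z_m$; it is balanced, so its weak components are strong. Decide $v\in R_n$ iff for every $m\le w$ the vertices $(s,0)$ and $(v,n\bmod m)$ are undirected-connected, using Reingold's logspace USTCON. Correctness for $n\ge w^2-1$ has two directions:
\begin{itemize}
\item A length-$n$ walk gives connectivity in every clock graph.
\item Conversely, take a simple cycle through $s$ of length $c\le w$ and the clock-$c$ path of length $\ell\le wc-1\le n$ with $\ell\equiv n\pmod c$, then pad with copies of the cycle.
\end{itemize}
Computing $n\bmod m$ takes $O(\log(w+b))$ bits, and counting $|R_n|$ and $|F\cap R_n|$ takes $O(\log w)$, all in $\poly(w,b)$ time.
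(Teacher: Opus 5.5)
Your outline follows the paper's proof: cyclic classes of the balanced support graph, a growth lemma, the forward evolving-set process with $\Phi(S)=|S|(h-|S|)$, and the clock-graph/Reingold test, which you reproduce essentially verbatim. However, the step you flag as the main obstacle is a genuine gap, and it is the core of the theorem, since the bound $w-1$ drives both $R_n=C_n$ for $n\ge(w-1)(h-1)$ and the mixing rate. Waiting for a set $T_i=N^i(S)$ to recur cannot give it. These are $|S|$-subsets of cyclic classes, and pigeonhole over them only guarantees a repetition after roughly $\binom{h}{|S|}$ steps. Your closure/aperiodicity contradiction is fine once a recurrence exists, but you cannot produce one within $w-1$ steps.

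The missing idea is to count refinements of a partition rather than sets. Start from the $d$ cyclic classes, with the class of $S$ split into $S$ and its complement, giving $d+1$ groups. Refine successively by membership in $T_1,\ldots,T_{w-1}$. Each nontrivial refinement increases the number of groups, which never exceeds $w$, so some refinement by $T_{t+1}$ is trivial. If all $|T_i|$ were equal, every edge $u\to v$ would satisfy $u\in T_i\Leftrightarrow v\in T_{i+1}$. At the trivial step this forces all edges entering a group to come from a single group. The quotient graph is then strongly connected with in-degree one, hence a single directed cycle on $r\ge d+1$ groups. Every closed walk then has length divisible by $r$, so $r\mid d$, a contradiction.

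Your route also does not reach the constant $5/48$. Converting the per-window drop into an absorption probability and applying Markov per block of $(w-1)w^2$ steps gives a non-absorption probability of about $1/4$ per block. Passing back to $\mathbb E\Phi(S_n)/h$ then costs up to a factor $h/4$, since $\Phi/h$ can be of order $h$ on the non-absorbed event. Even handled carefully, this yields roughly $(1/4)^k$, not $(5/48)^k$.

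You already have the fix. Because $\Phi(S_t)\le h^2/4$, the unit drop is a contraction:
$\mathbb E[\Phi(S_{t+w-1})\mid S_t]\le\Phi(S_t)-1\le(1-4/h^2)\Phi(S_t)$ for $h\ge2$, and this holds trivially when $S_t$ is empty or a full class. Iterate over $\lfloor n/(w-1)\rfloor\ge kw^2\ge kh^2$ windows starting from $\Phi(S_0)=h-1$. This gives $\mathbb E\Phi(S_n)/h\le e^{-4k}<(5/48)^k$.
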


For $k=1$, the error is less than $1/6$, so comparing $q_n$ with $1/2$ distinguishes $p_n\le1/3$ from $p_n\ge2/3$.

Let $G$ be the directed multigraph with $2A_{v,u}$ edges from $u$ to $v$. Every vertex has indegree and outdegree two. In particular, every weak component is strongly connected.

\subsection{Accuracy of the approximation}
\label{subsec:regular-powering-convergence}

Consider the component containing $s$. Let $d$ be its period, the greatest common divisor of its directed cycle lengths, and let $C_0,\ldots,C_{d-1}$ be its cyclic classes, with $s\in C_0$. Every edge goes from $C_i$ to $C_{i+1}$, where indices are read modulo $d$. Counting the edges between consecutive classes shows that all classes have the same size $h$. The walk is supported on $C_t$ at time $t$, and $A$ maps the uniform distribution on $C_t$ to the uniform distribution on $C_{t+1}$.

We first bound how long a reachable set can keep the same size.

\begin{lemma}[Growth of the reachable set]
\label{lem:regular-powering-growth}
Let $S$ be a nonempty proper subset of a cyclic class. More than $|S|$ vertices are reachable from $S$ by walks of length exactly $w-1$.
\end{lemma}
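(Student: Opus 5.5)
The plan is to track the forward reachable sets
\[
   S_0:=S,\qquad S_{j+1}:=N^+(S_j),
\]
where $N^+$ denotes the set of out-neighbours in $G$. Then $S_j\subseteq C_{i+j}$ if $S\subseteq C_i$. We show that $|S_j|$ is nondecreasing, and that if it never increases during the first $w-1$ steps, the walk decomposes rigidly in a way that contradicts strong connectivity of the component.

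\textbf{Step 1 (monotonicity and rigidity).} Exactly $2|S_j|$ edges leave $S_j$. Each vertex of $S_{j+1}$ has indegree two, so $|S_{j+1}|\ge|S_j|$. Equality holds exactly when every vertex of $S_{j+1}$ has both in-edges from $S_j$.

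Let $T_j:=C_{i+j}\setminus S_j$, and recall that all classes have the common size $h$. Counting the $2h$ edges from $C_{i+j}$ to $C_{i+j+1}$ shows the following. If $|S_{j+1}|=|S_j|$, then no edge joins $S_j$ to $T_{j+1}$ or $T_j$ to $S_{j+1}$. Hence $N^+(T_j)=T_{j+1}$, and also $N^-(S_{j+1})=S_j$.

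So, assuming for contradiction that $|S_j|=|S|$ for all $j\le w-1$, we get a \emph{consistent $2$-colouring along time}: any walk of length at most $w-1$ started at time $0$ stays in the $S$-part if it starts in $S$, and in the $T$-part if it starts in $T_0$. It therefore suffices to exhibit $u\in S$, $u'\in T_0$, and two walks of the same length $\ell\le w-1$ from $u$ and from $u'$ ending at the same vertex. That vertex would then lie in $S_\ell\cap T_\ell=\emptyset$, which is a contradiction.

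\textbf{Step 2 (finding a collision quickly).} Both $S$ and $T_0$ are nonempty, and the component is strongly connected with at most $w$ vertices. Choose a shortest directed walk that starts in $C_i$, begins in one colour, and later enters a vertex that is reachable at the same time from the other colour.

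Concretely, I would argue with the undirected structure. Since every vertex has in- and outdegree two, the layered graph between consecutive classes, restricted to reachable times, is a disjoint union of even cycles. Walking along such a cycle alternates between forward and backward edges and crosses between the $S$-part and the $T$-part. Because the colour classes are nonempty and the component is connected, some undirected path of length at most $w-1$ connects $S$ to $T_0$ inside the space--time layering. I would then convert it into a pair of forward walks of equal length meeting at a common vertex, using the rigidity $N^-(S_{j+1})=S_j$, which lets backward steps be pushed back consistently.

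The first thing to try is the observation that in the rigid situation the map $S_j\mapsto S_{j+1}$ is invertible. Hence the sequence is purely periodic with the period of the component's cyclic structure. The union $\bigcup_j S_j$ is then closed under in- and out-edges, and so is a union of components. This, together with $S\subsetneq C_i$ and the fact that from a class the walk reaches the whole class at all large times $\equiv 0 \pmod d$, gives the contradiction.

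\textbf{Main obstacle.} The real difficulty is the quantitative bound $w-1$, rather than just ``eventually''. The eventual-growth statement follows from aperiodicity within a class, since $N^{Ld}(S)=C_i$ for large $L$. Getting the explicit length requires showing that the rigid $2$-colouring must break within $w-1$ steps.

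I expect to handle this by a counting argument on the number of distinct vertices visited. Each additional step of rigid evolution must bring a new vertex into the combined support $\bigcup_{j'\le j}(S_{j'}\cup T_{j'})$; otherwise the coloured configuration repeats, periodicity closes up, and the contradiction with connectivity above applies. Since there are at most $w$ vertices, such new vertices can appear at most $w-1$ times, which would yield the bound. Making this ``new vertex each step'' claim precise is the step I expect to be hardest.
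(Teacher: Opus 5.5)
Your Step 1 is correct and matches the first half of the paper's proof. If $|S_{w-1}|=|S|$, then every step is rigid, i.e.\ for every edge $u\to v$ and every $0\le j<w-1$ we have $u\in S_j$ iff $v\in S_{j+1}$. The gap is Step 2, which carries all the content, and none of your three sketches closes it.

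\emph{The path-lifting route.} Under rigidity, the degree-two bipartite graphs between consecutive layers are unions of \emph{monochromatic} even cycles. So the space--time graph on times $0,\dots,w-1$ has no edge between the $S$-part and the $T$-part. The undirected path of length at most $w-1$ from $S$ to $T_0$ that you assert inside this window is therefore equivalent to the conclusion itself. Connectivity of $G$ only gives an undirected path in $G$ from $S$ to $T_0$. Its lift has net time displacement an arbitrary multiple of $d$ and may leave the window, so it joins $(u,0)$ to some $(u',kd)$, where the colouring $S_{kd}$ is unrelated to $S_0$.

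\emph{The periodicity route.} This needs rigidity for all time, which the hypothesis does not give.

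\emph{The counting route.} The claim is false as stated. The union $\bigcup_{j'\le j}(S_{j'}\cup T_{j'})=\bigcup_{j'\le j}C_{i+j'}$ stops growing after $d-1$ steps. Meanwhile $S_j\ne S_{j-d}$ is entirely possible, so ``no new vertex'' does not force the configuration to repeat. Counting the possible configurations $S_j$ instead would only bound the rigid phase exponentially in $w$.

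The missing idea is to count blocks of a partition rather than vertices. Partition the component by cyclic class together with the membership pattern in $S_0,\dots,S_j$. Initially there are $d+1$ blocks, since $S$ is split off from the rest of its class. Each splitting step adds at least one block, and there are at most $w$ blocks, so some step $j+1\le w-1$ splits nothing.

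At that step, the rigidity biconditional shows that the tails of all edges entering a block lie in a single block. The quotient digraph on the $r\ge d+1$ blocks is then strongly connected and every in-neighbourhood is a single block, so it is one directed cycle of length $r$. Every closed walk in $G$ then has length divisible by $r$, forcing $r\mid d$, a contradiction. This is how the paper obtains the explicit bound $w-1$, and it also handles the time-displacement issue that defeats your path-lifting approach.
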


\begin{proof}
Let $T_i$ be the set of vertices reachable from $S$ in exactly $i$ steps, so $T_0=S$. The $2|T_i|$ edges leaving $T_i$ all enter $T_{i+1}$, which has $2|T_{i+1}|$ incoming edges. Hence $|T_{i+1}|\ge|T_i|$. If equality holds, every incoming edge of $T_{i+1}$ comes from $T_i$.

Suppose $|T_{w-1}|=|S|$. All these inequalities are then equalities, so every edge $u\to v$ satisfies
\begin{equation}
   u\in T_i\quad\Longleftrightarrow\quad v\in T_{i+1}
   \qquad(0\le i<w-1).
   \label{eq:regular-powering-membership}
\end{equation}
Start with the cyclic classes and split the class containing $S$ into $S$ and its complement within that class. For each successive set $T_i$, $1\le i\le w-1$, replace every current group $Q$ by the nonempty sets among $Q\cap T_i$ and $Q\setminus T_i$. The groups remain disjoint and nonempty, so there are at most $w$ of them. Initially there are $d+1$ groups, and every step that splits a group increases their number. Therefore some $T_{t+1}$ causes no split.

At this step, vertices in the same group belong to the same cyclic class and agree on membership in $T_0,\ldots,T_{t+1}$. If two edges end in this group, \eqref{eq:regular-powering-membership} shows that their starting vertices agree on membership in $T_0,\ldots,T_t$. These starting vertices also lie in the same cyclic class, so they belong to the same group. Thus all edges entering a given group come from a single group.

Form a directed graph with one vertex for each group and an edge whenever an original edge connects the corresponding groups, keeping only one edge per ordered pair. This graph is strongly connected, and each vertex has exactly one incoming neighbor. If there are $r$ groups, there are therefore $r$ edges. Strong connectivity gives each vertex at least one outgoing neighbor, so each has exactly one. The groups consequently form a single directed cycle. Every closed walk in the original graph has length divisible by $r$, so $r\mid d$. But $r\ge d+1$, a contradiction.
\end{proof}

Starting from $\{s\}$, \Cref{lem:regular-powering-growth} shows that the number of reachable vertices increases at least once every $w-1$ steps until it reaches $h$. Once a whole cyclic class is reachable, the whole next class is reachable at the next step. Thus
\begin{equation}
   R_t=C_t
   \qquad\text{for all }t\ge(w-1)(h-1).
   \label{eq:regular-powering-full-class}
\end{equation}
We now show that the distribution also becomes close to uniform on this class.

\begin{lemma}[Mixing on the reachable set]
\label{lem:regular-powering-mixing}
For every integer $k\ge1$ and $n\ge k(w-1)w^2$,
\begin{equation}
   \left\|A^ne_s-\operatorname{Unif}(R_n)\right\|_{\mathrm{TV}}
   \le(5/48)^k.
   \label{eq:regular-powering-mixing}
\end{equation}
\end{lemma}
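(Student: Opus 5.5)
We follow the forward evolving-set process sketched in the technical overview. Set $S_0=\{s\}$ and, given $S_t\subseteq C_t$, define
\[
   A_t=\{v:\text{both incoming edges of }v\text{ come from }S_t\},
   \qquad
   B_t=\{v:\text{at least one incoming edge comes from }S_t\},
\]
and let $S_{t+1}$ be $A_t$ or $B_t$ with probability $1/2$ each. Since every vertex has indegree two, $\Pr[v\in S_{t+1}\mid S_t]=(A\,\mathbf 1_{S_t})_v$. By induction this gives
\[
   \mathbb E[\mathbf 1_{S_t}]/|S_0|=A^te_s
\]
coordinatewise; more precisely $\mathbb E\mathbf 1_{S_t}=A^t\mathbf 1_{\{s\}}=A^te_s$. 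All $S_t$ stay inside $C_t$, because $A_t\subseteq B_t\subseteq C_{t+1}$.

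Regularity gives $|A_t|+|B_t|=2|S_t|$, since the $2|S_t|$ edges leaving $S_t$ are counted twice in $A_t$ and once in $B_t\setminus A_t$. So $|S_{t+1}|=|S_t|\pm a_t$ with $a_t=(|B_t|-|A_t|)/2$. With $\Phi(S)=|S|(h-|S|)$ we get $\mathbb E[\Phi(S_{t+1})\mid S_t]=\Phi(S_t)-a_t^2$. The sets $\emptyset$ and $C_t$ are absorbing, and $\Phi$ vanishes exactly there.

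\textbf{TV bound.} Let $\mu_t=A^te_s$ and $u_t=\operatorname{Unif}(C_t)$. The process has $|S_t|$ as a martingale with $|S_0|=1$, absorbed at $0$ or $h$. Write $\mu_t=\mathbb E[\mathbf 1_{S_t}]$ and split according to whether $S_t$ is $\emptyset$, $C_t$, or neither. If $\tau$ denotes the event $S_t=C_t$, then $\mathbf 1_{C_t}=h\,u_t$. Optional stopping, $\mathbb E|S_t|=1$, gives
\[
   \mu_t-u_t=\mathbb E\bigl[\mathbf 1_{S_t}-|S_t|u_t\bigr],
\]
and each term vanishes when $S_t\in\{\emptyset,C_t\}$. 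Otherwise $\|\mathbf 1_{S_t}-|S_t|u_t\|_1=2|S_t|(h-|S_t|)/h=2\Phi(S_t)/h$. Hence $\|\mu_t-u_t\|_{\mathrm{TV}}\le \mathbb E\Phi(S_t)/h$. Combined with \eqref{eq:regular-powering-full-class}, which gives $R_n=C_n$ once $n\ge (w-1)(h-1)$, it suffices to show $\mathbb E\Phi(S_n)/h\le(5/48)^k$.

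\textbf{Potential decay.} This is the main obstacle: we need a geometric decay of $\mathbb E\Phi$ over blocks of length $(w-1)w^2$, not merely a drop of one per $w-1$ steps. By \Cref{lem:regular-powering-growth}, from any nonempty proper $S_t$, if $a_{t'}=0$ for all $t'\in[t,t+w-1)$, then the process follows the reachable set deterministically. That set would then have size $|S_t|$ after $w-1$ steps, a contradiction. So within every window of $w-1$ steps, the expected decrease of $\Phi$ is at least $\Pr[\text{not yet absorbed}]$. Since $\Phi\ge1$ off the absorbing states and $\Phi\le h^2/4$, the expected number of windows before absorption is at most $h^2/4$, for the same supermartingale reason as \eqref{eq:evolving-expected-bits}.

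For a geometric bound, I would run $w^2$ windows, a block of length $(w-1)w^2$. By Markov, $\Pr[\text{not absorbed after a block}]\le (h^2/4)/w^2\le 1/4$ from any starting state. Iterating via the Markov property over $k$ blocks gives $\Pr[\text{not absorbed}]\le 4^{-k}$.

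To reach $\mathbb E\Phi(S_n)/h$, I would instead aim directly at a contraction $\mathbb E[\Phi(S_{\text{end}})]\le c\,\Phi(S_{\text{start}})$ per block. A cleaner route is to use $\sqrt{\Phi}$ or $\Phi$ with a concavity argument. Since each non-absorbed window removes at least $1\ge \Phi\cdot 4/h^2$, over $w^2$ windows we get $\mathbb E\Phi_{\text{end}}\le \Phi_{\text{start}}(1-4/h^2)^{w^2}$ if conditional decreases are proportional to $\Phi$. This gives roughly $e^{-4}$ per block, but that proportionality only holds as a bound $\ge 4\Phi/h^2$ on the drift, which suffices via $\mathbb E[\Phi_{j+1}]\le(1-4/h^2)\mathbb E[\Phi_j]$ applied at window boundaries. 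I would then check the constant: initial $\Phi(S_0)/h=(h-1)/h<1$, and a per-block factor $\le 5/48$ should follow from combining the absorption bound with $\Phi\le h^2/4$. I expect the exact constant $5/48$ to require some careful bookkeeping.
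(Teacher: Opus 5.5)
Your proof is essentially the paper's proof: the forward evolving-set process, the bound $\mathbb E\Phi(S_t)/h$ on total variation distance, and the per-window inequality $\mathbb E[\Phi(S_{t+w-1})\mid S_t]\le\Phi(S_t)-1\le(1-4/h^2)\Phi(S_t)$ from the growth lemma, iterated over the $kw^2$ full windows. The argument is already complete at that point, and the Markov/absorption detour can be dropped. For $h\ge2$ it gives $\mathbb E\Phi(S_n)/h\le\frac{h-1}{h}(1-4/h^2)^{kw^2}\le e^{-4k}$, and $e^{-4}<5/48$, so the constant needs no further bookkeeping. When $h=1$ the bound is trivially $0$.
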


\begin{proof}
If $h=1$, the walk is deterministic and the claim holds exactly. Assume $h\ge2$. We use the evolving-set process from \Cref{subsec:evolving-sets}, here moving forwards through the graph. Start with $S_0=\{s\}$. At each step, with probability $1/2$, let $S_{t+1}$ be the set of vertices whose two incoming edges both come from $S_t$; otherwise, let it be the set of vertices with at least one incoming edge from $S_t$. The conditional probability that $v$ belongs to $S_{t+1}$ is $(A\mathbf1_{S_t})_v$. Therefore
\[
   S_t\subseteq C_t,\qquad
   \mathbb E\mathbf1_{S_t}=A^te_s,\qquad
   \mathbb E|S_t|=1.
\]

Put $\Phi(S)=|S|(h-|S|)$. The preceding identities and the triangle inequality give
\begin{align}
   \left\|A^te_s-\operatorname{Unif}(C_t)\right\|_{\mathrm{TV}}
   &\le \mathbb E\!\left[
      \frac12\left\|\mathbf1_{S_t}
      -\frac{|S_t|}{h}\mathbf1_{C_t}\right\|_1
   \right]\notag\\
   &=\frac{\mathbb E\Phi(S_t)}h.
   \label{eq:regular-powering-potential-error}
\end{align}
It suffices to bound this expected value.

Fix $S_t$ of size $j$. Its two possible successors are nested, and their sizes sum to $2j$, the number of edges leaving $S_t$. Write their sizes as $j-a$ and $j+a$, with $a$ a nonnegative integer. Then
\[
   \mathbb E[\Phi(S_{t+1})\mid S_t]
   =j(h-j)-a^2
   =\Phi(S_t)-a^2.
\]
Thus the expected value never increases, and it drops by at least one whenever the two choices differ.

Suppose $S_t$ is neither empty nor the whole class. Until the first step at which the two choices differ, the updates are deterministic and the current set is exactly the set reachable from $S_t$ after that many steps. Its size remains $|S_t|$. By \Cref{lem:regular-powering-growth}, such updates cannot continue for $w-1$ steps. The first differing choice therefore occurs within this interval, at a time determined by $S_t$. It decreases the expected value of $\Phi$ by at least one, and later updates cannot increase its expectation. Since $\Phi(S_t)\le h^2/4$,
\[
   \mathbb E[\Phi(S_{t+w-1})\mid S_t]
   \le\Phi(S_t)-1
   \le\left(1-\frac4{h^2}\right)\Phi(S_t).
\]
For the empty set and the whole class, $\Phi$ remains zero, so the final inequality holds for every possible $S_t$.

Iterating over blocks of $w-1$ steps, using $\Phi(S_0)=h-1$ and that the remaining steps cannot increase its expectation, gives
\[
   \frac{\mathbb E\Phi(S_n)}h
   \le \exp\!\left(-\frac{4\lfloor n/(w-1)\rfloor}{h^2}\right)
   \le e^{-4k}.
\]
The last inequality uses $n\ge k(w-1)w^2$ and $h\le w$. In this range, \eqref{eq:regular-powering-full-class} gives $R_n=C_n$. Now \eqref{eq:regular-powering-potential-error} and $e^{-4}<5/48$ prove the lemma.
\end{proof}

\subsection{Computing the approximation}
\label{subsec:regular-powering-computation}

To compute $q_n$, it remains to determine which vertices belong to $R_n$.

\begin{lemma}[Reachability at a specified length]
\label{lem:regular-powering-support}
Given $A$, states $s,v\in[w]$, and a $b$-bit integer $n\ge w^2-1$, whether $(A^n)_{v,s}>0$ can be decided deterministically in space $O(\log(w+b))$ and time $\operatorname{poly}(w,b)$.
\end{lemma}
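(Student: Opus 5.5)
The plan follows the clock-graph characterization from the technical overview. Let $G$ be the support multigraph, with an edge $u\to v$ whenever $A_{v,u}>0$. For each $1\le m\le w$, define the clock graph $G_m$ on $[w]\times\mathbb Z_m$ with edges $(u,z)\to(v,z+1 \bmod m)$. Since every vertex of $G$ has equal in- and outdegree two, the same holds in $G_m$, so every weak component of $G_m$ is strongly connected. The criterion to prove is: for $n\ge w^2-1$, $(A^n)_{v,s}>0$ if and only if, for every $m\le w$, the vertices $(s,0)$ and $(v,n\bmod m)$ lie in the same weak component of $G_m$. Given this, the algorithm is immediate. For each $m$, compute $n\bmod m$ from the $b$-bit input in $O(\log w+\log b)$ space, then run Reingold's undirected connectivity algorithm on the underlying undirected graph of $G_m$, which has at most $w^2$ vertices and is implicitly computable from $A$. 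This uses space $O(\log(w+b))$ and time $\poly(w,b)$.

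\textbf{Necessity.} A length-$n$ walk $s=x_0,\dots,x_n=v$ in $G$ lifts to the walk $(x_i,i\bmod m)$ in $G_m$, which ends at $(v,n\bmod m)$.

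\textbf{Sufficiency.} Suppose the test passes for every $m\le w$. Since $s$ has an outgoing edge and its weak component in $G$ is strongly connected, there is a simple directed cycle through $s$ of some length $c\le w$. Apply the hypothesis with $m=c$. By strong connectivity of weak components of $G_c$, there is a directed simple path in $G_c$ from $(s,0)$ to $(v,n\bmod c)$. Its length $\ell$ satisfies $\ell\le wc-1\le w^2-1\le n$, and the clock coordinate forces $\ell\equiv n\pmod c$. Projecting this path gives a walk in $G$ from $s$ to $v$ of length $\ell$. Prepending $(n-\ell)/c$ traversals of the cycle at $s$ produces a walk of length exactly $n$, so $(A^n)_{v,s}>0$.

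The main points to check are, first, that equal in/outdegree in $G_m$ gives strong connectivity of weak components: a finite Eulerian-balanced digraph has every edge on a directed cycle, which is standard. Second, that the directed simple path length is bounded by the vertex count $wc$ minus one. The logspace bookkeeping, namely reducing $n$ modulo $m$ by streaming the bits and running Reingold on an implicitly given graph, is routine. I do not expect a serious obstacle. The only subtlety is that the witnessing cycle length $c$ is unknown, which is why the algorithm tests all $m\le w$ and accepts only if every test passes.
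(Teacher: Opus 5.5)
Your proposal is correct and follows the paper's proof essentially step for step: the same clock graphs $G_m$ for $1\le m\le w$, the same criterion checked with Reingold's algorithm while $n\bmod m$ is streamed from the input, and the same sufficiency argument. That argument takes a simple cycle of length $c\le w$ through $s$, then a simple path of length $\ell\le wc-1\le n$ in $G_c$, and pads it with $(n-\ell)/c$ traversals of the cycle.

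One cosmetic point: to get in- and outdegree exactly two, take $2A_{v,u}$ parallel edges from $u$ to $v$, as the paper does. With a single edge per positive entry, a vertex $u$ with $A_{v,u}=1$ has outdegree one. Weak components are strongly connected either way, because $A$ is doubly stochastic.
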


\begin{proof}
For each $1\le m\le w$, record the number of steps modulo $m$. This gives a directed graph $G_m$ on $[w]\times\mathbb Z_m$, with each edge $u\to v$ of $G$ replaced by
\[
   (u,z)\longrightarrow(v,z+1\bmod m),
   \qquad z\in\mathbb Z_m.
\]
Let $U_m$ be its underlying undirected graph. Each $G_m$ has equal indegree and outdegree at every vertex, so its weak components are strongly connected. We claim that
\begin{equation}
   (A^n)_{v,s}>0
   \quad\Longleftrightarrow\quad
   \text{$(s,0)$ and $(v,n\bmod m)$ are connected in $U_m$ for every $1\le m\le w$.}
   \label{eq:regular-powering-support-test}
\end{equation}
A length-$n$ walk gives all these connections. Conversely, choose a simple directed cycle through $s$, of length $c\le w$. The test for $m=c$ gives a directed simple path from $(s,0)$ to $(v,n\bmod c)$ of length $\ell\le wc-1\le n$. Its projection to $G$ has length $\ell\equiv n\pmod c$. Prefixing $(n-\ell)/c$ traversals of the cycle produces a walk of length exactly $n$.

Each $U_m$ has at most $w^2$ vertices and adjacency computable in $O(\log w)$ space. Reingold's algorithm~\cite{reingoldUndirectedConnectivity2008} therefore tests connectivity in $O(\log w)$ space and polynomial time. Computing $n\bmod m$ from the $b$ input bits uses $O(\log w+\log b)$ space. Testing the moduli sequentially proves the bounds.
\end{proof}

\begin{proof}[Proof of \Cref{thm:regular-powering-approximation}]
Enumerate $v\in[w]$ and use \Cref{lem:regular-powering-support} to count $|R_n|$ and $|F\cap R_n|$. The assumption $n\ge(w-1)w^2$ ensures that the lemma applies. Both counters use $O(\log w)$ bits, giving the numerator and denominator of $q_n$ in the stated space and time. Finally, $q_n=\mathbf1_F^\top\operatorname{Unif}(R_n)$, so \Cref{def:total-variation,lem:regular-powering-mixing} give $|p_n-q_n|\le(5/48)^k$.
\end{proof}




\section{SC derandomization for models beyong BPL}
\label{sec:inw-sc}

\subsection{The INW communication model}
\label{subsec:inw-communication-model}

We use the communication model of Impagliazzo, Nisan, and
Wigderson~\cite{impagliazzoPseudorandomnessNetworkAlgorithms1994}.
We re-describe it as the following.

Let \(H=(V,E)\) be an undirected graph with \(N=|V|\) processors.
Processor \(u \in V\) has an initial input \(x_u\) and a private random
string \(\rho_u\in\{0,1\}^r\). The protocol specifies the initial state
and local transition rules for every processor.
The computation runs in the following way. For each round, for each processor $u$, it first does its own computation. Then for each of its neighbor in $H$, $u$ sends a message.
These messages depend only on $u$'s input, private randomness, and messages received in previous rounds.
After every processor receives all the messages that sent to it, the computation goes to the next round.
The whole computation terminates within a pre-defined number of rounds.
Notice that the whole computation can be determined by fixing \(x=(x_u)_{u\in V}\) and
\(\rho=(\rho_u)_{u\in V}\) .
We denote \(\Pi(x;\rho)\) as the whole computation history i.e. it includes all the configurations, messages of the whole computation for all processors.
When \(\rho_u, u\in V\) are independent uniform distributions, we denote this joint distribution as \(U_r^V\).
\begin{definition}[Partition tree and width]
\label{def:inw-partition-tree}
A \emph{partition tree} \(\mathcal T\) for \(H\) is a rooted binary tree
whose leaves are in bijection with \(V\), and in  \(\mathcal T\) every internal
node has two children. For a node \(\nu\), let \(V_\nu\) be the set of
processors corresponding to its descendant leaves. For an internal
node with children \(\nu_{\mathrm L},\nu_{\mathrm R}\), write
\[
   A_\nu:=V_{\nu_{\mathrm L}},
   \qquad
   B_\nu:=V_{\nu_{\mathrm R}},
   \qquad
   C_\nu:=V\setminus V_\nu.
\]
Let \(\cut(\nu)\) consist of the edges with endpoints in different sets
among \(A_\nu,B_\nu,C_\nu\). The width \(\omega(\nu)\) is the minimum
size of a vertex cover that can cover all edges of \(\cut(\nu)\), and the width of the tree is
\[
   \omega(\mathcal T):=\max_{\nu\text{ internal}}\omega(\nu),
\]
with value zero for a one-leaf tree. The tree is \emph{balanced} if its
depth is \(O(\log(N+1))\). The \emph{INW width} $\omega$ of \(H\) is the minimum
width of a balanced partition tree for \(H\).
\end{definition}
We use width bounds \(\omega\ge1\) throughout.

\paragraph{Communication traces.}
For a fixed deterministic input \(x\), the \emph{communication trace}
\(\operatorname{tr}_u(x;\rho)\) of processor \(u\) records all messages
it sends and receives during the execution, also including the rounds they belong to,
their incident edges, their directions, lengths. 
\begin{definition}[Effective communication]
   A protocol has
\emph{effective communication of at most \(c\)} bits, if for every \(x\) and
\(u\in V\),
\[
   \left|
      \left\{
         \operatorname{tr}_u(x;\rho):
         \rho\in(\{0,1\}^r)^V
      \right\}
   \right|
   \le 2^c.
\] 
\end{definition}
Here \(\rho\) ranges over all choices of the processors' private random strings. 



The communication across a set of edges $E'\subseteq E$ is the complete
record of all messages transmitted along edges in $E'$
during the whole computation.
\begin{lemma}[Communication bound across a cut]
\label{lem:inw-cut-traces}
Let \(\Pi\) have effective communication of at most \(c\) bits, and let \(\nu\)
be an internal node of a partition tree. For every input,
the communication across \(\cut(\nu)\) has at most
\(2^{\omega(\nu)c}\) possibilities.
\end{lemma}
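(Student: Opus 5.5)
Fix an input \(x\) and an internal node \(\nu\). Let \(S\subseteq V\) be a minimum vertex cover of \(\cut(\nu)\), so \(|S|=\omega(\nu)\). The plan is to show that the whole communication across \(\cut(\nu)\) is a function of the tuple of communication traces \((\operatorname{tr}_u(x;\rho))_{u\in S}\). The counting bound then follows directly from the effective-communication assumption.

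\textbf{Step 1: the traces of \(S\) determine the cut communication.} Every edge \(e\in\cut(\nu)\) has at least one endpoint \(u\in S\), because \(S\) covers the cut. By definition, \(\operatorname{tr}_u(x;\rho)\) records every message \(u\) sends or receives, together with its round, incident edge, direction and length. So every message on \(e\) during the whole execution appears in \(\operatorname{tr}_u(x;\rho)\).

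To reconstruct the communication across the cut from the tuple \((\operatorname{tr}_u)_{u\in S}\):
\begin{itemize}
\item for each edge \(e\in\cut(\nu)\), fix a canonical covering endpoint in \(S\), say the smallest-indexed one;
\item read off from that endpoint's trace the entries whose incident edge is \(e\).
\end{itemize}
This map depends only on \(H\), \(\nu\) and \(S\), which are all fixed once \(x\) and \(\nu\) are fixed. It does not depend on \(\rho\).

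\textbf{Step 2: counting.} For each \(u\in S\), the effective-communication bound says that \(\operatorname{tr}_u(x;\rho)\) takes at most \(2^c\) values as \(\rho\) ranges over \((\{0,1\}^r)^V\). The tuple \((\operatorname{tr}_u)_{u\in S}\) therefore lies in a product set of size at most \((2^c)^{|S|}=2^{\omega(\nu)c}\). Since the communication across \(\cut(\nu)\) is the image of this tuple under the fixed map from Step 1, it has at most \(2^{\omega(\nu)c}\) possibilities.

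\textbf{Main obstacle.} The only delicate point is Step 1: the reconstruction must be a deterministic function of the traces alone. This holds because traces carry edge labels and rounds, which is exactly why the definition of communication traces includes them. Using the product bound over \(S\) also avoids any need for the traces to be independent or mutually consistent.
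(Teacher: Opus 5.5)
Your proposal is correct and follows the paper's proof. You take a minimum vertex cover of \(\cut(\nu)\), observe that the traces of its vertices determine every message crossing the cut, and bound the number of trace tuples by \((2^c)^{\omega(\nu)}\). Your canonical-endpoint reconstruction simply makes explicit the ``determined by'' step that the paper states in one line.
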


\begin{proof}
Let \(X_\nu\) be a minimum vertex cover of \(\cut(\nu)\). Every 
message crossing  \(\cut(\nu)\) is recorded by a processor in \(X_\nu\), so the whole communication is
determined by the communication traces of these processors. The number of possible communication records is therefore at most
\[
   \prod_{u\in X_\nu}2^c=2^{\omega(\nu)c}.
\]
\end{proof}
\paragraph{Measurements and the two-party generator.}
A \emph{\(k\)-measurement} \(M=(M_1,\ldots,M_k)\) is a \(k\)-bit
function of an execution. Each coordinate is computed at the end by
a fixed processor which is designated before the execution,  from its deterministic input, private
randomness, and communication history. For decision problems, we
take \(k=1\).

We use the statistical distance $\SD$ from \Cref{def:total-variation}.
We use the following form of the INW two-party generator
\cite[Theorem~1 and Lemma~1]{impagliazzoPseudorandomnessNetworkAlgorithms1994}.

\begin{theorem}[INW two-party generator]
\label{lem:inw-two-party-generator}
For all integers \(d\ge1\) and \(q,k\ge0\), and every
\(0<\eta<1/2\), there is an explicit generator
\[
   g=(g_{\mathrm L},g_{\mathrm R}):
   \{0,1\}^{\widehat d}
   \longrightarrow \{0,1\}^d\times\{0,1\}^d,
   \qquad
   \widehat d=d+O\!\left(q+k+\log(1/\eta)\right),
\]
with the following property.

Let \(\Pi\) be a two-party protocol in the communication model above,
with private random strings
\(\rho_{\mathrm L},\rho_{\mathrm R}\in\{0,1\}^d\).
Suppose that \(\Pi\) has effective communication at most \(q\).
Then, for every deterministic input \(x\) and every
\(k\)-measurement \(M\),
\[
   \SD\!\left(
      M(\Pi(x;(U_d,U_d))),
      M(\Pi(x;g(U_{\widehat d})))
   \right)
   \le\eta,
\]
where the two copies of \(U_d\) are independent.
The generator \(g\) is computable in time  
\(\poly(\widehat d)\) and workspace \(O(\widehat d)\).
\end{theorem}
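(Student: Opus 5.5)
The generator is the classical INW construction: one expander step. Let \(\mathcal H\) be an explicit \(D\)-regular expander on \(\{0,1\}^d\) with normalized second eigenvalue \(\lambda\). Define \(g(y,i)=(y,\mathcal H(y,i))\), where \(y\in\{0,1\}^d\) and \(i\in[D]\) names an edge. We will take \(\lambda=\eta\,2^{-(q+k)}\), so \(\log D=O(\log(1/\lambda))=O(q+k+\log(1/\eta))\), which gives the stated \(\widehat d\). Explicitness, meaning time \(\poly(\widehat d)\) and workspace \(O(\widehat d)\), follows from a strongly explicit expander family such as Gabber--Galil or zig-zag powers; this is routine.

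For correctness, fix \(x\) and a measurement \(M\). The plan is to decompose \(\Pi\) according to its full communication transcript \(\tau\) together with the measurement outcome.

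\begin{itemize}
\item Because each party's behavior depends only on its own random string and on the messages it receives, the standard rectangle property holds. For each transcript \(\tau\), the event "transcript \(=\tau\)" equals \(\{\rho_{\mathrm L}\in S_\tau\}\cap\{\rho_{\mathrm R}\in T_\tau\}\). Here \(S_\tau\) is the set of left strings consistent with \(\tau\) when the right side's messages are the ones recorded in \(\tau\), and \(T_\tau\) is defined symmetrically.
\item Each coordinate of \(M\) is computed by a designated party from its own view. So for each \(k\)-bit outcome \(z\), the event \(\{\tau,\ M=z\}\) is still a rectangle \(S_{\tau,z_{\mathrm L}}\times T_{\tau,z_{\mathrm R}}\), where \(z_{\mathrm L}\) and \(z_{\mathrm R}\) are the bits output by each side.
\item Effective communication at most \(q\) bounds the number of transcripts by \(2^q\). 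There are also at most \(2^k\) outcome values \(z\).
\end{itemize}

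By the expander mixing lemma, each rectangle satisfies
\[
\bigl|\Pr_{(y,y')\sim\mathcal H}[y\in S,\,y'\in T]-\mu(S)\mu(T)\bigr|\le\lambda\sqrt{\mu(S)\mu(T)}\le\lambda.
\]
Fix an outcome \(z\). Summing over the at most \(2^q\) disjoint rectangles whose union is \(\{M=z\}\) changes its probability by at most \(2^q\lambda\). Summing over the \(2^k\) values of \(z\), and noting that \(\SD\) is half the \(\ell_1\) distance, bounds the statistical distance by \(2^{q+k}\lambda\le\eta\).

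One could tighten this using Cauchy--Schwarz on \(\sum\sqrt{\mu\mu}\), but the crude bound already fits the seed budget.

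The main obstacle is justifying the rectangle decomposition in this model's round structure. A party's message depends on the messages it received earlier, so the left strings consistent with \(\tau\) form a set determined by the right messages recorded in \(\tau\) alone. This is the usual argument: by induction over rounds, the pair \((\rho_{\mathrm L},\rho_{\mathrm R})\) produces \(\tau\) if and only if each side individually reproduces its own recorded messages given the other side's recorded messages.

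The effective-communication definition counts traces per processor rather than joint transcripts. However, the joint transcript is determined by either party's trace, since the trace records every message sent and received. Hence there are at most \(2^q\) transcripts, and no other condition is needed.
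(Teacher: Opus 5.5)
Your proof is correct and is essentially the standard Impagliazzo--Nisan--Wigderson one-expander-step argument: you partition the seed space into transcript-and-outcome rectangles and then apply the expander mixing lemma. The paper does not reprove this theorem; it cites \cite[Theorem~1 and Lemma~1]{impagliazzoPseudorandomnessNetworkAlgorithms1994}, so your reconstruction follows the intended route, and your point that only the number of possible transcripts (at most $2^q$), not their bit length, enters the count is what makes the paper's \emph{effective communication} hypothesis suffice.
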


\subsection{Tests for a sub-tree}
\label{subsec:inw-deterministic-evaluation}

Fix a protocol \(\Pi\) with effective communication at most \(c\), a
partition tree \(\mathcal T\) of width at most \(\omega\), a
deterministic input \(x\), a non-negative integer $k$ and a \(k\)-measurement \(M\). For simplicity of notations, we omit
\(x\) from the notation for traces and tests.

\paragraph{Boundary tests.}
For a node \(\nu\) of the partition tree \(\mathcal T\), we check if
a pseudo-random distribution inside \(V_\nu\) preserves the acceptance probability, by testing it against every possible communication that could happen between  \(V_\nu\) and its compliment.

Define the boundary of \(V_\nu\) to be
\[
   \partial V_\nu
   :=\bigl\{\{u,v\}\in E:u\in V_\nu,\ v\notin V_\nu\bigr\}.
\]
A \emph{boundary trace} specifies the messages on these edges in
both directions, including the rounds they belong to,
their incident edges, their directions, lengths, and contents.
Let \(J_\nu\subseteq[k]\) be the coordinates of $M$, which are measured by processors in \(V_\nu\). 

\begin{definition}[Boundary test]
\label{def:inw-interface-view}
For a proposed boundary trace \(\tau\) and a string
\(z\in\{0,1\}^{J_\nu}\), define
\[
   \chi_{\nu,\tau,z}:
   (\{0,1\}^r)^{V_\nu}\longrightarrow\{0,1\}
\]
as follows. Given local random strings \(\rho \in (\{0,1\}^r)^{V_\nu}\), simulate the
processors in \(V_\nu\) from their prescribed initial states,
using \(\rho\), and using \(\tau\) to supply the incoming boundary messages.
Notice that messages between processors in \(V_\nu\) are generated
by the simulation itself.

$\chi_{\nu,\tau,z}$ accepts if if the outgoing boundary communication is the same from that specified by \(\tau\), and the measured coordinates in \(J_\nu\) equal \(z\). Otherwise it rejects.

\end{definition}

For an internal node $\nu$,
\(\partial V_\nu\subseteq\cut(\nu)\), so
\Cref{lem:inw-cut-traces} bounds the number of possible boundary traces
by \(2^{\omega c}\).
At a leaf, the number of boundary traces is controlled by  \(2^{c}\).

For each node \(\nu\), Let \(\mathcal I_\nu\) be the set of its boundary traces.
Let the test family at node \(\nu\) be the set
\[
   \left\{
      \chi_{\nu,\tau,z}:
      \tau\in\mathcal I_\nu,\;
      z\in\{0,1\}^{J_\nu}
   \right\}.
\]
It has size at most
\[
   |\mathcal I_\nu|2^{|J_\nu|}
   \le B:=2^{\omega c+k}.
\]

At the root \(\varrho\), there are no boundary
messages. Take
\(\mathcal I_\varrho=\{\varnothing\}\) and \(J_\varrho=[k]\), 
\[
   \chi_{\varrho,\varnothing,y}(\rho)
   =\mathbf 1_{\{M(\Pi(x;\rho))=y\}}.
\]
Thus for each $y$, the corresponding root test simply checks whether the final measurement outcome is equal to $y$.

Next we show that any distribution preserving the expectations of
the tests corresponding to  \(V_\nu\)  suffices to replace the randomness in \(V_\nu\),
provided that it is independent of the randomness outside of \(V_\nu\).
\begin{lemma}[Substitution of randomness in a subtree]
\label{lem:inw-subtree-replacement}
Fix a node \(\nu\). Let \(D,D'\) be distributions over the support of the random
strings in \(V_\nu\), and let \(R\) be an independent distribution for those nodes
outside of \(V_\nu\).
If
\[
   \left|
      \E_D[\chi_{\nu,\tau,z}]
      -\E_{D'}[\chi_{\nu,\tau,z}]
   \right|
   \le\alpha
\]
for all \(\tau\in\mathcal I_\nu\) and
\(z\in\{0,1\}^{J_\nu}\), then
\[
   \SD\!\left(
      M(\Pi(x;D,R)),M(\Pi(x;D',R))
   \right)
   \le\frac12|\mathcal I_\nu|2^{|J_\nu|}\alpha
   \le B\alpha.
\]
\end{lemma}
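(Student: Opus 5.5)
The proof will be a decomposition of the measurement distribution according to the boundary trace, which reduces everything to the boundary-test expectations. Fix the outside randomness \(\rho_{\mathrm{out}}\sim R\) and an outcome \(y\in\{0,1\}^k\), and split \(y=(z,z')\) with \(z\in\{0,1\}^{J_\nu}\) and \(z'\in\{0,1\}^{[k]\setminus J_\nu}\). The central step is a factorization: for every fixed \(\rho_{\mathrm{in}}\) and \(\rho_{\mathrm{out}}\),
\[
   \mathbf 1\{M(\Pi(x;\rho_{\mathrm{in}},\rho_{\mathrm{out}}))=y\}
   =\sum_{\tau\in\mathcal I_\nu}
      \chi_{\nu,\tau,z}(\rho_{\mathrm{in}})\,
      \psi_{\tau,z'}(\rho_{\mathrm{out}}).
\]
Here \(\psi_{\tau,z'}\) is the analogous test for the complement \(V\setminus V_\nu\). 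It simulates the outside processors using \(\rho_{\mathrm{out}}\), feeds them the incoming messages prescribed by \(\tau\), and accepts if their outgoing boundary messages match \(\tau\) and the coordinates outside \(J_\nu\) equal \(z'\).

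I would prove the factorization by a round-by-round induction. The real execution produces a unique boundary trace \(\tau^\ast\), and for \(\tau^\ast\) both simulations coincide with the true execution on their sides, so both indicators equal the respective measurement checks. Conversely, suppose some \(\tau\) is accepted by both \(\chi_{\nu,\tau,z}\) and \(\psi_{\tau,z'}\). Then I induct on rounds: messages in round \(i\) depend only on private randomness and messages received in earlier rounds. Hence if both sides' simulations agree with the real execution up to round \(i-1\), the outgoing messages in round \(i\) are the real ones. Since both tests check consistency with \(\tau\), \(\tau\) agrees with the real boundary messages in round \(i\). So \(\tau=\tau^\ast\), and exactly one term of the sum can be nonzero.

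This is the step where I expect the most care to be needed. It requires the rule that each side's outgoing messages are checked against \(\tau\) while its incoming messages are supplied by \(\tau\). It also requires that measurements are computed by designated processors from local information only.

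Given the factorization, take expectations using independence of \(\rho_{\mathrm{in}}\) and \(\rho_{\mathrm{out}}\):
\[
   \Pr[M=y]_{D,R}-\Pr[M=y]_{D',R}
   =\sum_\tau\bigl(\E_D\chi_{\nu,\tau,z}-\E_{D'}\chi_{\nu,\tau,z}\bigr)\,
      \E_R\psi_{\tau,z'}.
\]
Sum the absolute values over all \(y=(z,z')\) and bound each test difference by \(\alpha\). For fixed \(\tau\) and \(z\), the quantity \(\sum_{z'}\E_R\psi_{\tau,z'}\) is at most \(1\), because for fixed outside randomness the outside measured coordinates are determined, so at most one \(z'\) is accepted. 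This gives \(\sum_y|\cdot|\le|\mathcal I_\nu|2^{|J_\nu|}\alpha\). Halving yields the statistical-distance bound \(\frac12|\mathcal I_\nu|2^{|J_\nu|}\alpha\), which is at most \(B\alpha\) by the earlier count \(|\mathcal I_\nu|2^{|J_\nu|}\le B\).
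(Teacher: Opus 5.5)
Your proposal is correct and follows essentially the same route as the paper, which likewise writes $\Pr[M(\Pi(x;D,R))=y]=\sum_{\tau}p_D(\tau,y|_{J})\,b_R(\tau,y|_{J^c})$ (its $b_R(\tau,v)$ is exactly $\E_R\psi_{\tau,v}$ for your outside test), justifies this by showing via induction over protocol steps that only the actual trace passes both checks, and concludes using $\sum_v b_R(\tau,v)\le 1$. Your round-by-round induction is a somewhat more explicit version of the paper's brief uniqueness argument.
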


\begin{proof}
Write \(J:=J_\nu\) and \(J^c:=[k]\setminus J\). Set
\[
   p_D(\tau,z):=\E_D[\chi_{\nu,\tau,z}],
   \qquad
   p_{D'}(\tau,z):=\E_{D'}[\chi_{\nu,\tau,z}].
\]
Let \(b_R(\tau,v)\) be the probability that the simulation
outside \(V_\nu\) produces measurement
outcome \(v\in\{0,1\}^{J^c}\), given the boundary trace \(\tau\).

For each fixed boundary trace $\tau\in\mathcal I_\nu$, simulate the processors
in $V\setminus V_\nu$ using random strings drawn from $R$,
with messages from $V_\nu$ given by $\tau$.
The boundary-consistency test succeeds if the outgoing boundary
records are exactly those prescribed by $\tau$.
For $v\in\{0,1\}^{J^c}$, let $b_R(\tau,v)$ be the probability,
over $R$, that this test succeeds and the
measurement outcome restricted to these bits indexed by $J^c$ is $v$.

Fix the random strings inside and outside $V_\nu$. We show that exactly one  pair of trace and measurement outcome let the random strings pass both checks:
let
$\tau^\ast,z^\ast$ be the boundary trace and measurement outcome of the actual execution.
So both boundary tests accept $\tau^\ast,z^\ast$; now for uniqueness,
suppose that both tests accept another candidate pair $\tau,z$.
Induction over the protocol steps shows that both simulations
agree with the actual execution $\tau^\ast$, and must produce the same measurement outcome $z^\ast$.

For each fixed \(\tau\), the two tests depend only on their
respective random strings, so by independence, for every
\(y\in\{0,1\}^k\),
\[
   \Pr[M(\Pi(x;D,R))=y]
   =\sum_{\tau\in\mathcal I_\nu}
      p_D(\tau,y|_J)b_R(\tau,y|_{J^c}),
\]
and the same identity holds with \(D'\) in place of \(D\).
For each fixed \(\tau\), the events defining \(b_R(\tau,v)\)
are disjoint over \(v\), so \(\sum_v b_R(\tau,v)\le1\).
Using the hypothesis
\(\lvert p_D(\tau,z)-p_{D'}(\tau,z)\rvert\le\alpha\), we obtain
\[
\begin{aligned}
   &\SD\!\left(
      M(\Pi(x;D,R)),M(\Pi(x;D',R))
   \right)\\
   &\quad\le
   \frac12\sum_{\tau,z}
      |p_D(\tau,z)-p_{D'}(\tau,z)|
      \sum_v b_R(\tau,v)\\
   &\quad\le
   \frac12|\mathcal I_\nu|2^{|J|}\alpha
   \le B\alpha.
\end{aligned}
\]
\end{proof}

\paragraph{\(\DTISP(T,S)\)-evaluable boundary tests}

We write $\DTISP(T,S)$ for the class of languages decidable
by a deterministic machine in time $T$ and workspace $S$.

For each node $\nu$, index all tests $\chi_{\nu,\tau,z}$, with
$\tau\in\mathcal I_\nu$ and $z\in\{0,1\}^{J_\nu}$, as
\[
   (\chi_{\nu,i})_{i\in[L_\nu]},
   \qquad \mbox{ where }
   L_\nu:=|\mathcal I_\nu|\,2^{|J_\nu|}.
\]

\begin{definition}[\(\DTISP(T,S)\)-evaluable boundary tests]
\label{def:inw-boundary-test-evaluable}
Let $H$ be an \(N\)-processor graph.
We say that \((\Pi, \mathcal T) \) has
\emph{\(\DTISP(T(N),S(N))\)-evaluable boundary tests}, if the following operations can be performed  in deterministic time \(T(N)\) and workspace \(S(N)\) :
\begin{enumerate}


\item Compute a description  of $\mathcal T$. The description is consisted of the root of $\mathcal T$, the parent and children for every node, corresponding processors for each leaf, \(L_\nu\) for each $\nu$.

\item  Given an input
\(x\), a node \(\nu\),  an  index \(i\in[L_\nu]\),  a string $\rho\in (\{0,1\})^{V_\nu}$,
compute \(\chi_{\nu,i}(\rho)\), where $\chi_{\nu, i}$ is defined according to $\Pi$ by \cref{def:inw-interface-view}.
\end{enumerate}
\end{definition}

\subsection{An SC simulation}
\label{subsec:inw-resource-lifting}

\begin{theorem} 
\label{thm:inw-resource-lifting}
Let $H$ be an \(N\)-processor graph.
Let \(\Pi\) be a protocol on \(H\) whose effective
communication is at most \(c\), and suppose that every processor uses at most
\(r\) private random bits. Let
\(\mathcal T\) be a balanced partition tree of depth \(h\) and width at most
\(\omega\).  Suppose that \((\Pi,\mathcal T)\) has
\(\DTISP(T,S)\)-evaluable boundary tests. For every \(0<\varepsilon<1/2\), 
there is a deterministic
algorithm that, given \(x\) and \(z\in\{0,1\}^k\), computes
\(\widetilde p_x(z)\), where \(\widetilde p_x\) is a probability
distribution satisfying
\[
   \SD\!\left(\widetilde p_x,M(\Pi(x;U_r^V))\right)
   \le\varepsilon.
\]

The algorithm runs in time
\[
   2^{O(r+\omega c+k)}\poly(N,T,h,1/\varepsilon)
\]
and workspace
\[
   O\!\left(S+r+
      \bigl(\omega c+k+\log(N/\varepsilon)\bigr)(h+1)\right).
\]

\end{theorem}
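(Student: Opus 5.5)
We build, bottom-up along $\mathcal T$, a generator $G_\nu:\{0,1\}^{d}\to(\{0,1\}^r)^{V_\nu}$ for every node $\nu$, all with a common seed length $d=O(r+\omega c+k+\log(N/\varepsilon))$. Each $G_\nu$ is specified by offline sampler seeds stored along the path from $\nu$ to the leaves, one seed per level. The invariant we maintain is that, for every test $\chi_{\nu,\tau,z}$ in the family of $\nu$,
\[
   \bigl|\E[\chi_{\nu,\tau,z}(G_\nu(U_d))]-\E[\chi_{\nu,\tau,z}(U_r^{V_\nu})]\bigr|\le\alpha_\ell ,
\]
where $\ell$ is the height of $\nu$ and $\alpha_\ell$ grows additively with $\ell$. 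We then apply \Cref{lem:inw-subtree-replacement} at the root, where $\mathcal I_\varrho=\{\varnothing\}$, $J_\varrho=[k]$, and $B\le 2^{k}$. This shows that the output distribution $\widetilde p_x(z):=\Pr_y[M(\Pi(x;G_\varrho(y)))=z]$ is within $2^k\alpha_h$ of the truth. So it suffices to take $\alpha_h\le \varepsilon 2^{-k}$, and hence per-level accuracy $\alpha:=\varepsilon 2^{-k}/(4(h+1)B)$ with $B=2^{\omega c+k}$. This costs only $O(\omega c+k+\log(N/\varepsilon))$ bits in $\log(1/\alpha)$, since $h=O(\log N)$.

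\textbf{Leaves.} Use a sampler from \Cref{lem:good sampler} with universe $\{0,1\}^r$, accuracy $\alpha$, and failure probability $1/(2B)$, and output $\Samp(x_\nu,y)$. Its offline seed has $r+O(\log(1/\alpha)+\log B)$ bits. A good offline seed, one passing all $\le 2^{c}\cdot 2^{k}$ leaf tests, exists by a union bound. We find it by enumerating offline seeds and, for each test, computing both expectations exactly: enumerate $y$ for the sampler side and all of $\{0,1\}^r$ for the true side. This is where the factor $2^{O(r)}$ in the running time comes from.

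\textbf{Internal nodes.} Given $G_{\nu_L},G_{\nu_R}$ with seed length $d$, first apply the INW two-party generator of \Cref{lem:inw-two-party-generator} with $q=\omega c$ and error $\eta=\alpha$, obtaining $g:\{0,1\}^{\widehat d}\to\{0,1\}^d\times\{0,1\}^d$. We view the simulation of $V_\nu$ against a fixed boundary trace $\tau$ as a two-party protocol between $V_{\nu_L}$ (randomness $G_{\nu_L}(\cdot)$) and $V_{\nu_R}$. Its effective communication is at most $\omega c$ by \Cref{lem:inw-cut-traces}, since the communication between $A_\nu$ and $B_\nu$ is part of $\cut(\nu)$. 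Its measurement is the single bit $\chi_{\nu,\tau,z}$.

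Then the error for $\chi_{\nu,\tau,z}$ under $(G_{\nu_L},G_{\nu_R})\circ g$ decomposes into three parts. The INW term is $\le\alpha$. The terms for replacing $G_{\nu_L}(U_d)$ and $G_{\nu_R}(U_d)$ by truly random strings are each at most $B\alpha_{\ell-1}$ by \Cref{lem:inw-subtree-replacement}, applied to the child with $R$ consisting of the other child's independent randomness. Here we treat $\chi_{\nu,\tau,z}$ as a measurement in the protocol restricted to $V_\nu$, with the complement replaced by the fixed trace $\tau$.

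This multiplicative factor $B$ per level is the main obstacle: it would blow up as $B^h$. To avoid it, the test family at $\nu$ must be rich enough to be closed under composition. The child tests condition on the full boundary trace of the child, which includes communication to the sibling and to the outside. Because of this, we should redo the substitution argument with $\alpha$-closeness of all child tests, and bound the error of each parent test by a sum over child traces. That sum has at most $2^{\omega c}$ terms for each fixed parent test. So the per-level loss is additive, $\alpha_\ell\le\alpha_{\ell-1}\cdot 2^{\omega c}$ as a worst case. I would first try to show the sharper bound
\[
   \alpha_\ell\le 2\alpha_{\ell-1}+\alpha
\]
by a careful accounting that uses the uniqueness argument from \Cref{lem:inw-subtree-replacement}. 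If that fails, I would absorb the factor by reinvoking a sampler at every node: resample the INW seed $\widehat d$ down to length $d$ with an offline seed found by exhaustive search over the $\le B$ parent tests, with accuracy $\alpha$. Each node's error is then measured directly against truly random $V_\nu$ through the tests, by composing a sampler guarantee with the INW guarantee, and the errors add per level.

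\textbf{Resources.} A sampler with universe $\{0,1\}^{\widehat d}$ brings the seed back to $d$. Its offline seed, of length $\widehat d+O(\log(1/\alpha)+\log B)=O(r+\omega c+k+\log(N/\varepsilon))$, is stored. Seeds are stored only along one root-to-leaf path at a time while searching, and the final evaluation reads one seed per level. Computing a leaf's randomness from the root seed traces down the tree as in the strong explicitness argument of \Cref{lem:det-compressed-inw-sv}. This gives workspace $O(S+r+(\omega c+k+\log(N/\varepsilon))(h+1))$. The time is $2^{O(r+\omega c+k)}\poly(N,T,h,1/\varepsilon)$, counting candidate seeds times tests times evaluations of $\chi$, each evaluation costing $T$.
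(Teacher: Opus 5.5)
\textbf{There is a genuine gap in the error analysis.} You located it yourself but did not close it.

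Your invariant compares the tests at $\nu$ under $G_\nu(U_d)$ with the same tests under truly uniform randomness on $V_\nu$, and this does not compose additively. Fix a parent test $\chi_{\nu,\tau,z}$ and let the sibling's randomness be independent. Then the test is a $[0,1]$-weighted sum of left-child test probabilities, taken over all left boundary traces compatible with $\tau$. That means a sum over up to $2^{\omega c}$ possible $A_\nu$--$B_\nu$ transcripts.

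Per-test closeness of the child within $\alpha_{\ell-1}$ lets the child's trace statistics shift by about $2^{\omega c}\alpha_{\ell-1}$ in total, and a $0/1$ weighting can collect all of it. So per-test closeness only yields roughly $\alpha_\ell\le 2^{\omega c}(\alpha^{L}_{\ell-1}+\alpha^{R}_{\ell-1})+\alpha$. The sharper recursion $\alpha_\ell\le 2\alpha_{\ell-1}+\alpha$ you hope for does not follow. You are left with $\alpha_h\approx 2^{\omega c h}\alpha$, which forces $\log(1/\alpha)=\Omega(\omega c h)$. That destroys both the space bound and polynomial time, already in the read-$p$ and stack applications where $c$ and $h$ are both of order $\log n$.

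Your fallback does not help. The sampler at $\nu$ only guarantees closeness to the expectation over its own universe, i.e.\ to $\widehat G_\nu(U_{\widehat d})$. The uniform expectation over $(\{0,1\}^r)^{V_\nu}$ cannot be enumerated. So reaching ``truly random $V_\nu$'' again passes through the children's errors on the parent's tests, with the same $2^{\omega c}$ factor.

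\textbf{The missing idea is to never compare a node's tests with uniform randomness, and instead run one global hybrid on the final measurement $M$.}
\begin{enumerate}
\item At each depth $j$, choose the sampler seed so that every test at every node of that depth is preserved within $\alpha$ relative to $\widehat G_\nu(U_{\widehat d})$. This target is computable by enumeration.
\item Replace each leaf's string by $G_u(U_d)$.
\item Bottom-up, replace the two independent child seeds by $g(U_{\widehat d})$. This costs $\eta$ by \Cref{lem:inw-two-party-generator}, applied to the \emph{whole} network as the two-party protocol $A_\nu$ versus $B_\nu\cup C_\nu$, with outside randomness fixed and measurement $M$.
\item Replace $U_{\widehat d}$ by $\Samp_{\widehat d}(a_j,U_d)$.
\end{enumerate}
Steps 2 and 4 each cost $B\alpha$ by \Cref{lem:inw-subtree-replacement}. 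The factor $B$ is therefore paid once per hybrid step and never compounded. This gives $NB\alpha+(N-1)(\eta+B\alpha)\le\varepsilon$ with $\alpha=\varepsilon/(4NB)$ and $\eta=\varepsilon/(2N)$.

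\textbf{A related resource problem.} Your union bounds are per node (failure probability $1/(2B)$), i.e.\ one offline seed per node. Searching for $\nu$'s seed then needs every seed in its subtree. Storing them exceeds $O(\kappa(h+1))$ space, and recomputing them costs $2^{O(\kappa h)}$ time. You need a single offline seed per depth, shared by all nodes there, with $\gamma=1/(4NB)$ to union-bound over the at most $NB$ tests at that depth.
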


\begin{proof}
Fix \(x\). Write
\(\kappa:=\omega c+k+\log(N/\varepsilon)\), and set
\[
   B:=2^{\omega c+k},
   \qquad
   \eta:=\frac{\varepsilon}{2N},
   \qquad
   \alpha:=\frac{\varepsilon}{4NB},
   \qquad
   \gamma:=\frac1{4NB}.
\]
There are at most \(B\) boundary tests at each node, and
\(\log(1/\eta),\log(1/\alpha),\log(1/\gamma)=O(\kappa)\).
For each required output length \(m\),
\Cref{lem:good sampler} gives an \((\alpha,\gamma)\)-averaging
sampler
\[
   \Samp_m:\{0,1\}^{a_m}\times\{0,1\}^{d}
       \longrightarrow\{0,1\}^{m},
   \qquad
   a_m=m+O(\kappa),\quad d=O(\kappa).
\]
We pad the inner seeds to a common length \(d\ge1\).
Let
\[
   g=(g_{\mathrm L},g_{\mathrm R}):
      \{0,1\}^{\widehat d}
      \longrightarrow\{0,1\}^{d}\times\{0,1\}^{d}
\]
be the generator of \Cref{lem:inw-two-party-generator} for
communication bound \(\omega c\), measurement size \(k\), and error
\(\eta\). Its seed length is \(\widehat d=O(\kappa)\).

\paragraph{Constructing the subtree generators.}
For each node \(\nu\), we construct a generator
\[
   G_\nu:\{0,1\}^{d}
       \longrightarrow(\{0,1\}^{r})^{V_\nu}.
\]
We process all internal nodes at the same depth together, starting
at depth \(h\) and proceeding upward to the root at depth \(0\).

At the leaf layer, choose an outer seed \(a_{\mathrm{leaf}}\) such that for every boundary test \(\chi\) at every leaf
\[
   \left|
      \E_y \chi(\Samp_r(a_{\mathrm{leaf}},y))
      -\E_{\rho\sim U_r}\chi(\rho)
   \right|\le\alpha.
\]
For any one test, the sampler condition fails for at most a
\(\gamma\)-fraction of outer seeds. There are at most \(NB\)
leaf tests, and \(NB\gamma=1/4\), so such a seed exists.
Take the first one in lexicographic order and define
\[
   G_u(y):=\Samp_r(a_{\mathrm{leaf}},y)
   \qquad\text{for every leaf }u.
\]

Assume we have fixed generators for all nodes of depth $j+1$.
Let \(\nu\) be an internal node of depth $j$, define
\[
   \widehat G_\nu(s):=
      G_{\nu_{\mathrm L}}(g_{\mathrm L}(s))
      \sqcup G_{\nu_{\mathrm R}}(g_{\mathrm R}(s)),
\]
where \(\sqcup\) concatenates the random strings in the two disjoint sets of 
 processors. Choose one outer seed \(a_j\) satisfying
\[
   \left|
      \E_y \chi\!\left(
         \widehat G_\nu(\Samp_{\widehat d}(a_j,y))
      \right)
      -\E_s \chi(\widehat G_\nu(s))
   \right|\le\alpha
\]
for every node \(\nu\) at that depth and every boundary test
\(\chi\) at \(\nu\). The same union bound gives such an \(a_j\),
since there are
at most \(NB\) tests at this depth.
Choose the first valid outer seed and set
\[
   G_\nu(y):=
      \widehat G_\nu(\Samp_{\widehat d}(a_j,y)).
\]
Thus the construction stores one outer seed for the leaves and
at most \(h\) outer seeds for the internal depths. 

Let \(\varrho\) denote the root of \(\mathcal T\).
Once the sampler seeds are fixed, we compute
\[
   \widetilde p_x(z)
   :=\Pr_{y\sim U_d}\!\left[M(\Pi(x;G_\varrho(y)))=z\right]
\]
for each \(z\in\{0,1\}^k\) by enumerating all \(y\in\{0,1\}^d\).

\paragraph{Error analysis.}
With all sampler seeds fixed as above, we compare the original
and generated measurement distributions by a hybrid argument.
We first replace the randomness at the leaves, one leaf at a time,
and then merge sibling subtrees, one pair at a time, until reaching
the root. We bound the statistical distance introduced by each
replacement and sum these bounds to obtain the total error.

At each leaf \(u\), we replace its uniform random string by
\(G_u(U_d)\).
The choice of \(a_{\mathrm{leaf}}\) ensures that this changes
each boundary-test expectation at \(u\) by at most \(\alpha\).
By \Cref{lem:inw-subtree-replacement}, each replacement changes
the final measurement distribution by at most \(B\alpha\).
Thus all leaf replacements contribute at most \(NB\alpha\).

Consider an internal node \(\nu\) of depth \(j\).
In the preceding hybrid, the two child generators use independent
uniform \(d\)-bit seeds, also independent of the randomness outside
\(V_\nu\). 

We first replace these seeds by \(g(U_{\widehat d})\).
To bound the resulting error, fix the randomness outside \(V_\nu\)
and view the computation as a two-party protocol.
One party simulates \(A_\nu\), and the other simulates
\(B_\nu\cup C_\nu\), using the fixed outside randomness.
Their private random strings are the two child seeds.
Their communication consists of the messages on edges leaving
\(A_\nu\), all of which belong to \(\cut(\nu)\).
By \Cref{lem:inw-cut-traces}, this protocol has at most
\(2^{\omega c}\) possible communication traces.
Hence \Cref{lem:inw-two-party-generator} shows that replacing
the two child seeds by \(g(U_{\widehat d})\) changes the
measurement distribution by at most \(\eta\) in statistical distance.
This bound holds for every fixed choice of the outside randomness,
so averaging over it gives the same bound.

The subtree \(V_\nu\) is now driven by
\(\widehat G_\nu(U_{\widehat d})\).
For the second step, replace \(U_{\widehat d}\) by
\(\Samp_{\widehat d}(a_j,U_d)\), so that its random strings
are generated by \(G_\nu(U_d)\).
The choice of \(a_j\) changes each boundary-test expectation
at \(\nu\) by at most \(\alpha\).
By \Cref{lem:inw-subtree-replacement}, this changes the
measurement distribution by at most \(B\alpha\).
The parent subtree is now driven by one uniform \(d\)-bit seed,
independently of the other current subtrees, restoring the
invariant for the next merge.
Each merge therefore contributes at most \(\eta+B\alpha\).

There are \(N\) leaf replacements and \(N-1\) merges.
Finally we generate all random strings by
\(G_\varrho(U_d)\). By the triangle inequality,
\[
\begin{aligned}
   &\SD\!\left(
      M(\Pi(x;U_r^V)),
      M(\Pi(x;G_\varrho(U_d)))
   \right)\\
   &\qquad\le NB\alpha+(N-1)(\eta+B\alpha)
      \le\varepsilon.
\end{aligned}
\]
Since \(\widetilde p_x\) is the distribution of
\(M(\Pi(x;G_\varrho(U_d)))\), this proves the stated guarantee.

\paragraph{Space and time analysis.}
We first consider boundary tests.
Once the sampler seeds are fixed,  to compute a requested bit of \(G_\nu(y),\widehat G_\nu\),
follow the corresponding path from \(\nu\) to a leaf.  At an internal
node of depth \(j\), with current seed \(y'\), replace \(y'\) by the
appropriate component of

$$
   g\!\left(\Samp_{\widehat d}(a_j,y')\right);
$$

at the leaf, apply \(\Samp_r(a_{\mathrm{leaf}},y')\). Apart from tree navigation, this
takes

$$
   \poly(r+\kappa,h)
   \quad\text{time and}\quad
   O(r+\kappa+\log(h+1))
   \quad\text{workspace}.
$$

Together with the \(\DTISP(T,S)\) tree and boundary-test procedures,
a boundary test can therefore be evaluated on such an implicitly
generated random input in time

$$
   \poly(N,T,r,\kappa,h) 
   \quad\text{time and}\quad
   O(S+r+\kappa+\log(h+1))
   \quad\text{workspace}.
$$

We find \(a_{\mathrm{leaf}},a_{h-1},\ldots,a_0\) by exhaustive
search.  For each candidate, enumerate the nodes and then
boundary tests and compute the required expectations by enumerating
their uniform seeds, evaluating the generator on demand as above.
There are \(2^{r+O(\kappa)}\) candidates at the leaf level and
\(2^{O(\kappa)}\) at each depth, at most
\(NB\le2^{O(\kappa)}\) tests in each search, and at most
\(2^{O(r+\kappa)}\) uniform seeds per expectation.  Hence all sampler
seeds are found in

$$
   2^{O(r+\kappa)}\poly(N,T,\kappa,h)
$$

time.

Throughout the search, we retain the leaf sampler seed and the sampler
seeds fixed at the internal depths.  These use

$$
   r+O(\kappa(h+1))
$$

bits altogether.  All remaining workspace is reused across boundary-test
and generator evaluations, so the total workspace is

$$
   O\!\left(S+r+\kappa(h+1)\right).
$$

After all sampler seeds have been fixed, we compute
\(\widetilde p_x(z)\) by enumerating the \(d\)-bit root seed and
evaluating \(G_\varrho\) on demand.  This requires
\(2^{O(\kappa)}\poly(N,T,r,\kappa,h)\) time and the same workspace
bound.  Together with
\(\kappa=\omega c+k+\log(N/\varepsilon)\), the bounds above give the
time and space claimed in the theorem.

\end{proof}

\subsection{SC derandomization of read-\(p\) machines}
\label{subsec:inw-read-p}

We apply \Cref{thm:inw-resource-lifting} to read-\(p\) machines,
attaining the following result.

\begin{corollary}[SC derandomization of read-\(p\) machines]
\label{cor:read-p-inw-sc}
Let \(M\) be a read-\(p\) probabilistic machine running in time
\(T\) and using workspace \(S\) on \(n\)-bit inputs.
It uses a two-way read-only random tape, with at most \(p\)
visits to each cell in every computation.
Consecutive steps staying at the same cell is counted as a single visit.

For every input and every \(0<\varepsilon<1/2\), its acceptance
probability can be deterministically approximated to additive
error \(\varepsilon\) in time
\[
   2^{O(p(S+\log(n+T)))}
   \poly(n,T,1/\varepsilon)
\]
and workspace
\[
   O\!\left(
      \bigl(p(S+\log(n+T))+\log(1/\varepsilon)\bigr)
      \log(n+T)
   \right).
\]
In particular, languages decided with bounded error by such
machines with \(S=O(\log n)\), \(T=\poly(n)\), and \(p=O(1)\)
belong to \(\SC^2\).
\end{corollary}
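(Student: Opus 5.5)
We will derive Corollary~\ref{cor:read-p-inw-sc} from \Cref{thm:inw-resource-lifting} by encoding the machine as a communication protocol on a path, following \cite[Section~4.1]{impagliazzoPseudorandomnessNetworkAlgorithms1994}.

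\textbf{The protocol.} Without loss of generality the random tape has $T$ cells, since the machine visits at most $T$ of them. Let $H$ be the path on $N=T$ processors, where processor $u$ owns random-tape cell $u$, so $r=1$. A visit to cell $u$ is a maximal run of steps with the head on $u$. Processor $u$ holds the full configuration at the start of each of its visits: the state, the work tape, the input-head position, and a time stamp. It simulates the visit deterministically from its own bit and $x$. When the head moves to $u\pm1$, processor $u$ sends the current configuration to that neighbor. Message length is $O(S+\log(n+T))$ and each processor takes part in at most $2p$ boundary crossings. Hence the effective communication is $c=O(p(S+\log(n+T)))$.

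The protocol runs in rounds, but the natural message order is sequential. To handle this, we let the protocol run for $T$ rounds, with each message tagged by its round and time stamp, while processors idle otherwise. The trace bound only counts the distinct possible traces, and the round index is determined by the time stamp. The measurement has $k=1$: the acceptance bit is reported by the processor holding the head at time $T$. To make this processor fixed, we modify the machine to return the head to cell $1$ before halting. This adds at most one extra visit per cell, so $p$ becomes $p+1$.

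\textbf{The tree.} We pad $N$ to a power of two and take $\mathcal T$ to be the perfect binary tree over the path in left-to-right order, with depth $h=\lceil\log N\rceil$. For an internal node, the sets $A_\nu,B_\nu,V_\nu$ are contiguous intervals. Their cut edges are at most three: the left end of $A_\nu$, the junction between $A_\nu$ and $B_\nu$, and the right end of $B_\nu$. So $\omega=O(1)$. The tree description is arithmetic on indices, computable in $O(\log N)$ space.

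A boundary test $\chi_{\nu,\tau,z}$ is evaluated by the following procedure. Given incoming crossing configurations from $\tau$, simulate the machine inside the interval $V_\nu$ starting from each incoming configuration until the head exits. Check that the exit configuration and time match the next outgoing entry of $\tau$, and check the final bit against $z$. This needs $O(S+\log(n+T))$ space plus a pointer into $\tau$, and polynomial time. So the boundary tests are $\DTISP(\poly(n,T),O(S+\log(n+T)))$-evaluable. The main delicate point is showing that this test exactly matches \Cref{def:inw-interface-view} in the round-based formalization. In particular, consecutive steps on the same cell count as one visit, so within an interval the simulation is determined by the incoming messages and the local bits.

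\textbf{Plugging in.} Apply \Cref{thm:inw-resource-lifting} with $r=1$, $\omega=O(1)$, $c=O(p(S+\log(n+T)))$, $k=1$, $N=T$ and $h=O(\log T)$. The time bound is $2^{O(p(S+\log(n+T)))}\poly(n,T,1/\varepsilon)$. The workspace bound is $O\bigl((p(S+\log(n+T))+\log(T/\varepsilon))\log(n+T)\bigr)$, which matches the claimed bound. With $S=O(\log n)$, $T=\poly(n)$, $p=O(1)$ and constant $\varepsilon$, this gives polynomial time and $O(\log^2 n)$ space. Thresholding the approximated acceptance probability at $1/2$ then places the language in $\SC^2$.
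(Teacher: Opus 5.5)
Your proposal is correct and follows essentially the same route as the paper: the same path protocol with timestamped configuration tokens ($r=k=1$, $c=O(p(S+\log(n+T)))$), the same balanced interval tree of width at most $3$, and the same interval-simulation boundary tests, all plugged into \Cref{thm:inw-resource-lifting}. The differences are cosmetic. You fix the measuring processor by modifying the machine to walk its head back to cell $1$ (at most one extra visit per cell, and at most doubling $T$), whereas the paper appends a final token sweep to processor $1$ inside the protocol. Also, the head can reach cell $T+1$, so you should take $N=T+1$ rather than $T$, which is harmless.
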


\paragraph{The model.}
Fix an input \(x\in\{0,1\}^n\). Denote the random tape of
\(M\) as
\[
   \rho=\rho_1\cdots\rho_N\in\{0,1\}^N.
\]
The computation is deterministic once \(\rho\) is fixed, and the
randomized computation chooses \(\rho\) uniformly at random.
The random-tape head starts at cell \(1\) and, at each step,
moves to an adjacent cell or stays in place. At cell \(i\),
the machine reads the fixed bit \(\rho_i\).

For every \(\rho\): the machine
halts within \(T\) steps and visits each random-tape cell at most
\(p\) times. The initial position is at
cell \(1\). 
Staying at a cell for multiple steps is regarded as a single visit to that cell.
Set
\[
   s_0:=S+\left\lceil\log(n+T+N+2)\right\rceil.
\]
A configuration of \(M\) (excluding its fixed input), also together with the current timestamp, has an
\(O(s_0)\)-bit encoding.  

\paragraph{The path protocol.}
Let \(H\) be a simple path whose nodes are processors \(1,\ldots,N\).
Processor \(i\) has deterministic input \((x,i)\) and private random
bit \(\rho_i\).
We define the following protocol which can simulate the computation.

The protocol maintains a single token (a string) consisting a configuration of
\(M\) and also the index $i\in [N]$ indicating which machine is running during the conducting of the protocol. Initially, processor \(1\) holds the token which contains the initial
configuration. In simulation round \(t\), the running processor \(i\)
holding the token reads its local bits and simulates step
\(t\) of \(M\). If the random-tape head stays at \(i\), the token
remains there; if the head moves to \(i\pm1\), the token carrying
the updated configuration is sent to that neighbor. If \(M\)
halts, the token remains at its current
processor and we further do a final sweep as the following.
Each processor checks if it has the token, if yes, send the token to the processor before it. Here the first processor will keep the token if it already has it. This takes $2N$ rounds of computation and communication.
Finally the protocol conduct a $1$-measurement at the 1st processor.

Each visit to a cell causes at most one incoming and one outgoing
token transfer. Hence each processor participates in \(O(p)\)
simulation transfers and \(O(1)\) transfers during the final sweep.
Since transfer times are adaptive, we use effective communication.
A processor's communication trace is determined by \(O(p)\)
nonempty transfers, each specified by its round, direction, and an
\(O(s_0)\)-bit token; all other messages are the fixed empty
message. Since \(\log(T+N)=O(s_0)\), the trace has an
\(O(ps_0)\)-bit encoding. Thus
\[
   c=O(ps_0),
   \qquad
   r=k=1.
\]

\paragraph{The partition tree.}
Use the canonical balanced interval tree of the path. The root
represents \([1,N]\), and a nonsingleton interval \([a,b]\) has
children
\[
   [a,m]
   \qquad\text{and}\qquad
   [m+1,b],
   \qquad
   m=\left\lfloor\frac{a+b}{2}\right\rfloor.
\]
The leaves are the singleton processors, and the tree has depth
\(O(\log(N+1))\).

For an internal node \(\nu\) representing \([a,b]\), the three-way
cut consists of the edge between its two children and at most two
path edges leaving \([a,b]\). Hence the tree has INW width at
most \(3\).

Identify each node by its binary root-to-node path. Following the
corresponding bisections recovers its interval in \(O(\log N)\)
workspace. Within the same bound, we can find its parent and
children, identify the processor at a leaf, and test membership in
\(V_\nu\). The interval endpoints also determine its boundary
edges and
\[
   J_\nu=
   \begin{cases}
      \{1\},&1\in V_\nu,\\
      \varnothing,&\text{otherwise}.
   \end{cases}
\]
Enumerating the valid node identifiers therefore produces the
required tree description in polynomial time and \(O(\log N)\)
workspace.

\paragraph{Evaluating the boundary tests.}
Given a node \(\nu\), a candidate boundary trace
\(\tau\in\mathcal I_\nu\), and \(z\in\{0,1\}^{J_\nu}\), simulate
the path protocol inside \(V_\nu\), using read-only access to its
local random bits.

If \(1\in V_\nu\), begin with the initial token; otherwise, begin
with no token. While the token lies in \(V_\nu\), simulate its
successive steps and check each outgoing transfer against the
corresponding record of \(\tau\). When the token is outside
\(V_\nu\), the next incoming record of \(\tau\) supplies it to the
appropriate processor. The test accepts exactly when the simulated
boundary communication agrees with \(\tau\) and, if
\(1\in V_\nu\), the final answer equals \(z\). This is the boundary
test defined in \Cref{def:inw-interface-view}.

The evaluator stores one configuration, the current round, the
interval endpoints, and a position in the trace. It therefore uses
\(O(s_0)\) workspace and time polynomial in \(n,T,N,s_0\).
Index the tests in a fixed lexicographic order on \((\tau,z)\),
so that \(L_\nu=|\mathcal I_\nu|2^{|J_\nu|}\).
Given \(i\in[L_\nu]\), the corresponding pair \((\tau,z)\) can be
identified within the same resource bounds. Hence the boundary
tests are
\[
   \DTISP\bigl(\poly(n,T,N,s_0),O(s_0)\bigr)\text{-evaluable}.
\]

\begin{proof}[Proof of \Cref{cor:read-p-inw-sc}]
We may assume \(N\le T+1\), since no cell beyond \(T+1\) can be
reached within \(T\) steps. Hence
\[
   s_0=O(S+\log(n+T)).
\]
Apply \Cref{thm:inw-resource-lifting} with
\[
   r=k=1,\qquad
   c=O(ps_0),\qquad
   \omega=O(1),\qquad
   h=O(\log N).
\]
Together with the boundary-test evaluator above, this gives the
stated bounds.

If \(S=O(\log n)\), \(T=\poly(n)\), and \(p=O(1)\), then
\(s_0=O(\log n)\). Taking \(\varepsilon\) to be a sufficiently
small constant gives polynomial running time and
\(O(\log^2 n)\) workspace, proving the final assertion.
\end{proof}

\subsection{SC derandomization of machines with oblivious stacks}
\label{subsec:inw-oblivious-stack}

\begin{definition}
For a probabilistic AuxPDM $M$, we say that \(M\) has a \emph{(randomness) oblivious stack} if for every
input \(x\), there is a stack-operation schedule
\[
   \sigma_x:[T]\to
   \{\mathsf{push},\mathsf{pop},\mathsf{idle}\}
\]
such that the operation at step \(t\) is always \(\sigma_x(t)\),
independently of the random choices.
\end{definition}

\begin{corollary}[SC derandomization of machines with oblivious stacks]
\label{cor:oblivious-stack-inw-sc}
Let \(M\) be a polynomial-time probabilistic auxiliary
pushdown machine using \(O(\log n)\) workspace on
\(n\)-bit inputs.
Suppose the stack of $M$ is randomness oblivious.

For every input and every \(0<\varepsilon<1/2\), its acceptance
probability can be deterministically approximated to additive
error \(\varepsilon\) in time \(\poly(n,1/\varepsilon)\)
and workspace
\[
   O\!\left(\log^2 n+\log n\log(1/\varepsilon)\right).
\]
In particular, languages decided with bounded error by such
machines belong to \(\SC^2\).
\end{corollary}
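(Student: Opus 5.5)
The plan is to reduce \Cref{cor:oblivious-stack-inw-sc} to \Cref{thm:inw-resource-lifting} with one processor per time step. Fix $x$ and the schedule $\sigma_x$. Processor $t\in[T]$ holds the $t$-th random bit, so $r=1$.

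\textbf{The protocol.} Processor $t$ receives the worktape configuration and timestamp from processor $t-1$ and simulates step $t$. If $\sigma_x(t)=\mathsf{pop}$, it also receives the popped symbol from the processor $\mu(t)$ that pushed it. It then forwards the new configuration to $t+1$, and if $\sigma_x(t)=\mathsf{push}$ it sends the pushed symbol to the processor that will pop it.

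\textbf{Matching and communication.} Because the schedule is independent of the randomness, the stack height $\mathrm{ht}(t)$ is a deterministic function of $\sigma_x$. So is the push/pop matching $\mu$: the pop at time $t'$ is matched with the latest push $t<t'$ with $\mathrm{ht}(t)=\mathrm{ht}(t')+1$ and no intermediate height below $\mathrm{ht}(t)$. Hence $H$ is known in advance and computable in $O(\log n)$ space by counting. Each processor exchanges $O(1)$ messages of $O(\log n)$ bits, so $c=O(\log n)$. We take $k=1$, with the final answer forwarded back to a designated processor as in the read-$p$ sweep, and $N=\poly(n)$.

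\textbf{The partition tree (main obstacle).} $H$ is a path plus a non-crossing set of arcs, i.e.\ an outerplanar graph. The naive interval tree fails: the arcs crossing a cut point number as many as the stack height, and their vertex cover can be large. Instead I would build a balanced tree whose nodes are \emph{regions}, each a union of $O(1)$ maximal time intervals described by $O(1)$ endpoints. The invariant is that every edge leaving a region is incident to one of $O(1)$ endpoint vertices.

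The natural regions are well-matched blocks $[a,b]$ with $\mu(a)=b$, together with ``interval minus a well-matched hole'' sets $[a,b]\setminus[c,d]$. For these, the only arcs to the outside are at $a,b,c,d$ and the path edges adjacent to them. To split a region, take a point that splits the weight roughly in half and use the outermost arc enclosing it (or the arc structure at the current boundary). This cuts the region into two pieces of the same form, each with $O(1)$ endpoints. This is Cook's hill-and-valley decomposition and gives width $\omega=O(1)$.

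Balance is the delicate part. A single nested arc may split off only a tiny piece. I would fix this with the standard trick for trees and outerplanar graphs: alternate between splitting by weight and splitting to reduce the number of holes. This keeps at most two holes per region and gives depth $O(\log N)$. If this direct argument becomes messy, the fallback is to invoke a balanced tree decomposition of width $O(1)$ for treewidth-$2$ graphs and convert its bags to a binary partition tree with constant-size cut covers. Every choice depends only on $\mathrm{ht}$ and $\mu$, so the tree description is computable in $O(\log n)$ space and polynomial time.

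\textbf{Boundary tests.} A test at $\nu$ specifies $O(1)$ boundary messages, hence $O(\log n)$ bits, and the $J_\nu$-bit of the output. To evaluate it, simulate the steps of $V_\nu$ in time order, feeding in incoming messages from $\tau$, and check the outgoing ones against $\tau$. Symbols pushed and popped inside $V_\nu$ must be remembered across arbitrarily long stretches, so I would use a deterministic logspace AuxPDM with a real stack for this simulation. It runs in polynomial time, and by Cook/Sudborough it is therefore in $\DTISP(\poly(n),O(\log^2 n))$. So the tests are $\DTISP(\poly(n),O(\log^2n))$-evaluable.

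\textbf{Conclusion.} Plugging $r=k=1$, $c=O(\log n)$, $\omega=O(1)$, $h=O(\log n)$ and $S=O(\log^2 n)$ into \Cref{thm:inw-resource-lifting} gives time $2^{O(\log n)}\poly(n,1/\varepsilon)$ and space $O(\log^2n+\log n\log(1/\varepsilon))$, which is the claimed bound.
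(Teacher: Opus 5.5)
Your architecture matches the paper's: one processor per step, time edges plus push--pop matching edges, \(c=O(\log n)\), \(k=1\), a constant-width balanced partition tree, boundary tests run by a logspace AuxPDM and placed in \(\DTISP(\poly(n),O(\log^2 n))\) via Sudborough and Cook, and then \Cref{thm:inw-resource-lifting}. The gap is in the step you flag as the main obstacle: the balanced tree is not established, and the concrete rule you give fails. Splitting at ``the outermost arc enclosing a weight-halving point'' makes no progress on a deeply nested schedule such as \(\mathsf{push}^m\mathsf{pop}^m\). There the outermost enclosing arc is the whole block, each split peels off only its two endpoints, and the depth becomes \(\Theta(m)\). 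That example creates no holes, so alternating with hole-reducing splits does not help; a different weight split is needed. The treewidth fallback would require a space-efficient construction of a balanced constant-width decomposition and a conversion to a binary partition tree, and you supply neither.

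The missing idea is a centroid (heavy-path) choice in which the whole peeled-off annulus becomes a single region. Call an interval closed if no matching edge has exactly one endpoint in it. Let \(I\) be closed of size \(m\ge4\), and start with \(K=I\). Split \(K=[a,b]\) into \(\{a,b\}\) and \([a+1,b-1]\) if \(\mu(a)=b\). Otherwise split it into \([a,p]\) and \([p+1,b]\), where \(p\) is the partner of \(a\), or \(a\) if it is unmatched. Keep replacing \(K\) by the unique part of size \(>m/2\) while one exists. At the end, \(I\setminus K\) (size \(<m/2\), a closed interval minus a closed hole) and \(K\) are the children of \(I\), and the two parts of \(K\), each of size \(\le m/2\), are the children of \(K\).

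A region \(I\setminus J\) is handled the same way along a nested chain \(I=K_0\supsetneq\cdots\supsetneq K_s=J\) of closed intervals whose successive differences are closed intervals or matched pairs. Cut where \(|K_i\setminus J|\) first drops to at most \(m/2\), and apply the closed-interval step once more to \(K_i\setminus K_{i+1}\) if it is still too large. Every region is then a closed interval, possibly minus one closed hole. So at most four time edges leave it, the width is at most \(6\), and sizes halve within four levels.

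A second, smaller error: \(H\), \(\mu\), and the tree are not computable in \(O(\log n)\) space ``by counting'', because \(\sigma_x\) is not part of the input. To obtain \(\sigma_x(t)\) you must run \(M\), for instance with all random bits fixed to \(0\), which is valid by obliviousness, and that needs the stack. In general this is as hard as evaluating a deterministic logspace AuxPDM: a machine without randomness is trivially oblivious and can encode its answer in its last stack operation. The repair, as in the paper, is to compute \(\sigma_x(t)\) with an AuxPDM and Cook's simulation in \(\DTISP(\poly(n),O(\log^2 n))\), and compose this with the tree construction and test evaluation. This stays within \(S=O(\log^2 n)\), so your final bounds survive.

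Finally, your protocol tacitly assumes the stack is read only at pops. A standard AuxPDM consults the top symbol at every step, so you need the paper's normalization that caches the top symbol on the work tape while preserving the push/pop/idle schedule. Without it, a single pushed symbol might have to be sent to unboundedly many later processors.
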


\paragraph{The model.}
We consider a standard form in which stack symbols are read only
when they are popped out. To obtain this form, cache the current stack top on
the work tape, initially storing a distinguished bottom marker.
A push moves the cached symbol onto the stack and stores the new
symbol in the cache. A pop uses the cached symbol as the popped
value and restores the cache by popping a symbol from the stack.
Inspections or updates of the current top affect only the cache.
This uses \(O(\log n)\) additional work bits and preserves the
push/pop/idle schedule.

We continue to denote the resulting machine by \(M\). It has a
read-only input, \(O(\log n)\) work bits, and an initially empty
stack. We allow stack symbols with \(O(\log n)\)-bit encodings.
On an input \(x\in\{0,1\}^n\), the machine runs for
\(T=\poly(n)\) steps and uses independent random blocks
\[
   \rho_1,\ldots,\rho_T\in\{0,1\}^r,
   \qquad r=O(1),
\]
one block at each step. A local configuration consists of the
finite control, work tape, and head positions, excluding the stack
contents.

At each step \(t\), the machine performs one of
\(\mathsf{push}\), \(\mathsf{pop}\), or \(\mathsf{idle}\).
A push writes one symbol, a pop reads and removes the top symbol,
and an idle step leaves the stack unchanged; no other operation
reads the stack. The next local configuration and any symbol
written are determined by \(x\), the current local configuration,
and \(\rho_t\), together with the popped symbol at a pop step.
The stack never underflows, and the output is a designated bit of
the local configuration after step \(T\).

The \(O(\log n)\)-bit stack symbols can be represented by
equal-length blocks over a fixed alphabet. Expanding each operation
into a fixed-length block of elementary steps, padding as necessary,
preserves polynomial running time, \(O(\log n)\) workspace, and
the schedule's independence of the random choices.

For a fixed input \(x\), the schedule \(\sigma_x\) uniquely matches
each pop step with the push whose symbol it removes. These
push--pop pairs are noncrossing in time order. Pushes whose symbols
remain on the stack after step \(T\) are unmatched.

\paragraph{The communication protocol.}
Fix \(x\) and its stack-operation schedule \(\sigma_x\).
We define a protocol \(\Pi_M\) on a graph \(H_x\) with processors
\(v_0,\ldots,v_T\). The graph has a time edge
\(\{v_{t-1},v_t\}\) for every \(t\in[T]\) and a matching edge
\(\{v_s,v_t\}\) for every matched push--pop pair \(s<t\).

The time edges carry the local configuration from one computation
step to the next, while each matching edge carries the symbol
written at a push to the pop that reads it. Since each step belongs
to at most one matching pair, \(H_x\) has maximum degree at most
three. The matching edges are noncrossing in the order
\(v_0,\ldots,v_T\).

Processor \(v_t\), for \(t\in[T]\), receives the private random
block \(\rho_t\). Processor \(v_0\) sends the initial local
configuration to \(v_1\). For each \(t\in[T]\), processor \(v_t\)
simulates step \(t\) using the configuration received from
\(v_{t-1}\), its random block \(\rho_t\), and, at a pop step,
the symbol received from the matching push. If \(t<T\), it sends
the resulting local configuration to \(v_{t+1}\). At a matched
push, it also sends the pushed symbol to the matching pop.
After simulating step \(T\), processor \(v_T\) outputs the answer.

Each processor exchanges only \(O(1)\) messages of
\(O(\log n)\) bits, at rounds determined by \(x\). Hence the
protocol has effective communication \(c=O(\log n)\), and the
output of \(v_T\) is a \(1\)-measurement.

\begin{lemma}[Efficiently evaluable partition trees for oblivious-stack computations]
\label{lem:oblivious-stack-interface-sc}
For every input \(x\), the graph \(H_x\) admits a partition tree
\(\mathcal T_x\) of depth \(O(\log T)\) and INW width at most
\(6\) such that \((\Pi_M,\mathcal T_x)\) has
\(\DTISP(\poly(n),O(\log^2 n))\)-evaluable boundary tests.
\end{lemma}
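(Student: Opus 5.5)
The plan is to build $\mathcal T_x$ from the nesting structure of the matched push--pop pairs. Any set of consecutive processors is cut only by $O(1)$ time edges, but it may be crossed by many matching edges. The nodes of the tree will therefore be unions of at most a constant number of time intervals, chosen so that every matching edge leaving the set is incident to one of $O(1)$ designated endpoint vertices.

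First I would make the stack structure explicit in logspace. The height after step $t$ is computed by counting pushes minus pops in $\sigma_x(1),\ldots,\sigma_x(t)$. The match of a pop at $t$ is the latest $s<t$ with $\sigma_x(s)=\mathsf{push}$ and $\mathrm{height}(s-1)=\mathrm{height}(t)$; this takes $O(\log T)$ space and polynomial time. Because the matching edges are noncrossing, $H_x$ is outerplanar: a path plus nested chords on one side. Closing each matched pair $s<t$ into a ``bracket'' $[s,t]$ gives a nesting forest $\mathcal F$ of brackets. The unmatched pushes only contribute time edges, so they sit at the top level as leaves/separators.

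Next I would define admissible regions: sets of the form $[a,b]\setminus\bigcup_{i\le 2}(c_i,d_i)$. Each removed hole is either empty or a full bracket interior (or a maximal run of consecutive top-level brackets), and $a,b$ are chosen at a common stack level. The key claim is that every edge leaving such a region is incident to one of the at most $6$ vertices $a,b,c_1,d_1,c_2,d_2$. Time edges leave only at these endpoints. A chord with exactly one endpoint in the region would either cross a hole bracket, cross the boundary level at $a$ or $b$, or be the hole bracket itself. Noncrossingness rules out the first two, and the last is incident to $c_i$ or $d_i$. For a node $\nu$ with children regions $A_\nu,B_\nu$, the cut edges are then covered by the endpoints of the two children together with those of $\nu$. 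I will arrange the splits so that this union has at most $6$ distinct vertices, which gives width at most $6$.

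For the balanced split I would use the standard tree-separator and contraction argument on $\mathcal F$, as in Cook's and Ruzzo's simulations of AuxPDMs. For a region with at most one hole, choose a bracket $[s,t]$ whose interior contains between $1/3$ and $2/3$ of the region's vertices (a centroid of the nesting tree). Split into that bracket's interior and the rest, which has one more hole. For a region with two holes, first split along the lowest common enclosing bracket of the two holes, so that each child again has at most one hole. Alternating these two kinds of split keeps the hole count at most $2$, and the size drops by a constant factor every two levels, so the depth is $O(\log T)$. When a centroid bracket is too small at some level (a long run of siblings), I would split the run of top-level sibling brackets at its midpoint instead; this also preserves the common-level property of the endpoints.

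Finally, evaluability. A node is described by its root-to-node path of $O(\log T)$ steps. Recomputing each step's region from its parent's $O(\log n)$-bit endpoint description takes logspace (centroid selection by counting), so the full description of $\nu$ uses $O(\log^2 n)$ space. Membership in $V_\nu$, the parent and children, and $L_\nu$ follow directly. A boundary test simulates $v_t\in V_\nu$ in time order, taking configurations at hole and region boundaries from $\tau$ and checking the outgoing ones against $\tau$. Pushed symbols whose matching pop lies inside $V_\nu$ must be remembered, which is done by using a genuine auxiliary stack. Since the regions are unions of intervals closed under nesting, the inside pushes and pops form a valid stack discipline. This gives a deterministic polynomial-time logspace AuxPDM, which lies in $\DTISP(\poly(n),O(\log^2 n))$ by Cook's theorem; alternatively one can recompute the symbols recursively. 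I expect the main obstacle to be the balanced-split step: keeping at most two holes, a common stack level at every region endpoint, and logspace-computable centroids all at once.
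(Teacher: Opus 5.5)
Your overall plan essentially matches the paper's: nonleaf sets that no matching edge leaves, so only time edges cross, and boundary tests run as logspace AuxPDM simulations through Sudborough's and Cook's results. However, your way of building the tree does not give the stated width $6$.

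A two-hole region $R=[a,b]\setminus(H_1\cup H_2)$ generically has six pairwise vertex-disjoint boundary time edges. All of them lie in $\cut(\nu)$ when $V_\nu=R$, so any split of $R$ that creates a new breakpoint pushes the minimum vertex cover above $6$. For instance, splitting off the interior $(s,t)$ of the lowest common enclosing bracket adds $\{s,s+1\}$ and $\{t-1,t\}$, which gives eight disjoint cut edges. The next split, of $(s,t)\setminus(H_1\cup H_2)$, gives seven. So the promise to ``arrange the splits so that this union has at most $6$ distinct vertices'' cannot be kept. What you actually get is width at most $9$, by $|\cut(\nu)|=(|\partial A_\nu|+|\partial B_\nu|+|\partial V_\nu|)/2$. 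That is enough for \Cref{cor:oblivious-stack-inw-sc}, but not for the lemma as stated.

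The paper never creates a second hole. Every nonleaf set is $I$ or $I\setminus J$ with $J\subsetneq I$ closed intervals. The set $I\setminus J$ is split along a nested chain $I=K_0\supsetneq K_1\supsetneq\cdots\supsetneq K_s=J$ of closed intervals, whose consecutive differences are closed intervals or matched pairs, into $I\setminus K_i$ and $K_i\setminus J$. Thus every set has at most four boundary edges and $|\cut(\nu)|\le(4+4+4)/2=6$.

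Two smaller repairs are also needed in your setup. First, ``common stack level'' must mean the outer interval is closed: equal heights at the ends still let the height dip below in between, and then many chords leave the interval at non-designated vertices. Second, holes must be allowed to be arbitrary closed sub-runs of siblings, not just bracket interiors or maximal runs, or your fallback for long sibling runs leaves the admissible class.

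The second gap is that you never obtain the schedule $\sigma_x$. Heights, matches, brackets and centroids are all computed ``in logspace'' from $\sigma_x$, but $\sigma_x$ is not part of the input. Whether step $t$ pushes or pops depends on $M$'s local configuration, which depends on earlier popped symbols. So even with the randomness fixed, this is a deterministic AuxPDM computation, not known to be in logspace.

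The paper computes $\sigma_x(t)$ by running $M$ with every random block set to $0^r$, which is valid exactly because the stack is oblivious. It then applies the Sudborough--Cook simulation, giving polynomial time and $O(\log^2 n)$ space. That step is where the $O(\log^2 n)$ in the tree description comes from. It does not come from storing the root-to-node path, which is only $O(\log T)$ bits with one $O(\log n)$-bit region kept at a time. Without this step, your claim that the tree description and test indexing lie in $\DTISP(\poly(n),O(\log^2 n))$ is unsupported.

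A minor point: at leaves, matching edges do cross the boundary. The test must therefore also take the pushed or popped symbols from $\tau$, not only the configurations.
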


\begin{proof}
Fix \(x\). We first assume that the stack-operation schedule
\(\sigma_x\) is available on a read-only tape. Given a push or pop
step, we can find its matching partner, if one exists, by scanning
the schedule with a stack-height counter, using \(O(\log T)\)
workspace.

\emph{Partition tree.}
Identify processor \(v_t\) with its time index \(t\).
Call an interval \(I\subseteq[0,T]\) \emph{closed} if no matching
edge has exactly one endpoint in \(I\). We construct the tree so
that every nonleaf processor set has the form \(I\) or
\(I\setminus J\), where \(I\) and \(J\subsetneq I\) are closed
intervals. Such a set is a union of at most two intervals and does
not split any matching pair.

For processor sets of size at most three, use singleton leaves,
keeping the endpoints of any matching pair together until their
final separation. The recursive constructions below handle larger
sets.

We first describe how to split a closed interval \(K=[a,b]\).
If \(a\) and \(b\) form a matching pair, split \(K\) into
\[
   \{a,b\}
   \qquad\text{and}\qquad
   [a+1,b-1].
\]
Otherwise, let \(p\) be the matching partner of \(a\), or let
\(p=a\) if \(a\) is unmatched, and split \(K\) into
\[
   [a,p]
   \qquad\text{and}\qquad
   [p+1,b].
\]
Since the matching edges are noncrossing, each part is either a
closed interval or a matching pair.

Now let \(I\) be a closed interval of size \(m\ge4\).
Starting with \(K=I\), repeatedly replace \(K\) by the unique
part of its split having more than \(m/2\) vertices, as long as
such a part exists. Whenever such a part exists, it is a closed
interval. At termination, let \(P\) and \(Q\) be the two parts of
the split of \(K\). Then
\[
   |I\setminus K|<m/2,
   \qquad
   |P|,|Q|\le m/2.
\]
Make \(I\setminus K\) and \(K\) the two children of \(I\), and
make \(P\) and \(Q\) the two children of \(K\). If
\(I\setminus K=\varnothing\), simply split \(I=K\) into \(P\)
and \(Q\). Within two levels, every resulting subproblem thus has
size at most \(m/2\).

It remains to handle \(W=I\setminus J\), where \(I\) and
\(J\subsetneq I\) are closed intervals. Write \(J=[c,d]\) and
let \(m=|W|\ge4\). We first construct a nested sequence of closed
intervals
\[
   I=K_0\supsetneq K_1\supsetneq\cdots\supsetneq K_s=J
\]
such that every difference \(K_i\setminus K_{i+1}\) is either a
closed interval or a matching pair.

To construct the sequence, suppose that the current interval is
\(K=[a,b]\supsetneq J\). Let \(p\) be the matching partner of
\(a\), or let \(p=a\) if \(a\) is unmatched. If \(a=c\), take
\(K'=J\). Otherwise, \(a<c\), and the closedness of \(J\) implies
that \(p\notin J\). Set
\[
   K'=
   \begin{cases}
      [p+1,b],   & p<c,\\
      [a,p],     & d<p<b,\\
      [a+1,b-1], & p=b.
   \end{cases}
\]
In each case, \(K'\) is a proper closed subinterval of \(K\)
containing \(J\), while \(K\setminus K'\) is either a closed
interval or the matching pair \(\{a,b\}\). This gives the desired
sequence.

Choose \(i\) so that
\[
   |K_i\setminus J|>m/2,
   \qquad
   |K_{i+1}\setminus J|\le m/2.
\]
Let
\[
   A:=I\setminus K_i,\qquad
   D:=K_i\setminus K_{i+1},\qquad
   C:=K_{i+1}\setminus J.
\]
Then
\[
   W=A\cup D\cup C,
   \qquad
   |A|<m/2,\qquad |C|\le m/2.
\]
Make \(A\) and \(K_i\setminus J=D\sqcup C\) the two children of
\(W\), and make \(D\) and \(C\) the two children of
\(K_i\setminus J\), omitting empty sets and suppressing unary
nodes. If \(|D|\le m/2\), every resulting subproblem already has
size at most \(m/2\). Otherwise, \(D\) cannot be a matching pair,
so it is a closed interval. Applying the closed-interval
construction to \(D\) reduces it to subproblems of size at most
\(m/2\) within two further levels.

Consequently, along every root-to-leaf path, the size of the
current processor set decreases by a factor of two within at most
four levels. The resulting partition tree therefore has depth
\(O(\log T)\).

Every nonleaf processor set constructed above is a union of at
most two time intervals and has no matching edge leaving it.
Its boundary therefore contains only time edges, so
\[
   |\partial V_\nu|\le4.
\]
The same bound holds for a leaf because every processor has degree
at most three. If an internal node has children with processor sets
\(A\) and \(B\), then
\[
   |\cut(\nu)|
   =
   \frac{
      |\partial A|+|\partial B|+|\partial(A\cup B)|
   }{2}
   \le6.
\]
Thus the partition tree has INW width at most \(6\).

\emph{Evaluating a boundary test.}
Fix a node \(\nu\), a candidate boundary trace \(\tau\), and
\(z\in\{0,1\}^{J_\nu}\). We evaluate \(\chi_{\nu,\tau,z}\) by
simulating the execution inside \(V_\nu\) in chronological order
on \(v_0,\ldots,v_T\). Use \(\tau\) to supply incoming boundary
messages and check outgoing boundary messages against \(\tau\).

For a time edge whose endpoints both lie in \(V_\nu\), the
receiving processor uses the configuration produced by the
preceding simulated processor. For a time edge crossing the
boundary of \(V_\nu\), supply an incoming configuration from
\(\tau\) or check an outgoing configuration against \(\tau\),
as appropriate.

It remains to consider the matching edges. For every nonleaf node
\(\nu\), the set \(V_\nu\) does not split any matching pair.
Hence every matching edge incident to \(V_\nu\) has both endpoints
in \(V_\nu\). During the simulation, place the symbol written at
each matched push on an auxiliary stack; at the matching pop,
pop and use that symbol. Noncrossing ensures that these operations
are properly nested, so each pop retrieves the required symbol.
If \(V_\nu\) consists of two time intervals, retain the auxiliary
stack while skipping the intervening processors outside \(V_\nu\).

At a leaf, an incident matching edge may cross the boundary of
\(V_\nu\). Supply its incoming message from \(\tau\) or check its
outgoing message against \(\tau\), as appropriate.

If \(v_T\in V_\nu\), also check that the final output equals
\(z\). The simulation accepts exactly when its boundary
communication agrees with \(\tau\) and its local measurement
outcome equals \(z\). Its acceptance indicator is therefore
\(\chi_{\nu,\tau,z}\), as defined in
\Cref{def:inw-interface-view}.

The simulation runs in polynomial time and uses \(O(\log n)\)
workspace in addition to its auxiliary stack. Hence the boundary
test is decidable by a deterministic logspace auxiliary pushdown
automaton in polynomial time, and therefore lies in
\(\mathrm{LogDCFL}\) \cite{Sudborough1978}. By
\(\mathrm{LogDCFL}\subseteq \mathsf{SC}^{2}\)
\cite{Cook1979}, it can be evaluated in polynomial time and
\(O(\log^{2} n)\) workspace when \(\sigma_x\) is given explicitly.

\emph{Computing the stack schedule.}
For \(t\in[T]\), run \(M\) through step \(t\) with every random
block fixed to \(0^r\), and inspect the stack operation at that
step. By obliviousness, this operation is exactly \(\sigma_x(t)\).
The simulation runs in polynomial time and uses \(O(\log n)\)
workspace in addition to its stack. By the same auxiliary-pushdown
simulation theorem, \(\sigma_x(t)\) can therefore be computed in
polynomial time and \(O(\log^2 n)\) workspace.

It follows that the tree description, test indexing, and
boundary-test evaluation are all computable in
\(\DTISP(\poly(n),O(\log^2 n))\), as required.
\end{proof}

\begin{proof}[Proof of \Cref{cor:oblivious-stack-inw-sc}]
Apply \Cref{thm:inw-resource-lifting} with \(N=T+1\),
\(r=O(1)\), \(k=1\), \(\omega=O(1)\), \(c=O(\log n)\), and
\(h=O(\log n)\), using
\Cref{lem:oblivious-stack-interface-sc}. This gives the stated
approximation bounds. Taking \(\varepsilon\) to be a sufficiently
small constant proves the final assertion.
\end{proof}

\bibliographystyle{alpha}
\bibliography{ref}

\appendix
\section{From regular ROBPs to permutation ROBPs}
\label{sec:reg-to-perm-phase}

We convert a regular ROBP into a permutation ROBP over the same alphabet
by replacing each state with $|\Sigma|$ copies.  Summing over the copies
recovers the original average transition matrix.  We also give an
alphabet-reduction variant and a binary encoding of the construction.
Throughout this appendix, matrix rows index target states and columns
index source states, as in the preliminaries.

\subsection{Conversion over the same alphabet}
\label{subsec:reg2perm-phase-setup}
\label{subsec:reg2perm-phase-reduction}

Write $m:=|\Sigma|$ and identify $\Sigma$ with
$\mathbb Z_m=\{0,\ldots,m-1\}$.

\begin{lemma}[Regular-to-permutation conversion]
\label{lem:reg2perm-phase}
Let $B$ be a length-$n$, width-$w$ regular ROBP over a finite alphabet
$\Sigma$.  There is a length-$n$, width-$w|\Sigma|$ permutation ROBP
$B^\pi$ over $\Sigma$, with states $(v,\tau)\in[w]\times\Sigma$, such that
for every interval $[i,j]\subseteq[n]$, every $u,v\in[w]$, and every
$\tau\in\Sigma$,
\begin{equation}
   \sum_{\tau'\in\Sigma}
      [W^\pi_{i..j}]_{(u,\tau'),(v,\tau)}
   =[W_{i..j}]_{u,v}.
   \label{eq:reg2perm-interval-copies}
\end{equation}
In particular, choosing start state $(v_{\mathrm{start}},0)$ and accept
set $V_{\mathrm{accept}}\times\Sigma$ gives $\E B^\pi=\E B$.
The transition function of $B^\pi$ is computable in logspace relative to
the transition oracle of $B$.
\end{lemma}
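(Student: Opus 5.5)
The plan is to build $B^\pi$ layer by layer from a decomposition of each regular layer into $|\Sigma|$ permutations, with the copy index $\tau$ acting as a ``phase'' that selects which permutation a given symbol is routed through. Fix a layer $i$. Since $B$ is regular, the bipartite multigraph of layer $i$ (sources $V_{i-1}$, targets $V_i$, one edge per (source, symbol)) is $m$-regular with $m=|\Sigma|$.

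The plan has three steps.

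\emph{Step 1 (decomposition).} Instead of computing a full König decomposition into $m$ perfect matchings, which need not be logspace-computable, I would use the following simpler labeling. Order the $m$ incoming edges at each target $u\in V_i$ by the lexicographic order of (source, symbol). Let $\mathrm{rk}_i(v,\sigma)\in\mathbb Z_m$ be the rank of edge $(v,\sigma)$ among the edges entering $\delta_i(v,\sigma)$. Then, for fixed $u$, the map $(v,\sigma)\mapsto$ rank is a bijection from the in-edges of $u$ onto $\mathbb Z_m$. Computing it requires only scanning the transition table, which is logspace.

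\emph{Step 2 (definition and permutation property).} Define
\[
   B^\pi_i(\sigma):(v,\tau)\mapsto\bigl(\delta_i(v,\sigma+\tau),\ \mathrm{rk}_i(v,\sigma+\tau)\bigr),
\]
with addition in $\mathbb Z_m$. For each fixed $\sigma$, the map $(v,\tau)\mapsto(v,\sigma+\tau)$ is a bijection of $[w]\times\Sigma$ onto the set of edges of layer $i$. The map edge $\mapsto$ (target, rank) is also a bijection, since targets have in-degree exactly $m$ and ranks are distinct at each target. Their composition is therefore a permutation, so $B^\pi$ is a permutation ROBP.

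\emph{Step 3 (copy-sum identity).} For a single layer and fixed source copy $(v,\tau)$, averaging over $\sigma$ makes $\sigma+\tau$ uniform on $\Sigma$. Hence
\[
   \sum_{\tau'}[W^\pi_i]_{(u,\tau'),(v,\tau)}=\Pr_\sigma[\delta_i(v,\sigma)=u]=[W_i]_{u,v},
\]
independently of $\tau$. In matrix form this reads $P W^\pi_i=W_i P$, where $P$ is the $w\times wm$ matrix that sums over copies. An induction over the interval then gives $PW^\pi_{i..j}=W_{i..j}P$. Applying this to $e_{(v,\tau)}$, and using $Pe_{(v,\tau)}=e_v$, yields the claimed identity \eqref{eq:reg2perm-interval-copies}. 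The statement about $\E B^\pi$ follows by taking start state $(v_{\mathrm{start}},0)$, summing over $u\in V_{\mathrm{accept}}$, and applying the identity with $i=0$, $j=n$.

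The only step that looks delicate is logspace explicitness. The rank is computed by counting predecessors in the fixed order, and the permutation is evaluated with $O(\log(nw|\Sigma|))$ bits and polynomially many oracle calls. So I do not expect a real obstacle. The one point to check carefully is that the ranks at each target exhaust $\mathbb Z_m$, and this is exactly where regularity (in-degree equal to $|\Sigma|$) is used.
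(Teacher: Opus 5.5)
Your proposal is correct and follows the paper's proof: the same lexicographic in-edge rank map, the same phase shift $z=\tau+x \bmod m$, the same argument that the transition is a composition of two bijections, and the same counting-based logspace evaluation. The only difference is cosmetic. You prove the copy-sum identity through the intertwining relation $PW^\pi_i=W_iP$ and induction, whereas the paper argues probabilistically that the effective symbols $Z_t=\Theta_{t-1}+X_t$ are independent and uniform, so the first coordinate follows $B$; these express the same fact.
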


\begin{proof}
Let $\delta_i(v,z)$ be the target of state $v$ under symbol $z$ in layer $i$.
For each target $u$, order its incoming edges
\[
   \Edge^i_u:=\{(v,z)\in[w]\times\Sigma:\delta_i(v,z)=u\}
\]
lexicographically.  Regularity gives $|\Edge^i_u|=m$, so the rank map
\[
   \rk^i_u:\Edge^i_u\longrightarrow\{0,\ldots,m-1\}
\]
is a bijection.  Define the transition of $B^\pi$ on symbol $x$ by
\begin{equation}
   \Pi^i_x(v,\tau)
   :=\bigl(\delta_i(v,z),\rk^i_{\delta_i(v,z)}(v,z)\bigr),
   \qquad z:=\tau+x\pmod m.
   \label{eq:reg2perm-transition}
\end{equation}

\begin{claim}
\label{cl:reg2perm-permutation}
For every $i\in[n]$ and $x\in\Sigma$, the map $\Pi^i_x$ is a permutation.
\end{claim}
\begin{proof}
The map $(v,\tau)\mapsto(v,\tau+x\bmod m)$ is a bijection.  The map
\[
   R^i(v,z):=\bigl(\delta_i(v,z),\rk^i_{\delta_i(v,z)}(v,z)\bigr)
\]
is also a bijection: each pair $(u,\tau')$ identifies the unique edge of
rank $\tau'$ entering $u$.  Their composition is $\Pi^i_x$.
\end{proof}

\begin{claim}
\label{cl:reg2perm-expectation}
The identity \eqref{eq:reg2perm-interval-copies} holds.
\end{claim}
\begin{proof}
Start at an arbitrary copy $(v,\tau)$ in layer $i-1$, and let
$X_i,\ldots,X_j$ be independent uniform symbols.  If the state before
step $t$ is $(V_{t-1},\Theta_{t-1})$, then
\[
   Z_t:=\Theta_{t-1}+X_t\pmod m,
   \qquad V_t=\delta_t(V_{t-1},Z_t).
\]
Conditioned on $X_i,\ldots,X_{t-1}$, the value $\Theta_{t-1}$ is fixed
and $Z_t$ is uniform on $\Sigma$.  Thus $Z_i,\ldots,Z_j$ are independent
uniform symbols, and the first coordinate follows the original program
from $v$.  The probability that this coordinate is $u$ at layer $j$ is
both sides of \eqref{eq:reg2perm-interval-copies}.
\end{proof}

\begin{claim}
\label{cl:reg2perm-explicit}
For a fixed layer $i$, one transition of $B^\pi$ can be evaluated in
$O(\log(w|\Sigma|))$ additional workspace using $O(w|\Sigma|)$ calls to
$\delta_i$.
\end{claim}
\begin{proof}
Compute $z=\tau+x\bmod m$ and $u=\delta_i(v,z)$.  To compute the rank
of $(v,z)$, enumerate the pairs $(v',z')$ preceding it in lexicographic
order and count those satisfying $\delta_i(v',z')=u$.  The enumeration
and counter use $O(\log(wm))$ bits.  Including the layer index requires
$O(\log n+\log(wm))$ bits in total, in addition to the workspace of the
original transition oracle.
\end{proof}

The three claims prove the lemma.
\end{proof}

\subsection{Binary encoding}
\label{subsec:reg2perm-binary}

For a power-of-two alphabet, the conversion can be implemented by reading
one bit of each input symbol at a time.

\begin{lemma}[Binary permutation conversion]
\label{lem:reg2perm-binary}
Let $B$ be a length-$n$, width-$w$ regular ROBP over an alphabet
$\Sigma$ of size $2^t$, where $t\ge1$.  There is a binary permutation
ROBP $B^{\mathrm{bin}}$ of length $nt$ and width $w2^t$ such that
$\E B^{\mathrm{bin}}=\E B$.  Its transition function is computable in
logspace relative to the transition oracle of $B$.
\end{lemma}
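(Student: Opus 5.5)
The plan is to compose \Cref{lem:reg2perm-phase} with a bit-by-bit encoding of the alphabet $\Sigma=\{0,1\}^t$. First, apply \Cref{lem:reg2perm-phase} to $B$. This gives a permutation ROBP $B^\pi$ of length $n$ and width $w2^t$ over $\Sigma$. By the ``in particular'' clause, with start state $(v_{\mathrm{start}},0)$ and accept set $V_{\mathrm{accept}}\times\Sigma$, we have $\E B^\pi=\E B$. The task then reduces to converting one permutation layer over $\{0,1\}^t$ into $t$ binary permutation layers without increasing the width. The natural tool is the phase structure in \eqref{eq:reg2perm-transition}: the symbol $x$ is used only through the shift $\tau\mapsto\tau+x$, followed by the fixed bijection $R^i$.

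To make this shift bitwise, I will identify $\Sigma$ with $\mathbb Z_2^t$ rather than $\mathbb Z_m$, and use XOR in place of addition mod $m$. The proof of \Cref{lem:reg2perm-phase} goes through verbatim with any group structure on $\Sigma$: Claim~\ref{cl:reg2perm-permutation} needs only that translation is a bijection, and Claim~\ref{cl:reg2perm-expectation} needs only that $\Theta_{t-1}\oplus X_t$ is uniform when $X_t$ is uniform. With this choice, $\tau\oplus x$ factors as $t$ successive maps, one for each bit $x_b$:
\[
   (v,\tau)\mapsto(v,\tau\oplus x_be_b),
   \qquad b=1,\ldots,t,
\]
and each of these is a permutation of $[w]\times\{0,1\}^t$ for either value of $x_b$. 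The bijection $R^i$, which does not depend on the input, is folded into the last of the $t$ layers. That layer therefore applies $R^i\circ(\text{flip bit }t\text{ if }x_t=1)$, which is again a permutation for each bit value. The result is a binary permutation ROBP $B^{\mathrm{bin}}$ of length $nt$ and width $w2^t$. On reading the bits of $x$ in order, the composite of the $t$ layers equals $\Pi^i_x$. Since $x$ is uniform on $\Sigma$ exactly when its $t$ bits are independent and uniform, the average transition over a block of $t$ binary layers equals the average of $\Pi^i_x$. Consequently $\E B^{\mathrm{bin}}=\E B^\pi=\E B$.

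For explicitness, a transition of $B^{\mathrm{bin}}$ at binary position $(i-1)t+b$ is computed by reading $b$ and flipping one phase bit. When $b=t$, we additionally evaluate $R^i$ as in Claim~\ref{cl:reg2perm-explicit}, which takes $O(\log(w|\Sigma|)+\log n)$ space using rank counting. I do not expect a genuine obstacle here. The only point needing care is the switch from $\mathbb Z_m$ to $\mathbb Z_2^t$, since addition mod $2^t$ produces carries that a single-bit layer cannot implement as a permutation on a width-$w2^t$ state space without extra memory. I would therefore state explicitly that \Cref{lem:reg2perm-phase} holds for any abelian group structure on $\Sigma$, and then re-check the two claims under XOR.
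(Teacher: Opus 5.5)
Your proof is correct and is essentially the paper's construction: apply \Cref{lem:reg2perm-phase}, then split each layer into $t$ binary permutation layers that translate the stored phase by one bit's contribution, and fold the fixed bijection $R^i$ into the last of them. The one difference is your switch to XOR, which works but is unnecessary. Your concern about carries is unfounded: the paper's $b$-th layer adds $y2^{b-1}$ modulo $2^t$ to the whole phase $\tau\in\mathbb Z_{2^t}$. That is a translation of $\mathbb Z_{2^t}$ and hence a permutation of $[w]\times\mathbb Z_{2^t}$, with the carries absorbed inside $\tau$. So \Cref{lem:reg2perm-phase} can be used verbatim, without re-verifying its claims under a different group structure.
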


\begin{proof}
Identify $\Sigma$ with $\mathbb Z_{2^t}$, and write each symbol as
$x=\sum_{b=1}^t x^{(b)}2^{b-1}$.  Replace layer $i$ of the construction
\eqref{eq:reg2perm-transition} by $t$ binary layers on the same state set
$[w]\times\Sigma$.  On bit $y\in\{0,1\}$, the $b$-th transition is
\[
   U^{i,b}_y(v,\tau):=
   \begin{cases}
      (v,\tau+y2^{b-1}\bmod 2^t),& b<t,\\[2pt]
      R^i(v,\tau+y2^{t-1}\bmod 2^t),& b=t,
   \end{cases}
\]
where $R^i$ is the bijection defined in the proof of
\Cref{lem:reg2perm-phase}.  Each map is a permutation.  Composing the
$t$ transitions on the bits of $x$ adds $x$ to the copy index and then
applies $R^i$, giving exactly $\Pi^i_x$.

Use start state $(v_{\mathrm{start}},0)$ and accept set
$V_{\mathrm{accept}}\times\Sigma$.  Independent uniform input bits
give independent uniform symbols in successive blocks, so
$\E B^{\mathrm{bin}}=\E B^\pi=\E B$.  A binary transition requires
only modular addition and, at the last bit of a block, evaluation of
$R^i$.  The space bound follows from \Cref{cl:reg2perm-explicit}.
\end{proof}

\section{A convolution bound}
\label{sec:scaled-convolution-proof}

We prove \eqref{eq:scaled-convolution}.  For \(k\ge0\), recall that
\[
   A_k
   =
   \sum_{\substack{p,q\ge0\\p+q=k}}
      \varepsilon^{(p)}\varepsilon^{(q)}
   +
   \sum_{\substack{p,q\ge0\\p+q=k-1}}
      \varepsilon^{(p)}\varepsilon^{(q)},
   \qquad
   \varepsilon^{(h)}
   =
   \frac{\gamma^{h+1}}{100L(h+1)^2}.
\]
Here \(0<\gamma\le1\) and \(L\ge1\), and the second sum is empty
when \(k=0\).
Dividing by \(\varepsilon^{(k)}\) gives
\[
\frac{A_k}{\varepsilon^{(k)}}
=
\frac{1}{100L}
\left(
   \gamma(k+1)^2
   \sum_{p=0}^{k}
      \frac{1}{(p+1)^2(k-p+1)^2}
   +
   (k+1)^2
   \sum_{p=0}^{k-1}
      \frac{1}{(p+1)^2(k-p)^2}
\right).
\]

For every integer \(N\ge2\), partial fractions give
\[
\begin{aligned}
   N^2\sum_{a=1}^{N-1}\frac{1}{a^2(N-a)^2}
   &=
   2\sum_{a=1}^{N-1}\frac1{a^2}
   +
   \frac4N\sum_{a=1}^{N-1}\frac1a\\
   &\le 6,
\end{aligned}
\]
where we used
\[
   \sum_{a=1}^{N-1}\frac1{a^2}\le2,
   \qquad
   \sum_{a=1}^{N-1}\frac1a
      \le1+\frac{N-2}{2}=\frac N2.
\]
Apply this estimate with \(N=k+2\) to the first sum and, for
\(k\ge1\), with \(N=k+1\) to the second.  Since
\((k+1)^2\le(k+2)^2\), we obtain
\[
   \frac{A_k}{\varepsilon^{(k)}}
   \le
   \frac{6\gamma+6}{100L}
   \le
   \frac{12}{100L}
   \le
   \frac1{8L},
\]
as required.

\section{Deferred proofs for the forward--backward analysis}
\label{app:fb-technical-proofs}

We prove the weight and pathwise bounds used in
\Cref{sec:fb-inw-regular-permutation}.  Throughout, intervals
\(J=[i,j]\) index transitions from \(V_{i-1}\) to \(V_j\), as in the
preliminaries.

\subsection{Dissipation for doubly stochastic matrices}
\label{app:fb-dissipation-proof}

Recall that \(\Psi(q)=\sum_{a<b}|q_a-q_b|\).  The weight bounds follow
from the following inequality.

\begin{lemma}[Doubly stochastic dissipation]
\label{lem:fb-dissipation}
Let \(D\) be a random \(w\times w\) matrix with entries in \(\{0,1\}\)
and exactly one \(1\) in each row.  If \(A:=\E D\) is doubly stochastic,
then, for every \(q\in\mathbb R^w\),
\begin{equation}
   \E\|(D-A)q\|_1
   \le 2\bigl(\Psi(q)-\Psi(Aq)\bigr).
   \label{eq:fb-dissipation}
\end{equation}
If \(D\) is always a permutation matrix, the same inequality holds with
\(D^\top,A^\top\) in place of \(D,A\).
\end{lemma}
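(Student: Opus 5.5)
The plan is to reduce \eqref{eq:fb-dissipation} to a one-threshold combinatorial inequality by a layer-cake decomposition of $q$. Since $D$ has exactly one $1$ in each row, there is a random map $f:[w]\to[w]$ with $(Dq)_a=q_{f(a)}$, and $A_{a,b}=\Pr[f(a)=b]$. For the transposed statement, note that if $D$ is always a permutation matrix then $D^\top$ also has exactly one $1$ per row and $A^\top=\E D^\top$ is doubly stochastic. Hence the second claim is the first one applied to $D^\top$, and it suffices to prove \eqref{eq:fb-dissipation}.

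\emph{Step 1 (symmetrize the left side).} Fix a row $a$ and let $X=q_{f(a)}$, so that $\E X=(Aq)_a$. Let $X'$ be an independent copy of $X$. By Jensen, $\E|X-\E X|\le \E|X-X'|$. For a threshold $\theta$, put $S_\theta=\{b:q_b>\theta\}$ and $p_a(\theta)=(A\mathbf 1_{S_\theta})_a=\Pr[f(a)\in S_\theta]$. Writing $|x-y|=\int|\mathbf 1[x>\theta]-\mathbf 1[y>\theta]|\,d\theta$ then gives
\[
   \E\|(D-A)q\|_1\le 2\int\sum_a p_a(\theta)\bigl(1-p_a(\theta)\bigr)\,d\theta .
\]

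\emph{Step 2 (layer-cake for the right side).} Let $k_\theta=|S_\theta|$. Then $\Psi(q)=\int k_\theta(w-k_\theta)\,d\theta$. Since $(Aq)_a=c+\int p_a(\theta)\,d\theta$ for a common constant $c$, we have $|(Aq)_a-(Aq)_b|\le\int|p_a(\theta)-p_b(\theta)|\,d\theta$, and hence
\[
   \Psi(Aq)\le\int\sum_{a<b}|p_a(\theta)-p_b(\theta)|\,d\theta .
\]
Double stochasticity gives $\sum_a p_a(\theta)=k_\theta$ and $p_a(\theta)\in[0,1]$. It therefore suffices to prove the pointwise inequality: for every $p\in[0,1]^w$ with $\sum_a p_a=k\in\{0,\dots,w\}$,
\[
   \sum_a p_a(1-p_a)+\sum_{a<b}|p_a-p_b|\le k(w-k).
\]
Equality holds at $0$/$1$ vectors, which is a sanity check.

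\emph{Step 3 (the pointwise inequality; main obstacle).} Let $m(t)=|\{a:p_a>t\}|$ for $t\in[0,1]$. This is a nonincreasing integer-valued step function with $\int_0^1 m=k$, and we have the identities
\[
   \sum_{a<b}|p_a-p_b|=\int_0^1 m(w-m)\,dt,
   \qquad
   \sum_a p_a^2=\int_0^1 2t\,m\,dt .
\]
Substituting, the inequality becomes
\[
   \int_0^1 m(t)(1-2t)\,dt\le\int_0^1 m^2\,dt-k^2 .
\]
The left side is the covariance of $m(t)$ with $1-2t$, and the right side is the variance of $m(t)$ over uniform $t$. I would prove this by decomposing $m=\sum_{j=1}^{w}\mathbf 1[t<p_{(j)}]$ with sorted values $p_{(1)}\ge\dots\ge p_{(w)}$ and $\sum_j p_{(j)}=k$. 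Both sides are then explicit in the $p_{(j)}$: the left side is $\sum_j p_{(j)}(1-p_{(j)})$, and the right side is $\sum_j(2j-1)p_{(j)}-k^2$. I would then show the difference is minimized at integral points. The intended route is to check that moving mass from a fractional $p_{(j)}$ toward $0$ or $1$ while preserving the sum does not increase the difference, using the integrality of $k$ in the final accounting. This step is where I expect the real work; if the smoothing argument is awkward, I would instead try induction on $w$, removing the largest coordinate.
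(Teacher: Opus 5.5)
Your Steps 1--2 are correct, and they give a different reduction from the paper's. The paper writes $A=\E P$ via Birkhoff--von Neumann and proves a two-permutation inequality $\|Pq-Qq\|_1\le 2\bigl(\Psi(q)-\Psi\bigl(\tfrac{Pq+Qq}{2}\bigr)\bigr)$ by splitting $q$ into threshold indicators. It then uses Jensen/convexity of $\Psi$, and replaces $D$ by $P$ because the left side depends only on the row marginals. Since it compares two permutations of a $0/1$ vector, every midpoint has entries in $\{0,\tfrac12,1\}$. The computation is therefore exact ($2r$ versus $r^2$), and the only inequality needed is $r\le r^2$. Your per-row symmetrization needs neither Birkhoff--von Neumann nor the Jensen step on $\Psi$. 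The price is that your threshold profile $p(\theta)=A\mathbf 1_{S_\theta}$ is an arbitrary fractional vector with integer sum, so you need a genuinely fractional inequality.

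That inequality (Step 3) carries all the content, and you have not proved it; the smoothing is only announced. It can be made to work, but the choice of move matters. Moving mass $s$ from the smallest fractional coordinate $p_{(j)}$ to the largest one $p_{(i)}$ preserves the order. It changes your difference by $2s\bigl(s+p_{(i)}-p_{(j)}-(j-i)\bigr)<0$ for $0<s\le\min\{1-p_{(i)},p_{(j)}\}$. Integrality of $k$ rules out a single leftover fractional coordinate. More simply, the covariance--variance form you already derived closes it in one line. Since $\int_0^1(1-2t)\,dt=0$, $|1-2t|\le 1$ and $\int_0^1 m\,dt=k$,
\[
   \int_0^1 m(1-2t)\,dt=\int_0^1(m-k)(1-2t)\,dt
   \le\int_0^1|m-k|\,dt\le\int_0^1(m-k)^2\,dt=\int_0^1 m^2\,dt-k^2 .
\]
The second inequality uses that $m(t)-k$ is an integer, so $|m-k|\le(m-k)^2$. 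This is exactly the analogue of the paper's $r\le r^2$; with it added, your argument is a complete proof.
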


\begin{proof}
We first show that, for any permutation matrices \(P,Q\),
\begin{equation}
   \|Pq-Qq\|_1
   \le 2\left(
      \Psi(q)-\Psi\!\left(\frac{Pq+Qq}{2}\right)
   \right).                                                       \label{eq:fb-midpoint-claim}
\end{equation}
Relabeling coordinates, assume \(q_1\le\cdots\le q_w\) and write
\[
   q=q_1\one+
     \sum_{s=1}^{w-1}(q_{s+1}-q_s)e^{(s)},
   \qquad
   e^{(s)}=\one_{\{s+1,\ldots,w\}}.
\]
Each difference \(q_b-q_a\), for \(a<b\), is the sum of the intervening
gaps, so
\[
   \Psi(q)=
   \sum_{s=1}^{w-1}(q_{s+1}-q_s)\Psi(e^{(s)}).
\]
For one of these indicator vectors \(e\), write \(Pe=\one_S\),
\(Qe=\one_T\), and \(k=|S|=|T|\).  Put
\(r=|S\setminus T|=|T\setminus S|\).  The midpoint
\((Pe+Qe)/2\) has \(k-r\) coordinates equal to \(1\), \(2r\) equal to
\(1/2\), and \(w-k-r\) equal to \(0\).  Hence
\[
   \|Pe-Qe\|_1=2r,
   \qquad
   \Psi(e)-\Psi\!\left(\frac{Pe+Qe}{2}\right)=r^2.
\]
Since \(r\) is a nonnegative integer, \(r\le r^2\), proving
\eqref{eq:fb-midpoint-claim} for \(e\).  By the triangle inequality,
\begin{align*}
   \|Pq-Qq\|_1
   &\le\sum_{s=1}^{w-1}(q_{s+1}-q_s)
      \|Pe^{(s)}-Qe^{(s)}\|_1\\
   &\le 2\Psi(q)
      -2\sum_{s=1}^{w-1}(q_{s+1}-q_s)
         \Psi\!\left(\frac{Pe^{(s)}+Qe^{(s)}}2\right)\\
   &\le 2\left(\Psi(q)-\Psi\!\left(\frac{Pq+Qq}2\right)\right).
\end{align*}
The last step uses that \(\Psi\) is subadditive, positively homogeneous,
and unchanged by adding a constant vector.

Let $P$ be a random permutation matrix with $\E P=A$, and let $P'$ be an
independent copy.  Such a distribution exists by the Birkhoff--von Neumann
theorem.  Jensen's inequality, \eqref{eq:fb-midpoint-claim}, and convexity of
$\Psi$ imply
\begin{align}
   \E\|(P-A)q\|_1
   &\le \E\|Pq-P'q\|_1 \notag\\
   &\le 2\left(
      \Psi(q)-\E\Psi\!\left(\frac{Pq+P'q}{2}\right)
   \right) \notag\\
   &\le 2\bigl(\Psi(q)-\Psi(Aq)\bigr).                \label{eq:fb-perm-barycenter}
\end{align}

For both \(D\) and \(P\), the unique \(1\) in row \(i\) lies in
column \(j\) with probability \(A_{ij}\).  Therefore
\[
   \E\|(D-A)q\|_1
   =\sum_{i,j}A_{ij}|q_j-(Aq)_i|
   =\E\|(P-A)q\|_1.
\]
Together with \eqref{eq:fb-perm-barycenter}, this proves
\eqref{eq:fb-dissipation}.  If \(D\) is always a permutation matrix,
then \(D^\top\) also satisfies the hypotheses, giving the transposed
inequality.
\end{proof}

\subsection{Forward and backward weight bounds}
\label{app:fb-weight-bounds-proof}

\begin{proof}[Proof of \Cref{lem:fb-weight-bounds}]
Fix \(J=[i,j]\).  For the forward propagation
\(p_{i-1}=p\), \(p_t=W_t p_{t-1}\), apply
\Cref{lem:fb-dissipation} to \(B_t(\sigma)\), with
\(\sigma\) uniform in \(\Sigma\).  Since its mean is \(W_t\),
\[
   \E_{\sigma\in\Sigma}\|(B_t(\sigma)-W_t)p_{t-1}\|_1
   \le
   2\bigl(\Psi(p_{t-1})-\Psi(p_t)\bigr).
\]
The definition of \(\FBF_J\) and telescoping give
\begin{align*}
   \FBF_J(p)
   &\le4\sum_{t=i}^{j}
      \bigl(\Psi(p_{t-1})-\Psi(p_t)\bigr)\\
   &=4\bigl(\Psi(p)-\Psi(W_{i..j}p)\bigr).
\end{align*}

For the backward propagation \(q_j=q\), \(q_{t-1}=W_t^\top q_t\),
apply the transposed inequality to obtain
\[
   \E_{\sigma\in\Sigma}\|(B_t(\sigma)-W_t)^\top q_t\|_1
   \le
   2\bigl(\Psi(q_t)-\Psi(q_{t-1})\bigr).
\]
Thus
\begin{align*}
   \FBB_J(q)
   &\le4\sum_{t=i}^{j}
      \bigl(\Psi(q_t)-\Psi(q_{t-1})\bigr)\\
   &=4\bigl(\Psi(q)-\Psi(W_{i..j}^\top q)\bigr).
\end{align*}
The nonnegativity of $\Psi$ proves the second inequalities in
\eqref{eq:fb-weight-F} and \eqref{eq:fb-weight-B}.  Finally,
\[
   \Psi(v)
   =\sum_{x<y}|v_x-v_y|
   \le\sum_{x<y}(|v_x|+|v_y|)
   =(w-1)\|v\|_1
   \le w\|v\|_1,
\]
which proves \eqref{eq:fb-weight-budget}.
\end{proof}

\subsection{Bounds for transitions with fixed symbols}
\label{app:fb-pathwise-bounds-proof}

\begin{proof}[Proof of \Cref{lem:fb-pathwise-bounds}]
Fix \(J=[i,j]\) and \(\sigma_J\in\Sigma^{j-i+1}\), and put
\[
   U:=B_J(\sigma_J)=B_j(\sigma_j)\cdots B_i(\sigma_i).
\]
For the forward bound, let \(r_{i-1}=p\) and
\(r_t=B_t(\sigma_t)r_{t-1}\) for \(i\le t\le j\).  Compare this
path with the average propagation \(p_{i-1}=p\),
\(p_t=W_t p_{t-1}\).  Since permutation matrices preserve the
\(\ell_1\) norm, the definition of \(\Lambda_F\) gives
\begin{align*}
   \|r_t-p_t\|_1
   &\le
     \|B_t(\sigma_t)(r_{t-1}-p_{t-1})\|_1
     +\|(B_t(\sigma_t)-W_t)p_{t-1}\|_1\\
   &\le\|r_{t-1}-p_{t-1}\|_1
     +\Lambda_F\E_{\sigma\in\Sigma}
        \|(B_t(\sigma)-W_t)p_{t-1}\|_1.
\end{align*}
Summing over \(t=i,\ldots,j\), and using
\(r_j=Up\) and \(p_j=W_{i..j}p\), yields
\[
   \|Up-W_{i..j}p\|_1
   \le \Lambda_F\sum_{t=i}^{j}
      \E_{\sigma\in\Sigma}\|(B_t(\sigma)-W_t)p_{t-1}\|_1
   =\frac{\Lambda_F}{2}\FBF_J(p).
\]

For the backward bound, let \(s_j=q\) and
\(s_{t-1}=B_t(\sigma_t)^\top s_t\), and compare it with
\(q_j=q\), \(q_{t-1}=W_t^\top q_t\).  Similarly,
\begin{align*}
   \|s_{t-1}-q_{t-1}\|_1
   &\le
     \|B_t(\sigma_t)^\top(s_t-q_t)\|_1
     +\|(B_t(\sigma_t)-W_t)^\top q_t\|_1\\
   &\le \|s_t-q_t\|_1
     +\Lambda_B\E_{\sigma\in\Sigma}
        \|(B_t(\sigma)-W_t)^\top q_t\|_1.
\end{align*}
Summing from \(t=j\) down to \(i\), with
\(s_{i-1}=U^\top q\) and \(q_{i-1}=W_{i..j}^\top q\), gives
\[
   \|U^\top q-W_{i..j}^\top q\|_1
   \le \Lambda_B\sum_{t=i}^{j}
      \E_{\sigma\in\Sigma}\|(B_t(\sigma)-W_t)^\top q_t\|_1
   =\frac{\Lambda_B}{2}\FBB_J(q),
\]
as required.
\end{proof}

\section{Strongly-few reachability decidable by deterministic logspace AuxPDM in polynomial time}
\label{app:stcon-sf}

We use the definition of $\mathrm{STCON}_{\mathrm{sf}}$ from
\cite{ApersEdenhofer2025}. For a directed graph $G=(V,E)$, let
$N_G(u,v)$ denote the number of finite directed walks from $u$ to $v$,
including the length-zero walk when $u=v$. Let $N_G(u,v)=\infty$ when such number is infinite. Then
\[
  \mathrm{STCON}_{\mathrm{sf}}
  =\left\{\langle G,s,t,1^k\rangle:
    N_G(u,v)\le k\text{ for every }u,v\in V,
    \quad N_G(s,t)\ge 1\right\}.
\]
The graph is given explicitly, and the bound $k$ is given in unary.

\begin{proposition}
\label{prop:stcon-sf-auxpdm}
The language $\mathrm{STCON}_{\mathrm{sf}}$ is decidable by a
deterministic AuxPDM using $O(\log N)$ work space and $N^{O(1)}$ time,
where $N$ is the input length. Consequently,
\[
  \mathrm{STCON}_{\mathrm{sf}}
  \in \mathsf{LogDCFL}\subseteq\mathsf{SC}^{2}.
\]
\end{proposition}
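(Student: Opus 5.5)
The plan is a deterministic depth-first enumeration of walks, with the AuxPDM stack holding the current walk and a logspace counter that stops the search as soon as too many walks have been seen. The key observation is that the promise $N_G(u,v)\le k$ for all $u,v$ holds iff, for every source $u$, the total number of walks starting at $u$ satisfies
\[
   \sum_{v}N_G(u,v)\le k|V|
\]
and moreover each individual $N_G(u,v)$ is at most $k$.
In particular, a cycle through any vertex $u$ gives infinitely many walks from $u$ to itself. Such a cycle is therefore detected as a blow-up of the walk count from $u$, and we never need to test acyclicity separately, which would be NL-hard.

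\textbf{Step 1: the enumeration.} For each $u\in V$, the machine performs a DFS over all walks starting at $u$.
\begin{itemize}
\item A stack frame is a pair consisting of a vertex and the index of the next outgoing edge to try, using $O(\log N)$ bits encoded as a fixed-length block of stack symbols.
\item Pushing a frame extends the current walk by one edge; popping a frame backtracks.
\item Every push corresponds to a new distinct walk from $u$, namely the current prefix.
\item A work-tape counter records the number of walks generated so far. If it exceeds $k|V|+1$, the machine rejects immediately.
\end{itemize}
The counter needs only $O(\log(k|V|))=O(\log N)$ bits, since $k$ is given in unary.

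\textbf{Step 2: correctness of the rejection.} If the DFS from $u$ generates more than $k|V|$ walks, then some $N_G(u,v)>k$, so rejecting is correct. This covers the case of a cycle reachable from $u$, where the DFS would otherwise never terminate.

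If instead the DFS finishes below the threshold, it has listed every walk from $u$ exactly once. We then rerun the DFS once for each target $v$, with a second counter recording the walks that end at $v$. We reject if some count exceeds $k$.

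Finally, we run the DFS from $s$ and accept iff $t$ is ever reached and no earlier check rejected.

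\textbf{Step 3: resources.} Every DFS performs at most $k|V|+1$ pushes and the same number of pops. Each step scans the adjacency list in polynomial time, so the total running time is
\[
   O\bigl(|V|^2\cdot k|V|\cdot \operatorname{poly}(N)\bigr)=N^{O(1)}.
\]
The work tape holds only $u$, $v$, the counters, and the current frame, which is $O(\log N)$ bits.

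The language is therefore decided by a deterministic logspace AuxPDM in polynomial time, so it lies in $\mathsf{LogDCFL}$~\cite{Sudborough1978}. By Cook's simulation~\cite{cook1979deterministic}, it follows that $\mathrm{STCON}_{\mathrm{sf}}\in\mathsf{SC}^2$.

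\textbf{Main obstacle.} The delicate point is guaranteeing polynomial running time and termination when the promise fails, because then the graph may contain cycles and the DFS could run forever. The counter cutoff at $k|V|+1$ resolves this, and Step 2 is exactly the argument that this cutoff never rejects a valid instance.
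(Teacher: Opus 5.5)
Your proposal is correct and follows essentially the same route as the paper: a depth-first traversal of the walk tree from each source with DFS frames on the auxiliary stack, a total-walk counter cut off at about $k|V|$ (justified by pigeonhole, which also handles cycles), a per-target counter cut off at $k$, and then Sudborough's characterization followed by Cook's simulation. The differences are cosmetic: the paper runs both counters in a single traversal for each ordered pair $(u,v)$, while you do a separate total-count pass first, and your slightly looser threshold $k|V|+1$ is harmless because any finished traversal is exhaustive and the per-target reruns catch any excess.
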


\begin{proof}
Write $V=[n]$. Assume $k\ge 1$.

\paragraph{The walk tree.}
For each $u\in V$, consider the rooted tree $\mathcal T_u$ whose nodes
are all finite directed walks starting at $u$. Its root is the
length-zero walk, and the children of a walk are its
one-step progressions. Order these
children by their final vertices. Different walks ending at the same
vertex of $G$ are distinct nodes of $\mathcal T_u$.
It satisfies the bound
\begin{equation}
\label{eq:stcon-sf-tree-size}
  |\mathcal T_u|=\sum_{v\in V}N_G(u,v)\le nk
\end{equation}
on every acceptable input graph.

\paragraph{A bounded depth-first traversal.}
Fix an ordered pair $(u,v)$. We traverse $\mathcal T_u$ in depth-first
order, maintaining two counters on the work tape. The counter $a$
counts all tree nodes entered, and $b$ counts those whose walks end
at $v$. Both counters start at zero and are incremented only on the
first entry into a tree node, including the root. Reject the input
immediately if
\[
  a>nk \qquad\text{or}\qquad b>k.
\]
Otherwise, continue the traversal; if it finishes, return $b$. Note that a root that can reach a
directed cycle has an infinite walk tree, and its traversal must
reach one of the cutoffs rather than run indefinitely.

Run this procedure for every ordered pair $(u,v)\in V^2$, reusing the
work tape and the auxiliary stack.  Accept if every traversal
ends without early termination and a path from $s$ to $t$ is found during the process, i.e. the procedure run with pair $(s,t)$ returns $b\geq1$.
\paragraph{Correctness.}
Each node counted by $a$ represents a distinct walk starting at $u$.
If $a>nk$, the pigeonhole principle implies that more than $k$ of
these walks end at some common vertex. Hence the bound on the tree-size is
violated, regardless of the particular target $v$ being tested.
Likewise, $b>k$ directly witnesses $N_G(u,v)>k$. Both cutoff rules
therefore reject only inputs outside the language.

If $N_G(u,v)\le k\text{ for every }u,v\in V$, \eqref{eq:stcon-sf-tree-size} guarantees
that every traversal finishes without reaching either cutoff. In such case, the algorithm accepts
exactly when $N_G(s,t)\ge 1$.
Conversely, if the algorithm accepts, every traversal has completed
and returned the exact count $N_G(u,v)\le k$, and the count for
$(s,t)$ is positive.

\paragraph{Time and work space.}
Each traversal enters at most $nk+1$ tree nodes before finishing or
rejecting. Each node requires at most $n$ candidate-neighbor tests,
and each tree edge is traversed at most once in each direction.
A candidate-neighbor test is performed by scanning the explicit
input graph, using logarithmic work space and polynomial time.
Since there are $n^2$ traversals and $k$ is encoded in unary, the
total running time is polynomial in $N$ on every input.

A depth-first search frame contains only the current endpoint and
the index of the next candidate neighbor, requiring $O(\log n)$
bits. The active frame is kept on the work tape; suspended frames
are pushed onto the auxiliary stack and recovered on return.
These frames can be encoded over a fixed finite stack alphabet,
with each frame push or pop implemented by $O(\log n)$ single-symbol
operations. The counters, loop indices, and active frame occupy
$O(\log(nk))=O(\log N)$ work space. The stack has polynomial height
and is not charged to this work-space bound.

This proves the claimed deterministic AuxPDM bound.
Sudborough's characterization of polynomial-time deterministic
logspace auxiliary pushdown computation \cite{Sudborough1978} and
Cook's simulation \cite{Cook1979} give
$\mathsf{LogDCFL}\subseteq\mathsf{SC}^{2}$ and establish the final
containment.
\end{proof}

\end{document}